\documentclass[11pt]{article}

\usepackage[T1]{fontenc}
\usepackage{lmodern}
\usepackage{microtype}
\usepackage[margin=1in]{geometry}
\usepackage{amsmath,amssymb,amsthm,mathtools,bm}
\usepackage{aliascnt}
\usepackage{braket}
\usepackage{dsfont}
\usepackage{booktabs}
\usepackage{enumitem}
\usepackage[normalem]{ulem}
\usepackage{array,longtable}
\usepackage{authblk}

\usepackage{xcolor}
\usepackage{tikz}
\usetikzlibrary{arrows.meta,positioning}
\usepackage[colorlinks=true,allcolors=blue!55!black]{hyperref}
\usepackage[nameinlink,capitalise,noabbrev]{cleveref}
\usepackage[numbers, sort&compress]{natbib}
\crefname{equation}{Eq.}{Eqs.}
\Crefname{equation}{Eq.}{Eqs.}

\allowdisplaybreaks
\setlist{nosep,leftmargin=*}
\numberwithin{equation}{section}

\newtheorem{theorem}{Theorem}[section]

\newaliascnt{lemma}{theorem}
\newtheorem{lemma}[lemma]{Lemma}
\aliascntresetthe{lemma}

\newaliascnt{proposition}{theorem}
\newtheorem{proposition}[proposition]{Proposition}
\aliascntresetthe{proposition}

\newaliascnt{corollary}{theorem}
\newtheorem{corollary}[corollary]{Corollary}
\aliascntresetthe{corollary}

\theoremstyle{definition}
\newaliascnt{definition}{theorem}
\newtheorem{definition}[definition]{Definition}
\aliascntresetthe{definition}

\newaliascnt{assumption}{theorem}

\aliascntresetthe{assumption}

\theoremstyle{remark}
\newaliascnt{remark}{theorem}
\newtheorem{remark}[remark]{Remark}
\aliascntresetthe{remark}

\crefname{assumption}{assumption}{assumptions}
\Crefname{assumption}{Assumption}{Assumptions}

\newcommand{\supp}{\operatorname{supp}}

\newcommand{\dist}{\operatorname{dist}}
\newcommand{\eps}{\varepsilon}
\newcommand{\cA}{\mathcal A}
\newcommand{\cU}{\mathcal U}
\newcommand{\cY}{\mathcal Y}
\newcommand{\bbE}{\mathbb E}
\newcommand{\bbP}{\mathbb P}

\newcommand{\norm}[1]{\left\lVert #1\right\rVert}
\newcommand{\abs}[1]{\left|#1\right|}
\newcommand{\ind}{\mathds{1}}
\newcommand{\bigO}{\mathcal O}

\newcommand{\ceil}[1]{\left\lceil #1\right\rceil}
\newcommand{\floor}[1]{\left\lfloor #1\right\rfloor}
\newcommand{\ad}{\operatorname{ad}}
\newcommand{\Tr}{\operatorname{Tr}}
\newcommand{\Jan}{J_{\rm an}}

\renewcommand{\Re}{\mathrm{Re}}
\renewcommand{\Im}{\mathrm{Im}}

\title{Emergent Prethermal Symmetries for Scalable Hamiltonian Learning}

\author[,1,2]{Myeongjin Shin\thanks{\href{mailto:hanwoolmj@kaist.ac.kr}{\texttt{hanwoolmj@kaist.ac.kr}}}}
\author[,3,4]{Junseo Lee\thanks{\href{mailto:junseolee@fas.harvard.edu}{\texttt{junseolee@fas.harvard.edu}}}}
\author[,5,6,8]{Iman Marvian\thanks{\href{mailto:iman.marvian@duke.edu}{\texttt{iman.marvian@duke.edu}}}}
\author[,5,7,8]{Yu Tong\thanks{\href{mailto:yu.tong@duke.edu}{\texttt{yu.tong@duke.edu}}}}

\affil[1]{School of Computing, Korea Advanced Institute of Science and Technology, Daejeon 34141, Republic of Korea}
\affil[2]{Institute for Quantum Information and Matter, California Institute of Technology, Pasadena, CA 91125, USA}
\affil[3]{Harvard Quantum Initiative, Harvard University, Cambridge, MA 02138, USA}
\affil[4]{The NSF AI Institute for Artificial Intelligence and Fundamental Interactions, Cambridge, MA 02139, USA}
\affil[5]{Department of Electrical and Computer Engineering, Duke University, Durham, NC 27708, USA}
\affil[6]{Department of Physics, Duke University, Durham, NC 27708, USA}
\affil[7]{Department of Mathematics, Duke University, Durham, NC 27708, USA}
\affil[8]{Duke Quantum Center, Duke University, Durham, NC 27701, USA}

\date{September 22, 2026}

\begin{document}
\maketitle

\begin{abstract}
Learning an interacting many-body Hamiltonian at the Heisenberg limit generally requires control that preserves local information for evolution times of order $1/\varepsilon$ to attain $\varepsilon$ precision. To achieve this, existing protocols often rely on trusted many-qubit operations or increasingly rapid control pulses, creating a precision barrier when gates involve unknown multi-qubit interaction or have finite duration. We show that neither resource is necessary for geometrically local Hamiltonians. Our protocol applies only static single-qubit fields, whose strength is independent of system size and grows polylogarithmically with $1/\varepsilon$. These fields generate emergent prethermal symmetries that suppress thermalization for the entire learning experiment while retaining informative symmetry-preserving dynamics. On a $d$-dimensional lattice of $n$ qubits, this enables learning every coefficient to accuracy $\varepsilon$, with failure probability at most $\delta$, using total evolution time $\bigO(\log^d(1/\eps)\log^2(n/\delta)/\eps).$ Thus, our learning protocol attains the information-theoretically optimal \(1/\varepsilon\) dependence up to polylogarithmic factors using product-state preparation, single-qubit measurements, and non-adaptive experiments. More broadly, our results establish emergent prethermal symmetries as a resource for quantum learning, opening a route to Heisenberg-limited characterization of many-body systems where fast or trusted many-qubit control is unavailable.
\end{abstract}

\tableofcontents

\section{Introduction}\label{sec:intro}
In this work, we study  learning the Hamiltonian of a quantum system from real-time evolution.  A Hamiltonian specifies how constituents of a quantum system interact with each other. Learning it is therefore a basic inverse problem in quantum science.  This problem underlies quantum sensing and metrology, where unknown fields are encoded as Hamiltonian parameters, and the calibration, diagnosis, and verification of quantum processors and analog simulators~\cite{giovannetti2011advances,degen2017quantum,GranadeFerrieWiebeCory2012robust,wiebe2014b,StilckFranca2024,hangleiter2024robustlylearninghamiltoniandynamics}.

A broad literature has developed learning protocols from dynamical data, steady states, and Gibbs states~\cite{BoixoSomma2008parameter,HolzapfelEtAl2015scalable,BaireyAradEtAl2019learning,KrastanovZhouEtAl2019stochastic,li2020hamiltonian,che2021learning,ZubidaYitzhakiEtAl2021optimal,wilde2022learnH,Caro_2024,HaahKothariTang2022optimal,gu2022practical,bairey2020learning,BakshiLiuMoitraTang2024learning,yu2023robust,MobusBluhmCaroEtAl2023dissipation}. Recent advances extend Hamiltonian learning to fermionic and bosonic systems, as well as beyond fixed Hamiltonian ansätze~\cite{MiraniHayden2024learning,NiLiYing2024quantum,LiTongNiGefenYing2023heisenberg,MobusBluhmGefenEtAl2025heisenberg,ma2024learningkbodyhamiltonianscompressed,HuMaGongEtAl2025,bakshi2024structure,SinhaTong2025improved,Rosati2025quantum,Zhao2025learning,ChenCotlerHuang2025quantum}. Along with theoretical advances, significant experimental progress has also been made for this important problem~\cite{hangleiter2024robustlylearninghamiltoniandynamics,GuoEtAlDuan2025HamLearn,franceschetto2025hamiltonian}. Of particular importance are protocols that attain Heisenberg-limited scaling~\cite{HuangTongFangSu2023learning,dutkiewicz2024advantage,LiTongNiGefenYing2023heisenberg,MiraniHayden2024learning,NiLiYing2024quantum,HuMaGongEtAl2025,MobusBluhmGefenEtAl2025heisenberg,ma2024learningkbodyhamiltonianscompressed,bakshi2024structure,brahmachari2026}.  Here the Heisenberg limit means that estimating the Hamiltonian to error $\eps$ requires total evolution time proportional to $1/\eps$, up to logarithmic factors. This dependence on $\eps$ is information-theoretically optimal and quadratically improves on the standard $1/\eps^2$ scaling.

This improvement in evolution time comes with an important operational cost. Dutkiewicz et al. proved that, for broad classes of many-body Hamiltonians, including Hamiltonians that thermalize according to the eigenstate thermalization hypothesis, learning without control is necessarily standard-quantum-limited, whereas controlled evolution can attain the Heisenberg limit~\cite{dutkiewicz2024advantage}.  Thus, for generic thermalizing systems, quantum control is not merely a convenience but the resource that preserves information at long times.  On the other hand, Hamiltonian learning is needed precisely when the system Hamiltonian is not yet known well enough to support complicated, trusted control.  For example, entangling gates in trapped ions depend on calibrated collective motional modes, while neutral-atom gates depend on calibrated Rydberg excitation and blockade interactions~\cite{bruzewicz2019trapped,saffman2010rydberg}.  These two observations create a tension: Heisenberg-limited Hamiltonian learning requires control, and yet it is most useful only when the available control is limited.

This tension motivates us to limit the control operations we use to those that do not rely on detailed knowledge of the Hamiltonian to be learned. For example, single-qubit rotations can often be calibrated locally: transition frequencies are determined by Ramsey spectroscopy, pulse areas by repeated resonant rotations, and rotation axes by the phase of a coherent drive, without first reconstructing the full interqubit Hamiltonian~\cite{harty2014highfidelity,sheng2018highfidelity}.  This motivates treating static and dynamical single-qubit control as the experimentally trusted resource.

Restricting control to single qubits, however, does not by itself remove the obstruction.  The Hamiltonian-reshaping protocol of Huang et al. uses only single-qubit gates, but its control interval must scale as $\bigO(\sqrt{\eps})$ to reach precision $\eps$~\cite{HuangTongFangSu2023learning}.  However, real gates have a nonzero duration. Once the required interval falls below this gate time, the idealized reshaping sequence can no longer be implemented and the protocol encounters a precision floor.  A natural question is therefore:
\begin{quote}
    \textit{Is Heisenberg-limited Hamiltonian learning still possible using only static single-qubit control whose strength grows mildly with the target precision?}
\end{quote}

This work provides an affirmative answer to this question. Our starting point is the connection between thermalization and learning. The lower bound of Ref.~\cite{dutkiewicz2024advantage} identifies thermalization as the mechanism that erases the long-time local information needed for Heisenberg scaling.  From this perspective, existing control-based protocols can be viewed as preventing thermalization over the time window needed for learning.  In particular, Hamiltonian reshaping decouples the system into small patches and produces an exponential number of local conservation laws, preserving useful coherent dynamics up to time $\bigO(1/\eps)$ when gates are applied with an interval of $\bigO(\sqrt{\eps})$ \cite{HuangTongFangSu2023learning}. 

In this work, we will engineer prethermal symmetry to delay thermalization beyond the time window required for learning. Prethermalization describes a long but finite regime in which approximate conservation laws constrain the dynamics and thereby delay thermalization~\cite{gring2012relaxation,neyenhuis2017observation,mallayya2019prethermalization}. The system need not be frozen during this regime, as local observables can continue to exhibit nontrivial coherent dynamics, while processes that violate the approximate conservation laws remain small. This phenomenon can arise from a large separation of energy scales, which makes violating the approximate conservation laws energetically costly \cite{else2017prethermal}.

Abanin, De Roeck, Ho, and Huveneers proposed a mathematical framework to rigorously analyze the prethermalization phenomenon \cite{abanin2017rigorous}. In particular, for a local Hamiltonian with a large energy scale $\nu$ multiplying an integer-spectrum charge, they constructed a recursion process, which we shall refer to as the ADHH recursion, to produce a quasi-local change of frame in which the effective Hamiltonian commutes with a dressed charge, up to a quasi-exponentially small remainder~\cite{abanin2017rigorous}.  Consequently, this charge is approximately conserved for a time of order $\exp(\Omega(\nu/\log^3\nu))$. This almost-exponential prethermalization window provides ample time to extract information from coherent dynamics with moderately large $\nu$.  In our setting, the applied static field provides the charge, and its dressed conservation law is an \emph{emergent prethermal symmetry}. Choosing $\nu=\operatorname{polylog}(1/\eps)$ makes this symmetry persist through the entire $\bigO(1/\eps)$ learning experiment.  We note that in addition to the desirable polylogarithmic dependence on precision, this parameter choice is also independent of the system size. This idea of recursively constructing an effective Hamiltonian shares many similarities with the Schrieffer-Wolff transformation~\cite{BravyiDiVincenzoLoss2011schrieffer,schrieffer1966relation}.

We use this prethermal window to learn all coefficients of a geometrically local many-body Hamiltonian.  We apply random static single-qubit fields to create simple charge sectors whose symmetry-preserving dynamics encode local transition energies. These energies are then estimated through experimental protocols similar to Ramsey interferometry, using product initial states and single-qubit measurements. A modified version of the robust phase estimation algorithm is used for classical post-processing~\cite{KimmelLowYoder2015robust}. A stable classical inverse map, which involves iteratively solving an optimization problem with rigorous convergence guarantee, recovers the Hamiltonian.  Throughout this process, the required field strength is independent of system size and grows only polylogarithmically with $1/\eps$.  Finite achievable field strength can therefore still impose a precision limit, but the dependence is exponentially milder than the gate-time barrier above.  Up to polylogarithmic overheads, the protocol retains the main features of Ref.~\cite{HuangTongFangSu2023learning}: Heisenberg-limited total evolution time, logarithmic dependence on system size, product-state preparation, single-qubit measurements, non-adaptive experiments, and efficient classical post-processing, despite using a weaker form of control in the sense explained above.

\subsection{Problem setup}
In what follows we will describe the detailed set up of the quantum system and the locality assumptions we use. We consider a Hamiltonian on a $d$-dimensional physical lattice $\Lambda=[-\mathsf{L},-\mathsf{L}+1,\cdots,\mathsf{L}]^d$, with  size $|\Lambda|=n$, where $\mathsf{L}$ is a positive integer. The Hamiltonian takes the form 
\begin{equation}
 H(\lambda)=\sum_{a\in\cA}\lambda_aP_a,
 \quad \lambda\in\mathbb R^M,
 \label{eq:intro-H}
\end{equation}
where each $P_a$ is a distinct, non-identity, and one- or multi-qubit, Pauli operator, and $\mathcal{A}$ is the set of Pauli term indices, with $|\mathcal{A}|=M$. Throughout this work we assume that the Hamiltonian is $q$-local and involve interaction range $r_0$, i.e., 
\begin{equation}
\label{eq:locality-bounds-H}
    |\mathrm{supp}(P_a)|\leq q,\quad \mathrm{diam}(\mathrm{supp}(P_a))\leq r_0,\quad \forall a\in \cA.
\end{equation}
We assume that $d,q,r_0>0$ are constants that are independent of the system size $n$. These assumptions ensure that the total number of terms is $M= |\mathcal{A}| = \bigO(n)$. Our goal is to obtain an estimate $\hat{\lambda}$ such that
\begin{equation}
 \norm{\widehat\lambda-\lambda}_\infty\le\eps,
 \label{eq:intro-goal}
\end{equation}
where $\|\cdot\|_\infty$ denotes the $\ell^\infty$-norm.

We measure the cost of our learning protocol primarily by its \emph{total evolution time}, defined as the cumulative time for which the system evolves under the unknown Hamiltonian across all experimental runs. Specifically, if the protocol performs \(N_s\) experiments with evolution times \(T_1,T_2,\ldots,T_{N_s}\), then the total evolution time is \(\sum_{i=1}^{N_s}T_i\). We will also report the \emph{number of experiments} needed.

\subsection{Main result}
\label{sec:main-result}

Our main result is a parallel protocol for learning all coefficients of a geometrically local Hamiltonian using static single-qubit fields, product initial states, and single-qubit measurements.  The protocol does not require preparing an eigenstate of the unknown Hamiltonian or implementing time-dependent many-body control. Instead, it divides the lattice into a small number of batches of well separated sites and probes every site in one batch during the same physical evolution. 

For each batch and each repetition, we choose independent random Pauli axes and signs for the single-qubit fields. Every site in the batch is called a ``defect''. The field  has full strength within disjoint balls \(B_R(k)\) around the defects, is weakened by a factor of two at every defect, and is zero outside the union of all balls \(B_R(k)\). We prepare a product state, evolve under the unknown Hamiltonian together with this static field, and measure two single-qubit Pauli observables at every defect.  Robust frequency estimation~\cite{KimmelLowYoder2015robust,LiTongNiGefenYing2023heisenberg} converts these measurements into estimates of local transition energies.  Finally, a classical fitting procedure compares the estimated energies with a forward model evaluated on small neighborhoods and returns the Hamiltonian coefficients.

The defect batches are obtained by coloring the lattice so that sites in the same batch are separated by distance $\bigO(\log(1/\eps))$. Consequently, only $\bigO(\log^d(1/\eps))$ batches are needed in $d$ spatial dimensions.  This is the source of the polylogarithmic overhead in the parallel protocol.  The formal statement appears later as \Cref{thm:main-para} and we state its main content here.

\begin{theorem}[Main result, informal]
\label{thm:overview-main-result}
Consider the class of geometrically local Hamiltonians described above, with fixed spatial dimension, interaction weight, and interaction range, and with $\norm\lambda_\infty\le1$.  For every target accuracy $\eps>0$ and failure probability $\delta\in(0,1)$, static single-qubit control together with single-qubit preparation and measurement suffices to produce an estimate $\widehat\lambda$ satisfying $\norm{\widehat\lambda-\lambda}_\infty\le\eps$ with probability at least $1-\delta$.  It is sufficient to use a field strength 
\begin{equation}
    \nu\ge\nu_{\min}:=\bigO\!\left(\max\left\{\log^d(1/\eps), \log(1/\eps)\left[1+\log\log(1/\eps)\right]^3\right\}\right)
\end{equation}
The total evolution time is
\begin{equation}
 \bigO\left({\log^d(1/\eps)\log^2(n/\delta)}{/\eps}\right).
 \label{eq:overview-main-time}
\end{equation}
{The protocol uses $\bigO(\log^{d+2}(1/\eps)\log^2(n/\delta))$ independent non-adaptive global experiments.} The classical post-processing time is polynomial in $n$, $1/\eps$, and $\log(1/\delta)$.
\end{theorem}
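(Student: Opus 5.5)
The proof would assemble four ingredients in order: a prethermal effective description, a local forward model, robust frequency estimation, and a stable classical inversion.

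\textbf{Step 1 (prethermal frame).} For each batch of defects, write the controlled Hamiltonian as $H(\lambda)+\nu Z_{\rm field}$. Here $Z_{\rm field}=\sum_j h_j \sigma_j^{(\alpha_j)}$ is the static field, with half strength at defects and zero outside $\bigcup_k B_R(k)$. After rescaling, $Z_{\rm field}/\text{(half-unit)}$ has integer spectrum. I would apply the ADHH recursion of \cite{abanin2017rigorous} with $n_*\sim \nu/\log^3\nu$ steps. This yields a quasi-local unitary $U$ close to the identity, a dressed charge $\widetilde N=U^\dagger N U$, and an effective Hamiltonian $D+V$ with $[D,N]=0$ and $\|V\|_{\rm local}\le e^{-\Omega(\nu/\log^3\nu)}$. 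A Lieb--Robinson argument then bounds the error on local observables at the defects up to time $T\sim 1/\eps$ by $T\,e^{-c\nu/(1+\log\nu)^3}$. The second term in $\nu_{\min}$, $\log(1/\eps)(1+\log\log(1/\eps))^3$, is exactly what makes this error $\le\eps$.

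\textbf{Step 2 (locality of the signal).} Inside the symmetric frame, the defect spin has half the Zeeman gap of its neighbours. Its $\pm$ sectors are therefore resonant only with its own flip, so the Ramsey signal at defect $k$ oscillates at a transition energy $\nu/2+E_k(\lambda)$. Here $E_k$ is determined by $D$ restricted near $B_R(k)$. Since $D$ is quasi-local, $E_k$ depends only on $\lambda_a$ in an $\bigO(R)$ neighborhood, up to error $e^{-\Omega(R)}$. Taking $R=\Theta(\log(1/\eps))$ makes truncation errors $\le\eps$. Defects in the same batch must be separated by more than $2R$ plus the Lieb--Robinson light cone, which fixes the coloring into $\bigO(\log^d(1/\eps))$ batches. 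The field on $B_R(k)$ must also dominate across a region of size $R^d$ so that the recursion's local norms stay controlled. This is where the requirement $\nu\gtrsim\log^d(1/\eps)$ enters.

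\textbf{Step 3 (estimation and inversion).} For each batch, prepare product states and measure two Pauli observables at every defect simultaneously. Robust phase estimation \cite{KimmelLowYoder2015robust} estimates each $E_k$ to accuracy $\eps$ using total time $\bigO(1/\eps)$, provided the per-round bias stays below the robust-phase-estimation tolerance, which Step 1 guarantees. A median-of-means boost over $\bigO(\log(n/\delta))$ repetitions and a union bound over $n$ defects gives simultaneous success. A further $\log(n/\delta)$ factor enters from the random axes and signs: I would show that with constant probability a random choice makes the local Jacobian $\partial E_k/\partial\lambda$ well conditioned. Repeating $\bigO(\log(n/\delta))$ times ensures that every coefficient is covered by a well-conditioned configuration. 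Summing over batches gives the stated time $\bigO(\log^d(1/\eps)\log^2(n/\delta)/\eps)$ and the stated experiment count, where the extra $\log^2(1/\eps)$ comes from the phase-estimation rounds. Finally, the local forward maps are inverted by iterated least squares on $\bigO(R^d)$-size neighborhoods. Convergence and an $\ell^\infty$ error bound follow from the conditioning, and the running time is polynomial.

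\textbf{Main obstacle.} The hardest part is the stability of the inverse map. I need a uniform lower bound on the singular values of the Jacobian of $(E_k)$ with respect to the local $\lambda$, holding with constant probability over the random fields. The transition energies depend on $\lambda$ only through the dressed $D$, at leading order through the first-order projection of $H$ onto the symmetric part. I would handle this by first treating the leading term $\nu^{0}$: this projection is linear in $\lambda$, and randomizing the axes identifies every Pauli term at the defect. Then the higher-order ADHH corrections, of size $\bigO(1/\nu)$, would be treated as a perturbation that preserves invertibility.
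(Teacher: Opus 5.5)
\textbf{There is a genuine gap in the dynamical part of your argument.} In Step~1, your bound $Te^{-c\nu/(1+\log\nu)^3}$ controls only the remainder $V$ and drops the frame change $U$. The experiment prepares a bare product state and measures a bare Pauli operator, while the clean dynamics $\nu N+D$ lives in the rotated frame. So $U$ must be moved onto the state or onto the Heisenberg-evolved observable, and both options fail:
\begin{itemize}
\item If $U$ comes from the recursion with the whole batch field, $\norm{U-I}$ is controlled only by $|\bigcup_kB_R(k)|$, which is extensive in $n$.
\item If you use quasi-locality instead, the observable evolved to $T\sim1/\eps$ fills a light cone of volume $(vT)^d$. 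The dressing error is then of order $(vT)^dJ/\nu\sim\eps^{-d}/\nu$.
\end{itemize}
Either way, your argument forces $\nu$ to grow with $n$ or polynomially in $1/\eps$.

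The paper avoids this by running the recursion, for each measured defect, with the charge $N_k$ of its own ball only, stopped at a fixed $s=\Theta(\log(1/\eps))\le n_*(\nu)$. Every component of $A_j$ and $V_j$ then touches $B_k$. This gives operator-norm bounds $\norm{Y_{k,s}-I}\le CJ|B_k|/\nu$ and $\norm{V_{k,s}}\le CJ|B_k|(2/3)^s$, and no Lieb--Robinson bound is needed. This time-independent dressing error $CJV_R/\nu$ is the true origin of $\nu\gtrsim\log^d(1/\eps)$. The other source is $\nu_{\rm opt}=\bigO(JV_R)$ in the inversion, since each coefficient enters up to $mV_R$ Jacobian rows. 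Keeping the recursion's local norms controlled needs only an $\bigO(J)$ threshold and $s\le n_*(\nu)$, so your stated reason for the $\log^d$ term is not the right one.

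\textbf{Step~2 asserts a single tone without the needed decoupling arguments.}
\begin{itemize}
\item \emph{Exterior.} With zero field outside the balls, the charge-zero and charge-one sectors are $\ket{\phi_{B,b}}\otimes\mathcal H_C$, which contain the whole thermalizing exterior. So ``resonant only with its own flip'' is not enough. You need the two conditional exterior generators to differ by nearly a scalar: $\norm{K_1-K_0-2F_{\rho,R}I_C}\le CJ2^{-K}$.
\item \emph{How the paper proves this.} It matches Taylor coefficients in a bookkeeping parameter $z$: degree-$p$ contributions are connected $p$-words of range at most $pr_0$. The tail is bounded by Cauchy estimates on the analytic continuation of the recursion to non-Hermitian inputs.
\item \emph{Why plain quasi-locality is not enough.} Quasi-locality of $D$ alone gives only $e^{-\kappa_sR}$ with $\kappa_s\sim\kappa_0/\log s$. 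That loses $\log\log(1/\eps)$ factors in $R$, in $\nu_{\min}$ and in the batch count.
\item \emph{Defect separation.} Separating defects by ``the Lieb--Robinson light cone'' would mean distance $\Theta(1/\eps)$ at $T\sim1/\eps$, which contradicts your $\bigO(\log^d(1/\eps))$ batches. Separation $\bigO(\log(1/\eps))$ suffices for a different reason. The other balls' fields enter only through $W_{k,s}=\nu(Y_{k,s}N_{{\rm rem},k}Y_{k,s}^\dagger-N_{{\rm rem},k})$, and Taylor coefficients of degree at most $P$ of the generators $A_j$ commute with $N_{{\rm rem},k}$. This gives a phase error of $\bigO(TJ|B_k|(P+1)2^{-P})$.
\end{itemize}

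\textbf{The inversion step also relies on pieces you have not supplied.}
\begin{itemize}
\item The same analytic continuation supplies the Jacobian and Hessian bounds on the $\bigO(1/\nu)$ correction. A bound on the size of the correction alone does not control its derivatives.
\item Conditioning comes from stacking $m=\bigO(\log(n/\delta))$ patterns into $\Gamma=X^{\mathsf T}X/m\approx\operatorname{diag}(p_a3^{-p_a})$, with $\ell^\infty\to\ell^\infty$ control via diagonal dominance. A single random configuration gives only one equation per defect, so it cannot by itself be ``well conditioned''.
\end{itemize}
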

 
We note that we saturate the optimal Heisenberg limit up to a polylogarithmic factor in $\eps$.  Moreover, parallelization results in a logarithmic dependence of the total evolution time on the system size $n$.  A central point of the result is that the required field strength does not grow with $n$, even though the Hamiltonian acts on an extensive system and has $\Theta(n)$ spectral norm. In \Cref{tab:comparison_HL_methods} we give a detailed comparison with the most relevant Heisenberg-limited Hamiltonian learning methods.
Throughout this work, classical computational cost is measured by the number of arithmetic operations, with each operation assigned unit cost. 

\begin{remark}[Robustness to SPAM errors]
The protocol is robust to a certain amount of state preparation and measurement (SPAM) errors. All experimentally obtained data enter the Hamiltonian reconstruction procedure only through the robust frequency estimation subroutine, which is a variant of robust phase estimation~\cite{KimmelLowYoder2015robust}. These methods tolerate a constant amount of deviation of the measurement probabilities from the ideal values, which can be a result of SPAM noise, and retain the Heisenberg-limited scaling despite such deviation. 
As a result, the overall cost scaling of our protocol remains unchanged in the presence of  measurement probabilities errors due to imperfect SPAM below a constant threshold.
\end{remark}

Although the ADHH construction~\cite{abanin2017rigorous} provides the prethermal normal form, our algorithmic guarantees are not direct consequences of its results. In \Cref{thm:uniform-normal-form}, we establish a uniform analytic continuation of the ADHH recursion to non-Hermitian interactions; combined with the Cauchy estimates in \Cref{lem:analytic-gap-bound}, this controls the Taylor truncations used in our locality and parallelization arguments. We also quantify how the resulting effective transition energies respond to perturbations of the underlying Hamiltonian coefficients in \Cref{lem:row-derivatives}, providing the derivative bounds that lead to the stable inverse guarantee of \Cref{thm:local-inverse}. Finally, \Cref{lem:D-taylor-expansion-commuting-support} combines the locality and charge symmetry of the effective Hamiltonian, and \Cref{lem:parallel-code-factorization} uses this structure to identify the defect product states as eigenstates of the truncated effective Hamiltonian. In the parallel analysis below, \Cref{lem:parallel-stopped-dynamics,thm:parallel-Ramsey} bound the change of each measured signal under the ADHH unitary transformation and relate it to the corresponding effective transition energy.\footnote{In all of our proofs, we avoid using \cite[Theorem~3.3]{abanin2017rigorous} because of our inability to independently verify one of the steps in its proof.} These ingredients use new techniques that go beyond those developed in the ADHH analysis.

\subsection{Comparison with previous works}
\label{sec:comparison}

In \Cref{tab:comparison_HL_methods} we compare our result with representative
Heisenberg-limited Hamiltonian learning protocols.  

\begin{table}[ht]
    \centering
    \begin{small}
    \begin{tabular}{c|cccc}
    \hline
     Hamiltonian learning protocol    &  Single-qubit & $1/\tau$ scaling & $\nu$ scaling & Many-body scalable \\
    \hline
    Huang et al. \cite{HuangTongFangSu2023learning}    & \textcolor{blue}{Yes} & \textcolor{red}{$\bigO(1/\sqrt{\eps})$} & - & \textcolor{blue}{Yes} \\
    Dutkiewicz et al. \cite{dutkiewicz2024advantage} & \textcolor{red}{No} & - & \textcolor{blue}{$\bigO(1)$} & \textcolor{blue}{Yes} \\
    Bakshi et al. \cite{bakshi2024structure} & \textcolor{red}{No} & \textcolor{blue}{$\bigO(1)$} & - & \textcolor{blue}{Yes} \\
    Brahmachari et al. \cite{brahmachari2026}  & \textcolor{blue}{Yes} & - & \textcolor{blue}{$\bigO(1)$} & \textcolor{red}{No} \\
    \textbf{This work} & \textcolor{blue}{Yes} & - & \textcolor{blue}{$\widetilde{O}(1)$} & \textcolor{blue}{Yes} \\
    \hline
    \end{tabular}
  \end{small}
    \caption{Comparison of selected Heisenberg-limited methods for Hamiltonian learning from prior work and the present work. Here, $\tau$ denotes the time interval between successive pulsed-control operations, whereas $\nu$ denotes the characteristic strength or frequency scale of continuous or static control. The final column indicates whether each method can learn geometrically local many-body Hamiltonians on $n$ qubits in $\mathrm{poly}(n)$ time. For the method in the final row, the factor denoted by $\widetilde{O}(1)$ in the $\nu$-scaling is, more precisely,
    $\bigO(\max\{\log^d(1/\eps), \log(1/\eps)[1+\log\log(1/\eps)]^3\}).$}
\label{tab:comparison_HL_methods}
\end{table}

To the best of our knowledge, our protocol is the only Heisenberg-limited method in this comparison that simultaneously uses only single-qubit control, scales to geometrically local many-body Hamiltonians, and requires a control scale that is independent of $n$ and only polylogarithmic in $1/\eps$. Huang et al.~\cite{HuangTongFangSu2023learning} use single-qubit control and achieve many-body scalability, but the inverse control interval grows as $\bigO(1/\sqrt{\eps})$.  The scalable protocols of Dutkiewicz et al.~\cite{dutkiewicz2024advantage} and Bakshi et al.~\cite{bakshi2024structure} instead require multi-qubit control.

Brahmachari et al.~\cite{brahmachari2026}, a precursor to the present work, introduced a Heisenberg-limited learning protocol based on static single-qubit fields whose strength is independent of $\eps$.  Their analysis applies to systems of fixed size, $n=\bigO(1)$, and leaves a scalable many-body extension open.  Here we obtain such an extension by combining randomized single-qubit fields, a prethermal approximation that specifically tracks locality, and parallel learning using multiple well-separated defects.

Related dynamical-approximation results also illustrate why independence from $n$ does not follow from conventional global-propagator bounds.  At a qualitative level, our use of strong static fields may seem similar to the quantum Zeno dynamics induced by a strong field, but an important difference is that in quantum Zeno effect the field strength typically needs to scale with the system size, whereas in prethermalization setting and our learning protocol, the field strength is independent of the system size.\footnote{As an example, the ``eternal approximation'' of \cite[Theorem~2]{Burgarth2022oneboundtorulethem} has an error bound that is uniform in time, but it depends on the ratio between the global Hamiltonian norm and the control frequency.  At fixed approximation error, satisfying this bound for an extensive Hamiltonian with $\norm{H}=\Theta(n)$ requires a control frequency of order $n$, or equivalently a control period of order $1/n$. Similarly, the error bound for a recently proposed learning protocol based on the quantum Zeno effect depends on the global Hamiltonian norm~\cite{franceschetto2025hamiltonian}.  At fixed Zeno error and for evolution time $T$, the stated bound entails a kick rate proportional to $\norm{H}(1+T\norm{H})$.  Thus, for fixed $T$ and $\norm{H}=\Theta(n)$, the prescribed kick rate is quadratic in $n$. In the short-time regime $T\norm{H}=\bigO(1)$, it is linear in $n$.}

Our proof applies the local-norm estimates of~\cite{abanin2017rigorous} to a field confined to a neighborhood of the measured qubit. The error from the unitary transformation is then controlled by the volume of that neighborhood. We also show that the exterior evolves almost identically for the two states of the measured qubit, up to their relative phase. Together, these estimates control the measured signals through evolution times \(O(1/\eps)\), with a field strength independent of \(n\). The resulting frequencies retain the information needed for stable coefficient recovery.

\subsection{Outlook}

Prethermalization may also provide a natural framework for Hamiltonian certification and property testing~\cite{wiebe2014b,GaoEtAl2026certification,BluhmEtAl2026certifying,bluhm2024hamiltonian,franca2025learning}. In our setting, the emergent prethermal symmetry and the associated long-lived local oscillations could provide diagnostic tools for certification: rather than reconstructing every coefficient, one could test whether measured transition frequencies and approximate conservation laws agree with those predicted by a target Hamiltonian.  A related direction is Hamiltonian learning through high-frequency driving, which guarantees a long prethermal window governed by a quasi-local effective Hamiltonian~\cite{abanin2017rigorous}. Adapting our local spectroscopy and analytic-continuation techniques could provide rigorous and scalable learning guarantees throughout this window. Moreover, the underlying strategy suggests extensions beyond qubits and Pauli interactions. Developing suitable prethermal charges and spectroscopy protocols for qudits, general spin systems, and fermionic or bosonic lattices would connect this approach with recent advances in Hamiltonian learning for interacting fermions and bosons~\cite{MiraniHayden2024learning,LiTongNiGefenYing2023heisenberg} and clarify how broadly prethermalization can serve as a resource for scalable learning and certification. Beyond learning and certification, the analytic tools developed here may be useful for studying other emergent phenomena in prethermal systems.  Techniques developed in this work, such as analytic continuation and Taylor truncation of the effective Hamiltonian, provide a general framework for obtaining quantitative guarantees throughout the prethermal window.  Possible applications include prethermal Hilbert-space fragmentation and other forms of constrained or nonergodic dynamics~\cite{yoshinaga2022emergence,ghosh2023prethermal,stephen2024ergodicity}.

\section{Technical overview}
\label{sec:tenchical-overview}

We will now explain the main ideas behind \Cref{thm:overview-main-result}.  We will describe in more detail the learning protocol, and explains why it works.  In particular, we will show why static fields turn an extensive many-body learning problem into a collection of local spectroscopy experiments, why the resulting classical inverse problem remains well conditioned, and why sufficiently separated experiments can be run during the same evolution.

The unknown parameters $\lambda_a$ describe the strengths of the local interactions $P_a$ in $H(\lambda)=\sum_{a\in\mathcal A}\lambda_aP_a$.  Learning these parameters from dynamics is difficult for two related reasons.  First, a local measurement generally depends on many noncommuting terms of the Hamiltonian. Second, even when an experiment produces useful spectral information, one must still determine whether this information can be inverted stably to recover every coefficient.

Locality prevents these difficulties from worsening with the volume in the way that a naive operator-norm estimate would suggest. The global operator norm $\norm{H}$ may grow as $\bigO(n)$, as is natural for an extensive Hamiltonian, but the scale relevant to our analysis is instead the local interaction strength $J$ (defined in \Cref{eq:Ham-local-norm-bound}): roughly speaking, $J$ bounds the total strength of the terms touching any one site.  For geometrically local Hamiltonians, $J$ is independent of $n$.  This distinction is crucial to obtaining a static field strength that is independent of system size $n$.

Our protocol proceeds in three stages.  We first add a static single-qubit field with one deliberately weakened site, which, as mentioned previously, we call a defect. The field is applied only within a neighborhood of the defect. We prepare the defect in a superposition of its two local field eigenstates and the other controlled qubits in fixed field eigenstates. Although the spins outside the neighborhood continue evolving, their effect on the two components of this superposition is almost the same.  A prethermalization result shows that, for a long time, the unknown Hamiltonian only weakly disturbs the symmetry imposed by the field.  The measured signal therefore behaves approximately as a single-frequency oscillation, whose frequency contains an effective
transition energy.

We next repeat this experiment with random local Pauli axes.  Different choices of axes probe different linear combinations of the unknown coefficients, and we recover the coefficients by fitting the measured transition energies. At finite field strength these relations are nonlinear, which presents considerable difficulty for direct analysis. Fortunately, the strong-field limit provides a simple relation between coefficients and transition energies as described in \Cref{eq:bare-linear}. By quantifying how far the non-linear relation deviates from this linear one, we will show that the fitting procedure recovers the actual coefficients to desired accuracy, and can be solved efficiently using a classical iterative algorithm with exponential convergence.

Finally, we perform many defect experiments at the same time.  This step is not automatic: with several defects, the field no longer isolates a two-dimensional subspace for each one.  We resolve this by separating the defects in space and showing that processes capable of coupling two defects appear only at high order.  Their total effect then decreases exponentially with the defect separation.

Throughout the paper, the spatial dimension $d$, interaction weight $q$, interaction range $r_0$, and local Pauli terms $\{P_a\}_{a\in\mathcal{A}}$ are fixed.  We normalize energy units so that $\norm\lambda_\infty\le1$. Analytic continuation will be an important technical tool for analyzing various approximation errors, and for this purpose we will  sometimes replace $H$ by a non-Hermitian interaction. 

\subsection{Static fields and local spectroscopy}

Here we will describe the specific protocol we use. For a defect \(k\), we will perform a series of experiments indexed by $\ell$. Experiments across all defects are then indexed by $\rho=(k,\ell)$. For each experiment indexed by $\rho$, we choose independent random signed Pauli axes \(\tau_{j,z}^{\rho}\), and set \(Q_j^\rho=(I-\tau_{j,z}^\rho)/2\). On \(B=B_R(k)\), we apply a half-strength field at \(k\) and full strength at the other sites. The field is zero on \(C=\Lambda\setminus B\). This gives us the control Hamiltonian $H_{\rm ctrl}^\rho$, the full generator of the dynamics $G_\rho^B$, and the charge operator $N_\rho^B$, which gives rise to the approximate prethermal symmetry.
\begin{equation}
 \begin{aligned}
 H_{\rm ctrl}^\rho&=\frac12\tau_{k,z}^\rho+
 \sum_{j\in B\setminus\{k\}}\tau_{j,z}^\rho,\quad 
 G_\rho^B=H(\lambda)+\nu N_\rho^B,
 \quad N_\rho^B=Q_k^\rho+2\sum_{j\in B\setminus\{k\}}Q_j^\rho.
 \end{aligned}
 \label{eq:overview-charge}
\end{equation}
The physical Hamiltonian \(H(\lambda)-\nu H_{\rm ctrl}^\rho\) differs from \(G_\rho^B\) only by a scalar. Let \(\phi_{B,0}^\rho\) be the local-frame vacuum on \(B\), and \(\phi_{B,1}^\rho\) its defect flip. The entire charge-zero and charge-one sectors are \(\ket{\phi_{B,b}^\rho}\otimes\mathcal H_C\), \(b=0,1\). Prepare their equal superposition on \(B\), tensored with a fixed branch-independent exterior state. A product state suffices.

We analyze this experiment using an iterative unitary transformation, called the ADHH recursion and introduced below in \Cref{sec:prethermal-effective-Ham}. After \(s\) steps, it gives
\begin{equation}
 Y_{\rho,s}G_\rho^B Y_{\rho,s}^\dagger
 =\nu N_\rho^B+D_{\rho,s}^{\rm ball}+V_{\rho,s}^{\rm ball},
 \qquad [D_{\rho,s}^{\rm ball},N_\rho^B]=0.
 \label{eq:overview-normal-form}
\end{equation}
Here \(Y_{\rho,s}\) is unitary, \(D_{\rho,s}^{\rm ball}\) preserves the charge imposed by the field, and \(V_{\rho,s}^{\rm ball}\) is a small remainder. The fact that the field is confined to \(B\) enables us bound both \(\|Y_{\rho,s}-I\|\) and \(\|V_{\rho,s}^{\rm ball}\|\) using \(|B|\), rather than the full system size. Specifically, these bounds are \(CJ|B|/\nu\) and \(CJ|B|(2/3)^s\), respectively. The construction is controlled up to a maximum number of steps
\begin{equation}
 n_*(\nu)=\Theta\!\left(
 \frac{\nu/J}{[1+\log(\nu/J)]^3}\right).
 \label{eq:overview-nstar}
\end{equation}
We choose \(s\le n_*(\nu)\) from the desired accuracy and use the same number of steps for all experiments and trial Hamiltonians.

To describe the frequency that we wish to extract, it is useful to introduce a reference model in which the field extends to every qubit. Its charge is \(N_\rho=Q_k^\rho+2\sum_{j\ne k}Q_j^\rho\). Let \(\ket{\phi_{j,b}^\rho}\) be the eigenstate of \(Q_j^\rho\) with eigenvalue \(b\in\{0,1\}\). The eigenvalues zero and one of \(N_\rho\) each correspond to a single product state:
\begin{equation}
 \ket{\Phi_0^\rho}=\bigotimes_j\ket{\phi_{j,0}^\rho},
 \qquad \ket{\Phi_1^\rho}=\ket{\phi_{k,1}^\rho}
 \bigotimes_{j\ne k}\ket{\phi_{j,0}^\rho}.
 \label{eq:overview-doublet}
\end{equation}
Applying the same \(s\)-step construction with charge \(N_\rho\) gives an interaction \(\widehat D_\rho^\Lambda\) commuting with \(N_\rho\). These two states are therefore eigenstates of \(\widehat D_\rho^\Lambda\), and we define their half transition energy by
\begin{equation}
 E_\rho^\Lambda(\lambda)=\frac12\left(
 \bra{\Phi_1^\rho}\widehat D_\rho^\Lambda\ket{\Phi_1^\rho}
 -\bra{\Phi_0^\rho}\widehat D_\rho^\Lambda\ket{\Phi_0^\rho}\right).
 \label{eq:overview-full-gap}
\end{equation}
This model is auxiliary in the sense that we only use it as an intermediate step for comparison, and the experiment still applies the field only within \(B\). To relate the two models, we use locality. An interaction process that can distinguish the field configurations must reach from the defect to the exterior of \(B\). Such processes require many local interaction terms, and their contribution decreases exponentially with the radius. As a result, the exterior evolution is almost the same for the two defect states, apart from the phase determined by \(E_\rho^\Lambda\).

We will perform an experiment with the static field restricted to a ball, as described above and in more detail in \Cref{sec:experiment}. Measuring two single-qubit observables on defect $k$ yields a complex signal \(S_\rho(t)\), similar to the signal one gets in Ramsey interferometry experiments. Let \(K\ge0\) be the integer specifying how many orders of the locality expansion are kept; we explain this expansion in the next subsection. The following informal statement will be proved below as \Cref{thm:Ramsey-approximation}.

\begin{theorem}[Informal Ramsey approximation]
For \(R\ge(K+1)r_0\), every real \(t\), and \(V_R=\max_k|B_R(k)|\), the physical defect signal $S_\rho(t)$ satisfies
\begin{equation}
 \abs{S_\rho(t)-e^{i[\nu+2E_\rho^\Lambda(\lambda)]t}}
 \le \frac{C_1JV_R}{\nu}
 +C_2JV_R|t|(2/3)^s+C_3J|t|2^{-K}.
\end{equation}
All constants are independent of the system size.
\end{theorem}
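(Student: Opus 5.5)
We compare the physical evolution with an idealized one generated by the normal form \eqref{eq:overview-normal-form}. Each term on the right-hand side of the bound comes from one of three replacements, and the triangle inequality combines them.

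\emph{Step 1 (frame change).} Write $e^{-iG_\rho^B t}=Y_{\rho,s}^\dagger e^{-i(\nu N_\rho^B+D_{\rho,s}^{\rm ball}+V_{\rho,s}^{\rm ball})t}Y_{\rho,s}$. The signal $S_\rho(t)$ is an expectation of a defect observable, namely the coherence $\ket{\phi_{k,0}^\rho}\!\bra{\phi_{k,1}^\rho}$ suitably normalized, in the product initial state $\psi$. We replace $Y_{\rho,s}$ by $I$ both when it acts on $\psi$ and when it conjugates the observable. Since $\|Y_{\rho,s}-I\|\le CJ|B|/\nu$, this costs $\bigO(JV_R/\nu)$, uniformly in $t$.

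\emph{Step 2 (drop the remainder).} A Duhamel estimate bounds the difference between the propagators with and without $V^{\rm ball}_{\rho,s}$ by $|t|\,\|V_{\rho,s}^{\rm ball}\|\le CJV_R|t|(2/3)^s$. This is the second term.

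\emph{Step 3 (symmetric dynamics).} What remains is evolution under $\nu N_\rho^B+D_{\rho,s}^{\rm ball}$. Because $[D^{\rm ball}_{\rho,s},N^B_\rho]=0$, the two branches $\ket{\phi^\rho_{B,b}}\otimes\mathcal H_C$ are invariant. On them $N^B_\rho$ acts as the scalars $b=0,1$, which produces the factor $e^{i\nu t}$. Inside branch $b$, $D^{\rm ball}_{\rho,s}$ compresses to an exterior operator $D_b$ acting on $\mathcal H_C$, and the signal becomes $e^{i\nu t}\langle\chi|e^{iD_0t}e^{-iD_1t}|\chi\rangle$ for the exterior state $\chi$.

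\emph{Step 4 (locality comparison).} This is the main obstacle: we must show that $D_1-D_0$ is a scalar $2E_\rho^\Lambda(\lambda)$ plus an error of norm $\bigO(J2^{-K})$. Given that, a second Duhamel estimate yields the third term $C_3J|t|2^{-K}$. The approach is the Taylor/locality expansion promised in the next subsection.
\begin{itemize}
\item Expand $D^{\rm ball}_{\rho,s}$ in powers of the local interaction terms, using the analytic continuation and Cauchy estimates (\Cref{thm:uniform-normal-form}, \Cref{lem:analytic-gap-bound}) to control truncation at order $K$ with tail $\bigO(J2^{-K})$ per site.
\item A term of order $\le K$ has connected support of diameter at most $Kr_0<R$. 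Hence any such term touching the defect $k$ lies entirely inside $B$, where the ball field agrees with the full-lattice field.
\item It follows that, up to order $K$, the $k$-dependent part of $D^{\rm ball}_{\rho,s}$ coincides with that of $\widehat D^\Lambda_\rho$. Terms not touching $k$ contribute identically to $D_0$ and $D_1$ and cancel in the difference.
\item The defect-touching terms, compressed to the branches, contribute the same scalar difference as in the reference doublet \eqref{eq:overview-doublet}. The reason is that their support lies in $B$, where both states are in the local vacuum; this is the content of \Cref{lem:D-taylor-expansion-commuting-support}.
\end{itemize}
The delicate point is ensuring that the high-order tails of the difference sum to $\bigO(J2^{-K})$ independently of $n$ and $V_R$. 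Only terms anchored at $k$ survive the cancellation, so the bound is controlled by the local norm $J$ at a single site. Combining Steps 1–4 gives the claimed estimate, with constants depending only on $d,q,r_0$.
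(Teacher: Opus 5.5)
Your proposal is correct and follows essentially the paper's proof. Both use \Cref{thm:ADHH-dynamics} for the frame change and remainder, the decomposition into exterior Hamiltonians $K_b=\mathcal M_b(D^{\rm ball}_{\rho,s})$, connected-word coefficient matching (which relies on the two charges agreeing on operators supported in $B$), operator-valued Cauchy tails, and a final exterior Duhamel bound. The only organizational difference is that the paper first matches coefficients against the patch function $F_{\rho,R}$ and then passes to $E^\Lambda_\rho$ via \Cref{thm:patch-locality}, while you match directly against $\widehat D^\Lambda_\rho$; this works equally well.

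There are two small slips. First, the exterior factor in Step 3 should be $e^{iD_1t}e^{-iD_0t}$; your ordering gives frequency $\nu-2E^\Lambda_\rho$. Second, the last bullet does not need \Cref{lem:D-taylor-expansion-commuting-support}. It only uses that each surviving low-degree contribution acts as $I_C$ outside $B$, so its branch-compression difference is $2\ell_\rho(\cdot)I_C$.
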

The three errors come from the unitary transformation, the remainder after \(s\) steps, and the influence of terms extending outside \(B\). The single-frequency form of $e^{i[\nu+2E_\rho^\Lambda(\lambda)]t}$ allows us to apply robust frequency estimation to estimate the half transition energy $E_\rho^\Lambda(\lambda)$. To estimate it to accuracy \(\eta>0\), robust frequency estimation  requires evolution times at most \(O(1/\eta)\) and a small constant signal error over those times. Choosing \(s=K=\ceil{C\log(eJ/\eta)}\), \(R=(K+1)r_0\), and
\begin{equation}
 \nu\ge\nu_{\rm dyn}=CJ\max\{V_R,\ s[1+\log(s+1)]^3\}
 \label{eq:overview-dynamic-field}
\end{equation}
meets these requirements. The two terms in the maximum ensure, respectively, a small error from \(Y_{\rho,s}\) and that the chosen \(s\) is admissible. We justify the frequency-estimation step below in \Cref{subsec:frequency-estimation}. Neither the field strength nor the chosen number of recursion steps needs to grow with \(n\).

\subsection{Local approximation of the transition energies}

The experiments described above provide estimates of $E_\rho^\Lambda(\lambda)$ for many field patterns $\rho$.  To recover $\lambda$, the classical algorithm takes a trial coefficient vector $\mu$ and predicts the transition energies that would be observed for $H(\mu)$.  The map from $\mu$ to these predicted values is the \emph{forward map}. As defined in \Cref{eq:overview-full-gap}, however, $E_\rho^\Lambda(\mu)$ is obtained from an effective Hamiltonian on the full lattice.  Directly evaluating it would require treating the entire many-body system. Fortunately, as we will prove rigorously in \Cref{sec:local-approx}, the transition energies can be well-approximated locally, and therefore it is enough to work on a neighborhood of the defect. Below we will provide a high-level overview of the proof.

For a ball with radius $R$ centered around $k$, which we will refer to as a \emph{patch} and denote by $B_R(k)$, let $H_{B_R(k)}(\mu)$ contain only the candidate interactions supported inside the ball.  We run the same \(s\)-step prethermal construction \emph{on this smaller system} and denote its predicted half transition energy by $F_{\rho,R}(\mu)$. The integer \(s\le n_*(\nu)\) is chosen once and used for every field pattern $\rho$, patch $B_R(k)$, and trial vector $\mu$.  This common choice lets us regard the construction as one fixed analytic function of the Hamiltonian coefficients.

\paragraph{A bookkeeping parameter.}
To compare the full system with the patch, we introduce a scalar $z$ that turns on the unknown Hamiltonian continuously.
\begin{equation}
 g_\rho^\Lambda(z)=E_\rho^\Lambda(zH(\mu)),
 \quad
 g_\rho^{B}(z)=F_{\rho,R}(z\mu).
 \label{eq:overview-z-functions}
\end{equation}
At $z=0$ the unknown interaction is absent, while the coefficient of $z^p$ describes contributions involving $p$ interaction terms.   We show in \Cref{thm:uniform-normal-form} that the ADHH recursion used to construct the prethermal effective Hamiltonian is well defined even for complex $z$ in a disc of radius two, and that the resulting effective interactions remain bounded by $\bigO(J)$ there.

\paragraph{Approximation through connectedness}
Let us consider the Taylor expansion in $z$ of $\widehat{D}_\rho^\Lambda(z\mu )$ obtained through the ADHH recursion for $zH(\mu)$ (note that $zH(\mu)=H(z\mu)$). As we will prove in \Cref{lem:connected-word-expansion}, a nonzero order-$p$ contribution can be built only from a connected cluster of $p$ overlapping Hamiltonian terms.  Since each original term has range at most $r_0$, such a chain can travel distance at most $\bigO(pr_0)$.  We formalize these connected clusters as \emph{connected $p$-words}: ordered products of $p$ Pauli operators each sharing the same support as a Hamiltonian interaction term, whose overlap graph is connected (see \Cref{def:p-word}).  If a connected $p$-word affects the transition at $k$, then its support must contain $k$. Connectedness then ensures that this connected $p$-word only involves terms that are within  distance $pr_0$ of the defect.

It follows that the Taylor expansions of the full-system $\Lambda$ and patch $B_R(k)$ transition energies agree through degree $K$ whenever $R\ge(K+1)r_0$. For degrees beyond $K$, analyticity can be used to bound the coefficients. More specifically, Cauchy's formula (\Cref{lem:Cauchy-general}) on the radius-two disc bounds the coefficient of $z^p$ by $\bigO(J2^{-p})$.  The unmatched tail therefore decays exponentially in $K$.

\begin{theorem}[Informal version of \Cref{thm:patch-locality}]
If $R\ge(K+1)r_0$, then, uniformly over every pattern $\rho$ and every trial point $\mu$ in the unit $\ell^\infty$-neighborhood of $\lambda$,
\begin{equation}
 \abs{E_\rho^\Lambda(\mu)-F_{\rho,R}(\mu)}
 \le C_{\rm P}J2^{-K}.
 \label{eq:overview-patch-error}
\end{equation}
The constant is independent of the system size $n$ and the pattern $\rho$.
\end{theorem}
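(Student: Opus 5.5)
Following the overview, I would analytically continue in the bookkeeping parameter $z$ and argue through connected words. Fix a pattern $\rho$, a patch $B=B_R(k)$, and a trial point $\mu$ with $\norm{\mu-\lambda}_\infty\le1$, so that $\norm{\mu}_\infty\le 2$ and the local interaction strength of $H(\mu)$ is at most a constant times $J$. I would consider the two functions
\[
g_\rho^\Lambda(z)=E_\rho^\Lambda(z\mu),\qquad g_\rho^B(z)=F_{\rho,R}(z\mu),
\]
both computed with the same fixed number $s\le n_*(\nu)$ of ADHH steps, so that $E_\rho^\Lambda(\mu)-F_{\rho,R}(\mu)=g_\rho^\Lambda(1)-g_\rho^B(1)$. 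By \Cref{thm:uniform-normal-form}, the $s$-step recursion is well defined for complex $z$ in the disc $|z|<2$ (slightly shrunk if needed), uniformly in $n$, $\rho$ and $\mu$. The resulting effective interactions are bounded in local norm by $\bigO(J)$ there. The transition energy is a difference of two diagonal matrix elements of $\widehat D$ in product states, and only terms whose support contains $k$ contribute to it: terms not touching $k$ give the same expectation in $\Phi_0$ and $\Phi_1$, which agree off $k$. Hence both $g$'s are analytic on the disc with $\sup|g|\le CJ$, and the constant does not depend on $n$.

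The second step is the matching of low-order Taylor coefficients. By \Cref{lem:connected-word-expansion}, the coefficient of $z^p$ in $\widehat D_\rho^\Lambda(z\mu)$ is a sum over connected $p$-words built from the terms $P_a$. The generators of the recursion are built from $\nu N$ through $\ad_N$-inverses, which act diagonally on Pauli-like local operators and do not enlarge supports. A word that contributes to the diagonal element difference at $k$ must have support containing $k$. By connectedness and the range bound $r_0$, such a word then lies within distance $pr_0$ of $k$. For $p\le K$ and $R\ge (K+1)r_0$, every such word is contained in $B$. On $B$ the patch Hamiltonian $H_B(\mu)$ contains exactly the same terms, and the two charges agree: $N_\rho$ restricted to $B$ matches the charge used for the patch. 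So the ADHH coefficients of these words coincide, and the Taylor coefficients of $g^\Lambda$ and $g^B$ agree for all $p\le K$.

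I expect this second step to be the main obstacle. The ADHH recursion involves time averaging and the splitting of operators into charge-preserving and charge-changing parts at each step. One must check that every such coefficient of a word depends only on data inside the word's support. In particular, the full-lattice charge $N_\rho$ extends beyond $B$, and this must not introduce exterior dependence. I would handle this by noting that each operator $\ad_{N}$ acting on a word depends only on $N$ restricted to the word's support. Together with the uniform degree counting of \Cref{lem:connected-word-expansion}, an induction over recursion steps then gives the claim.

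The last step is the tail bound. Write $h=g^\Lambda-g^B$, which is analytic on $|z|<2$ with $\sup|h|\le 2CJ$. By the Cauchy estimate \Cref{lem:Cauchy-general}, or \Cref{lem:analytic-gap-bound}, on a circle of radius $2-\epsilon'$ its $p$-th coefficient is at most $CJ(2-\epsilon')^{-p}$. Since the coefficients of order at most $K$ vanish,
\[
|h(1)|\le\sum_{p>K}CJ\,2^{-p}\le C_{\rm P}J2^{-K},
\]
after adjusting radii and constants. All constants are uniform in $n$, $\rho$ and $\mu$, which proves \eqref{eq:overview-patch-error}.
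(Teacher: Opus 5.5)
Your proposal is correct and follows essentially the same route as the paper. Both arguments use analytic continuation in $z$ with the uniform bound from \Cref{thm:uniform-normal-form} (the paper's \Cref{lem:analytic-gap-bound}), then match Taylor coefficients through degree $K$ via the connected-word expansion and the agreement of the charge actions on operators supported in $B$, and finish with a Cauchy tail bound; the paper packages the charge agreement as \Cref{lem:exact-embedding} before setting exterior couplings to zero in \Cref{lem:coefficient-matching}, and it applies the Cauchy estimate to each function separately rather than to the difference $h=g^\Lambda-g^B$, which only affects the constant.
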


Taking $K=\Theta(\log(J/\eps))$ and $R=\Theta(r_0\log(J/\eps))$ makes the patch error $\bigO(\eps)$. For the rest of this section, we will keep $R$ fixed, and abbreviate $F_{\rho,R}$ simply as $F_{\rho}$. The measured data can therefore be written as
\begin{equation}
 y=F(\lambda)+e,
 \quad
 F(\mu):=(F_{\rho}(\mu))_\rho,
 \quad
 \norm e_\infty\le\eta_{\rm freq}+C_{\rm P}J2^{-K}.
 \label{eq:overview-data-model}
\end{equation}
Here $\eta_{\rm freq}$ is the experimental frequency-estimation error, while $C_{\rm P}J2^{-K}$ is the deterministic error from replacing the full lattice by a patch.  Thus we can use classical algorithm to fit the patch map $F$rather than the exact full-system energies, and \Cref{eq:overview-patch-error} accounts for the difference. This enables us to design an efficient classical algorithm for solving the optimization problem, and makes it much easier to analyze its stability.

\paragraph{Controlling how the predictions change.}
A bound on the predicted energies is not by itself enough to analyze the fit.  We also need to know how rapidly they change when one or more trial coefficients are varied.  The same complex extension discusssed above gives these derivative bounds.  Applying Cauchy's formula in the coefficient directions shows that the finite-field correction and its first two derivatives are small.  If $F_0$ denotes the $\nu\to\infty$ limit and $\mathcal R=F-F_0$, then each $\rho$ obeys (using $J=\bigO(1)$)
\begin{equation}
 \norm{D\mathcal R_\rho(\mu)}_1=\bigO(1/\nu),
 \quad \sup_a\sum_b\abs{\partial_a\partial_bF_\rho(\mu)}
 =\bigO(1/\nu),
 \label{eq:overview-derivatives}
\end{equation}
uniformly over the same coefficient neighborhood. The precise statement is in  \Cref{lem:row-derivatives}.  The first bound controls the change in the linear response of the data, while the second controls the Hessian of the least-squares objective used below.

Although $|B_R(k)|=\bigO(R^d)$, the classical algorithm does not diagonalize the entire patch, which would cost $2^{\bigO(R^d)}$.  Instead, it evaluates the Taylor series and its derivatives by enumerating connected clusters only through logarithmic degree.  \Cref{prop:classical-row-evaluation} and Appendix~\ref{app:classical-evaluation} show that the resulting total post-processing time is polynomial in $n$, $1/\eps$, and $\log(1/\delta)$.

\subsection{Recovering the Hamiltonian coefficients}

We now discuss using the transition-energy estimates to recover Hamiltonian coefficients. For a trial vector $\mu$, $F(\mu)=(F_\rho(\mu))_{\rho}$ is the vector of predicted patch energies for each $\rho$, and $y$ is the vector of estimated transition energies obtained from the experiments.  We choose $\mu$ by minimizing the squared discrepancy
\begin{equation}
 \mathcal L(y,\mu)
 =\frac{1}{2m}\norm{F(\mu)-y}_2^2.
 \label{eq:overview-loss}
\end{equation}
There are two questions to address.  Does a small change in the measured transition energies correspond to a small change in the recovered coefficients?  And can the minimizer be found efficiently from a simple initial guess? The former concerns how errors in the estimated transition energies propagate to the recovered coefficients, whereas the latter concerns the efficiency of classical post-processing. Both questions are governed by the derivatives of $F$.

\paragraph{The strong-field limit.}
It is helpful to first consider $\nu=\infty$.  In this limit the correction from the prethermal change of basis vanishes, and the predicted transition energies depend linearly on the coefficients:  $F_0(\mu)=X\mu$.

For a Pauli term $P_a$, let $\chi(a)$ be its support indicator and write $P_a\sim\beta^\rho$ when its nonidentity Pauli axes agree with the random axes chosen in experiment $\rho$.  A term contributes to the transition at $k$ only if it contains $k$ and its Pauli axes match the chosen field axes. When it contributes, its sign is the parity of the random signs on its support.  Explicitly,
\begin{equation}
 X_{\rho,a}
 =-\ind[k\in\supp(P_a)]\ind[P_a\sim\beta^\rho]
   (-1)^{s^\rho\cdot\chi(a)}.
 \label{eq:overview-Walsh-entry}
\end{equation}
We refer to these parity signs $(-1)^{s^\rho\cdot\chi(a)}$ as \emph{Walsh characters}.  For a fixed local choice of Pauli axes, they form rows of a Walsh-Hadamard matrix.  This observation leads to a useful cancellation: when we average over the random signs and axes, cross terms between distinct
Hamiltonian coefficients vanish. Formally, we can express this cancellation by considering $X^{\mathsf T} X$ as follows:
\begin{equation}
 \Gamma:=\frac1mX^{\mathsf T}X,
 \quad
 \bbE[\Gamma]
 =\Sigma:=\operatorname{diag}(p_a3^{-p_a})_{a\in\mathcal A},
 \label{eq:overview-Gamma}
\end{equation}
where $p_a=|\supp(P_a)|$.  The matrix $\Gamma$ measures how well changes in the coefficients can be distinguished from changes in the strong-field data.  Its mean $\Sigma$ is diagonal, with diagonal entries bounded away from zero because $p_a\le q$.  Since each interaction overlaps only constantly many others, $m=\Theta(\log(n/\delta))$ random patterns per site are enough for $\Gamma$ to remain close to $\Sigma$ with probability at least $1-\delta$.  As a result, we can use the explicit form of $\Sigma$ to show that both \(\Gamma\) and \(\Gamma^{-1}\) have bounded \(\ell^\infty\to\ell^\infty\) operator norms with large probability (see \Cref{thm:Walsh-conditioning}).

\paragraph{Returning to finite field strength.}
We will next use the above observations for the $\nu\to\infty$ limit to study the finite-$\nu$ case, in which case the map $F(\mu)$ is nonlinear.  Let $\mathsf J(\mu)=DF(\mu)$ denote its Jacobian.  We have $\mathsf J(\mu)=X+D\mathcal R(\mu)$.  The derivative bound in \Cref{eq:overview-derivatives}, together with the fact that a coefficient appears only in nearby patches which ensures the sparsity of $D\mathcal R(\mu)$, gives
\begin{equation}
 \norm{\frac1m\mathsf J(\mu)^{\mathsf T}\mathsf J(\mu)-\Gamma}
 _{\infty\to\infty}
 \le C V_R\left(\frac{J}{\nu}+\frac{J^2}{\nu^2}\right),
 \label{eq:overview-Gram-perturbation}
\end{equation}
where $V_R=\bigO(R^d)$. The Hessian of the least-squares loss contains this Jacobian product and a second term involving the second derivatives of $F$.  The second derivative bound in \Cref{eq:overview-derivatives} controls the latter term as well.  Hence a field $\nu\ge\nu_{\rm opt}=\bigO(JV_R)=\bigO(\log^d(1/\eps))$ makes the entire Hessian uniformly close to $\Gamma$. We note that this parameter choice is independent of $n$. 

The same matrix $\Gamma$ is useful in our optimization algorithm.  Rather than updating $\mu$ using the unscaled gradient, we multiply the gradient by $\Gamma^{-1}$, which compensates for the different sensitivities of the coefficients.  This fixed rescaling is the preconditioner in our algorithm.  If $f(y,\mu)=\nabla_\mu\mathcal L(y,\mu)$, one iteration is $\mathcal T_y(\mu)=\mu-\Gamma^{-1}f(y,\mu)$.  This algorithm has the additional advantage that it does not require evaluating the second-order derivatives of $F(\mu)$, and still achieves exponential convergence, as we will see in the theorem below:

\begin{theorem}[Informal version of \Cref{thm:local-inverse}]
Let $m\ge C\log(n/\delta)$, $\nu\ge\nu_{\rm opt}=\bigO(\log^d(1/\eps))$, and $y_*=F(\lambda)$.  With probability at least $1-\delta$ over the random control patterns, there is a constant $r_y>0$ such that, whenever $\norm{y-y_*}_\infty\le r_y$, the loss in \Cref{eq:overview-loss} has a unique minimizer $\mu^\star(y)$ in $\cU=\{\mu:\norm{\mu-\lambda}_\infty\le1\}$. It obeys, for some $C_{\rm stab}>0$ that is independent of $n$,
\begin{equation}
 \norm{\mu^\star(y)-\lambda}_\infty
 \le C_{\rm stab}\norm{y-y_*}_\infty.
\end{equation}
Moreover, $\mathcal T_y$ maps $\cU$ into itself and is a contraction in $\ell^\infty$ with contraction factor at most $1/3$.  Consequently, for every $\mu^{(0)}\in\cU$,
\begin{equation}
 \norm{\mu^{(t)}-\mu^\star(y)}_\infty
 \le3^{-t}\norm{\mu^{(0)}-\mu^\star(y)}_\infty.
\end{equation}
\end{theorem}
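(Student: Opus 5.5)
The approach is a preconditioned Newton-type contraction argument on the fixed-point map $\mathcal T_y(\mu)=\mu-\Gamma^{-1}f(y,\mu)$. Everything is reduced to the statement that the Hessian of $\mathcal L$ is uniformly close to $\Gamma$ on $\cU$ in the $\ell^\infty\to\ell^\infty$ norm.

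\textbf{Step 1 (conditioning event).} First, invoke \Cref{thm:Walsh-conditioning} with $m\ge C\log(n/\delta)$. On an event of probability at least $1-\delta$, both $\norm{\Gamma}_{\infty\to\infty}$ and $\norm{\Gamma^{-1}}_{\infty\to\infty}$ are bounded by constants independent of $n$, say $\norm{\Gamma^{-1}}_{\infty\to\infty}\le c_\Gamma$. All remaining steps are deterministic on this event.

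\textbf{Step 2 (Hessian).} Write
\[
\nabla_\mu f(y,\mu)=\frac1m\mathsf J(\mu)^{\mathsf T}\mathsf J(\mu)+\frac1m\sum_\rho (F_\rho(\mu)-y_\rho)\nabla^2F_\rho(\mu).
\]
The first term is within $CV_R(J/\nu+J^2/\nu^2)$ of $\Gamma$ by \Cref{eq:overview-Gram-perturbation}. For the second term, use the second-derivative bound of \Cref{lem:row-derivatives}, which gives $\sup_a\sum_b|\partial_a\partial_bF_\rho|=\bigO(1/\nu)$. Combine it with the sparsity that each coefficient $a$ enters only the $\bigO(V_R)\cdot m$ rows $\rho$ whose patches contain $\supp(P_a)$. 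Then bound the residual by
\[
|F_\rho(\mu)-y_\rho|\le |F_\rho(\mu)-F_\rho(\lambda)|+\norm{y-y_*}_\infty\le \norm{\mathsf J}_{\infty\to\infty}+r_y=\bigO(V_R).
\]
The $\ell^1$ row norm of $\mathsf J$ is $\bigO(1)$ from the Walsh structure plus $\bigO(1/\nu)$. This bounds the second term by $\bigO(V_R\cdot\mathrm{const}/\nu)$. Choosing $\nu\ge\nu_{\rm opt}=CJV_R$ with $C$ large then yields
\[
\sup_{\mu\in\cU}\norm{\nabla_\mu f(y,\mu)-\Gamma}_{\infty\to\infty}\le \frac{1}{3c_\Gamma}.
\]
The main obstacle is the $\ell^\infty\to\ell^\infty$ accounting of the residual term. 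The crude residual bound $\bigO(V_R)$ might cost an extra factor of $V_R$. If it does, I would refine it by splitting $F_\rho(\mu)-y_\rho=X_\rho(\mu-\lambda)+[\mathcal R_\rho(\mu)-\mathcal R_\rho(\lambda)]-e_\rho$ and use that $\nabla^2F_\rho=\nabla^2\mathcal R_\rho$ (since $F_0$ is linear). After averaging over $\rho$, the products $X_{\rho,c}\nabla^2\mathcal R_\rho$ inherit the $1/\nu$ factor with only one $V_R$ loss. If the constant is still not good enough, I would enlarge $\nu_{\rm opt}$ by a constant factor.

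\textbf{Step 3 (contraction).} For $\mu,\mu'\in\cU$, the mean value theorem in integral form gives
\[
\mathcal T_y(\mu)-\mathcal T_y(\mu')=\int_0^1\Gamma^{-1}\bigl(\Gamma-\nabla f(y,\mu'+\theta(\mu-\mu'))\bigr)\,d\theta\,(\mu-\mu').
\]
This is legitimate because $\cU$ is convex. Step 2 then gives a Lipschitz constant of at most $1/3$.

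\textbf{Step 4 (self-map).} Next, bound
\[
\norm{\mathcal T_y(\lambda)-\lambda}_\infty=\norm{\Gamma^{-1}f(y,\lambda)}_\infty=\norm{\Gamma^{-1}\tfrac1m\mathsf J(\lambda)^{\mathsf T}(y_*-y)}_\infty\le c_\Gamma c_J\norm{y-y_*}_\infty,
\]
where $c_J$ bounds $\norm{\frac1m\mathsf J^{\mathsf T}}_{\infty\to\infty}$; this uses the column sparsity and $m$-normalization. Hence $\norm{\mathcal T_y(\mu)-\lambda}_\infty\le \frac13+c_\Gamma c_J r_y\le1$ once $r_y\le 2/(3c_\Gamma c_J)$, so $\mathcal T_y(\cU)\subseteq\cU$.

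\textbf{Step 5 (conclusion).} Banach's fixed-point theorem gives a unique fixed point $\mu^\star(y)\in\cU$. This fixed point is exactly a critical point of $\mathcal L$ in $\cU$, since $\Gamma^{-1}$ is invertible. It is the unique minimizer because the Hessian is close to the positive definite $\Gamma$, so $\mathcal L$ is strictly convex along segments. Here I use $\ell^2$ closeness inherited from the symmetric $\ell^\infty$ and $\ell^1$ bounds via Riesz–Thorin. The iterates then converge as $3^{-t}$. Finally,
\[
\norm{\mu^\star-\lambda}_\infty\le\norm{\mathcal T_y(\mu^\star)-\mathcal T_y(\lambda)}_\infty+\norm{\mathcal T_y(\lambda)-\lambda}_\infty\le\tfrac13\norm{\mu^\star-\lambda}_\infty+c_\Gamma c_J\norm{y-y_*}_\infty.
\]
This gives the stability bound with $C_{\rm stab}=\tfrac32c_\Gamma c_J$.
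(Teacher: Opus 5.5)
Your proposal is correct and follows the paper's proof essentially step for step. It uses uniform $\ell^\infty\to\ell^\infty$ closeness of the Hessian to $\Gamma$ (\Cref{lem:Hessian-conditioning}), contraction via the integral mean-value theorem on the convex set $\cU$, the self-map bound through $\mathcal T_y(\lambda)-\lambda=\Gamma^{-1}\frac1m\mathsf J(\lambda)^{\mathsf T}(y-y_*)$, Banach's fixed-point theorem, and the same rearrangement giving $C_{\rm stab}=\frac32 C_\Gamma C_y$. The one slip is your intermediate claim that the residual is $\bigO(V_R)$: since $\norm{\mathsf J}_{\infty\to\infty}$ is a maximal row $\ell^1$ norm, the residual is $\bigO(1)$ (the paper's $r_E=L_F+r_y$), so $V_R$ enters only once, through the column count, and the ``obstacle'' and fallback you describe never arise.
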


\paragraph{Initialization and accuracy.}
The contraction guarantee is useful only if we can start inside $\cU$.  Our normalization $\norm\lambda_\infty\le1$ makes the all-zero vector a valid starting point, answering the initialization question raised above.  We also give an explicit data-dependent linear initializer in \Cref{eq:linear-initializer,eq:initializer-bound}; it can give a better first estimate while remaining inside $\cU$ under the same field and error
conditions.

It remains to account for the two errors in \Cref{eq:overview-data-model}.  We choose the frequency-estimation accuracy and the patch radius so that the experimental and local approximation errors (the latter bounded in \Cref{eq:overview-patch-error}) are both $\bigO(\eps)$.  The stability theorem then converts an $\ell^\infty$ error in the measured transition energies into an $\ell^\infty$ error in the Hamiltonian coefficients.  This gives the sequential learning protocol in \Cref{thm:main}.   Its total evolution time is \(\bigO({n\log^2(n/\delta)}/{\eps})\), and it returns $\widehat\lambda$ satisfying $\norm{\widehat\lambda-\lambda}_\infty\le\eps$.  A sufficient combined field threshold is
\begin{equation}
    \nu\ge\nu_{\min}:=\max\{\nu_{\rm dyn},\nu_{\rm opt}\}=\bigO\!\left(\max\left\{\log^d(1/\eps),\log(1/\eps)\left[1+\log\log(1/\eps)\right]^3\right\}\right)
\end{equation}
which does not grow with the system size.  The total evolution time still depends on the system size because the experiment needs to be repeated for every choice of the defect qubit sequentially. Therefore it is natural to consider a strategy to parallelize.

\subsection{Learning at many sites in parallel}

The sequential (one-defect-at-a-time) procedure repeats essentially the same spectroscopy experiment at each site.  Since the preparation, control, and measurement are all local, it is natural to ask whether many sites can be probed during the same evolution. We choose a set \(\mathcal D\) of defects whose radius-\(R\) neighborhoods are disjoint, apply half-strength fields at the defects and full-strength fields elsewhere in those neighborhoods, and measure all defects simultaneously. The field is zero outside the neighborhoods. As in the single-defect experiment, we compare the measured frequencies with a reference model whose field extends to every qubit. Writing \(Q_j^{\mathcal D,\ell}\) for the local excitation projectors in experiment \(\ell\), this reference model has charge
\begin{equation}
 N_{\mathcal D,\ell}=\sum_{k\in\mathcal D}Q_k^{\mathcal D,\ell}+2\sum_{j\notin\mathcal D}Q_j^{\mathcal D,\ell}.
 \label{eq:overview-parallel-charge}
\end{equation}
The eigenvalue one of \(N_{\mathcal D,\ell}\) is now degenerate:  with one defect, the charge-one sector contains a single state, whereas with several defects, it contains $|\mathcal D|$ states, one for an excitation at each defect.  A charge-preserving effective Hamiltonian may mix these states.  Thus prethermal charge conservation by itself no longer isolates a two-level system at each site.

\paragraph{Utilizing spatial separation.}
The remedy is to place defects far enough apart.  To quantify ``far enough,'' we again introduce the bookkeeping parameter $z$.  We will stop the ADHH recursion at order $n_{\rm par}\le n_*(\nu)$ (the reason for this is explained at the end of \Cref{subsec:parallel-stability}).  For fixed $(\mathcal D,\ell)$, let $\widehat D_{\mathcal D,\ell}(z):=D_{n_{\rm par}}(zH(\lambda))$ be the interaction obtained by stopping the ADHH recursion with $N_{\mathcal D,\ell}$ after $n_{\rm par}$ steps. Holding $n_{\rm par}$ fixed as $z$ varies, we write down the Taylor expansion
\begin{equation}
 \widehat D_{\mathcal D,\ell}(z)=\sum_{p\ge1}z^p\widehat D_{\mathcal D,\ell}^{(p)}.
 \label{eq:overview-parallel-Taylor}
\end{equation}
As discussed for the sequential protocol, the coefficient of $z^p$ contains connected $p$-words (\Cref{def:p-word}) built from $p$ Pauli operators, each having exactly the same support as a local interaction term.  For such a connected $p$-word to connect two regions, those terms must form an overlapping chain; its diameter is therefore at most $pr_0$.  If $L_{\mathcal D}=\min_{k\ne k'\in\mathcal D}\dist(k,k')$, and $P_{\mathcal D}=\ceil{L_{\mathcal D}/r_0}$, then no term of degree $p<P_{\mathcal D}$ can reach two defects.  The terms that can do so start only at degree $P_{\mathcal D}$.  Analyticity on a fixed complex disc and Cauchy's estimate bound the combined contribution of these high-degree terms by
\begin{equation}
 \norm{\sum_{p\ge P_{\mathcal D}}\widehat D_{\mathcal D,\ell}^{(p)}}_{\kappa}\le CJ2^{-P_{\mathcal D}},
 \label{eq:overview-parallel-tail}
\end{equation}
independently of $n$, $|\mathcal D|$, and the defect locations.

To see what this buys us,  we consider product states obtained by independently flipping any subset $A$ of the defects while leaving every other site unflipped, which we denote by $\ket{\Phi_A^{\mathcal{D},\ell}}$.   Let $\mathcal H_{\rm def}$ be the span of these states.  If we retain only degrees below $P_{\mathcal D}$, no effective interaction can involve two defects.  Every interaction term in the effective Hamiltonian also conserves the charge (see \Cref{lem:D-taylor-expansion-commuting-support}).  Consequently, an effective interaction term below degree $P_{\mathcal D}$ cannot couple $\ket{\Phi_A^{\mathcal{D},\ell}}$ to $\ket{\Phi_{A'}^{\mathcal{D},\ell}}$ for some different $A'\subset \mathcal{D}$, because doing so either violates charge conservation or involves flipping two defects. It cannot couple $\ket{\Phi_A^{\mathcal{D},\ell}}$ to a state orthogonal to $\mathcal H_{\rm def}$ either, because doing so requires flipping at least two defects to compensate for the $+2$ charge change induced by flipping a non-defect site. As a result, every state $\ket{\Phi_A^{\mathcal{D},\ell}}$ is an eigenstate of the truncated effective Hamiltonian. These eigenstates belong to the reference model with fields on every qubit. To connect their transition energies to the experiment, we again compare the exterior evolution for the two states of a measured defect. The fields around the other defects act outside its neighborhood. A correction that depends on those fields must connect two separated neighborhoods, so locality makes it small. We carry out this argument below in \Cref{lem:parallel-stopped-dynamics,thm:parallel-Ramsey}; together with the eigenstate calculation of \Cref{lem:parallel-code-factorization}, also proved below, it gives the following result.
\begin{theorem}
\label{thm:overview-parallel-error}
Let $\widehat D_{\mathcal D,\ell}^{(<P_{\mathcal D})}=\sum_{p<P_{\mathcal D}}\widehat D_{\mathcal D,\ell}^{(p)}$. Its restriction to the defect subspace has the form
\begin{equation}
    \widehat D_{\mathcal D,\ell}^{(<P_{\mathcal D})}\big|_{\mathcal H_{\rm def}} =c_{\mathcal D}I +2\sum_{k\in\mathcal D} E_{\mathcal D,\ell;k}^{(<P_{\mathcal D})}Q_k^{\mathcal D,\ell},
\end{equation}
and it does not couple \(\mathcal H_{\rm def}\) to its orthogonal complement. For the experiment, choose integers \(K,P\ge1\), where \(K\) controls the error from truncating the neighborhood of one defect and \(P\) controls the effect of fields near other defects. Assume \(R\ge(K+1)r_0\) and \(L_{\mathcal D}\ge2R+(P+1)r_0\). Prepare each defect in an equal superposition of its two local field eigenstates and all other qubits in their zero-excitation states. If \(S_{\mathcal D,\ell;k}(t)\) denotes the complex Ramsey signal measured at \(k\), then
\begin{equation}
    \abs{S_{\mathcal D,\ell;k}(t) -e^{i\left(\nu+2E_{\mathcal D,\ell;k}^{(<P_{\mathcal D})}\right)t}} \le \frac{C_1JV_R}{\nu} +C_2JV_R|t|\left[(2/3)^{n_{\rm par}}+(P+1)2^{-P}\right] +C_3J|t|2^{-K}.
 \label{eq:overview-parallel-signal-bound}
\end{equation}
for every real \(t\). The additional term \((P+1)2^{-P}\) accounts for fields near the other defects. All constants are independent of \(n,\mathcal D,\ell\), the stopping order, and the actual field strength.
\end{theorem}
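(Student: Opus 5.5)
The proof has two parts: an algebraic statement about the truncated effective Hamiltonian on $\mathcal H_{\rm def}$, and an analytic estimate on the Ramsey signal.

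\emph{Algebraic part.} I would expand $\widehat D_{\mathcal D,\ell}^{(<P_{\mathcal D})}$ via \Cref{lem:connected-word-expansion}. This writes it as a sum of connected $p$-words with $p<P_{\mathcal D}$. Each word has diameter at most $pr_0<L_{\mathcal D}$, so its support contains at most one defect. By \Cref{lem:D-taylor-expansion-commuting-support}, each such term, grouped by support, commutes with $N_{\mathcal D,\ell}$. Act with a term $W$ on $\ket{\Phi_A^{\mathcal D,\ell}}$ and decompose $W$ in the local $\{Q_j,\text{flip}\}$ basis. A nonzero off-diagonal component would change the excitation pattern on $\supp W$ while preserving charge. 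Non-defect sites carry charge $2$ and the single defect at most $1$, so no admissible rearrangement exists. Indeed, starting from zero excitations off the defects, any change raising a non-defect site costs $+2$, which would have to be compensated by removing two defect excitations, and only one defect is present. Hence $W$ acts diagonally on the product states. Its eigenvalue depends only on the bit of the (at most one) defect in its support, so it equals $a_W+b_WQ_k$. Summing over $W$ gives $c_{\mathcal D}I+2\sum_kE^{(<P_{\mathcal D})}_{\mathcal D,\ell;k}Q_k$ and no leakage out of $\mathcal H_{\rm def}$. This is \Cref{lem:parallel-code-factorization}.

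\emph{Dynamics part.} I would follow the single-defect proof of \Cref{thm:Ramsey-approximation}. Apply the ADHH recursion with the physical charge (fields on $\bigcup_k B_R(k)$) for $n_{\rm par}$ steps. Then bound $\|Y-I\|$ locally: only the defect's ball contributes to the measured observable via a Lieb--Robinson/local-norm argument, giving $C_1JV_R/\nu$. The remainder $V$ gives $C_2JV_R|t|(2/3)^{n_{\rm par}}$ through Duhamel, restricted to terms near $B_R(k)$. Next I would compare the physical effective generator with the reference one, in which the field extends everywhere. Taylor-expand both in $z$. Their coefficients agree through degree $K$ on words touching $k$ because $R\ge(K+1)r_0$, and the Cauchy tail gives $C_3J|t|2^{-K}$. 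A word that distinguishes the two branches of defect $k$ and also feels the fields near another defect $k'$ must span from $k$ to $B_R(k')$. That distance is at least $L_{\mathcal D}-R\ge R+(P+1)r_0$, so such words have degree $\ge P+1$. Summing the Cauchy bounds $\bigO(J2^{-p})$ over these degrees, weighted by the $V_R$ sites where the field differs, gives the $(P+1)2^{-P}$ factor after a polynomial-times-geometric sum. This is \Cref{lem:parallel-stopped-dynamics}. Finally, the exterior evolves identically for the two branches up to a relative phase, so the signal is the phase $e^{i(\nu+2E)t}$, and by the first part this phase is exactly the truncated transition energy.

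\emph{Main obstacle.} I expect the hard step to be the branch-independence of the exterior evolution when other defects are simultaneously in superposition. In the full reference Hamiltonian the defects do not factorize exactly. I would first try an interaction-picture Duhamel argument: split the effective generator into a part without $k$--$k'$ coupling, which factorizes by the algebraic lemma, and a high-degree coupling part, which is bounded by \eqref{eq:overview-parallel-tail}. The latter contributes linearly in $|t|$, and locality keeps the count of relevant terms at $\bigO(V_R)$ rather than $|\mathcal D|$.
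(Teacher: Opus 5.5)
Your algebraic part is fine and is essentially the paper's proof of \Cref{lem:parallel-code-factorization}. The dynamics part, however, has a genuine gap.

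\textbf{Where your route fails.} You run the ADHH recursion with the full physical charge $N_{\mathcal D,\ell}^{\mathcal B}$. Then $Y$ is a product of near-identity rotations around \emph{every} controlled ball, and $A_j$ and $V_{n_{\rm par}}$ have components on labels meeting all balls. The available bounds are $\norm{Y-I}\le C\Jan|\mathcal B|/\nu$ and $\norm{V_{n_{\rm par}}}\le C\Jan|\mathcal B|(2/3)^{n_{\rm par}}$, with $|\mathcal B|\approx|\mathcal D|V_R$ growing with $n$. A Lieb--Robinson argument does not localize these to $V_R$. At the times $t\sim1/\eta$ used for frequency estimation, the evolved observable has spread over radius $v|t|\gg R$, covering many other balls. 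You would therefore get factors $(1+v|t|)^d$ instead of $V_R$. In the $Y$ term this forces $\nu\gtrsim J(v/\eta)^d$, which destroys both the stated $C_1\Jan V_R/\nu$ and the polylogarithmic field strength.

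Your comparison of the physical effective generator with the reference one is also not a small perturbation. The two differ by the extensive strong field $2\nu\sum_{j\notin\mathcal B}Q_j$. Moreover, the physical evolution does not preserve $\mathcal H_{\rm def}$, because qubits outside $\mathcal B$ are unpinned. Hence neither \Cref{lem:parallel-code-factorization} nor your split into a factorizing part plus a $k$--$k'$ coupling part applies to the physical dynamics. Your phrase ``weighted by the $V_R$ sites where the field differs'' is wrong, since the fields differ on an extensive set. For the same reason, the first part is not what identifies the physical relative phase with $E^{(<P_{\mathcal D})}_{\mathcal D,\ell;k}$.

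\textbf{The missing idea.} Build the frame change from the measured ball only. The paper runs the recursion with the single-ball charge $N_k$. Every component of $A_j,V_j$ then meets $B_k$, and \Cref{eq:finite-ball-global-norm} gives $\norm{Y_{k,s}-I}\le C\Jan|B_k|/\nu$ and $\norm{V_{k,s}}\le C\Jan|B_k|(2/3)^s$. It keeps $\nu N_{{\rm rem},k}$ in the generator and bounds the extra remainder $W_{k,s}=\nu(Y_{k,s}N_{{\rm rem},k}Y_{k,s}^\dagger-N_{{\rm rem},k})$ through the $z$-expansion of $A_j$ (\Cref{lem:parallel-stopped-dynamics}):
\begin{itemize}
\item coefficients of degree at most $P$ cannot reach another ball, so they commute with $N_{{\rm rem},k}$;
\item higher-degree coefficients are Cauchy-bounded by $C\Jan(2/3)^j2^{-p}/\nu$;
\item the commutator with the onsite charge costs a factor $\bigO(p)$, which is the true origin of $(P+1)2^{-P}$;
\item the prefactor $\nu$ cancels the $1/\nu$.
\end{itemize}

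Then $\mathcal K_k$ commutes with $N_k$ and preserves $\operatorname{span}\{\ket{\phi_{B_k,b}}\}\otimes\mathcal H_C$. For an arbitrary exterior state, which absorbs the other defects' superpositions, the effective signal is $e^{i\nu t}\Tr_C(\sigma_Ce^{iK_1t}e^{-iK_0t})$. Here $K_1-K_0$ is independent of $N_{{\rm rem},k}$ and lies within $C\Jan2^{-K}$ of $2F^{\rm par}_{\rho,R}(\lambda)$. The reference model enters only through the scalar estimate of \Cref{prop:parallel-patch}, never through its dynamics.

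Your global-charge route could alternatively be repaired, but only by the same kind of structural split, not by light-cone localization. Let $A_j^{C}$ collect the components not meeting $B_k$ (these act only on the exterior) and $A_j^{\rm touch}$ the rest. Then $\norm{Y-\prod_j e^{A_j^{C}}}\le\sum_j\norm{A_j^{\rm touch}}\le C\Jan|B_k|/\nu$.
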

For half transition energy accuracy \(\eta\), choose \(n_{\rm par}=K=P=\ceil{C_{\rm par}(d+1)\log(eJ/\eta)}\), \(R=(K+1)r_0\), and \(L_{\mathcal D}\ge2R+(P+1)r_0\). As in the sequential experiment, the field condition \(\nu\ge CJ\max\{V_R,n_{\rm par}[1+\log(n_{\rm par}+1)]^3\}\) keeps the error from the unitary transformation small and permits the chosen number of steps. With sufficiently large constants, all terms in the signal error remain below a small constant through times \(O(1/\eta)\). Thus the required distance between defects grows only logarithmically with \(1/\eta\).

We will show that the same classical Hamiltonian recovery algorithm used in the sequential setting can still be applied here.  We will perform the same iterative optimization procedure to fit the patch half-transition-energy map $F_{\rho,R}^{\rm par}$ (defined in \Cref{eq:parallel-patch-map}) obtained by stopping the ADHH recursion on \(B_R(k)\) at order \(n_{\rm par}\). The transition energies $E_{\mathcal{D},\ell;k}^{(<P_{\mathcal{D}})}$ we extract from the experimentally generated Ramsey signals using robust frequency estimation are good approximations of $F_{\rho,R}^{\rm par}$, for reasons explained below. Under the conditions $R\ge(K+1)r_0$ and $R<L_{\mathcal D}$, \Cref{prop:parallel-patch} shows that the difference between $E_{\mathcal{D},\ell;k}^{(<P_{\mathcal{D}})}$ and $F_{\rho,R}^{\rm par}$ is $\mathcal{O}(J2^{-K})$. Choosing the Taylor coefficient-matching degree $K=\Theta(\log(J/\eps))$ used in the local-approximation, makes this error $\bigO(\eps)$.  Fitting the patch half-transition-energy map $\{F_{\rho,R}^{\rm par}\}_{\rho}$ is sufficient for recovering the Hamiltonian coefficients, because it has the same limiting Jacobian $X$ (in the limit of $\nu\to\infty$) as the original patch map \(F_{\rho,R}(\mu)\) (defined in \Cref{eq:patch-gap}), while the finite-field correction and derivative bounds \Cref{lem:scalar-correction,lem:row-derivatives} used in the conditioning analysis hold uniformly for every stopping order $n_{\rm par}\le n_*(\nu)$. Hence the same stability and convergence analysis \Cref{thm:Walsh-conditioning,lem:Jacobian-perturbation} applies to the new patch map $\{F_{\rho,R}^{\rm par}\}_{\rho}$.

\paragraph{Processing the defects in parallel.}
It remains to cover every site while maintaining the required separation. We color the $d$-dimensional lattice so that sites of one color are separated by $L=\Theta(\log(J/\eps))$.  There are $\bigO(L^d)=\bigO(\log^d(J/\eps))$ color classes.  A single global experiment for one class and one repetition (indexed by $\ell$) produces data for the local Ramsey signal for every defect in that class, while the  random pattern (consisting of field directions and signs) restricted to each patch $B_R(k)$ is the same as in the sequential protocol.  Therefore the Jacobian bound in \Cref{eq:overview-Gram-perturbation} and optimization analysis apply without change. Combining this parallel procedure with the error bound in \Cref{thm:overview-parallel-error} gives the resource bounds stated in \Cref{thm:overview-main-result}, and proved formally in \Cref{thm:main-para}.  In particular, all defects of one color are processed during the same set of evolutions, which removes the linear factor in $n$ from the sequential procedure evolution time.

Let us summarize where the errors enter.  The prethermal approximation relates the physical dynamics to the charge-preserving effective dynamics. Frequency estimation introduces experimental error.  Spatial and Taylor truncations replace the full system by a finite patch and decouple separated defects.  Finally, the finite-$\nu$ derivative bounds control the stability of the inverse problem.  We bound each contribution using local quantities before choosing the parameters.  This is why the required field strength and the stability and convergence guarantees are independent of the system size $n$.

The remainder of the paper develops these steps in order.  We introduce the prethermal construction and its analytic continuation in \Cref{sec:prelim}, describe the Ramsey experiment in \Cref{sec:experiment}, prove the locality reduction in \Cref{sec:local-approx}, analyze the inverse problem in \Cref{sec:optimization}, and establish the parallel protocol in \Cref{sec:parallelization}.

\section{Preliminaries}
\label{sec:prelim}

In this section we will introduce the basic notations and definitions, introduce the ADHH recursion used to construct the prethermal effective Hamiltonian, and extend it to potentially non-Hermitian interactions.

\subsection{Geometrically local Hamiltonians}
\label{subsec:ham-class}

We define a ball of radius $R$ around any site $z\in \Lambda$ to be
\begin{equation}
 B_R(x):=\{z\in\Lambda:\dist(x,z)\le R\},
 \label{eq:ball-prelim}
\end{equation}
where $\dist(\cdot,\cdot)$ denotes the Manhattan distance on the lattice. We also define the maximal volume of a radius-$R$ ball as $V_R:=\max_{x\in\Lambda}|B_R(x)|$, which is upper bounded by $V_R\le(2R+1)^d$.

In order to apply the analysis in \cite{abanin2017rigorous}, we will need the following local norm for interactions across the quantum system.

\begin{definition}[Interaction]
\label{def:interaction-family}
An interaction is a family \(Z=(Z_X)_{X\subseteq\Lambda}\), with \(Z_X=0\) unless \(X\) is nonempty and nearest-neighbor connected. The operator \(Z_X\), called the component with label \(X\), is a complex linear combination of Pauli operators satisfying \(\supp(Z_X)\subseteq X\). The label \(X\) can therefore be larger than the actual support of \(Z_X\). Write \(\mathfrak I_\Lambda\) for the complex vector space of these families. We also use \(Z\) for the associated operator \(\sum_X Z_X\), with each component extended by the identity outside \(X\). An interaction is Hermitian if \(Z_X^\dagger=Z_X\) for every \(X\).
\end{definition}

\begin{definition}[Local norm]
     For an interaction \(Z\) as in \Cref{def:interaction-family}, the local norm is a norm of the family \((Z_X)_X\), defined by
\begin{equation}
 \norm Z_\kappa :=\sup_{x\in\Lambda}\sum_{\substack{X\ni x\\ {X\text{connected}}}}e^{\kappa|X|}\norm{Z_X}.
 \label{eq:local-norm}
\end{equation}
\end{definition}
The operations on interactions below specify the label of each output component. If such a component acts nontrivially on only a subset of its label \(X\), we still use \(X\) in the local norm. For example, a component \(Z_X=A_x\otimes I_{X\setminus\{x\}}\), with \(x\in X\), contributes with weight \(e^{\kappa|X|}\), even when its actual support is \(\{x\}\). Norms without a locality subscript refer to the usual operator norm.

We introduce two parameters $\zeta$ and $D_{\rm ov}$ to respectively describe the maximal number of Hamiltonian terms that act on any given qubit, and the maximal number of Hamiltonian terms that overlap with any given term.
\begin{equation}
    \zeta:=\max_{x\in\Lambda}\abs{\{a:x\in\supp(P_a)\}}, \quad D_{\rm ov}:=\max_a\abs{\{b:\supp(P_a)\cap\supp(P_b)\ne\varnothing\}}.
    \label{eq:zeta-Dov}
\end{equation}
Recall that $q,r_0,d$ are all constants that are independent of $n$. Consequently, these parameters are  upper bounded by $\zeta\leq  \mathcal O(3^q(2r_0+1)^{d(q-1)})=\mathcal{O}(1)$, $D_{\rm ov} \le q\zeta = \mathcal{O}(1)$, and are independent of $n$.

For each Pauli term, choose \(x_a\in\supp(P_a)\) and set \(X_a:=B_{r_0}(x_a)\). Since \(\Lambda\) is a box, \(X_a\) is nearest-neighbor connected, contains \(\supp(P_a)\), and has at most \((2r_0+1)^d\) sites. Since \(x\in X_a\) implies \(x_a\in B_{r_0}(x)\),
\begin{equation}
    \widetilde\zeta:=\max_{x\in\Lambda}|\{a:x\in X_a\}|\le V_{r_0}\zeta\le(2r_0+1)^d\zeta=\mathcal O(1).
\end{equation}
Fix these labels once and represent \(H(\mu)\) by the interaction \(H_X(\mu):=\sum_{a:X_a=X}\mu_aP_a\). We therefore have
\begin{equation}
    \label{eq:Ham-local-norm-bound}
    \|H\|_{\kappa_0}  {\leq}\sup_{x\in\Lambda}\sum_{a:{X_a}\ni x}e^{\kappa_0{|X_a|}}\leq {\widetilde\zeta} e^{\kappa_0 {(2r_0+1)^d}}=: J=\bigO(1),
\end{equation}
where $\kappa_0>0$ is an arbitrarily chosen constant. With the above definition one can readily show, for any coefficient vector $\xi=(\xi_a)_{a\in\mathcal{A}}$,
\begin{equation}
    \norm{\sum_a\xi_aP_a}_{\kappa_0}\le J\norm\xi_\infty.
 \label{eq:coeff-to-local}
\end{equation}

\subsection{The prethermal effective Hamiltonian}
\label{sec:prethermal-effective-Ham}

In this section, we will introduce the effective Hamiltonian in \cite{abanin2017rigorous} that describes the prethermal behavior of quantum systems. Our learning methods requires applying static field $\nu N$ , where the $N$ can be understood as an artificial \emph{charge operator}. It helps us decompose the Hilbert space into a direct sum of its eigenspaces, each corresponding to an integer eigenvalue. For a Hamiltonian $H$ that is locally small compared to $\nu$, \cite{abanin2017rigorous} introduces the recursion, which we shall refer to as the \emph{ADHH recursion}, to construct an effective Hamiltonian whose leading interaction commutes with \(N\). For a field supported in a finite region, we will prove below in \Cref{thm:ADHH-dynamics} that the dynamics generated by \(\nu N+D_s\) approximates the physical dynamics. The input $H$ can be our original Hamiltonian in \Cref{eq:intro-H}. In \Cref{subsec:uniform-order}, we denote a generic, possibly non-Hermitian, input by $Z$ when extending the ADHH recursion to a complex domain. The total Hamiltonian and the charge operator are respectively
\begin{equation}
 G=\nu N+H, \quad
 N=\sum_xN_x, \quad
 \sigma(N_x)\subset\mathbb Z.
 \label{eq:generic-G}
\end{equation}

To define the ADHH recursion, we need the following superoperator that extracts the charge-$\omega$ component of an operator.
\begin{equation}
    \mathcal P_{N,\omega}(W) :=\frac{1}{2\pi}\int_0^{2\pi} e^{-i\omega\theta}e^{i\theta N}We^{-i\theta N}\,d\theta.
    \label{eq:charge-projection}
\end{equation}
Following the averaging notation of~\cite{abanin2017rigorous}, define $\langle W\rangle_N:=\mathcal P_{N,0}(W).$ Since conjugation with $N$ does not increase the support, we have the following inequalities 
\begin{equation}
    \norm{\langle W\rangle_N}_{\kappa_0}\le\norm W_{\kappa_0}, \quad \norm{W-\langle W\rangle_N}_{\kappa_0}\le2\norm W_{\kappa_0}.
    \label{eq:projection-norms}
\end{equation}
With this definition we initialize the ADHH recursion with
\begin{equation}
    D:=\langle H\rangle_N, \quad V:=H-D.
\label{eq:DV-generic}
\end{equation}
In other words, $D$ is the part of $H$ that maps within each eigenspace of $N$, while $V$ is the part that maps across eigenspaces of $N$. This definition ensures $[D,N]=0$. We also define the following $\mathcal S_N(\cdot)$ as an inverse of $[\cdot,\nu N]$ on the charge-changing subspace, with $T:=2\pi/\nu$:
\begin{equation}
    \mathcal S_N(V) :=\sum_{\omega\ne0}\frac{\mathcal P_{N,\omega}(V)}{\nu\omega} = -\frac{i}{T}\int_0^T\int_0^t e^{is\nu N} V e^{-is\nu N} ds\,dt, \quad [\mathcal S_N(V),\nu N]=-V.
    \label{eq:homological-inverse}
\end{equation}

With the above notation, we are ready to define the ADHH recursion. Starting from $H_0=H$, each ADHH step constructs $D_j,V_j,A_j$, and $H_{j+1}$.
\begin{equation}
\label{eq:ADHH-recursion}
   D_j=\langle H_j\rangle_N,\quad V_j=H_j-D_j,\quad A_j=\mathcal S_N(V_j), \quad H_{j+1}=e^{A_j}(\nu N+H_j)e^{-A_j}-\nu N.
\end{equation}

At the interaction level, the charge projections and \(\mathcal S_N\) act componentwise, and \(N\) has singleton components \(N_{\{x\}}=N_x\). Commutators and conjugations are defined as follows to retain the component information
\begin{align}
    ([A,W])_S&:=\sum_{X\cup Y=S}[A_X,W_Y],\quad \operatorname{ad}_A(W):=[A,W]\\
    e^AWe^{-A}&:=e^{\operatorname{ad}_A}W =\sum_{m\ge0}\frac{\operatorname{ad}_A^m(W)}{m!}.
    \label{eq:interaction-operations}
\end{align}
Commutators with disjoint labels vanish, so all nonzero labels remain connected. Summing the resulting components gives the usual operator operations. All local-norm bounds below use this interaction recursion. Our $A_j$ has the opposite sign from the generator used in the conjugation convention of Ref.~\cite{abanin2017rigorous}; this lets us retain $[A_j,\nu N]=-V_j$ and $e^{A_j}(\cdot)e^{-A_j}$ consistently throughout. A sequence of local norms $\|\cdot\|_{\kappa_j}$ is also defined, with a slowly decreasing locality parameter
\begin{equation}
\label{eq:kappa-j-defn}
    \kappa_j=\kappa_0/[1+\log(j+1)],
\end{equation}
which allows a controlled loss of locality at each conjugation step. The resulting quasi-local $H_j$, for large $j$, offers an effective description of the prethermal dynamics, and gives rise to approximate conservation laws that are valid for long time.

In the above, viewing the Hilbert space as a direct sum of eigenspaces of \(N\), \(D_j\) extracts the block-diagonal part of \(H_j\), while \(V_j\) extracts the off-diagonal part.  It is shown in \cite[Eq.~(4.2) and Eq.~(4.10)]{abanin2017rigorous} that,
\begin{equation}
 \norm{V_j}_{\kappa_j}\le CJ\left(2/3\right)^j, \quad
 \norm{A_j}_{\kappa_j}\le {C\pi J(2/3)^j}/{\nu},
\end{equation}
where \(C>0\) is a constant.  Thus both \(V_j\) and \(A_j\) shrink geometrically in the running local norm, with \(A_j\) carrying the additional small factor \(\nu^{-1}\).  The reason for defining \(A_j\) in this way is that it solves the equation $[A_j,\nu N]=-V_j$, so that the leading off-diagonal term is canceled under conjugation.  Indeed, we can informally study this effect using the Baker–Campbell–Hausdorff formula, which gives
\begin{align}
    H_{j+1}+\nu N
    &=e^{A_j}(\nu N+H_j)e^{-A_j} \\
    &\approx \nu N+D_j+V_j+[A_j,\nu N]+[A_j,H_j] \\
    &=\nu N+D_j+[A_j,H_j].
\end{align}
The original off-diagonal term \(V_j\) is therefore removed, and the new off-diagonal part is generated by commutators involving \(A_j\). Since \(\norm{A_j}_{\kappa_j}\lesssim \nu^{-1}\norm{V_j}_{\kappa_j}\) and \(\norm{H_j}_{\kappa_j}\lesssim J\), these commutator terms are smaller than \(V_j\) by a factor of order \(J/\nu\), up to the allowed locality losses.  The rigorous analysis in \cite{abanin2017rigorous} shows this leads to the geometric contraction \(\norm{V_{j+1}}_{\kappa_{j+1}}\le (2/3)\norm{V_j}_{\kappa_j}\).

An equivalent way to define the ADHH recursion is as follows: We define
\begin{equation}
 \gamma_j(W):=e^{A_j}We^{-A_j}, \quad
 \alpha_j(W):=\int_0^1e^{tA_j}We^{-tA_j}\,dt,
 \label{eq:gamma-alpha}
\end{equation}
then the ADHH recursion in \Cref{eq:ADHH-recursion} can be written as
\begin{equation}
    H_{j+1}=\gamma_j(D_j)+\bigl(\gamma_j(V_j)-V_j\bigr)-\bigl(\alpha_j(V_j)-V_j\bigr).
    \label{eq:recursion-no-N}
\end{equation}

We will use the following lemma from \cite{abanin2017rigorous} in our analysis:
\begin{lemma}[ADHH local conjugation estimate;~{\cite[Lemma 4.1]{abanin2017rigorous}}]\label{lem:ADHH-conjugation}
Let \(\kappa_j>\kappa_{j+1}>0\), set \(\delta\kappa_j=\kappa_j-\kappa_{j+1}\), and let \(Q,Z\) be interactions.  If $ 3\norm Q_{\kappa_j}\le\delta\kappa_j$, then
\begin{equation}
 \norm{e^QZe^{-Q}-Z}_{\kappa_{j+1}} \le
 \frac{18\norm Z_{\kappa_j}\norm Q_{\kappa_j}}{\delta\kappa_j\kappa_{j+1}}.
 \label{eq:ADHH-conjugation-bound}
\end{equation}
\end{lemma}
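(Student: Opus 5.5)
The plan is to expand the conjugation as a commutator series and bound each order combinatorially in the local norm. Since $\operatorname{ad}_Q$ acts componentwise by \Cref{eq:interaction-operations}, we have
\begin{equation}
 e^QZe^{-Q}-Z=\sum_{m\ge1}\frac{1}{m!}\operatorname{ad}_Q^m(Z).
\end{equation}
A component of $\operatorname{ad}_Q^m(Z)$ with label $S$ is a sum of nested commutators
\begin{equation}
 [Q_{Y_m},[\cdots,[Q_{Y_1},Z_{X}]\cdots]],
\end{equation}
where $S=X\cup Y_1\cup\cdots\cup Y_m$. Each $Y_i$ must intersect $S_{i-1}:=X\cup Y_1\cup\cdots\cup Y_{i-1}$, since otherwise the commutator vanishes. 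In particular every intermediate label is connected. Each commutator costs at most a factor $2\norm{Q_{Y_i}}$ in operator norm.

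\textbf{Step 1: one commutator at a time.} For a fixed current label $S_{i-1}$, I would bound the sum over admissible $Y_i$ by choosing an anchor site $y\in S_{i-1}\cap Y_i$. This gives at most $|S_{i-1}|\le|S|$ choices of anchor. For each anchor, the sum over $Y_i\ni y$ of $e^{\kappa_j|Y_i|}\norm{Q_{Y_i}}$ is at most $\norm Q_{\kappa_j}$. I would use $|S|\le|X|+\sum_i|Y_i|$ to split the weight as
\begin{equation}
 e^{\kappa_{j+1}|S|}\le e^{\kappa_j|X|}\prod_i e^{\kappa_j|Y_i|}\cdot e^{-\delta\kappa_j|S|}.
\end{equation}
Iterating over $i=1,\dots,m$ yields a contribution bounded by
\begin{equation}
 (2\norm Q_{\kappa_j})^m\,|S|^m e^{-\delta\kappa_j|S|}\,\norm{Z_X}e^{\kappa_j|X|},
\end{equation}
summed over the remaining anchoring data.

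\textbf{Step 2: kill the $m!$ and sum the series.} I would use
\begin{equation}
 \frac{|S|^m}{m!}\le\frac{e^{\delta'|S|}}{(\delta')^m}
\end{equation}
with $\delta'$ a fixed fraction of $\delta\kappa_j$, say $\delta'=\tfrac23\delta\kappa_j$. The sum over $m\ge1$ then becomes a geometric series in $2\norm Q_{\kappa_j}/\delta'=3\norm Q_{\kappa_j}/\delta\kappa_j\le1$. I would tune the fraction so that the ratio is strictly below one under the hypothesis $3\norm Q_{\kappa_j}\le\delta\kappa_j$. The series is then $\bigO(\norm Q_{\kappa_j}/\delta\kappa_j)$, and a residual factor $e^{-(\delta\kappa_j-\delta')|S|}$ remains available.

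\textbf{Step 3: the sup over $x$, which is the main obstacle.} The local norm fixes a site $x\in S$, but $x$ need not lie in the original label $X$; it may lie in some $Y_i$. Then the chain $Z_X\to Y_1\to\cdots$ is not anchored at $x$, and we cannot directly use $\sup_x\sum_{X\ni x}$ for $Z$. I would handle this by paying one more factor $|S|$: sum over which site of $S$ serves as the anchor for $Z_X$, or equivalently reorganize the sum from $x$ into the cluster. That extra $|S|$ is absorbed by the leftover exponential through
\begin{equation}
 |S|e^{-a|S|}\le\frac{1}{ea},
\end{equation}
with $a$ a fraction of $\kappa_{j+1}$. Doing this requires reserving part of $\kappa_{j+1}$ in the weight, which I expect to be the source of the factor $1/\kappa_{j+1}$ in \Cref{eq:ADHH-conjugation-bound}.

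Collecting the constants from these steps (the $2$ per commutator, the geometric-series factor of about $3$, and the anchoring factor) should give the stated $18\norm Z_{\kappa_j}\norm Q_{\kappa_j}/(\delta\kappa_j\kappa_{j+1})$. Getting this exact constant requires choosing the split of $\kappa_j$ carefully, which I would fix last.
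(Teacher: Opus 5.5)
The paper does not prove this estimate; it imports \cite[Lemma~4.1]{abanin2017rigorous} and only remarks that Hermiticity is not used. So your proposal has to stand on its own, and it has a genuine gap.

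Steps~1--2 control the contributions with $x\in X$, once you fix the order of operations. The leftover factor $|S|^m e^{-\delta\kappa_j|S|}$ depends on the $Y_i$ you have already summed, so it must be removed pointwise first, for example via $e^{-a|S|}\le m!\,a^{-m}\prod_i|S_{i-1}|^{-1}$. This turns each anchor sum into an average bounded by $\norm{Q}_{\kappa_j}$.

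The gap is Step~3, for two reasons.

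\emph{The extra factor $|S|$ is not justified.} When $x\notin X$, nothing pins $X$: it can sit anywhere from which a chain of $Y$'s reaches $x$. Since $S$ is only determined after $X$ is chosen, ``summing over which site of $S$ anchors $X$'' is not an inequality you can write down. A rigorous version must re-anchor at the first $Y_k\ni x$ and attach $X\cup Y_1\cup\cdots\cup Y_{k-1}$ to $Y_k$. Iterating that re-anchoring naively loses a factor exponential in $m$, which the ratio $2/3$ allowed by $3\norm{Q}_{\kappa_j}\le\delta\kappa_j$ cannot absorb.

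\emph{Your mechanism for $1/\kappa_{j+1}$ cannot work.} The weight $e^{\kappa_{j+1}|S|}$ is exactly what you are bounding, so there is no ``part of $\kappa_{j+1}$'' left to reserve. The only per-site decay is $e^{-\delta\kappa_j|S|}$, and absorbing $|S|$ with it gives $C\norm{Z}_{\kappa_j}\norm{Q}_{\kappa_j}/(\delta\kappa_j)^2$. That does not imply the lemma when $\delta\kappa_j\ll\kappa_{j+1}$, which is exactly the regime of the recursion, since $\delta\kappa_j\le\kappa_{j+1}/[(j+1)(1+\log(j+1))]$. It would change $m_{\rm A}(j)$ from order $j\log^3 j$ to order $j^2\log^4 j$, and hence the admissible $n_*(\nu)$. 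The only gain tied to $\kappa_{j+1}$ comes from overlaps: $|X\cup Y|\le|X|+|Y|-1$, one factor $e^{-\kappa_{j+1}}$ per commutator.

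One way to close the gap is to drop global anchoring and iterate a one-commutator bound, where both anchoring cases are trivial. Take $\kappa_{j+1}\le\kappa_b<\kappa_a\le\kappa_j$ and split into two cases.
\begin{itemize}
\item $x\in X$: anchor $Y$ in $X$ and use $|X|e^{\kappa_b|X|}\le e^{\kappa_a|X|}/(e(\kappa_a-\kappa_b))$.
\item $x\in Y$: anchor $X$ in $Y$ and use $|Y|e^{\kappa_b|Y|}\le e^{\kappa_j|Y|}/(e(\kappa_j-\kappa_b))$, keeping $Q$ at the top rate $\kappa_j$.
\end{itemize}
Together with the overlap saving this gives
\begin{equation*}
\norm{[Q,W]}_{\kappa_b}\le\frac{2e^{-\kappa_b}}{e}\norm{Q}_{\kappa_j}\left(\frac{\norm{W}_{\kappa_a}}{\kappa_a-\kappa_b}+\frac{\norm{W}_{\kappa_b}}{\kappa_j-\kappa_b}\right).
\end{equation*}
Now use the ladder $\kappa^{(k)}=\kappa_j-k\delta\kappa_j/m$. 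The bracket is at most $\frac{m}{\delta\kappa_j}\bigl(1+\frac1k\bigr)\norm{W}_{\kappa^{(k-1)}}$, and $\prod_{k=1}^m\bigl(1+\frac1k\bigr)=m+1$. Keeping $Q$ at $\kappa_j$ is what prevents the two-case split from costing $2^m$. Using $m^m\le e^m m!$, and setting $s:=2e^{-\kappa_{j+1}}\norm{Q}_{\kappa_j}/\delta\kappa_j\le2/3$, you get
\begin{equation*}
\frac1{m!}\norm{\operatorname{ad}_Q^mZ}_{\kappa_{j+1}}\le(m+1)s^m\norm{Z}_{\kappa_j}.
\end{equation*}
Since $\sum_{m\ge1}(m+1)s^m\le12s$ for $s\le2/3$, the left side of the lemma is at most $24e^{-\kappa_{j+1}}\norm{Q}_{\kappa_j}\norm{Z}_{\kappa_j}/\delta\kappa_j$. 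This is below $18\norm{Z}_{\kappa_j}\norm{Q}_{\kappa_j}/(\delta\kappa_j\kappa_{j+1})$ because $\kappa e^{-\kappa}\le1/e$. Nothing in this argument uses Hermiticity of $Z$ or unitarity of $e^{Q}$.
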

Thus, the local change induced by conjugation is linear in the local norm of conjugation interaction $Q$. We will use this inequality to establish the analytic continuation of the ADHH recursion in~\Cref{sec:analytic-cont}.

We use ``ADHH recursion'' in \Cref{eq:ADHH-recursion} for the iterative construction of $H_j,D_j,V_j,A_j$. We reserve ``final ADHH decomposition'' of \Cref{thm:ADHH-normal-form} for the transformed Hamiltonian $Y(\nu N+H)Y^\dagger=\nu N+\widehat D+\widehat V$ obtained at step $n_*$. The theorem below guarantees this final decomposition. First, we define the quantities
\begin{equation}
    \nu_0:=\frac{54\pi \bigl(\norm D_{\kappa_0}+2\norm V_{\kappa_0}\bigr)}{\kappa_0^2}, \quad
    n_*:=\floor{\frac{\nu/\nu_0}{[1+\log(\nu/\nu_0)]^3}}-2.
    \label{eq:ADHH-nu0-nA}
\end{equation}
Note that $\norm D_{\kappa_0}+2\norm V_{\kappa_0}=\mathcal{O}(J)$, where $J$ is our upper bound for $\|H\|_{\kappa_0}$ as defined in \Cref{eq:Ham-local-norm-bound}.

\begin{theorem}[Final ADHH decomposition;~{\cite[Theorem 3.1]{abanin2017rigorous}}]\label{thm:ADHH-normal-form} 
Assume \(\nu\ge\nu_0\), \(\nu\ge9\pi\norm V_{\kappa_0}/\kappa_0\) and \(n_*\ge1\). Then for \(G=\nu N+D+V\), with \([D,N]=0\), there exist Hermitian interactions \(\widehat H\), \(\widehat D\), and \(\widehat V\), and a unitary \(Y\), such that
\begin{equation}
\label{eq:ADHH-decomposition-final-order}
    Y(\nu N+H)Y^\dagger = \nu N+\widehat H = \nu N+\widehat D+\widehat V,
\end{equation}
with the following properties:
\begin{enumerate}[label=(\roman*)]
\item For $\widehat{D}$ we have $\widehat D=\langle\widehat H\rangle_N$, and $[\widehat D,N]=0$.

\item Define the decay rate $\kappa_{n_*}:=\kappa_0\bigl(1+\log(n_*+1)\bigr)^{-1}$. Then
\begin{equation}
 \norm{\widehat D-D}_{\kappa_{n_*}} \le {C\nu_0}/{\nu}, \quad
 \norm{\widehat V}_{\kappa_{n_*}} \le (2/3)^{n_*} \norm V_{\kappa_0}.
\end{equation}

\item The unitary \(Y\) is close to the identity and quasilocal in
the sense that, for every interaction \(Z\),
\begin{equation}
 \norm{YZY^\dagger-Z}_{\kappa_{n_*}} \le {C\nu_0\norm Z_{\kappa_0}}/{\nu}.
\end{equation}
\end{enumerate}
\end{theorem}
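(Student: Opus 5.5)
The plan is to run the ADHH recursion \Cref{eq:ADHH-recursion} for exactly $n_*$ steps and set $Y:=e^{A_{n_*-1}}\cdots e^{A_0}$, $\widehat H:=H_{n_*}$, $\widehat D:=\langle H_{n_*}\rangle_N=D_{n_*}$, and $\widehat V:=V_{n_*}$. Then \Cref{eq:ADHH-decomposition-final-order} holds by telescoping the definition $H_{j+1}+\nu N=e^{A_j}(\nu N+H_j)e^{-A_j}$. Since $H$ is Hermitian, $\langle\cdot\rangle_N$ and $\mathcal S_N$ preserve Hermiticity (the latter because $\mathcal P_{N,\omega}(V)^\dagger=\mathcal P_{N,-\omega}(V^\dagger)$). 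Hence each $A_j$ is anti-Hermitian, $Y$ is unitary, and all interactions are Hermitian. Property (i) is then immediate from $[\langle W\rangle_N,N]=0$.

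The core of the proof is an induction on $j\le n_*$ for the running norms. The hypothesis is
\[
\norm{V_j}_{\kappa_j}\le(2/3)^j\norm V_{\kappa_0},\qquad
\norm{D_j-D}_{\kappa_j}\le \frac{\nu_0}{\nu}\sum_{i<j}c_i,\qquad
\norm{A_j}_{\kappa_j}\le \frac{\norm{V_j}_{\kappa_j}}{\nu},
\]
with $\sum_i c_i$ bounded by an absolute constant. The bound on $A_j$ follows from \Cref{eq:homological-inverse}, since $|\omega|\ge1$ and the time-average representation is a contraction in each local norm.

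For the inductive step, expand $H_{j+1}$ using \Cref{eq:recursion-no-N}. Write $\gamma_j(D_j)-D_j$ and $\gamma_j(V_j)-V_j$ as conjugation differences. Write $\alpha_j(V_j)-V_j=\int_0^1(e^{tA_j}V_je^{-tA_j}-V_j)\,dt$. Apply \Cref{lem:ADHH-conjugation} to each term with $\delta\kappa_j=\kappa_j-\kappa_{j+1}\gtrsim\kappa_0/[(j+1)(1+\log(j+1))^2]$. This gives
\[
\norm{H_{j+1}-D_j}_{\kappa_{j+1}}\le \frac{18\cdot(\norm{D_j}_{\kappa_j}+2\norm{V_j}_{\kappa_j})\norm{A_j}_{\kappa_j}}{\delta\kappa_j\kappa_{j+1}}
\lesssim \frac{\nu_0}{\nu}(j+1)(1+\log(j+1))^3\norm{V_j}_{\kappa_j}.
\]
The constant $54\pi/\kappa_0^2$ in $\nu_0$ is calibrated for exactly this. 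Applying \Cref{eq:projection-norms} then gives $\norm{V_{j+1}}_{\kappa_{j+1}}\le 2\norm{H_{j+1}-D_j}_{\kappa_{j+1}}$ and $\norm{D_{j+1}-D_j}_{\kappa_{j+1}}\le\norm{H_{j+1}-D_j}_{\kappa_{j+1}}$.

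The condition $j\le n_*$, i.e.\ $(j+2)(1+\log(j+2))^3\le\nu/\nu_0$, makes the prefactor at most $1/3$, which yields the $2/3$ contraction. The smallness hypothesis $3\norm{A_j}_{\kappa_j}\le\delta\kappa_j$ of the lemma follows from the same condition together with $\nu\ge9\pi\norm V_{\kappa_0}/\kappa_0$ at $j=0$. Summing the geometric series bounds $\norm{\widehat D-D}_{\kappa_{n_*}}$ by $C\nu_0/\nu$, using monotonicity $\norm\cdot_{\kappa_{j+1}}\le\norm\cdot_{\kappa_j}$. The main obstacle is this bookkeeping: checking that the polynomial loss $(j+1)(1+\log)^3$ from shrinking $\kappa_j$ is absorbed uniformly up to $n_*$. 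This is precisely where the definition of $n_*$ in \Cref{eq:ADHH-nu0-nA}, including the $-2$ offset, enters.

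For (iii), write $YZY^\dagger-Z$ as the telescoping sum $\sum_j(\gamma_j(Z_j)-Z_j)$, where $Z_j$ denotes the partially conjugated interaction. Apply \Cref{lem:ADHH-conjugation} at each step. Inductively, $\norm{Z_j}_{\kappa_j}\le2\norm Z_{\kappa_0}$, and the increments are bounded by $C(\nu_0/\nu)(2/3)^j(j+1)(1+\log)^3\norm Z_{\kappa_0}$, which is summable to $C\nu_0\norm Z_{\kappa_0}/\nu$.

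One caveat: the conjugation sign convention differs from Ref.~\cite{abanin2017rigorous}. Since $\norm{-A}=\norm A$, this is harmless for all the norm bounds.
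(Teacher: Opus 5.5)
Your proposal is essentially the ADHH argument that the paper cites, and which it re-runs in analytic form in Appendix~\ref{app:prelim-proofs}: stop the recursion at $n_*$, propagate $\norm{V_j}_{\kappa_j}$, $\norm{D_j}_{\kappa_j}$, $\norm{A_j}_{\kappa_j}$ by induction using \Cref{lem:ADHH-conjugation} and \Cref{eq:projection-norms}, and telescope the conjugations for (iii). Two small repairs are needed: the homological inverse only gives $\norm{A_j}_{\kappa_j}\le\pi\norm{V_j}_{\kappa_j}/\nu$, not $\norm{V_j}_{\kappa_j}/\nu$, and the factor $\pi$ in $\nu_0$ is there to absorb exactly this; and the prefactor $1/3$ requires carrying $\norm{D_j}_{\kappa_j}+2\norm{V_j}_{\kappa_j}\le\norm D_{\kappa_0}+2\norm V_{\kappa_0}$ in the induction rather than only smallness of $D_j-D$, which follows from $\norm{D_{j+1}-D_j}_{\kappa_{j+1}}\le h_j$, $\norm{V_{j+1}}_{\kappa_{j+1}}\le2h_j$ and $h_j\le\norm{V_j}_{\kappa_j}/3$, where $h_j:=\norm{H_{j+1}-D_j}_{\kappa_{j+1}}$.
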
 

In the above the Hermitian interactions  $\widehat{H},\widehat{D},\widehat{V}$ and unitary $Y$ can be explicitly constructed from  the ADHH recursion: 
\begin{align}\label{eq:ADHH-expression}
    \widehat{H} = H_{n_*},\quad \widehat{D} = D_{n_*},\quad \widehat{V} = V_{n_*},\quad Y = e^{A_{n_*-1}}\cdots e^{A_1}e^{A_0}.
\end{align}
We will often refer to $\nu N+\widehat{D}$ or $\widehat{D}$ as the \emph{effective Hamiltonian}.

We will also use the version of Lie-Robinson bound in~\cite{abanin2017rigorous}:
\begin{lemma}[ADHH Lieb-Robinson estimate;~{\cite[Lemma 5.1]{abanin2017rigorous}}]
\label{lem:ADHH-LR}
Let \(Z=\sum_SZ_S\) be Hermitian with \(\norm Z_{2\kappa}<\infty\).  If \(A\) and \(B\) are supported in \(X\) and \(Y\), respectively, then
\begin{equation}
    \norm{[A,e^{itZ}Be^{-itZ}]} \le{2}\norm A\norm B e^{-\kappa\left[\dist(X,Y)-v|t|\right]}\min\{|X|,|Y|\},
    \label{eq:ADHH-LR}
\end{equation}
where $v=C(d)\kappa^{-(d+2)}e^{\kappa}\norm Z_{2\kappa}$. The constant depends only on the spatial dimension.
\end{lemma}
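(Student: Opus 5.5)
The plan is to follow the standard Lieb--Robinson argument via the interaction picture and a Duhamel/telescoping expansion. The natural tool is the local norm $\norm{Z}_{2\kappa}$. It dominates $\sup_x\sum_{S\ni x}e^{2\kappa|S|}\norm{Z_S}$, and the weight $e^{2\kappa|S|}$ pays both for the growth of supports and for the lattice combinatorics.

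\textbf{Step 1: the iterated-commutator series.} Write $B(t)=e^{itZ}Be^{-itZ}$. By the Hadamard formula,
\begin{equation}
 B(t)=\sum_{m\ge0}\frac{(it)^m}{m!}\operatorname{ad}_Z^m(B).
\end{equation}
Expanding each $\operatorname{ad}_Z$ into components gives a sum over ordered sequences $(S_1,\dots,S_m)$ in which each $S_i$ intersects $Y\cup S_1\cup\dots\cup S_{i-1}$; all other terms vanish. A term commutes with $A$ unless $X\cap(Y\cup S_1\cup\dots\cup S_m)\neq\varnothing$. Since the union is connected, this forces $\sum_i|S_i|\ge\dist(X,Y)$. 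The term with $m=0$ vanishes when $\dist(X,Y)>0$.

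\textbf{Step 2: counting the chains.} Fix the anchor structure of a chain, that is, which previously placed site each new set $S_i$ touches. Each $\operatorname{ad}$ contributes a factor of at most
\begin{equation}
 2\,\bigl(|Y|+|S_1|+\dots+|S_{i-1}|\bigr)\sup_x\sum_{S\ni x}\norm{Z_S}.
\end{equation}
To absorb the accumulated support factors, use
\begin{equation}
 |S|^k\le k!\,\kappa^{-k}e^{\kappa|S|},
\end{equation}
which spends half of the weight $e^{2\kappa|S|}$. The remaining $e^{\kappa|S_i|}$ factors then produce
\begin{equation}
 e^{\kappa\sum_i|S_i|}\ \ge\ e^{\kappa\,\dist(X,Y)}.
\end{equation}
This is exactly the decay factor we want once it is moved to the other side.

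\textbf{Step 3: the factorial and the $\min\{|X|,|Y|\}$ prefactor.} The factorials coming from the growing anchor sets must cancel the $1/m!$. Summing then gives a series of the form
\begin{equation}
 \sum_m\frac{(c\,|t|\,v_0)^m}{m!}=e^{c|t|v_0}, \qquad v_0\sim\kappa^{-1}\norm{Z}_{2\kappa}.
\end{equation}
For the polynomial $\kappa^{-(d+2)}$ and $e^{\kappa}$ in $v$, the route is to count lattice paths from $X$ to $Y$. The number of sites within distance $r$ is $O(r^d)$, and bounding this volume factor by $\kappa^{-d}e^{\kappa r}$ costs one more power of $\kappa$ and the extra $e^{\kappa}$. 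The prefactor $|Y|$ comes from choosing the initial anchor in $Y$. Running the same argument for $[e^{-itZ}Ae^{itZ},B]$, which has the same norm by unitary invariance, gives $|X|$, so we may take the minimum.

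\textbf{Main obstacle.} The hard part is Step 3: keeping the combinatorics factorial-free. A naive bound on the supports grows like $(|Y|+m\max|S|)^m$, which is not summable against $1/m!$ without this care. The first thing to try is the alternative Gr\"onwall route. Define
\begin{equation}
 C_B(x,t)=\sup_{\norm{A_x}\le1}\norm{[A_x,B(t)]},
\end{equation}
and derive the integral inequality
\begin{equation}
 C_B(x,t)\le C_B(x,0)+2\sum_{S\ni x}\norm{Z_S}\int_0^{|t|}\sum_{y\in S}C_B(y,s)\,ds.
\end{equation}
Iterating this and bounding the resulting path sums with the weight $e^{2\kappa|S|}$ and the inequality $\sum_{y\in S}1\le\kappa^{-1}e^{\kappa|S|}$ gives the exponential $e^{v|t|-\kappa\,\dist(X,Y)}$ directly. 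A final sum over $x\in X$ yields the factor $|X|$.
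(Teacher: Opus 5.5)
The paper gives no proof of this lemma; it is quoted from ADHH. Your proposal therefore has to stand on its own, and its main route (Steps 1--3) does not work.

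\textbf{Why Steps 1--3 fail.} Bounding the Hadamard series term by term cannot produce a bound that is exponential in $|t|$ with a light cone linear in $|t|$. With your per-step estimate, the support factors obey
$\prod_{i=1}^m(|Y|+|S_1|+\dots+|S_{i-1}|)\ge\prod_{i=1}^m(|Y|+i-1)\ge m!$.
So they cancel the $1/m!$ rather than leaving it intact. What remains is a geometric series in $2|t|\sup_x\sum_{S\ni x}\norm{Z_S}$, which converges only for short times. In addition, every set anchored in $Y$ costs another factor $|Y|$, so the prefactor is not linear in $|Y|$ either.

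The inequality $|S|^k\le k!\,\kappa^{-k}e^{\kappa|S|}$ trades powers for factorials; it does not remove them. Your claimed outcome $\sum_m(c|t|v_0)^m/m!$ is therefore inconsistent with your own sentence that the factorials cancel the $1/m!$.

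Exact counting does not rescue the approach. Take a nearest-neighbour chain with couplings of norm $J$, $Y=\{0\}$, and $D=\dist(X,Y)$. The sequences whose cluster grows $k\ge D$ times to the right already contribute $2\norm A\norm B\sum_{k\ge D}(e^{2J|t|}-1)^k/k!$ to the term-by-term bound. This is small only for $D\gtrsim e^{2J|t|}$, an exponential rather than linear light cone.

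The missing idea is that sets not lying on a chain from $Y$ to $X$ act only by unitary conjugation and cannot enlarge the commutator. This cancellation is invisible once each nested commutator is bounded in norm.

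\textbf{The Gr\"onwall route is the right one.} Your closing Gr\"onwall route is the standard Nachtergaele--Sims/Hastings argument and should be the proof. However, its key step is exactly the one you leave underived.

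Write $W(s):=e^{isZ}We^{-isZ}$, $Z_x:=\sum_{S\ni x}Z_S$, and $f(t):=[A_x(-t),B]$. The Jacobi identity gives
$f'(t)=-i[Z_x(-t),f(t)]+i[A_x(-t),[Z_x(-t),B]]$.
The first term is a commutator with a Hermitian operator, so by variation of constants it does not increase $\norm{f}$. This is precisely where the off-path sets drop out.

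The second term gives your integral inequality once you show $\norm{[Z_S,B(s)]}\le c\norm{Z_S}\sum_{y\in S}C_B(y,s)$. This follows from two facts, with $\mathbb E_S$ the normalized partial trace over $S$:
$\norm{[O_S,W]}\le2\norm{O_S}\norm{W-\mathbb E_SW}$ and $\norm{W-\mathbb E_SW}\le\sum_{y\in S}\norm{W-\mathbb E_yW}$.
The reduction costs an $O(1)$ factor (your $2$ becomes $3$). The same reduction is also needed for your final sum over $x\in X$.

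To get the stated prefactor $2$ exactly, iterate the set-indexed quantity $C_B(S,t)=\sup_{\norm{A_S}\le1}\norm{[A_S(t),B]}$ instead, as Nachtergaele and Sims do.

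\textbf{Closing the decay.} No lattice-path or volume counting is needed. For connected labels and $x,y\in S$, one has $\dist(y,Y)\ge\dist(x,Y)-|S|+1$. Combined with $|S|e^{\kappa|S|}\le\kappa^{-1}e^{2\kappa|S|}$, this yields $v=O(\kappa^{-2}\norm Z_{2\kappa})$. That is already within the claimed $C(d)\kappa^{-(d+2)}e^{\kappa}\norm Z_{2\kappa}$ once $C(d)\gtrsim(d/e)^d$.
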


The next theorem compares the propagators generated by \(H+\nu N_B\) and \(\nu N_B+D_s\), and then deduces the corresponding estimate for observables. We use the ADHH recursion and its local-norm estimates from \cite{abanin2017rigorous}, but avoid using \cite[Theorem~3.3]{abanin2017rigorous}.

\begin{theorem}
\label{thm:ADHH-dynamics}
Let \(N_B\) be an onsite integer-valued charge vanishing outside \(B\). Fix \(\kappa_0>0\), let \(H\) be Hermitian with \(\norm H_{\kappa_0}\le\Jan\). As stated below in \Cref{thm:uniform-normal-form}, for $\bar\nu_0$ and $n_*(\nu)$ defined in \Cref{eq:uniform-nu0,eq:new-n*} respectively, we assume \(\nu\ge\max\{\bar\nu_0,54\pi\Jan/\kappa_0\}\) and \(n_*(\nu)\ge1\).  Fix an integer \(1\le s\le n_*(\nu)\) and  let \(D_j,V_j,A_j\) be generated by the ADHH recursion \Cref{eq:ADHH-recursion} with input \(H\) and charge \(N_B\). We also define \(Y_s=e^{A_{s-1}}\cdots e^{A_0}\). Put \(G:=H+\nu N_B\) and \(G_0:=\nu N_B+D_s\). Then, for every real \(t\):
\begin{enumerate}[label=(\roman*)]
    \item The propagators satisfy
        \begin{equation}
            \norm{e^{-iGt}-e^{-iG_0t}} \le C\Jan|B|\left(\nu^{-1}+|t|(2/3)^s\right).
            \label{eq:ADHH-propagator-comparison}
        \end{equation}
        
    \item For every bounded operator \(O\),
        \begin{equation}
        \norm{e^{iGt}Oe^{-iGt}-e^{iG_0t}Oe^{-iG_0t}}\le C\norm O\Jan|B|\left(\nu^{-1}+|t|(2/3)^s\right).
            \label{eq:ADHH-local-dynamics}
        \end{equation}
\end{enumerate}
The constants $C$ are independent of \(B,\nu,t,s\), and the system size.
\end{theorem}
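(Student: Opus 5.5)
The plan is to conjugate by the truncated ADHH unitary \(Y_s\), compare with the charge-preserving dynamics, and undo the conjugation. The key point is that, because \(N_B\) vanishes outside \(B\), every nonzero component of \(A_j\) carries a label meeting \(B\). So operator-norm quantities obtained by summing local norms over labels cost a factor \(|B|\), not the system size.

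\textbf{Step 1 (recursion identity).} Write \(Y_s G Y_s^\dagger=\nu N_B+D_s+V_s\), which holds by construction of \Cref{eq:ADHH-recursion}. Then
\[
e^{-iGt}=Y_s^\dagger e^{-i(G_0+V_s)t}Y_s,
\]
and the triangle inequality splits
\[
e^{-iGt}-e^{-iG_0t}
=Y_s^\dagger\bigl(e^{-i(G_0+V_s)t}-e^{-iG_0t}\bigr)Y_s
+\bigl(Y_s^\dagger e^{-iG_0t}Y_s-e^{-iG_0t}\bigr).
\]

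\textbf{Step 2 (remainder term).} For the first term, Duhamel gives a bound \(|t|\,\norm{V_s}\). Since \(V_s\) is a sum of components, \(\norm{V_s}\le\sum_X\norm{(V_s)_X}\). I will show that every nonzero component of \(V_s\) has a label intersecting \(B\). This holds inductively: \(V_0=H-\langle H\rangle_{N_B}\) only has components whose label meets \(B\), since components away from \(B\) commute with \(N_B\). Each \(A_j=\mathcal S_{N_B}(V_j)\) inherits this property, and every commutator in \(\gamma_j(V_j)-V_j\) and \(\alpha_j(V_j)-V_j\) involves at least one \(A_j\) factor. For the part of \(\gamma_j(D_j)-D_j\), the same reasoning applies; the off-diagonal part of \(H_{j+1}\) arises only from these differences. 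Then
\[
\sum_X\norm{(V_s)_X}\le\sum_{x\in B}\sum_{X\ni x}\norm{(V_s)_X}\le|B|\,\norm{V_s}_{\kappa_s}\le C\Jan|B|(2/3)^s,
\]
using the uniform bound of \Cref{thm:uniform-normal-form}, the analogue of the ADHH estimate \(\norm{V_j}_{\kappa_j}\le CJ(2/3)^j\) valid for all \(s\le n_*(\nu)\).

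\textbf{Step 3 (frame change).} Bound
\[
\norm{Y_s^\dagger e^{-iG_0t}Y_s-e^{-iG_0t}}\le2\norm{Y_s-I}\le2\sum_{j<s}\norm{e^{A_j}-I}\le2\sum_j\norm{A_j},
\]
which uses that \(A_j\) is anti-Hermitian, so \(e^{A_j}\) is unitary. By the same localization, \(\norm{A_j}\le|B|\,\norm{A_j}_{\kappa_j}\le C|B|\Jan(2/3)^j/\nu\). The geometric sum then gives \(C\Jan|B|/\nu\). This proves (i).

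\textbf{Step 4 (observables).} Part (ii) follows from (i) by writing
\[
U^\dagger OU-W^\dagger OW=U^\dagger O(U-W)+(U-W)^\dagger OW,
\]
which gives a factor \(2\norm O\).

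\textbf{Main obstacle.} The hardest part is justifying the localization claim for all components generated up to step \(s\), and the uniform local-norm bounds on \(V_j\) and \(A_j\) at every intermediate \(j\le s\), not just at \(n_*\). I would take the latter from the uniform analytic continuation \Cref{thm:uniform-normal-form}, which is stated for all steps. For the localization, I would argue by induction on components: a component whose label avoids \(B\) commutes with \(N_B\), hence lies in \(D_j\), and conjugation by \(A_j\) only creates labels by union with labels of \(A_j\), which meet \(B\).
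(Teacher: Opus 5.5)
Your proposal is correct and follows the paper's proof essentially step for step. It uses the same ingredients: localization of the nonzero components of \(A_j\) and \(V_j\) to labels meeting \(B\), which gives the factor \(|B|\); the telescoping bound \(\norm{Y_s-I}\le\sum_{j<s}\norm{A_j}\); Duhamel's formula for \(V_s\); and the \(2\norm{O}\norm{U-U_0}\) identity for (ii). The only difference is that you first justify the localization by induction through the commutators with \(A_j\). The paper instead uses directly the observation in your last paragraph: any component \((H_j)_X\) with \(X\cap B=\varnothing\) commutes with \(N_B\) and is therefore annihilated by the charge-changing projection, so no induction is needed.
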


\begin{proof}
We use the component labels from \Cref{def:interaction-family} and the operations in \Cref{eq:interaction-operations}. Fix \(X\) with \(X\cap B=\varnothing\). Since \((H_j)_X\) is supported within \(X\), it commutes with \(N_B\). Consequently,
\begin{equation}
    (D_j)_X =\mathcal P_{N_B,0}\bigl((H_j)_X\bigr) =\frac1{2\pi}\int_0^{2\pi}e^{i\theta N_B}(H_j)_Xe^{-i\theta N_B}\,d\theta=(H_j)_X.    
\end{equation}
It follows that \((V_j)_X=0\). The map \(\mathcal S_{N_B}\) in \Cref{eq:homological-inverse} also acts componentwise, so \((A_j)_X=\mathcal S_{N_B}((V_j)_X)=0\). Thus every nonzero component of \(A_j\) or \(V_j\) has a label intersecting \(B\). For any interaction \(W\) with this property,
\begin{equation}
    \norm W
    \le \sum_{X:X\cap B\ne\varnothing}\norm{W_X}
    \le \sum_{x\in B}\sum_{X\ni x}\norm{W_X}
    \le |B|\norm W_{\kappa_j}.
    \label{eq:finite-ball-global-norm}
\end{equation}
We now use the bounds on \(A_j\) and \(V_j\) stated below in \Cref{eq:uniform-A-bound,eq:uniform-V-bound}. Their proof is given below in Appendix~\ref{app:prelim-proofs} from the ADHH recursion and local conjugation estimate, without using the present dynamical comparison; the forward reference is therefore not circular. Since \(A_j^\dagger=-A_j\), integration of \(e^{uA_j}A_j\) over \(0\le u\le1\) gives \(\norm{e^{A_j}-I}\le\norm{A_j}\). Telescoping the product defining \(Y_s\), using unitarity of each factor, and summing \(\sum_{j<s}(2/3)^j\le3\), we obtain
\begin{equation}
    \norm{Y_s-I}\le\sum_{j<s}\norm{A_j}\le C\Jan|B|/\nu,
    \qquad \norm{V_s}\le C\Jan|B|(2/3)^s.
    \label{eq:finite-ball-dressing}
\end{equation}

Iterating \Cref{eq:ADHH-recursion} gives \(Y_s GY_s^\dagger=G_0+V_s\), hence
\begin{equation}
    e^{-iGt}=Y_s^\dagger e^{-i(G_0+V_s)t}Y_s.    
\end{equation}
Duhamel's formula gives
\begin{equation}
    e^{-i(G_0+V_s)t}-e^{-iG_0t}=-i\int_0^t e^{-i(G_0+V_s)(t-u)}V_s e^{-iG_0u}\,du.    
\end{equation}
The propagators in the integrand are unitary, so the norm of this difference is at most \(|t|\norm{V_s}\). To bound the effect of \(Y_s\), write
\begin{equation}
    e^{-iGt}-e^{-iG_0t} = Y_s^\dagger\bigl(e^{-i(G_0+V_s)t}-e^{-iG_0t}\bigr)Y_s+(Y_s^\dagger-I)e^{-iG_0t}Y_s+e^{-iG_0t}(Y_s-I).  
\end{equation}
Unitary invariance of the norm gives
\begin{equation}
    \norm{e^{-iGt}-e^{-iG_0t}} \le |t|\norm{V_s}+2\norm{Y_s-I}\le C\Jan|B|\left(\nu^{-1}+|t|(2/3)^s\right),    
\end{equation}
which proves (i). For (ii), put \(U=e^{-iGt}\) and \(U_0=e^{-iG_0t}\). The identity
\begin{equation}
    U^\dagger OU-U_0^\dagger OU_0=(U^\dagger-U_0^\dagger)OU+U_0^\dagger O(U-U_0)
\end{equation}
implies \(\norm{U^\dagger OU-U_0^\dagger OU_0} \le2\norm O\norm{U-U_0}\). Combining this with (i) proves (ii), after absorbing the numerical factor into \(C\).
\end{proof}

\subsection{Analytic continuation of the ADHH recursion}
\label{sec:analytic-cont}

We will extend the ADHH recursion beyond Hermitian Hamiltonian $H$. This extension is needed in our algorithm analysis.  Specifically, we replace the Hermitian $H$ by a possibly non-Hermitian $Z$. The resulting bounds allow us to employ Cauchy integrals to bound high-order derivatives. We define the set of coefficient vectors for \Cref{eq:intro-H} that we want to study 
\label{subsec:uniform-order}
\begin{equation}
    \cU:=\{\mu\in\mathbb R^M:\norm{\mu-\lambda}_\infty\le 1\}.
    \label{eq:cU-defn}
\end{equation}
By \Cref{eq:coeff-to-local} we therefore have
\begin{equation}
    \norm{H(\mu)}_{\kappa_0}\le2J,
    \quad \mu\in\cU.
    \label{eq:real-basin-local}
\end{equation}
We will study analyticity in $Z$ in a complex ball defined below
\begin{equation}
    \Jan:=4J, \quad
    \mathfrak B_{\rm an}:=\{Z:\norm Z_{\kappa_0}\le\Jan\}.
    \label{eq:analytic-ball}
\end{equation}
The initial $\nu_0$ in~\Cref{thm:ADHH-normal-form} is therefore replaced by 
\begin{equation}
    {\bar\nu_0}:={270\pi\Jan}/{\kappa_0^2}.
    \label{eq:uniform-nu0}
\end{equation}
and $n_*$ is replaced by
\begin{equation}
\label{eq:new-n*}
    n_*:=\floor{\frac{\nu/{\bar\nu_0}}{[1+\log(\nu/{\bar\nu_0})]^3}}-2
\end{equation}
We choose the field strength $\nu$  to satisfy {\(\nu\ge\bar\nu_0\),} $n_* \ge 1$, and $\nu \ge 54\pi J_{\rm an}/\kappa_0$. We will use these parameters uniformly for all $Z\in\mathfrak B_{\rm an}$. We show that the local norm of $D_j,V_j,A_j$ is still bounded with the same expression even if the initial interaction $H_0=Z$ in the ADHH recursion \Cref{eq:ADHH-recursion} is non-Hermitian.

\begin{theorem}[Uniform analytic bounds for the ADHH recursion]
\label{thm:uniform-normal-form}
We assume $\nu>0$ satisfies {\(\nu\ge\bar\nu_0\)}, 
\begin{equation}
\label{eq:nu-requirement}
    \nu \ge 54\pi J_{\rm an}/\kappa_0,
\end{equation}
and $n_* \ge 1$, with $n_*$ defined in \Cref{eq:new-n*}. The ADHH recursion \Cref{eq:ADHH-recursion} is well defined for every \(Z\in\mathfrak B_{\rm an}\), including non-Hermitian \(Z\), and obeys
\begin{align}
    \norm{D_j}_{\kappa_j}&\le C\Jan,\label{eq:uniform-D-bound}\\
    \norm{V_j}_{\kappa_j}&\le C\Jan(2/3)^j,\label{eq:uniform-V-bound}\\
    \norm{A_j}_{\kappa_j}&\le {C\Jan(2/3)^j}/{\nu},
    \label{eq:uniform-A-bound}
\end{align}
where $0\le j\le n_*$. The maps \(Z\mapsto D_j(Z),V_j(Z),A_j(Z)\) are analytic on the interior of \(\mathfrak B_{\rm an}\).  
\end{theorem}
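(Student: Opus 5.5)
The plan is to rerun the ADHH inductive argument essentially verbatim. We check at each step that no use of Hermiticity is hidden in it. Then we obtain analyticity from the fact that each recursion step is an analytic map between Banach spaces of interactions.

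The induction hypothesis at step $j$ consists of the three bounds \Cref{eq:uniform-D-bound,eq:uniform-V-bound,eq:uniform-A-bound}, in the running norm $\norm{\cdot}_{\kappa_j}$. It will be convenient to carry the sharper forms
\begin{equation}
 \norm{D_j-D_0}_{\kappa_j}\le\sum_{i<j}(\text{increments}),\qquad \norm{V_j}_{\kappa_j}\le 2\Jan(2/3)^j .
\end{equation}
The base case follows from \Cref{eq:projection-norms}: $\norm{D_0}_{\kappa_0}\le\Jan$ and $\norm{V_0}_{\kappa_0}\le2\Jan$.

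For $A_j$, the integral representation in \Cref{eq:homological-inverse} expresses $\mathcal S_N$ as an average of conjugations by the unitaries $e^{is\nu N}$, weighted by a kernel of total mass at most $T/2=\pi/\nu$. These conjugations are onsite, so they preserve labels and operator norms componentwise. This requires no Hermiticity of $V_j$, and gives $\norm{A_j}_{\kappa_j}\le(\pi/\nu)\norm{V_j}_{\kappa_j}$.

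Next we use the non-$N$ form \Cref{eq:recursion-no-N} of the recursion and apply \Cref{lem:ADHH-conjugation}. That lemma is a purely combinatorial estimate on the series $e^{\operatorname{ad}_Q}Z$ with the labeled commutators of \Cref{eq:interaction-operations}; it never uses $Q^\dagger=-Q$. Its condition $3\norm{A_j}_{\kappa_j}\le\delta\kappa_j$ is where the requirement on $\nu$ enters. For $\alpha_j(V_j)-V_j$ we average the same bound over $t\in[0,1]$. This yields
\begin{equation}
 \norm{V_{j+1}}_{\kappa_{j+1}}\le 2\Bigl(\norm{\gamma_j(D_j)-D_j}+\norm{\gamma_j(V_j)-V_j}+\norm{\alpha_j(V_j)-V_j}\Bigr)_{\kappa_{j+1}}\le \frac{C\norm{A_j}_{\kappa_j}(\norm{D_j}_{\kappa_j}+\norm{V_j}_{\kappa_j})}{\delta\kappa_j\kappa_{j+1}} .
\end{equation}
The factor $2$ in front uses $\norm{W-\langle W\rangle}\le2\norm{W}$; the charge-diagonal part of $\gamma_j(D_j)$ is $D_j$. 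Substituting the bound on $A_j$, the right-hand side is at most $(C'\Jan/\nu)\,\norm{V_j}_{\kappa_j}/(\delta\kappa_j\kappa_{j+1})$.

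For $\kappa_j=\kappa_0/[1+\log(j+1)]$ we have $1/(\delta\kappa_j\kappa_{j+1})\lesssim (j+1)[1+\log(j+2)]^3/\kappa_0^3$. Hence the contraction factor is at most $2/3$ exactly while $j\le n_*$, with $n_*$ defined through $\bar\nu_0$ in \Cref{eq:new-n*}. This is the bookkeeping of \cite{abanin2017rigorous}, with $\norm{D}+2\norm{V}\le 3\Jan$ replaced by the uniform constant absorbed into $270\pi\Jan/\kappa_0^2$. The increments $\norm{D_{j+1}-D_j}_{\kappa_{j+1}}\le\norm{\gamma_j(D_j)-D_j}_{\kappa_{j+1}}+\ldots$ are summable, since they are $\lesssim (2/3)^j$. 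This gives the uniform bound on $D_j$, which closes the induction: the $D_j$ bound fed back into the contraction stays at $C\Jan$.

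Analyticity also goes by induction. The maps $\langle\cdot\rangle_N$ and $\mathcal S_N$ are bounded linear maps on interactions with finite $\kappa_j$-norm. The map $(A,W)\mapsto e^{\operatorname{ad}_A}W$ is given by a power series. By \Cref{lem:ADHH-conjugation}, applied to its partial sums, this series converges uniformly from $\norm{\cdot}_{\kappa_j}$ into $\norm{\cdot}_{\kappa_{j+1}}$ on the region $3\norm{A}_{\kappa_j}\le\delta\kappa_j/2$. Each term is a polynomial in $(A,W)$, so the limit is analytic there (uniform limits of analytic maps are analytic). The a priori bounds show that $A_j(Z)$ stays strictly inside this region for all $Z$ in the interior of $\mathfrak B_{\rm an}$. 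Composition then gives analyticity of $Z\mapsto D_j,V_j,A_j$.

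The main obstacle is making sure that \Cref{lem:ADHH-conjugation} holds as stated for non-anti-Hermitian $Q$, and uniformly with the label conventions. If its proof secretly uses unitarity of $e^{Q}$ (for instance through norm preservation), I would reprove it directly. The route would be to expand $\operatorname{ad}_Q^m Z$ into nested commutators of labeled components. Each nested commutator is bounded by $2^m\prod\norm{Q_{X_i}}\norm{Z_Y}$, and the connected unions are summed with the usual $e^{\kappa|X|}$ weights, trading $\delta\kappa$ for the combinatorial factor $m!$. This argument uses only the triangle inequality and submultiplicativity, so it is valid for complex interactions.
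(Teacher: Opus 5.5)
Your proposal is correct, and for the three bounds it follows the paper's argument. The shared steps are:
\begin{itemize}
\item the split $H_{j+1}=D_j+W_j$ from Duhamel's formula;
\item the estimate $a_j\le\pi v_j/\nu$;
\item Lemma~\ref{lem:ADHH-conjugation}, used without Hermiticity;
\item the contraction $v_{j+1}\le\frac23 v_j$ for $j<n_*$, via $m_{\rm A}(j)\le 18(j+2)[1+\log(j+2)]^3/\kappa_0^2\le\nu/(15\pi J_{\rm an})$;
\item the check $3a_j\le\delta\kappa_j$ from $\nu\ge54\pi J_{\rm an}/\kappa_0$.
\end{itemize}
You sum the $D$-increments, whereas the paper shows that $d_j+2v_j$ is non-increasing. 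The two are equivalent: each increment is at most $v_j/3$, so $d_j+2v_j\le d_0+v_0\bigl(1+(2/3)^j\bigr)\le d_0+2v_0\le 5J_{\rm an}$.

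Two constants in your sketch must be corrected for the stated threshold to close the induction. First, the relevant bound is $d_0+2v_0\le5J_{\rm an}$, not $3J_{\rm an}$; this is where $\bar\nu_0=270\pi J_{\rm an}/\kappa_0^2=54\pi\cdot5J_{\rm an}/\kappa_0^2$ comes from. Second, $1/(\delta\kappa_j\kappa_{j+1})\le(j+2)[1+\log(j+2)]^3/\kappa_0^2$ scales as $\kappa_0^{-2}$, not $\kappa_0^{-3}$. Also, $V_{j+1}=W_j-\langle W_j\rangle_N$ follows because $\langle D_j\rangle_N=D_j$, not because the diagonal part of $\gamma_j(D_j)$ is $D_j$.

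The genuine difference is in the analyticity step. The paper observes that $\mathfrak I_\Lambda$ is finite dimensional. Each step is then a composition of linear maps with $(A,W)\mapsto e^{\operatorname{ad}_A}W$, which is entire, so no convergence region is needed.

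Your Banach-space argument would give more, namely a size-uniform statement in the local norms. As written, however, it has two loose ends:
\begin{itemize}
\item Lemma~\ref{lem:ADHH-conjugation} bounds only the full conjugation $e^QZe^{-Q}-Z$. Uniform convergence of the partial sums therefore has to come from the termwise estimates inside its proof, not from its statement.
\item The margin $3\|A\|_{\kappa_j}\le\delta\kappa_j/2$ is not delivered by the hypotheses. The bound from $\nu\ge54\pi J_{\rm an}/\kappa_0$ is $3a_j\le(\kappa_0/9)(2/3)^j$. At $j=1$ this is about $0.074\kappa_0$, while $\delta\kappa_1/2\approx0.057\kappa_0$.
\end{itemize}
Working on the open set $\{3\|A\|_{\kappa_j}<\delta\kappa_j\}$ repairs the second point, since the inequalities are strict for interior $Z$. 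Finite-dimensionality makes the whole detour unnecessary.
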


The proof of this theorem largely follows the original proof in \cite{abanin2017rigorous} for the Hermitian case, and is provided in Appendix~\ref{app:prelim-proofs} for completeness. We note that these estimates do not necessarily hold beyond $j=n_{*}$. This is because the locality loss parameter $\delta\kappa_j$ in~\Cref{lem:ADHH-conjugation} decreases with $j$, and when $j$ is too large we will not be able to guarantee $3\|A_j\|_{\kappa_j}\leq \delta\kappa_j$. We will prove the above analytic continuation of the ADHH recursion in estimates in~Appendix~\ref{app:prelim-proofs}. The key is the local conjugation estimate in~\Cref{lem:ADHH-conjugation}, whose proof does not require Hermiticity.

The analyticity of the quantities involved, in particular that of $D_j$, and the above bounds will allow us to use the Cauchy integral estimates below (see \cite[Chap.~I, Sec.~1, Thm.~1.6]{range1998holomorphic}.):

\begin{lemma}[Cauchy integral estimates]
\label{lem:Cauchy-general}
Let \(f\) be analytic on a neighborhood of the closed disc \(\abs z\le r\).  Then
\begin{equation}
    f^{(p)}(0)=\frac{p!}{2\pi i}\oint_{|z|=r}\frac{f(z)}{z^{p+1}}\,dz,
    \quad
    |f^{(p)}(0)|\le\frac{p!}{r^p}\sup_{|z|=r}|f(z)|.
    \label{eq:Cauchy-one-variable}
\end{equation}
More generally, if \(f(z_1,\ldots,z_s)\) is analytic on a polydisc \(|z_j|\le r_j\), then
\begin{equation}
    \abs{\partial_1^{p_1}\cdots\partial_s^{p_s}f(0)}
    \le \frac{p_1!\cdots p_s!}{r_1^{p_1}\cdots r_s^{p_s}} \sup_{|z_j|\le r_j}|f(z_1,\ldots,z_s)|.
    \label{eq:Cauchy-polydisc}
\end{equation}
\end{lemma}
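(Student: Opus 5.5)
The plan is to derive both estimates from the one-variable Cauchy integral formula, and then get the polydisc case by applying that formula one variable at a time.

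\emph{One variable.} Since \(f\) is analytic on a neighborhood of \(\abs z\le r\), there is \(r'>r\) such that \(f\) has a power series \(f(z)=\sum_{k\ge0}c_kz^k\) converging on \(|z|<r'\). The convergence is uniform on the circle \(|z|=r\), with \(c_k=f^{(k)}(0)/k!\). Substitute the series into \(\frac{1}{2\pi i}\oint_{|z|=r}f(z)z^{-p-1}\,dz\) and integrate term by term. Using
\[
\frac{1}{2\pi i}\oint_{|z|=r} z^{k-p-1}\,dz=\ind[k=p],
\]
which follows from the parametrization \(z=re^{i\theta}\), the integral equals \(c_p\). Multiplying by \(p!\) gives the integral formula in \Cref{eq:Cauchy-one-variable}.

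For the bound, parametrize the circle as \(z=re^{i\theta}\), so \(|dz|=r\,d\theta\). The integrand then has modulus at most \(\sup_{|z|=r}|f|/r^{p+1}\), and the circle has length \(2\pi r\). This yields
\[
|f^{(p)}(0)|\le \frac{p!}{2\pi}\cdot 2\pi r\cdot \frac{\sup_{|z|=r}|f(z)|}{r^{p+1}} = \frac{p!}{r^p}\sup_{|z|=r}|f(z)|.
\]

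\emph{Polydisc.} I will read the hypothesis as analyticity on a neighborhood of the closed polydisc, matching the one-variable statement. If only the closed-disc statement is intended, I will apply the result on slightly smaller radii \(r_j-\epsilon\) and let \(\epsilon\to0\), using continuity of the supremum.

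Fix \(z_2,\dots,z_s\) in the polydisc. The map \(z_1\mapsto f(z_1,z_2,\dots,z_s)\) is analytic near \(|z_1|\le r_1\), and so is each of its partial derivatives in the remaining variables. Applying the one-variable formula repeatedly gives the iterated Cauchy formula
\[
\partial_1^{p_1}\cdots\partial_s^{p_s}f(0)=\frac{p_1!\cdots p_s!}{(2\pi i)^s}\oint_{|z_1|=r_1}\!\cdots\oint_{|z_s|=r_s}\frac{f(z_1,\dots,z_s)}{z_1^{p_1+1}\cdots z_s^{p_s+1}}\,dz_s\cdots dz_1 .
\]
To justify this, I will differentiate under the integral sign. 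This is legitimate because \(f\) is jointly continuous and its partial derivatives are continuous on the compact torus \(\prod_j\{|z_j|=r_j\}\). Equivalently, I can expand \(f\) in a multivariable power series that converges uniformly on that torus and integrate term by term, exactly as in the one-variable case.

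Bounding the integrand on the torus by \(\sup|f|/\prod_j r_j^{p_j+1}\) and multiplying by the torus measure \(\prod_j 2\pi r_j\) gives \Cref{eq:Cauchy-polydisc}. The supremum over the torus is at most the supremum over the full polydisc, which is the quantity appearing in the statement.

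The only step that needs real care is interchanging the iterated integrals with differentiation, equivalently the uniform convergence of the multivariable series on the torus. This is standard from compactness together with joint analyticity, which implies joint continuity. Everything else is a direct length-times-supremum estimate. In the paper, the result can also simply be cited from \cite{range1998holomorphic}.
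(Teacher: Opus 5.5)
Your proof is correct, and it is the standard textbook argument: the paper gives no proof of its own and simply cites \cite{range1998holomorphic} (Chap.~I, Sec.~1, Thm.~1.6), which is exactly the iterated Cauchy integral formula followed by the length-times-supremum bound. One small simplification in your shrinking-radius fallback: the supremum over a smaller polydisc is at most the supremum over the full one, so monotonicity already gives the estimate as \(\epsilon\to0\), and no continuity of the supremum is needed.
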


\section{Experiment}
\label{sec:experiment}

We will run experiments to gather a set of transition-energy estimates (which we shall define in \Cref{eq:full-gap}) for different applied static fields. We call $k$ a \emph{defect site} because the applied static field at $k$ has half the strength used on the other sites of \(B=B_R(k)\). We also set the field to be zero outside \(B\). For each defect site \(k\in\Lambda\) and repetition \(\ell=1,\ldots,m\), let
\begin{equation}
    \rho=(k,\ell)
    \label{eq:row-index}
\end{equation}
index one transition-energy estimate, which will correspond to an element in the target vector in the optimization procedure to be introduced in~\Cref{sec:optimization}.  Independently for repetition index $\ell$ and site $k$, draw
\begin{equation}
    s_j^\rho\in\{0,1\}, \quad \beta_j^\rho\in\{X,Y,Z\}
    \label{eq:random-pattern}
\end{equation}
uniformly.  These parameters allow us to fix a signed local Pauli operator for qubit $j$,
\begin{equation}
    \tau_{j,z}^\rho:=(-1)^{s_j^\rho}\sigma_j^{\beta_j^\rho}.
    \label{eq:tau-z}
\end{equation}
We will view this operator as the new Pauli-$z$ operator for a changed local basis, hence the $z$ subscript in its notation. Choose eigenvectors \(\ket{\phi_{j,0}^\rho}\), \(\ket{\phi_{j,1}^\rho}\) satisfying
\begin{equation}
    \tau_{j,z}^\rho\ket{\phi_{j,b}^\rho}=(-1)^b\ket{\phi_{j,b}^\rho},
    \quad b\in\{0,1\},
    \label{eq:local-eigenstates}
\end{equation}
and set
\begin{equation}
    Q_j^\rho:=\frac{I-\tau_{j,z}^\rho}{2}.
    \label{eq:Q-row}
\end{equation}
With the notation introduced above, we can define the static control that we will apply:
\begin{equation}
    H_{\rm ctrl}^\rho=\sum_{j\in B\setminus\{k\}}\tau_{j,z}^\rho+\frac{\tau_{k,z}^\rho}{2},
 \label{eq:Hctrl}
\end{equation}
so the physical Hamiltonian corresponding to transition-energy estimate indexed by \(\rho\) is
\begin{equation}
    H_{\rm tot}^\rho(\mu)=H(\mu)-\nu H_{\rm ctrl}^\rho.
    \label{eq:Htot}
\end{equation}
We also use the reference Hamiltonian obtained by extending the field to every qubit, with half strength at \(k\) and full strength at every \(j\ne k\). Its charge is \(N_\rho\), defined below. This is an auxiliary model for defining the transition energy to be reconstructed; the experimental field remains confined to \(B\):
\begin{equation}
    N_\rho:=Q_k^\rho+2\sum_{j\ne k}Q_j^\rho.
    \label{eq:N-rho}
\end{equation}

Define the physical charge \(N_\rho^B\) and the Hamiltonians \(G_\rho^B,G_\rho\) by
\begin{align}
    N_\rho^B&:=Q_k^\rho+2\sum_{j\in B\setminus\{k\}}Q_j^\rho,\\
    G_\rho^B(\mu)&:=H(\mu)+\nu N_\rho^B=H_{\rm tot}^\rho(\mu)+\nu(|B|-1/2)I,\\
    G_\rho(\mu)&:=H(\mu)+\nu N_\rho.
    \label{eq:G-rho}
\end{align}
The equality for \(G_\rho^B\) shows that it generates the same observable dynamics as \(H_{\rm tot}^\rho\). The Hamiltonian \(G_\rho\) describes
the reference field on all of \(\Lambda\).

For this auxiliary integer-valued charge $N_\rho$, the \(N_\rho=0\) and \(N_\rho=1\) eigenspaces are one dimensional.  Their normalized states are
\begin{equation}
\label{eq:Phi0_Phi1}
    \ket{\Phi_0^\rho}=\bigotimes_{j\in\Lambda}\ket{\phi_{j,0}^\rho},
    \quad  \ket{\Phi_1^\rho}=\ket{\phi_{k,1}^\rho}\bigotimes_{j\ne k}\ket{\phi_{j,0}^\rho}.
\end{equation}
Consequently,
\begin{equation}
    \ket{\Phi_+^\rho}:=\frac{\ket{\Phi_0^\rho}+\ket{\Phi_1^\rho}}{\sqrt2}=\frac{\ket{\phi_{k,0}^\rho}+\ket{\phi_{k,1}^\rho}}{\sqrt2}\bigotimes_{j\ne k}\ket{\phi_{j,0}^\rho}
    \label{eq:Phi-plus}
\end{equation}
is a product state. The charge-one eigenspace of \(N_\rho\) is one dimensional. In contrast, because \(N_\rho^B\) vanishes outside \(B\), each of its charge-zero and charge-one eigenspaces includes arbitrary states on \(\Lambda\setminus B\). The effect of those exterior states will be treated below in \Cref{thm:Ramsey-approximation}.

\subsection{Local Ramsey signal}
\label{subsec:Ramsey-signal}

Similar to a Ramsey interferometry experiment, we will introduce the observables and initial states to generate what we call the \emph{local Ramsey signal} corresponding to a defect site $k$, which enables us to extract effective transition-energy information through robust frequency estimation. On the defect qubit $k$ define the two single-qubit Hermitian operators
\begin{equation}
\label{eq:X-Y-quadrature}
    \tau_{k,x}^\rho:=\ket{\phi_{k,0}^\rho}\!\bra{\phi_{k,1}^\rho}+\ket{\phi_{k,1}^\rho}\!\bra{\phi_{k,0}^\rho}, \quad
    \tau_{k,y}^\rho:=i\ket{\phi_{k,0}^\rho}\!\bra{\phi_{k,1}^\rho}-i\ket{\phi_{k,1}^\rho}\!\bra{\phi_{k,0}^\rho}.
\end{equation}
They are single-qubit Pauli observables that correspond to Pauli-$x$ and $y$ operators in a $\rho$-dependent frame.  With
\begin{equation}
    \tau_{k,+}^\rho:=\tau_{k,x}^\rho+i\tau_{k,y}^\rho=2\ket{\phi_{k,1}^\rho}\!\bra{\phi_{k,0}^\rho},
    \label{eq:O-plus}
\end{equation}
we have \(\tau_{k,+}^\rho\ket{\Phi_0^\rho}=2\ket{\Phi_1^\rho}\) and \(\tau_{k,+}^\rho\ket{\Phi_1^\rho}=0\).

Each experiment for a transition-energy estimate indexed $\rho$ involves preparing \(\ket{\Phi_+^\rho}\), evolving for time \(t\) under \(H_{\rm tot}^\rho(\lambda)\), and measuring \(\tau_{k,x}^\rho\) or \(\tau_{k,y}^\rho\). These experiments allow us to estimate, by averaging over measurement results,
\begin{equation}
    S_\rho(t):=\bra{\Phi_+^\rho}e^{iG_\rho^B(\lambda)t}\tau_{k,+}^\rho e^{-iG_\rho^B(\lambda)t}\ket{\Phi_+^\rho}.
    \label{eq:signal}
\end{equation}
Here \(H_{\rm tot}^\rho(\lambda)\) is replaced by \(G_\rho^B(\lambda)\) because the scalar shift does not change the signal.

Since the eigenspaces of \(N_\rho\) with eigenvalues zero and one are one dimensional, an operator commuting with \(N_\rho\) acts as a scalar on each of them. Fix \(1\le s\le n_*(\nu)\), run \Cref{eq:ADHH-recursion} with charge \(N_\rho\) and input \(H(\mu)\) for exactly \(s\) steps, and denote the resulting \(D_s\) by \(\widehat D_\rho^\Lambda(\mu)\). The integer \(n_*(\nu)\) is the maximum admissible order; the chosen \(s\) is held fixed as \(\mu\) varies. We define the half transition energy of \(\widehat D_\rho^\Lambda(\mu)\) between \(\ket{\Phi_0^\rho}\) and \(\ket{\Phi_1^\rho}\) by
\begin{equation}
    E_\rho^{\Lambda}(\mu) :=\frac12\left( \bra{\Phi_1^\rho}\widehat D_\rho^{\Lambda}(\mu)\ket{\Phi_1^\rho} -\bra{\Phi_0^\rho}\widehat D_\rho^{\Lambda}(\mu)\ket{\Phi_0^\rho} \right). \label{eq:full-gap}
\end{equation}
The quantity $2E_\rho^{\Lambda}(\mu)$ is the effective transition energy between $\ket{\Phi_0^\rho}$ and $\ket{\Phi_1^\rho}$. Indeed, both states are eigenstates of $\widehat D_\rho^{\Lambda}(\mu)$ because this interaction commutes with $N_\rho$ and the relevant charge sectors are non-degenerate.

We define the functional used to extract the effective half transition energy from $\widehat D_\rho^{\Lambda}(\mu)$, which is valid for any operator $W$,
\begin{equation}
    \ell_\rho(W):=\frac12\left(\bra{\Phi_1^\rho}W\ket{\Phi_1^\rho}-\bra{\Phi_0^\rho}W\ket{\Phi_0^\rho}\right).
    \label{eq:transition-functional}
\end{equation}
Because both $\ket{\Phi_0^\rho}$ and $\ket{\Phi_1^\rho}$ are eigenstates of $N_\rho$, the functional vanishes on operators that map one charge sector to another. Note that because $\ket{\Phi_1^\rho} = \tau_{k,x}^\rho\ket{\Phi_0^\rho}$,
\begin{equation}
    \label{eq:transition-functional-Xk}
    \ell_\rho(W) =\frac12 \bra{\Phi_0^\rho}( \tau_{k,x}^\rho W  \tau_{k,x}^\rho-W)\ket{\Phi_0^\rho}.
\end{equation}
The functional therefore also vanishes on every term in $W$ whose support does not contain $k$. This fact provides us a way to upper bound $|\ell_\rho(W)|$ using the local norm of $W$, for any operator $W$:
\begin{equation}
    |\ell_\rho(W)| \le\sum_{X\ni k}\norm{W_X} \le\norm W_\kappa.
    \label{eq:ell-local-bound-general-main}
\end{equation}
As a special case, we then have, for any $\mu$ such that $\mu\in \cU$,
\begin{equation}
\label{eq:half-gap-bound}
    |E_\rho^\Lambda(\mu)|\leq \|\widehat D_\rho^\Lambda(\mu)\|_{\kappa_{{s}}}\leq CJ_{\rm an},
\end{equation}
for a constant $C$,
where the last inequality comes from \Cref{eq:uniform-D-bound}.

We next relate \(S_\rho(t)\) to \(E_\rho^\Lambda(\lambda)\). The eigenspaces of \(N_\rho^B\) with eigenvalues zero and one contain arbitrary states on \(\Lambda\setminus B\). We will represent the restriction of \(D_{\rho,s}^{\rm ball}\) to these eigenspaces by operators \(K_0,K_1\), and bound \(K_1-K_0-2F_{\rho,R}(\lambda)I\). The function \(F_{\rho,R}\) is defined below in \Cref{eq:patch-gap} by running the recursion on \(B\). The proof also uses the expansion into connected \(p\)-words proved later in \Cref{sec:local-approx}. That expansion and its coefficient bounds follow from the recursion
without a dynamical comparison, so this forward reference is not circular.

\begin{theorem}[Finite-ball Ramsey approximation]
\label{thm:Ramsey-approximation}
Assume \(\nu\ge\max\{\bar\nu_0,54\pi\Jan/\kappa_0\}\) and \(n_*(\nu)\ge1\), the same conditions as in \Cref{thm:uniform-normal-form}. Fix \(1\le s\le n_*(\nu)\), an integer \(K\ge0\), and \(R\ge(K+1)r_0\). Both \(E_\rho^\Lambda\) in \Cref{eq:full-gap} and \(F_{\rho,R}\), defined below in \Cref{eq:patch-gap}, are evaluated at this same order \(s\).

For each \(\rho=(k,\ell)\), set \(B=B_R(k)\), \(C=\Lambda\setminus B\), and, for \(b=0,1\), define
\begin{equation}
    \ket{\phi_{B,b}^\rho} :=\ket{\phi_{k,b}^\rho} \bigotimes_{j\in B\setminus\{k\}}\ket{\phi_{j,0}^\rho}, \qquad \ket{\phi_{B,+}^\rho} :=\frac{\ket{\phi_{B,0}^\rho}+\ket{\phi_{B,1}^\rho}}{\sqrt2}.
    \label{eq:ball-Ramsey-states}
\end{equation}
Let \(\sigma_C\) be any density matrix on \(C\), and set \(\varrho_{+,\sigma}:=\ket{\phi_{B,+}^\rho}\bra{\phi_{B,+}^\rho}
\otimes\sigma_C\). Define
\begin{equation}
    S_{\rho,\sigma}(t):=\Tr\!\left(\varrho_{+,\sigma}e^{iG_\rho^B(\lambda)t} \tau_{k,+}^\rho e^{-iG_\rho^B(\lambda)t}\right).
    \label{eq:Ramsey-general-signal}
\end{equation}
For \(\sigma_C=\bigotimes_{j\in C}
\ket{\phi_{j,0}^\rho}\bra{\phi_{j,0}^\rho}\), this is \(S_\rho(t)\) from \Cref{eq:signal}. For every real \(t\):

\begin{enumerate}[label=(\roman*)]
    \item The approximation using \(E_\rho^\Lambda(\lambda)\) satisfies
    \begin{equation}
        \abs{S_{\rho,\sigma}(t)-e^{i[\nu+2E_\rho^\Lambda(\lambda)]t}} \le \frac{C_1\Jan V_R}{\nu} +C_2\Jan V_R|t|(2/3)^s+C_3\Jan|t|2^{-K}.
        \label{eq:Ramsey-approximation}
    \end{equation}
    
    \item The approximation using \(F_{\rho,R}(\lambda)\) satisfies
    \begin{equation}
        \abs{S_{\rho,\sigma}(t)-e^{i[\nu+2F_{\rho,R}(\lambda)]t}} \le \frac{C_1\Jan V_R}{\nu}+C_2\Jan V_R|t|(2/3)^s+C_3\Jan|t|2^{-K}.
        \label{eq:Ramsey-patch-approximation}
    \end{equation}
\end{enumerate}
The constants \(C_1,C_2,C_3\) are independent of the system size, \(\rho\), \(\sigma_C\), and the admissible stopping order.
\end{theorem}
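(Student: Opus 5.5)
We first reduce the physical signal to the effective dynamics. Writing $G=G_\rho^B(\lambda)$ and $G_0=\nu N_\rho^B+D_{\rho,s}^{\rm ball}$, where $D_{\rho,s}^{\rm ball}$ is the order-$s$ ADHH interaction with charge $N_\rho^B$, we apply \Cref{thm:ADHH-dynamics}(ii) with $O=\tau_{k,+}^\rho$ ($\norm O=2$) and $|B|\le V_R$. This replaces $S_{\rho,\sigma}(t)$ by $\Tr(\varrho_{+,\sigma}e^{iG_0t}\tau_{k,+}^\rho e^{-iG_0t})$, and the cost is the first two error terms $C_1\Jan V_R/\nu+C_2\Jan V_R|t|(2/3)^s$.

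Since $[D_{\rho,s}^{\rm ball},N_\rho^B]=0$, the effective dynamics preserve the charge sectors of $N_\rho^B$. The eigenvalue-$b$ sector for $b=0,1$ is exactly $\ket{\phi_{B,b}^\rho}\otimes\mathcal H_C$, because flipping any non-defect site of $B$ costs charge $2$. So $D_{\rho,s}^{\rm ball}$ restricted to sector $b$ equals $\ket{\phi_{B,b}^\rho}\bra{\phi_{B,b}^\rho}\otimes K_b$ for operators $K_b$ on $\mathcal H_C$. The operator $\tau_{k,+}^\rho$ maps sector $0$ to sector $1$ and kills sector $1$, so
\[
\Tr\!\left(\varrho_{+,\sigma}e^{iG_0t}\tau_{k,+}^\rho e^{-iG_0t}\right)=e^{i\nu t}\Tr_C\!\left(\sigma_C\,e^{iK_1t}e^{-iK_0t}\right).
\]
It then suffices to show $\norm{e^{iK_1t}e^{-iK_0t}-e^{2iF_{\rho,R}(\lambda)t}I}\le C_3\Jan|t|2^{-K}$. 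By Duhamel, this follows from $\norm{K_1-K_0-2F_{\rho,R}(\lambda)I}\le C\Jan2^{-K}$. Part (ii) follows directly, and part (i) follows by combining it with the patch locality estimate $|E_\rho^\Lambda-F_{\rho,R}|\le C_{\rm P}\Jan2^{-K}$ (\Cref{thm:patch-locality}), using $|e^{ia}-e^{ib}|\le|a-b|$.

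The main obstacle is the bound on $K_1-K_0$. We introduce the bookkeeping parameter $z$ and expand $D_{\rho,s}^{\rm ball}(zH(\lambda))=\sum_p z^pD^{(p)}$ into connected $p$-words, as in \Cref{lem:connected-word-expansion}. We then split $K_1-K_0$ into three parts.

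\textbf{Components not containing $k$.} These contribute equally to $K_0$ and $K_1$, since the two sectors differ only at $k$, and so they cancel.

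\textbf{Components containing $k$ of degree $p\le K$.} A connected $p$-word through $k$ stays within distance $pr_0\le Kr_0<R$ of $k$. It therefore lies inside $B$, acts as a scalar on $\ket{\phi_{B,b}^\rho}$, and is built only from terms of $H_B$. The ADHH recursion for these low orders only ever sees $N_\rho^B$ restricted to the ball, where it coincides with the patch charge. This needs checking: the recursion is multilinear in the words, and a word's charge projection depends only on $N$ on its support. Hence these coefficients reproduce exactly the degree-$\le K$ Taylor coefficients of $2F_{\rho,R}(z\lambda)$.

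\textbf{Components containing $k$ of degree $p>K$.} By \Cref{thm:uniform-normal-form}, $\norm{D_{\rho,s}^{\rm ball}(zH)}_{\kappa_s}\le C\Jan$ for $|z|\le2$, since $\norm{zH(\lambda)}_{\kappa_0}\le 2J\le\Jan$. Cauchy's estimate (\Cref{lem:Cauchy-general}) then bounds the sum over $X\ni k$ of $\norm{D^{(p)}_X}$ by $C\Jan2^{-p}$. Summing over $p>K$ gives $C\Jan2^{-K}$. The same bound holds for the tail of $F_{\rho,R}$.

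A subtle point is that $K_b$ includes the partial matrix element $\bra{\phi_{B,b}}\cdot\ket{\phi_{B,b}}$ of components straddling $\partial B$. The norm of such a partial trace is still at most $\norm{W_X}$, so the estimate survives. We will verify this with care, along with uniformity of the constants in $\sigma_C$ and $s$.
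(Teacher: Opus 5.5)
Your proposal is correct and follows essentially the same route as the paper's proof. You reduce to the effective generator $\nu N_\rho^B+D_{\rho,s}^{\rm ball}$ via \Cref{thm:ADHH-dynamics}, factor the signal through the conditional exterior Hamiltonians $K_0,K_1$, and bound $K_1-K_0-2F_{\rho,R}(\lambda)I_C$ by matching Taylor coefficients through degree $K$ (connected words through $k$ stay in $B$, where the charges agree) plus Cauchy tails; you then close with Duhamel and deduce (i) from (ii) via \Cref{thm:patch-locality}. The only cosmetic difference is that you apply the Cauchy estimate componentwise in the local norm before taking partial matrix elements, whereas the paper applies it to the operator-valued function $\mathcal T_B(D_{\rho,s}^{\rm ball}(z))$; both give the same $C\Jan2^{-p}$ coefficient bound.
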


\begin{proof}
We first construct the effective Hamiltonian with charge \(N_\rho^B\). Run the full-lattice ADHH recursion with input \(H(\lambda)\) and charge \(N_\rho^B\) for \(s\) steps, and denote its output \(D_s\) by \(D_{\rho,s}^{\rm ball}\). The superscript indicates that the \emph{field} is confined to the ball: the input Hamiltonian and the recursion still act on the full lattice, including couplings across the boundary of \(B\). In particular, this output need not equal the patch output \(\widehat D_{\rho,R}(\lambda)\). The latter uses only \(H_B(\lambda):=\sum_{a:\supp(P_a)\subseteq B}\lambda_aP_a\), as defined formally below in \Cref{eq:patch-H}. Set
\begin{equation}
    G_{\rm eff}:=\nu N_\rho^B+D_{\rho,s}^{\rm ball}.
    \label{eq:ball-effective-generator}
\end{equation}

The eigenspaces with eigenvalues zero and one of \(N_\rho^B\) are \(\operatorname{span}\{\ket{\phi_{B,b}^\rho}\}\otimes\mathcal H_C\), for \(b=0,1\), because the defect has charge weight one, the other sites of \(B\) have weight two, and the exterior has weight zero. Their spectral projections are \(\Pi_b=\ket{\phi_{B,b}^\rho}\bra{\phi_{B,b}^\rho}\otimes I_C\). Since \([D_{\rho,s}^{\rm ball},N_\rho^B]=0\), both eigenspaces are invariant under \(G_{\rm eff}\).

For an operator \(W\) on the full system, define its partial matrix element over \(B\) by
\begin{equation}
    \mathcal M_b(W) :=(\bra{\phi_{B,b}^\rho}\otimes I_C) W(\ket{\phi_{B,b}^\rho}\otimes I_C)=\Tr_B\!\left[ (\ket{\phi_{B,b}^\rho}\bra{\phi_{B,b}^\rho}\otimes I_C)W\right].
    \label{eq:ball-partial-matrix-element}
\end{equation}
Thus \(\mathcal M_b(W)\) is an operator on \(C\): the partial trace is taken after inserting the projector onto the specified state of \(B\). The conditional exterior Hamiltonians are \(K_b:=\mathcal M_b(D_{\rho,s}^{\rm ball})\). For every \(\ket\psi\in\mathcal H_C\), invariance of these eigenspaces gives
\begin{equation}
    G_{\rm eff}(\ket{\phi_{B,b}^\rho}\otimes\ket\psi)=\ket{\phi_{B,b}^\rho}\otimes(\nu b I_C+K_b)\ket\psi.
    \label{eq:ball-sector-generator}
\end{equation}

We next compare the difference of these exterior Hamiltonians with the patch transition energy. Define
\begin{align}
    \mathcal T_B(W)&:=\frac12\bigl(\mathcal M_1(W)-\mathcal M_0(W)\bigr),\\
    \norm{\mathcal T_B(W)}&\le\sum_{X\ni k}\norm{W_X}\le\norm W_\kappa.
    \label{eq:ball-transition-functional}
\end{align}
To see the first inequality, the two states in \Cref{eq:ball-Ramsey-states} differ only at \(k\), so \(\mathcal T_B(W_X)=0\) whenever \(k\notin X\). Each partial matrix element is operator-norm contractive, giving \(\norm{\mathcal T_B(W_X)}\le\norm{W_X}\) for the remaining components. In particular, \(K_1-K_0=2\mathcal T_B(D_{\rho,s}^{\rm ball})\).

Fix \(\mu\in\cU\). To make the coefficient comparison explicit, let \(D_{\rho,s}^{\rm ball}(z;\mu)\) denote \(D_s\) from the recursion with charge \(N_\rho^B\) and input \(zH(\mu)\), and introduce the two analytic functions
\begin{align}
    \mathcal F_\rho^{\rm ball}(z)&:=\mathcal T_B\bigl(D_{\rho,s}^{\rm ball}(z;\mu)\bigr)=\sum_{p\ge0}T_p z^p,\\
    f_{\rho,R}(z)&:=F_{\rho,R}(z\mu)=\sum_{p\ge0}c_p z^p.
    \label{eq:ball-patch-Taylor-series}
\end{align}
The first function takes values in operators on \(C\); the second is scalar, and we compare the first with \(f_{\rho,R}(z)I_C\). All recursions in this comparison stop at the same fixed \(s\). Since \(\norm{H(\mu)}_{\kappa_0}\le2J\), patch restriction is contractive, and \(\Jan=4J\), both inputs lie in the analytic ball for \(|z|<2\). Consequently, \Cref{thm:uniform-normal-form}, together with \Cref{eq:ball-transition-functional,eq:ell-local-bound-general-main}, gives
\begin{equation}
    \max\left\{\norm{\mathcal F_\rho^{\rm ball}(z)},\abs{f_{\rho,R}(z)}\right\}\le C\Jan, \qquad |z|<2.
    \label{eq:ball-patch-analytic-bound}
\end{equation}

We claim that \(T_p=c_p I_C\) for \(0\le p\le K\). Both constant coefficients vanish because the interaction output is zero at zero input. For positive degrees, use the connected-word expansion of \Cref{lem:connected-word-expansion}, proved below. Its induction applies also with charge \(N_\rho^B\): onsite charge conjugations preserve the support of each Pauli letter, and nonzero commutators join overlapping words. Thus a degree-\(p\) contribution is built from \(p\) input Pauli terms with a connected overlap graph.

If such a contribution survives \(\mathcal T_B\), the union of its input supports contains \(k\). Since each input support has diameter at most \(r_0\), connectedness places this entire union within \(B_{pr_0}(k)\). For \(p\le K\), all its input terms are therefore retained in \(H_B(\mu)\), and every intermediate operator used to construct that contribution acts within \(B\). On any such operator \(W\), the charge actions satisfy
\begin{equation}
    e^{i\theta N_\rho^B}We^{-i\theta N_\rho^B}
    =e^{i\theta N_{\rho,B}}We^{-i\theta N_{\rho,B}}
    =e^{i\theta N_\rho}We^{-i\theta N_\rho}.
    \label{eq:ball-local-charge-agreement}
\end{equation}
Here \(N_{\rho,B}\), defined below in \Cref{eq:patch-charge}, is extended by the identity on \(C\). The last equality holds because the additional onsite charges in \(N_\rho\) act outside \(B\) and commute with \(W\). Hence \(\mathcal P_{N,\omega}\) and \(\mathcal S_N\) agree on every intermediate operator for these three choices of \(N\). Using the recursion \Cref{eq:recursion-no-N}, the same ordered input terms produce the same coefficient in the recursions with inputs \(zH(\mu)\) and \(zH_B(\mu)\). Terms whose input supports miss \(k\) contribute zero to both transition functionals. Finally, every surviving low-degree term acts as the identity on \(C\), so its partial matrix-element difference is precisely its scalar patch contribution times \(I_C\). This proves the claimed equality of coefficients.

Cauchy's estimate applied on circles of radius \(r<2\), followed by \(r\uparrow2\), gives \(\norm{T_p},\abs{c_p}\le C\Jan2^{-p}\). Evaluating at \(z=1\) and using the matching coefficients yields
\begin{equation}
    \norm{\mathcal F_\rho^{\rm ball}(1)-F_{\rho,R}(\mu)I_C}\le\sum_{p>K}\bigl(\norm{T_p}+\abs{c_p}\bigr)\le2C\Jan\sum_{p>K}2^{-p}=2C\Jan2^{-K}.
    \label{eq:ball-patch-Taylor-tail}
\end{equation}
Taking \(\mu=\lambda\) and absorbing numerical factors into the constant, we obtain
\begin{equation}
    \norm{K_1-K_0-2F_{\rho,R}(\lambda)I_C}\le C\Jan2^{-K}.
    \label{eq:ball-conditional-difference}
\end{equation}

We now compute the effective signal for the initial state \(\varrho_{+,\sigma}\) in the theorem. For the original preparation \(\ket{\Phi_+^\rho}\), the exterior density matrix is \(\sigma_C=\ket{\phi_{C,0}^\rho}\bra{\phi_{C,0}^\rho}\), where \(\ket{\phi_{C,0}^\rho}:=\bigotimes_{j\in C}\ket{\phi_{j,0}^\rho}\). \Cref{eq:ball-sector-generator} implies
\begin{equation}
    e^{-iG_{\rm eff}t}(\ket{\phi_{B,b}^\rho}\otimes\ket\psi)=e^{-i\nu bt}\ket{\phi_{B,b}^\rho}\otimes e^{-iK_b t}\ket\psi.
    \label{eq:ball-sector-evolution}
\end{equation}
On the direct sum of these two eigenspaces, \(\tau_{k,+}^\rho\) is \(2\ket{\phi_{B,1}^\rho}\bra{\phi_{B,0}^\rho}\otimes I_C\). The initial superposition has coherence \(1/2\) between the two branches, which cancels this factor of two. Taking the expectation therefore gives
\begin{equation}
    S_{\rm eff}(t):=\Tr\!\left(\varrho_{+,\sigma}e^{iG_{\rm eff}t}\tau_{k,+}^\rho e^{-iG_{\rm eff}t}\right)=e^{i\nu t}\Tr_C\!\left(\sigma_Ce^{iK_1t}e^{-iK_0t}\right).
    \label{eq:ideal-Ramsey-computation}
\end{equation}

Write \(\delta:=2F_{\rho,R}(\lambda)\) and \(\Delta:=K_1-K_0-\delta I_C\). The operators \(K_b\) are Hermitian and \(\delta\) is real. Duhamel's formula gives
\begin{equation}
    e^{iK_1t}-e^{i(K_0+\delta I_C)t}=i\int_0^t e^{iK_1(t-u)}\Delta
    e^{i(K_0+\delta I_C)u}\,du.
 \label{eq:ball-exterior-Duhamel}
\end{equation}
All propagators in the integral are unitary. Multiplying on the right by \(e^{-iK_0t}\), and using \(\norm{\sigma_C}_1=1\), yields
\begin{equation}
    \abs{S_{\rm eff}(t)-e^{i(\nu+\delta)t}}\le\norm{e^{iK_1t}e^{-iK_0t}-e^{i\delta t}I_C}\le |t|\norm\Delta\le C\Jan|t|2^{-K}.
    \label{eq:ball-effective-tone-error}
\end{equation}
Thus only the deviation of \(K_1-K_0\) from a scalar enters the bound; no estimate of the possibly extensive \(\norm{K_0}\) is needed.

For \(S_{\rho,\sigma}(t)\) in \Cref{eq:Ramsey-general-signal}, apply \Cref{thm:ADHH-dynamics} with charge \(N_\rho^B\) and observable \(\tau_{k,+}^\rho\), whose norm is two. Taking its expectation in \(\varrho_{+,\sigma}\) gives
\begin{equation}
    \abs{S_{\rho,\sigma}(t)-S_{\rm eff}(t)}\le C\Jan|B|\left(\nu^{-1}+|t|(2/3)^s\right).
    \label{eq:ball-physical-signal-error}
\end{equation}
Combining this with \Cref{eq:ball-effective-tone-error} and \(|B|\le V_R\) proves (ii), namely \Cref{eq:Ramsey-patch-approximation}.

It remains to prove (i), which uses \(E_\rho^\Lambda(\lambda)\). \Cref{eq:ball-local-charge-agreement} shows why the extra fields outside \(B\) do not change the low-degree contributions to the transition energy at \(k\). Applying the same coefficient-matching and Cauchy coefficient estimate to \(E_\rho^\Lambda(z\mu)\) gives
\begin{equation}
    \abs{E_\rho^\Lambda(\lambda)-F_{\rho,R}(\lambda)}\le C\Jan2^{-K},
    \label{eq:ball-auxiliary-gap-comparison}
\end{equation}
This estimate will also be stated as \Cref{thm:patch-locality} below; its derivation uses only the recursion and the analytic estimates, not the present signal bounds. Consequently,
\begin{equation}
    \abs{e^{i[\nu+2E_\rho^\Lambda(\lambda)]t}-e^{i[\nu+2F_{\rho,R}(\lambda)]t}}\le2|t|\abs{E_\rho^\Lambda(\lambda)-F_{\rho,R}(\lambda)}.
\end{equation}
The triangle inequality proves (i). Thus \(N_\rho\) is used to define the scalar \(E_\rho^\Lambda\), while the estimate for the physical evolution uses \(N_\rho^B\) and the operators \(K_0,K_1\) on \(C\).
\end{proof}

\subsection{Robust frequency estimation for transition energies}

\label{subsec:frequency-estimation}

\Cref{thm:Ramsey-approximation} guarantees that we can approximately generate a single-frequency signal. The robust frequency estimation procedure~\cite{LiTongNiGefenYing2023heisenberg,ma2024learningkbodyhamiltonianscompressed,MobusBluhmGefenEtAl2025heisenberg,brahmachari2026} (a variant of robust phase estimation~\cite{KimmelLowYoder2015robust}) enables us to robustly extract the frequency with Heisenberg-limited scaling. We will restate~{\cite[Theorem~4.1]{MobusBluhmGefenEtAl2025heisenberg}} below, while adapting it to the notation in our paper:

\begin{lemma}[Robust frequency estimation;~{\cite[Theorem~4.1]{MobusBluhmGefenEtAl2025heisenberg}}]
\label{lem:robust-frequency}
    Let $\omega\in[\omega_0-\Phi,\omega_0+\Phi]$, and let $\widehat c(t)$ and $\widehat s(t)$ be independent random variables such that $|\widehat c(t)-\cos(\omega t)|<1/\sqrt{8}$ with probability at least $2/3$, and $|\widehat s(t)-\sin(\omega t)|<1/\sqrt{8}$ with probability at least $2/3$, for any queried $t$. Then for $\eta_\omega>0$, with independent samples $\widehat c(t_1),\widehat c(t_2),\cdots,\widehat c(t_{N_{\rm time}})$ and $\widehat s(t_1),\widehat s(t_2),\cdots,\widehat s(t_{N_{\rm time}})$, with
    \begin{equation}
        N_{\rm time}=\bigO(\log^2(\Phi/\eta_\omega)), \quad T=t_1+t_2+\cdots+t_{N_{\rm time}}=\Theta(1/\eta_\omega), \quad \max_j t_j=\bigO(1/\eta_\omega) ,
    \end{equation}
    and $t_j\geq 0$, we can construct  an estimator $\hat{\omega}$ such that $\sqrt{\mathbb{E}[|\hat{\omega}-\omega|^2]}\leq \eta_\omega$.
\end{lemma}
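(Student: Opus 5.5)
The plan is the standard robust-phase-estimation argument of Kimmel, Low and Yoder, adapted to a continuous frequency with a prior window $[\omega_0-\Phi,\omega_0+\Phi]$. We may subtract the known offset: given $\widehat c(t),\widehat s(t)$ we form estimates of $\cos((\omega-\omega_0)t)$ and $\sin((\omega-\omega_0)t)$ by a fixed rotation by the angle $\omega_0 t$. This is an isometry of the plane, so it preserves the error bounds up to a constant factor, and the constant can be absorbed by the median boosting below. Hence we assume $\omega_0=0$ and $|\omega|\le\Phi$.

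We use dyadic times $t_j=2^{j}\pi/(4\Phi)$ for $j=0,1,\dots,J$, with $J=\ceil{\log_2(\Phi/\eta_\omega)}+O(1)$, so that $\max_j t_j=O(1/\eta_\omega)$. At stage $j$ we query $n_j$ independent samples of $\widehat c(t_j)$ and $\widehat s(t_j)$. We take the coordinatewise medians and set $\widehat\theta_j=\arg(\mathrm{med}\,\widehat c+i\,\mathrm{med}\,\widehat s)$. By a Chernoff bound, each median is within $1/\sqrt8$ of the true value except with probability $e^{-\Omega(n_j)}$. When both coordinates are within $1/\sqrt8$, the point lies within distance $1/2$ of $e^{i\omega t_j}$. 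The angular error is then at most $\arcsin(1/2)=\pi/6$ modulo $2\pi$.

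The refinement step goes as follows. Suppose after stage $j-1$ we know $\omega$ lies in an interval $I_{j-1}$ of length $|I_{j-1}|\lesssim \pi/(t_{j-1})\cdot\tfrac{2}{3}$. Then $\omega t_j$ is confined to an interval of length less than $2\pi-2\cdot\pi/6$. Consequently the value of $\widehat\theta_j$ determines $\omega t_j$ uniquely within $\pm\pi/6$. This yields $I_j$ with $|I_j|\le (\pi/3)/t_j$, i.e.\ half the previous width, and the induction closes. The final estimate is the midpoint of $I_J$, whose error is $O(1/t_J)\le\eta_\omega/2$ provided no stage fails.

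For the mean-squared error, a failure first occurring at stage $j$ leaves the error at most $O(|I_{j-1}|)=O(\Phi2^{-j})$, since the estimate can always be clipped to the prior window. The failure probability is at most $e^{-cn_j}$. We choose $n_j=C(J-j+1)$, which gives
\begin{equation}
\bbE|\widehat\omega-\omega|^2\le \eta_\omega^2/4+\sum_{j}e^{-cC(J-j+1)}\Phi^2 4^{-j}\lesssim \eta_\omega^2/4+\Phi^24^{-J}\sum_j (4e^{-cC})^{J-j}.
\end{equation}
This is at most $\eta_\omega^2$ once $C$ is large. The total time is $\sum_j n_jt_j\lesssim \Phi^{-1}\sum_j(J-j+1)2^j=O(2^J/\Phi)=\Theta(1/\eta_\omega)$. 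The number of time queries is $\sum_j n_j=O(J^2)=O(\log^2(\Phi/\eta_\omega))$.

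The main delicate point is the interplay between the per-stage sample counts and the MSE guarantee. A uniform $n_j=O(\log(J/\cdot))$ would either cost an extra logarithm in total time or fail to control the second moment. The decreasing schedule $n_j\propto J-j+1$ resolves both, because late stages dominate the time and early failures, though costly in error, are exponentially rare. A secondary point is checking that the $1/\sqrt8$ tolerance, after the offset rotation, still yields angular error strictly below $\pi/3$. If the rotation loses a constant, we replace $\pi/6$ by a slightly larger margin and use time ratio $t_{j}/t_{j-1}$ smaller than $2$, which changes only constants.
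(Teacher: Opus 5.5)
Your overall scheme is the standard robust-phase-estimation argument behind the cited Theorem~4.1: dyadic times, coordinatewise medians giving angular error below $\pi/6$, disambiguation of the lift, and the decreasing schedule $n_j\propto J-j+1$. Your time and query counts are correct. The paper itself does not reprove the lemma; it only remarks that translating the prior window to $[\omega_0-\Phi,\omega_0+\Phi]$ leaves the algorithm and its analysis unchanged. Two of your steps, however, do not go through as written.

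\textbf{The offset reduction.} Rotating each sample pair before taking medians breaks the median step. The hypotheses control the rotated cosine $\cos(\omega_0t)\widehat c+\sin(\omega_0t)\widehat s$ only on the event that both $\widehat c$ and $\widehat s$ are accurate. That event is guaranteed probability only $4/9$, which is below the $1/2$ the median needs. Even on that event the threshold becomes $1/2$ rather than $1/\sqrt8$. Median boosting amplifies confidence but cannot shrink an error threshold, so no constant is ``absorbed.'' The fix is to rotate after the median. Compute $\widehat\theta_j=\arg(\mathrm{med}\,\widehat c+i\,\mathrm{med}\,\widehat s)$ from the raw samples and use $\widehat\theta_j-\omega_0t_j$. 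This has the same angular error below $\pi/6$, since you only use the distance-below-$1/2$ property, which rotation preserves. This is exactly the paper's translation remark.

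\textbf{The mean-squared-error bound.} Your MSE step needs the error after a first failure at stage $j$ to be $O(\Phi2^{-j})$. Clipping to the prior window only gives $2\Phi$. With that bound the failure sum is of order $e^{-cC}\Phi^2$, not $O(\eta_\omega^2)$. You must specify an update rule with bounded drift. For example, take the lift of $\widehat\theta_j$ nearest to $t_j$ times the center of $I_{j-1}$, and let $I_j$ be the width-$\pi/(3t_j)$ interval around it.
\begin{itemize}
\item On success this picks the correct lift, because $2\pi/3+\pi/6<\pi$.
\item Regardless of success, the center moves by at most $\pi/t_j$ per stage.
\end{itemize}
So after a first failure at stage $j$ the final error is at most $|I_{j-1}|+\sum_{j'\ge j}\pi/t_{j'}=O(\Phi2^{-j})$. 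Alternatively, force $I_j\subseteq I_{j-1}$ with a default choice when the measurement is inconsistent. With these two repairs your argument is complete.
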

In the original Theorem~4.1 in \cite{MobusBluhmGefenEtAl2025heisenberg} $\omega_0=0$, but setting it to an arbitrary value does not change the estimation algorithm or its analysis.

Note that because of the effective half transition energy bound \Cref{eq:half-gap-bound}, the frequency $\omega$ that we want to estimate lies in the interval $[\nu-2CJ_{\rm an},\nu+2CJ_{\rm an}]$, we can therefore set $\Phi=2CJ_{\rm an}=\bigO(1)$. In order to estimate the effective half transition energy $E_\rho^\Lambda(\lambda)$, we can simply generate the approximately single-frequency signal $S_\rho(t)$ in \Cref{eq:signal}, and take its real part as $\widehat c(t)$, and its imaginary part as $\widehat s(t)$. These two parts can each be estimated to $1/\sqrt{8}$ precision through Monte Carlo averaging. Note that the estimates are not required to be unbiased. This allows us to estimate the frequency $\nu + 2E_\rho^\Lambda(\lambda)$ in~\Cref{thm:Ramsey-approximation} to any precision $\eta_\omega>0$. Because we know $\nu$,  $E_\rho^\Lambda(\lambda)$ can then be readily estimated. To obtain half transition energy accuracy \(\eta\), the estimator uses times at most \(T_{\max}=C_T/\eta\), for a fixed constant \(C_T\). Choose \(s=K=\ceil{C_0\log(eJ/\eta)}\) and \(R=(K+1)r_0\), with \(C_0\) sufficiently large. A sufficient field condition, using \Cref{eq:Ramsey-approximation},  is
\begin{align}
    \nu\ge\nu_{\rm dyn}(\eta)
    &:=CJ\max\{V_R,\ s[1+\log(s+1)]^3\} \label{eq:field-strength-requirement} \\
    &=O\!\left(J\max\left\{[\log(eJ/\eta)]^d,
    \log(eJ/\eta)[1+\log\log(eJ/\eta)]^3\right\}\right).
\end{align}
The constant \(C\) includes the thresholds \(\bar\nu_0\) and \(54\pi\Jan/\kappa_0\). The term involving \(s\) ensures \(s\le n_*(\nu)\), and the volume term makes \(C_1\Jan V_R/\nu\) smaller than a fixed constant. The remaining errors in \Cref{eq:Ramsey-approximation} are also smaller than a fixed constant for \(|t|\le T_{\max}\), because their exponential decay in \(s,K\) dominates the factors \(V_R\) and \(T_{\max}\). We choose these constants so that the deterministic signal error, bounded via \Cref{eq:Ramsey-approximation}, is strictly below \(1/\sqrt8\), which leaves room for the sampling error in \Cref{lem:robust-frequency}. The selected \(s\) is kept fixed as \(\nu\) or the trial coefficients vary.

From this we obtain the following corollary
\begin{corollary}
    \label{cor:estimate-half-gap}
    Assume that $\nu\ge\nu_{\rm dyn}(\eta)$, where \(\nu_{\rm dyn}(\eta)\), \(s\), \(K\), and \(R\) are given in the preceding parameter choice and \Cref{eq:field-strength-requirement} For each $\rho$, the effective half transition energy $E_\rho^\Lambda(\lambda)$ can be estimated to precision $\eta$ with probability at least $1-\delta_{\rm est}$, using $\bigO(\log^2(1/\eta)\log(1/\delta_{\rm est}))$ independent non-adaptive experiments, $\bigO(\log(1/\delta_{\rm est})/\eta)$ total evolution time, and $\bigO(1/\eta)$ evolution time for each experiment.
\end{corollary}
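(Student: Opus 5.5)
The plan is to combine the signal bound of \Cref{thm:Ramsey-approximation}(i) with \Cref{lem:robust-frequency}, and then boost the success probability by taking a median. The argument has three steps.

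\emph{Step 1: the deterministic signal error is below a constant.} Fix the parameter choice $s=K=\ceil{C_0\log(eJ/\eta)}$, $R=(K+1)r_0$, and $\nu\ge\nu_{\rm dyn}(\eta)$. The hypotheses of \Cref{thm:Ramsey-approximation} then hold: the $s[1+\log(s+1)]^3$ term in \Cref{eq:field-strength-requirement} gives $s\le n_*(\nu)$, and the constant $C$ absorbs $\bar\nu_0$ and $54\pi\Jan/\kappa_0$. For $0\le t\le T_{\max}=C_T/\eta$, we bound each term of \Cref{eq:Ramsey-approximation}.
\begin{itemize}
\item The first term is at most a small constant $c$, because $\nu\ge CJV_R$ with $C$ large.
\item The second term is $C_2\Jan V_R T_{\max}(2/3)^s\lesssim (2R+1)^d(eJ/\eta)^{1-C_0\log(3/2)}$. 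Since $R=\bigO(\log(eJ/\eta))$, taking $C_0$ large (depending on $d$) makes this at most $c$ uniformly in $\eta$.
\item The third term is handled in the same way using $2^{-K}$.
\end{itemize}
Hence $|S_\rho(t)-e^{i\omega t}|\le 3c<1/\sqrt8-c'$ for some margin $c'>0$, where $\omega=\nu+2E_\rho^\Lambda(\lambda)$. By \Cref{eq:half-gap-bound}, $\omega\in[\nu-\Phi,\nu+\Phi]$ with $\Phi=2C\Jan=\bigO(1)$.

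\emph{Step 2: invoke robust frequency estimation.} For each queried $t$, $\mathrm{Re}\,S_\rho(t)$ equals $\langle\tau_{k,x}^\rho\rangle$ and $\mathrm{Im}\,S_\rho(t)$ equals $\langle\tau_{k,y}^\rho\rangle$. Each is estimated by averaging $\bigO(1)$ single-shot measurements from preparing $\ket{\Phi_+^\rho}$ and evolving under $H_{\rm tot}^\rho(\lambda)$. By Hoeffding, the estimate is within $c'$ of the true value with probability at least $2/3$. Combined with Step 1, this gives $|\widehat c(t)-\cos\omega t|<1/\sqrt8$ with probability at least $2/3$, and likewise for $\widehat s$. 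The samples at distinct times are independent because the experiments are separate runs.

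\Cref{lem:robust-frequency} with $\eta_\omega=\eta/10$ and $\omega_0=\nu$ then gives an estimator $\widehat\omega$ with RMS error at most $\eta/10$. It uses $N_{\rm time}=\bigO(\log^2(1/\eta))$ times, $\bigO(1)$ shots per time, and total time $\Theta(1/\eta)$, with each $t_j=\bigO(1/\eta)$. We may enlarge $C_T$ so that $t_j\le T_{\max}$. Setting $\widehat E=(\widehat\omega-\nu)/2$, Chebyshev gives $|\widehat E-E_\rho^\Lambda(\lambda)|\le\eta$ with probability at least $3/4$.

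\emph{Step 3: amplify by taking a median.} Repeat the whole procedure independently $\bigO(\log(1/\delta_{\rm est}))$ times and output the median. A Chernoff bound shows the median is within $\eta$ with probability at least $1-\delta_{\rm est}$. The costs multiply as claimed: $\bigO(\log^2(1/\eta)\log(1/\delta_{\rm est}))$ experiments, total time $\bigO(\log(1/\delta_{\rm est})/\eta)$, and per-experiment time $\bigO(1/\eta)$. All experiments are non-adaptive because the query times in \Cref{lem:robust-frequency} are fixed in advance.

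The main obstacle is Step 1, specifically the second term. It carries the factor $V_R T_{\max}$, which grows like $\log^d(1/\eta)/\eta$, and this must be beaten by $(2/3)^s$. This works only because $s$ is chosen logarithmic in $1/\eta$ with a constant $C_0$ depending on $d$. We must also check that this choice of $s$ stays admissible, which is exactly what the second entry of $\nu_{\rm dyn}$ guarantees.
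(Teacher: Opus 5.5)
Your proposal is correct and follows essentially the same route as the paper. The paper likewise uses the parameter choice $s=K=\lceil C_0\log(eJ/\eta)\rceil$, $R=(K+1)r_0$, $\nu\ge\nu_{\rm dyn}(\eta)$ to keep the deterministic error in \Cref{thm:Ramsey-approximation}(i) below $1/\sqrt8$ for times up to $C_T/\eta$. It then applies \Cref{lem:robust-frequency}, converts the RMS guarantee into a probability-$3/4$ bound by Markov's inequality on the squared error, and boosts confidence with a median of $\bigO(\log(1/\delta_{\rm est}))$ repetitions. Your differences are cosmetic: you take $\eta_\omega=\eta/10$, whereas the paper uses RMS error $\eta$ with frequency tolerance $2\eta$. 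You also spell out the Hoeffding step and the order in which constants are chosen, namely fixing $C_T$ from the lemma before choosing $C_0$.
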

{To obtain the stated confidence, apply \Cref{lem:robust-frequency} with root-mean-square frequency error \(\eta\). Markov's inequality gives frequency error at most \(2\eta\), hence half-transition-energy error at most \(\eta\), with probability at least \(3/4\). Taking the median of an odd number \(\bigO(\log(1/\delta_{\rm est}))\) of independent repetitions reduces the failure probability to \(\delta_{\rm est}\) by the Chernoff bound.}
In our learning protocol, we will estimate the effective half transition energy associated with each $\rho$, and denote it by $y_\rho$. We require
\begin{equation}
 \abs{y_\rho-E_\rho^{\Lambda}(\lambda)}\le\eta_{\rm freq}.
 \label{eq:full-gap-data-error}
\end{equation}
We can ensure that this bound holds for all $\rho$ with probability at least $1-\delta_{\rm freq}$, by setting $\delta_{\rm est} = \delta_{\rm freq}/(nm)$ in~\Cref{cor:estimate-half-gap}. We therefore have an {\(\bigO(\log(nm/\delta_{\rm freq}))\)} overhead in total evolution time and the number of experiments for estimating all $y_\rho$ accurately.

\section{Local approximation}
\label{sec:local-approx}

In this section we will show that an effective half transition energy $E_\rho^\Lambda(\mu)$, with $\rho=(k,\ell)$, can be approximated as a function of Hamiltonian parameters that are close to $k$. Proofs of the lemmas and theorems in this section can be found in Appendix~\ref{app:patch-proofs}.

\subsection{Patch effective transition energy}
\label{subsec:patch-map}

We consider a ball of radius $R$ around defect $k$, and let
\begin{equation}
    B:=B_R(k), \quad
    H_B(\mu):=\sum_{\supp(P_a)\subseteq B}\mu_aP_a,
    \label{eq:patch-H}
\end{equation}
For a general interaction, define patch restriction componentwise by
\begin{equation}
    (\mathcal E_B Z)_X :=2^{-|X\setminus B|}\operatorname{Tr}_{X\setminus B}(Z_X) \otimes I_{X\setminus B}, \qquad
    \norm{\mathcal E_B Z}_\kappa\le\norm Z_\kappa.
    \label{eq:interaction-patch-restriction}
\end{equation}
The component map is an average of unitary Pauli conjugations on \(X\setminus B\), hence is operator-norm contractive, also for complex inputs. It keeps Pauli strings supported in \(B\) and annihilates the others, so \(\mathcal E_B H(\mu)\) represents exactly \(H_B(\mu)\). The recursion on \(B\) starts from \(\mathcal E_B Z\) and retain the ambient labels \(X\subseteq\Lambda\); its local norms use these labels, even when \(X\not\subseteq B\). Support in \(B\) for an interaction means that each component acts only within \(B\). This property is preserved by the recursion, whose corresponding operator version can therefore be evaluated on the patch Hilbert space. We will call $B$ a patch, and also restrict the charge to it,
\begin{equation}
    N_{\rho,B}:=Q_k^\rho+2\sum_{\substack{j\in B\\j\ne k}}Q_j^\rho.
    \label{eq:patch-charge}
\end{equation}
With the above definitions, we can run exactly the same ADHH recursion \Cref{eq:ADHH-recursion} for \({s}\) steps on \(\nu N_{\rho,B}+H_B(\mu)\). Denote its final charge-preserving interaction by $\widehat D_{\rho,R}(\mu):=D_{{s}}$. The patch effective half transition energy is
\begin{equation}
    F_{\rho,R}(\mu) :=\frac12\left(\bra{\Phi_1^\rho}\widehat D_{\rho,R}(\mu)\ket{\Phi_1^\rho}-\bra{\Phi_0^\rho}\widehat D_{\rho,R}(\mu)\ket{\Phi_0^\rho}\right),
    \label{eq:patch-gap}
\end{equation}
where an operator local to the patch $B$ is extended by the identity outside \(B\).  We abbreviate \(F_\rho:=F_{\rho,R}\) when \(R\) is fixed.

For an interaction $Z$ that we feed into the ADHH recursion, define the zeroth-order full-lattice half transition energy as
\begin{equation}
    E_{\rho,0}^{\Lambda}(Z):=\ell_\rho(Z).
    \label{eq:bare-half-gap}
\end{equation}
Here by ``zeroth-order'' we mean the limit we obtain by considering $\nu\to\infty$. We note that $E_{\rho,0}^{\Lambda}(Z)=\ell_\rho(\langle Z\rangle_{N_\rho})=\ell_{\rho}(D_{\rho,0}^\Lambda)$. We also adopt the analogous definition for $E_{\rho,0}^{B}(Z)$ on a patch $B$.

The lemma below tells us that the zeroth-order half transition energy deviates from the effective half transition energy by $\bigO(1/\nu)$, and this is true for both the full lattice $\Lambda$ and any patch $B$.
\begin{lemma}
\label{lem:scalar-correction}
For every complex interaction \(Z\in\mathfrak B_{\rm an}\), there exists a constant $C_Q$ such that
\begin{equation}
    \abs{E_\rho^{\Lambda}(Z)-E_{\rho,0}^{\Lambda}(Z)}, \abs{E_\rho^{B}(Z)-E_{\rho,0}^{B}(Z)} \le {C_Q\Jan^2}/{\nu}.
    \label{eq:scalar-correction-full}
\end{equation}
The bound holds for every patch $B$ and full domain $\Lambda$.
\end{lemma}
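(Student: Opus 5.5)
The plan is to write the effective half transition energy as a telescoping sum over the ADHH steps and bound each increment by the local norm of the change in \(D_j\). By definition \(E_\rho^\Lambda(Z)=\ell_\rho(D_s)\) and \(E_{\rho,0}^\Lambda(Z)=\ell_\rho(D_0)\) with \(D_0=\langle Z\rangle_{N_\rho}\). Hence
\begin{equation}
E_\rho^\Lambda(Z)-E_{\rho,0}^\Lambda(Z)=\sum_{j=0}^{s-1}\ell_\rho(D_{j+1}-D_j).
\end{equation}
By \Cref{eq:ell-local-bound-general-main}, which only uses that \(\ell_\rho\) annihilates components whose label misses \(k\) and is contractive on the remaining ones (valid for complex interactions), we get \(|\ell_\rho(W)|\le\norm{W}_{\kappa_{j+1}}\) for every \(j\). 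It therefore suffices to show
\begin{equation}
\norm{D_{j+1}-D_j}_{\kappa_{j+1}}\le C\Jan^2(2/3)^j/\nu
\end{equation}
uniformly for \(Z\in\mathfrak B_{\rm an}\) and \(j<n_*\). Summing the geometric series then gives the claim with a constant independent of \(s\), \(\rho\), and the lattice size.

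To get this per-step bound, note that \(D_{j+1}=\langle H_{j+1}\rangle_{N}\) and \(D_j=\langle D_j\rangle_N\). By \Cref{eq:projection-norms} it suffices to bound \(\norm{H_{j+1}-D_j}_{\kappa_{j+1}}\). Using \Cref{eq:recursion-no-N},
\begin{equation}
H_{j+1}-D_j=(\gamma_j(D_j)-D_j)+(\gamma_j(V_j)-V_j)-(\alpha_j(V_j)-V_j).
\end{equation}
Each bracket is a conjugation difference of the form \(e^{tA_j}We^{-tA_j}-W\) with \(t\in[0,1]\), and for \(\alpha_j\) we average over \(t\). I would apply \Cref{lem:ADHH-conjugation} with \(Q=tA_j\). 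It does not require Hermiticity, and its hypothesis \(3\norm{A_j}_{\kappa_j}\le\delta\kappa_j\) for \(j\le n_*\) is exactly what was verified in the proof of \Cref{thm:uniform-normal-form}. This yields a bound
\begin{equation}
\frac{18\,\norm{W}_{\kappa_j}\norm{A_j}_{\kappa_j}}{\delta\kappa_j\kappa_{j+1}}
\end{equation}
with \(W\in\{D_j,V_j\}\). Inserting \Cref{eq:uniform-D-bound,eq:uniform-V-bound,eq:uniform-A-bound} gives \(C\Jan^2(2/3)^j/(\nu\,\delta\kappa_j\kappa_{j+1})\).

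The main obstacle is the factor \(1/(\delta\kappa_j\kappa_{j+1})\). Since \(\delta\kappa_j\sim\kappa_0/[(j+1)(1+\log(j+1))^2]\), this factor grows polynomially in \(j\). It is still summable against \((2/3)^j\), so \(\sum_j (2/3)^j/(\delta\kappa_j\kappa_{j+1})\le C/\kappa_0^2\) is a finite absolute constant. If this direct approach felt too lossy, I would instead appeal to the estimate implicit in \Cref{thm:uniform-normal-form}(ii)-type bounds, namely \(\norm{\widehat D-D}_{\kappa_{n_*}}\le C\nu_0/\nu\) with \(\nu_0\propto\Jan\). Reproving it for stopping order \(s\) and complex \(Z\) uses the same summation. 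One subtlety is that the local norms are taken at \(\kappa_{j+1}\ge\kappa_s\), and monotonicity of \(\norm{\cdot}_\kappa\) in \(\kappa\) lets us compare all increments in any single norm, which is harmless since \(\ell_\rho\) is bounded by every local norm.

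Finally, the patch statement follows verbatim. Run the recursion with input \(\mathcal E_BZ\), which still lies in \(\mathfrak B_{\rm an}\) by \Cref{eq:interaction-patch-restriction}, and with charge \(N_{\rho,B}\). The bound \Cref{eq:ell-local-bound-general-main} holds for the patch functional as well. The constant \(C_Q\) collects the numerical factors and \(\kappa_0\)-dependence and is uniform in \(B\), \(\Lambda\), \(\rho\), and \(s\le n_*\).
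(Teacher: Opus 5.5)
Your proposal is correct and follows the paper's proof essentially step for step. Both telescope $\ell_\rho(D_s)-\ell_\rho(D_0)$, bound each increment by $m_{\rm A}(j)\,a_j(d_j+2v_j)$ via \Cref{lem:ADHH-conjugation} and the uniform bounds of \Cref{thm:uniform-normal-form}, and absorb the polynomially growing factor $1/(\delta\kappa_j\kappa_{j+1})$ into the finite constant $\sum_j m_{\rm A}(j)(2/3)^j$, with the patch case handled by running the recursion on $\mathcal E_BZ$. The only slip is cosmetic: your stated intermediate target, a per-step bound $C\Jan^2(2/3)^j/\nu$ with $C$ independent of $j$, is stronger than what you actually prove, but your summability argument is exactly what is needed, so nothing is lost.
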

If we restrict $Z$ to be supported on a patch $B$, the following lemma tells us that the effective half transition energy computed from the full-lattice ADHH recursion is identical to that computed from the patch ADHH recursion. This is because, in this setting, the ADHH iterates $D_j,V_j,A_j$ do not propagate outside the patch $B$.

\begin{lemma}\label{lem:exact-embedding}
If \(Z\) is supported in \(B_R(k)\), then the effective half transition energies of the full domain $\Lambda$ and the patch domain $B_R(k)$ are equal:
\begin{equation}
    E_\rho^{\Lambda}(Z)=E_\rho^{B_R(k)}(Z).
    \label{eq:exact-embedding}
\end{equation}
\end{lemma}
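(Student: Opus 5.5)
We argue by induction on the recursion step $j$ that, when the input $Z$ is supported in $B=B_R(k)$, the full-lattice iterates $D_j,V_j,A_j,H_j$ (run with charge $N_\rho$) coincide, component by component, with the patch iterates (run with charge $N_{\rho,B}$ on input $\mathcal E_B Z=Z$), extended by the identity outside $B$. The statement then follows by applying $\ell_\rho$ to $D_s$ on both sides, since the patch functional in \Cref{eq:patch-gap} is defined by exactly the same matrix elements against $\ket{\Phi_0^\rho},\ket{\Phi_1^\rho}$ with the patch operator extended by the identity.

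The key steps, in order, are as follows.

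\emph{Base case.} We have $H_0=Z$ on both sides, and $\mathcal E_BZ=Z$ because every component of $Z$ already acts within $B$.

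\emph{Charge agreement.} For any operator $W$ acting within $B$, we have $e^{i\theta N_\rho}We^{-i\theta N_\rho}=e^{i\theta N_{\rho,B}}We^{-i\theta N_{\rho,B}}$. This holds because $N_\rho-N_{\rho,B}=2\sum_{j\notin B}Q_j^\rho$ is onsite outside $B$ and commutes with $W$; this is the same observation as \Cref{eq:ball-local-charge-agreement}. It follows that $\mathcal P_{N,\omega}$, and hence $\langle\cdot\rangle_N$ and $\mathcal S_N$ (both defined componentwise through \Cref{eq:charge-projection,eq:homological-inverse}), give identical results for the two charges on such operators. Moreover, each of these maps outputs an operator still acting within $B$, since conjugation by an onsite charge does not enlarge supports.

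\emph{Inductive step.} Suppose $H_j$ acts within $B$ and agrees on both sides. Then $D_j$, $V_j$ and $A_j$ agree and act within $B$. Next, $H_{j+1}$ is given by \Cref{eq:recursion-no-N} through $\gamma_j$ and $\alpha_j$, which involve only nested commutators of $A_j$ with $D_j$ and $V_j$. These are interaction-level operations \Cref{eq:interaction-operations} whose labels are unions of input labels, and whose operators act within $B$. Hence $H_{j+1}$ acts within $B$ and agrees on both sides. Well-definedness and convergence of the exponential series are supplied by \Cref{thm:uniform-normal-form}, since $Z\in\mathfrak B_{\rm an}$.

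\emph{Conclusion.} Both recursions are stopped at the same fixed $s$, so $D_s$ is the same operator on both sides. Both $E_\rho^\Lambda$ and $E_\rho^{B}$ are $\ell_\rho(D_s)$. The patch states are the restrictions of $\ket{\Phi_b^\rho}$ to $B$, and the exterior factor $\bigotimes_{j\notin B}\ket{\phi_{j,0}^\rho}$ contributes inner product one against the identity. Therefore the two energies are equal.

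The main point of care is the bookkeeping of labels. The patch recursion retains ambient labels $X\subseteq\Lambda$, which may stick out of $B$ even though the operators act within $B$. We must check that the componentwise recursion on both sides produces the same labels. This is automatic, because the labels are determined solely by the inputs' labels through unions, and the inputs coincide. The identity we prove is at the operator level, so labels matter only for the norm bounds, and these are uniform on both sides. A secondary subtlety is that $\mathcal S_N$ divides by $\omega$. However, the charge-$\omega$ decomposition of an operator acting within $B$ is identical for $N_\rho$ and $N_{\rho,B}$ by the charge-agreement step, so this division introduces no discrepancy.
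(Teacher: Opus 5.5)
Your proposal is correct and is essentially the paper's proof. You use the identity $e^{i\theta N_\rho}We^{-i\theta N_\rho}=e^{i\theta N_{\rho,B}}We^{-i\theta N_{\rho,B}}$ for $W$ acting within $B$, propagate it by induction over the recursion steps to get $\widehat D_\rho^{\Lambda}(Z)=\widehat D_{\rho,R}(Z)$, and then apply $\ell_\rho$. Your extra bookkeeping (support staying in $B$, labels, $\mathcal E_BZ=Z$) is sound, but the appeal to $Z\in\mathfrak B_{\rm an}$ is unnecessary, since on the finite lattice the recursion is well defined for any input.
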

\begin{proof}
For convenience, write $B=B_R(k)$. We decompose the charge operator $N_{\rho}$ as
\begin{equation}
    N_\rho=N_{\rho,B}+N_{\rho,B^c},
    \label{eq:N-split-embedding}
\end{equation}
where $N_{\rho,B}$ is the sum of the single-qubit Pauli operators in $N_\rho$ supported in $B$, and $N_{\rho,B^c}$ is the sum of those supported in $B^c$. If \(W\) is supported in \(B\), then \([W,N_{\rho,B^c}]=0\). Hence
\begin{equation}
    e^{i\theta N_{\rho}} W e^{-i\theta N_{\rho}}=e^{i\theta N_{\rho,B}} W e^{-i\theta N_{\rho,B}} .
\end{equation}
It follows, by induction on the iteration index $j$, that for every $j$ and every operation in the ADHH recursion \Cref{eq:ADHH-recursion}, replacing $N_{\rho}$ by $N_{\rho,B}$ has no effect. In particular, $\widehat D_\rho^{\Lambda}(Z)$ and $\widehat D_{\rho,R}(Z)$ are identical. Since $E_{\rho}^\Lambda(Z)=\ell_\rho(\widehat D_\rho^{\Lambda}(Z))$ and $E_{\rho}^B(Z)=\ell_\rho({\widehat D_{\rho,R}(Z)})$, we have $E_{\rho}^\Lambda(Z)=E_{\rho}^B(Z)$.
\end{proof}

\subsection{Connected $p$-words in the ADHH recursion}
\label{subsec:connected-words}

Later analysis of our learning algorithm relies on estimates on the first- and second-order derivatives of the effective half transition energy. These estimates are obtained through the Cauchy integral estimates in~\Cref{lem:Cauchy-general}. For this we need to replace $H(\mu)$ by $zH(\mu)$, and consider complex $z$. We will fix \(\mu\in\cU\) ($\cU$ is defined in \Cref{eq:cU-defn}), abbreviate \(H=H(\mu)\) and \(H_B=H_B(\mu)\), and introduce the scalar holomorphic functions
\begin{equation}
    g_\rho^{\Lambda}(z):=E_\rho^{\Lambda}(zH), \quad g_\rho^B(z):=E_\rho^B(zH_B).
    \label{eq:g-functions}
\end{equation}
The order \(s\) is independent of \(z\) and satisfies \(1\le s\le n_*(\nu)\), with \(n_*(\nu)\) defined in \Cref{eq:new-n*}. The same \(s\) is used in both functions.

To analyze the functions $g_\rho^{\Lambda}(z)$ and $g_\rho^B(z)$, we will account for locality through the definition of \emph{connected $p$-words}.

\begin{definition}[Connected \(p\)-word]
\label{def:p-word}
Suppose that $H$ is represented as $H(\lambda)=\sum_{a\in\cA}\lambda_aP_a$. For each $a\in \mathcal{A}$, we define $\mathcal{Q}_a$ to be the set of Pauli operators with the same support as $P_a$.
Let $\mathbf a=(a_1,\ldots,a_p)\in\cA^p$ be an ordered tuple, with repetitions allowed, and $\mathbf{Q}=(Q_1,Q_2,\cdots,Q_p)$ where $Q_i\in \mathcal{Q}_{a_i}$. 
\begin{enumerate}[label=(\roman*)]
    \item We define the \emph{overlap graph} of $Q_1,Q_2,\cdots,Q_p$ to be the graph formed by $\{1,\ldots,p\}$ with an edge $i\sim j$ whenever $\supp(P_{a_i})\cap\supp(P_{a_j})\ne\varnothing$.
    
    \item We call $(\mathbf{a},\mathbf{Q})$ a \emph{connected $p$-word} if the overlap graph of $Q_1,Q_2,\cdots,Q_p$ is connected. Note that because $\mathrm{supp}(Q_i)=\mathrm{supp}(P_{a_i})$, this definition is in fact completely determined by $\mathbf{a}$.
    
    \item We call $W_{\mathbf{a},\mathbf{Q}}=Q_1Q_{2}\cdots Q_p$ the \emph{evaluation} of $(\mathbf{a},\mathbf{Q})$, although with a slight abuse of notation we also sometimes refer to it as a connected $p$-word if $(\mathbf{a},\mathbf{Q})$ is a connected $p$-word.
    
    \item We define the \emph{connected $p$-word space} to be  $\mathcal{C}_{p,H}=\mathrm{span}\{W_{\mathbf{a},\mathbf{Q}}:(\mathbf{a},\mathbf{Q}) \text{ is a connected }p\text{-word}\}$.
\end{enumerate}
\end{definition}

We note that despite the name, $W_{\mathbf{a},\mathbf{Q}}$ may have a disconnected support even when $(\mathbf{a},\mathbf{Q})$ is a connected $p$-word due to cancellation of Pauli components. Here connectedness refers to the connectedness of the underlying overlap graph. Connected $p$-words have the following properties that will be useful:
\begin{lemma}[Closure properties of connected $p$-words]
\label{lem:support-calculus}
Let $W_p\in \mathcal C_{p,H},W_q\in \mathcal C_{q,H}$. Then:
\begin{enumerate}[label=(\roman*)]
    \item The connected $p$-word space is closed under $\mathcal P_{N_\rho,\omega}$ and $\mathcal S_{N_\rho}$:
    \begin{equation}
        \mathcal P_{N_\rho,\omega}(W_p), \mathcal S_{N_\rho}(W_p) \in \mathcal C_{p,H}.    
    \end{equation}

    \item The commutator of connected $p$-word and $q$-word generate connected $p+q$-word:
    \begin{equation}
        [W_p,W_q] \in \mathcal C_{p+q,H}.
    \end{equation}

    \item If $A(z)=\sum_{p\ge 1} A_pz^p,B(z)={B_0+}\sum_{p\ge 1} B_pz^p$, satisfies $A_p,B_p\in \mathcal C_{p,H}$ for every $p{\geq 1}$ {and $B_0\in \mathrm{span}(N_\rho)$}, we have:
    \begin{equation}
        e^{A(z)}B(z)e^{-A(z)} ={B_0+} \sum_{p\ge 1} C_pz^p,
    \end{equation}
    where $C_p\in \mathcal C_{p,H}\;(\forall p \ge 1)$.
\end{enumerate}
\end{lemma}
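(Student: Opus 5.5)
The plan is to prove the three closure properties in order, working at the level of spanning elements $W_{\mathbf a,\mathbf Q}=Q_1\cdots Q_p$ and extending by linearity. Each of $\mathcal P_{N_\rho,\omega}$, $\mathcal S_{N_\rho}$ and the commutator is linear (bilinear), so it suffices to treat single evaluations of connected words. Two facts about single-site Pauli algebra will be used repeatedly. First, $N_\rho$ is a sum of onsite operators $Q_j^\rho$ with integer spectra, so $e^{i\theta N_\rho}=\bigotimes_j e^{i\theta c_jQ_j^\rho}$ factorizes into single-qubit unitaries. Second, conjugating any Pauli string by a product of single-qubit unitaries gives a linear combination of Pauli strings whose support is contained in the original support. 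The hard part is to ensure that the output lies in the span of words with exactly the same supports, not merely in the span of operators supported in the same region.

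For (i), conjugation acts letter by letter:
\begin{equation}
e^{i\theta N_\rho}Q_1\cdots Q_pe^{-i\theta N_\rho}=\prod_{i}\bigl(e^{i\theta N_\rho}Q_ie^{-i\theta N_\rho}\bigr).
\end{equation}
Each factor is supported on $\supp(P_{a_i})$, so it expands as a combination of Pauli strings with support contained in $\supp(P_{a_i})$. The expansion may contain identity components on some sites. To return to strings with exactly that support, I will use the following decomposition. For a site $x$, the single-site identity can be written as a combination of products of two non-identity Paulis on $x$, for example $I=\sigma^x\sigma^x$. The plan is therefore to represent every operator supported on $S=\supp(P_{a_i})$ as a linear combination of products of at most two Paulis, each with support exactly $S$. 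One way is to write an arbitrary string $\sigma$ on $S$ as $\sigma=\sigma'\sigma''$, with $\sigma',\sigma''$ non-identity on every site of $S$. This is always possible, since on a single qubit any Pauli, including $I$, is a product of two non-identity Paulis.

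That representation, however, doubles the word length and would put the term in $\mathcal C_{2p,H}$ rather than $\mathcal C_{p,H}$. This is the main obstacle. The fix I will try is to absorb the second factor into a neighbouring letter. More simply, I will check whether the definition allows $Q_i\in\mathcal Q_a$ to be any operator supported on $\supp(P_a)$. If $\mathcal C_{p,H}$ is read as the span over all $p$-tuples of operators supported on the prescribed supports, then (i) is immediate. If that reading is not available, I will absorb the extra factor: when $p\ge1$, the product $Q_1\cdots Q_p$ with modified first letter $\widetilde Q_1=\sigma'\sigma''$ has $\sigma'\sigma''$ supported in $\supp(P_{a_1})$. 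Then $\widetilde Q_1$ expands, via the Pauli basis on that support, into strings each supported in $\supp(P_{a_1})$. I will handle the identity-on-some-sites issue by the explicit convention that the span is taken over the full Pauli algebra on each support. I expect to state this reading explicitly as the working interpretation. With it, $\mathcal P_{N_\rho,\omega}(W)$ is an integral of such conjugations and stays in the space. $\mathcal S_{N_\rho}$ is a finite combination of the $\mathcal P_{N_\rho,\omega}$ (finitely many $\omega$ occur locally), or equivalently an integral of conjugations, so it also stays in the space.

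For (ii), expand $[W_p,W_q]=W_pW_q-W_qW_p$. Concatenation gives words of length $p+q$. If the two overlap graphs are vertex-disjoint with no edge between them, the supports are disjoint. Evaluations of Pauli strings on disjoint supports commute, so the commutator vanishes, which lies trivially in the space. Otherwise some letter of the first word overlaps some letter of the second, and the concatenated overlap graph is connected. For (iii), expand $e^{\ad_{A(z)}}B(z)=\sum_m \ad_{A}^m(B)/m!$ and collect powers of $z$. Each term of order $z^p$ is a nested commutator $[A_{p_1},[\dots,[A_{p_m},B_{p_0}]]]$ with $\sum p_i=p$. If $p_0\ge1$, repeated application of (ii) places the term in $\mathcal C_{p,H}$. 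If $B_0=cN_\rho$, the innermost commutator $[A_{p_m},N_\rho]$ is minus a charge-weighted combination of the $\mathcal P_{N_\rho,\omega}(A_{p_m})$, since $[\mathcal P_{N_\rho,\omega}(W),N_\rho]=-\omega\mathcal P_{N_\rho,\omega}(W)$. By (i) this lies in $\mathcal C_{p_m,H}$, and (ii) then finishes the nested commutator. The $z^0$ term is $B_0$. Convergence of the rearranged series holds in the finite-dimensional setting, so the coefficientwise identification is legitimate.
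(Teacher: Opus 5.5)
\textbf{Verdict: there is a genuine gap, and it is in (i). Parts (ii) and (iii) are fine.}

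Your (ii) and (iii) follow the paper's argument. In (ii) you concatenate the words and note that disjoint letter supports give a vanishing commutator. In (iii), for $B_0\in\operatorname{span}(N_\rho)$, you use $[W,N_\rho]=-\sum_\omega\omega\,\mathcal P_{N_\rho,\omega}(W)$ and then apply (i) and (ii).

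\textbf{Where (i) fails.} In \Cref{def:p-word}, $\mathcal Q_a$ consists of Pauli operators with \emph{exactly} the support of $P_a$. So you must show that each conjugated letter $e^{i\theta N_\rho}Q_ie^{-i\theta N_\rho}$ is a combination of Pauli strings that are non-identity on every site of $\supp(P_{a_i})$.

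Instead you assert that identity components may appear. You then observe that the decomposition $I=\sigma^x\sigma^x$ pushes the term into $\mathcal C_{2p,H}$. Finally you adopt a ``working interpretation'' in which letters may be arbitrary Pauli strings supported inside $\supp(P_{a_i})$.

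That relaxed space is strictly larger than $\mathcal C_{p,H}$. Already for $p=1$ it contains the identity, which is not in $\mathcal C_{1,H}$. So what you prove is closure of a different space, not the stated lemma. The ``absorb into a neighbouring letter'' idea also fails, because it changes that letter's support.

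\textbf{The missing observation.} The identity components you feared never arise. Write $e^{i\theta N_\rho}=\bigotimes_x U_x$ with single-site unitaries $U_x$, and $Q_i=\bigotimes_{x\in S}\sigma_x$ with each $\sigma_x$ a non-identity, hence traceless, Pauli.

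\begin{itemize}
\item On each site $x\in S$, unitary conjugation preserves the trace, so $U_x\sigma_xU_x^\dagger$ is still traceless, i.e.\ lies in $\operatorname{span}\{X,Y,Z\}$ at $x$.
\item Off $S$, $U_xIU_x^\dagger=I$.
\item Expanding the tensor product therefore gives a combination of Pauli strings with support exactly $S$, i.e.\ elements of $\mathcal Q_{a_i}$, with the label tuple $\mathbf a$ unchanged.
\end{itemize}

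This is precisely the paper's argument. Concretely, $e^{i\theta cQ_x^\rho}$ is, up to a phase, a rotation about the $\tau_{x,z}^\rho$ axis. It fixes $\tau_{x,z}^\rho$ and rotates the other two axes into each other.

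With this in place, your integral representations of $\mathcal P_{N_\rho,\omega}$ and $\mathcal S_{N_\rho}$ finish (i), and (ii) and (iii) go through as you wrote them. Your relaxed space would in fact still serve the paper's downstream uses, which only need the overlap graph of $(\supp P_{a_i})_i$ to be connected. It does not, however, establish the lemma as stated.
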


The definition of connected $p$-words, together with the above properties, helps us to greatly reduce the number of terms present in the Taylor expansion of $D_{\rho,j}^{\Lambda}(zH)$ (and consequently $g_{\rho}^{\Lambda}(z)$) in $z$, as shown in the following lemma:

\begin{lemma}[Connected-word expansion]
\label{lem:connected-word-expansion}
We consider ADHH recursion \Cref{eq:ADHH-recursion} for \(H=H(\mu)=\sum_{a\in\cA}\mu_aP_a\) and static field $\nu N_\rho$. For every ADHH step $0\le j\le n_*$ and every $p\ge1$, the degree-$p$ coefficient of $D_{\rho,j}^{\Lambda}(zH)$ is an element of $\mathcal C_{p,H}$:
\begin{equation}
    D_{\rho,j}^{\Lambda}(zH) = \sum_{p\ge 1}D_{p,j}z^p,
\end{equation}
where $D_{p,j} \in \mathcal C_{p,H}$. More precisely, each \(D_{p,j}\) admits a finite expansion into terms $c_{\mathbf{a},\mathbf{Q}}\mu_{a_1}\cdots\mu_{a_p}W_{\mathbf a,\mathbf Q},$ where \((\mathbf a,\mathbf Q)\) is a connected \(p\)-word, and for fixed \(j\), \(\nu\), and $N$, the scalar $c_{\mathbf{a},\mathbf{Q}}$ is independent of \(\mu\) and depends only on $\mathbf{a},\mathbf{Q}$.
\end{lemma}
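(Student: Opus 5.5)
We argue by induction on the ADHH step \(j\), and carry along a stronger hypothesis that covers all four iterates at once. For each \(j\le n_*\), we claim that \(H_j(zH)\), \(D_{\rho,j}^\Lambda(zH)\), \(V_j(zH)\) and \(A_j(zH)\) are power series in \(z\) with no constant term, and that each degree-\(p\) coefficient lies in \(\mathcal C_{p,H}\). We further claim that each such coefficient is a finite sum \(\sum c_{\mathbf a,\mathbf Q}\mu_{a_1}\cdots\mu_{a_p}W_{\mathbf a,\mathbf Q}\), where the scalars \(c_{\mathbf a,\mathbf Q}\) do not depend on \(\mu\).

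\textbf{Base case.} We have \(H_0(zH)=z\sum_a\mu_aP_a\), and each \(P_a\in\mathcal Q_a\) is a connected \(1\)-word with coefficient \(\mu_a\).

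\textbf{Steps within one iteration.} Given the claim for \(H_j\), we apply \Cref{lem:support-calculus}(i) degree by degree. Since \(D_j=\mathcal P_{N_\rho,0}(H_j)\) and \(V_j=H_j-D_j\) are linear in \(H_j\), their degree-\(p\) coefficients stay in \(\mathcal C_{p,H}\). The same holds for \(A_j=\mathcal S_{N_\rho}(V_j)\).

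For the \(\mu\)-structure we need slightly more than closure of the span. We check that \(\mathcal P_{N_\rho,\omega}\) and \(\mathcal S_{N_\rho}\) send a single word \(W_{\mathbf a,\mathbf Q}\) to a linear combination of words \(W_{\mathbf a,\mathbf Q'}\) with the \emph{same} index tuple \(\mathbf a\), and with coefficients that depend only on \((\mathbf a,\mathbf Q,\nu,N_\rho)\). The mechanism is that conjugation by the onsite charge \(e^{i\theta N_\rho}\) factorizes over the letters of the word and preserves each letter's support. Each letter is then re-expanded in Paulis on the same support, that is, within \(\mathcal Q_{a_i}\). After integrating over \(\theta\), only \(\theta\)-independent scalar weights remain, together with the factors \(1/(\nu\omega)\) coming from \(\mathcal S_{N_\rho}\). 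So the monomial \(\mu_{a_1}\cdots\mu_{a_p}\) is carried along unchanged.

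Similarly, a commutator of \(W_{\mathbf a,\mathbf Q}\) with \(W_{\mathbf b,\mathbf R}\) is either zero or a difference of concatenated words indexed by \((\mathbf a,\mathbf b)\) and \((\mathbf b,\mathbf a)\). In both cases the monomial is \(\prod\mu_{a_i}\prod\mu_{b_i}\). A nonzero commutator forces some pair of supports to overlap, so the concatenated overlap graph is connected. This is \Cref{lem:support-calculus}(ii).

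\textbf{Passing to \(H_{j+1}\).} We use \Cref{eq:recursion-no-N}, which writes \(H_{j+1}\) in terms of \(\gamma_j(D_j)\), \(\gamma_j(V_j)-V_j\) and \(\alpha_j(V_j)-V_j\), and expand the conjugations as \(\sum_m \ad_{A_j}^m(\cdot)/m!\). Because \(A_j\) starts at degree \(\ge1\), only finitely many \(m\le p\) contribute to degree \(p\). The coefficient at degree \(p\) is therefore a finite sum of nested commutators whose degrees add up to \(p\); this is \Cref{lem:support-calculus}(iii), with \(B_0=0\). For \(\alpha_j\), the extra factor \(\int_0^1 t^m\,dt=1/(m+1)\) only changes scalars. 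This gives the claim for \(H_{j+1}\), and hence for \(D_{\rho,j+1}^\Lambda\).

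\textbf{Expected obstacle.} The delicate point is justifying that the formal power-series manipulation, namely rearranging \(e^{\ad_{A_j}}\) by degree, coincides with the Taylor expansion of the actual analytic function \(z\mapsto D_{\rho,j}^\Lambda(zH)\). We would handle this with \Cref{thm:uniform-normal-form}. Since \(zH\in\mathfrak B_{\rm an}\) for \(|z|<2\), each iterate is analytic in \(z\) there. The series \(\sum_m\ad_{A_j}^m/m!\) converges in the finite-dimensional operator space locally uniformly in \(z\), so the Taylor coefficients can be read off term by term. A separate point is that the coefficients \(c_{\mathbf a,\mathbf Q}\) are not unique, because words can coincide as operators. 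The statement only asks for existence of such an expansion, so we would fix the canonical expansion produced by the recursion itself.
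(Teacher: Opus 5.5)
Your proposal is correct and follows essentially the paper's argument: induction on $j$ using the closure properties of \Cref{lem:support-calculus}, with the $\mu$-monomials tracked through the $N$-free form \Cref{eq:recursion-no-N}. The only cosmetic difference is the use of part (iii). You apply it with $B_0=0$ directly to $\gamma_j(D_j)$, $\gamma_j(V_j)$ and $\alpha_j(V_j)$, whereas the paper applies it with $B_0=\nu N$ to $e^{A_j}(\nu N+H_j)e^{-A_j}$ and uses \Cref{eq:recursion-no-N} only for the coefficient bookkeeping.
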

Essentially, this lemma ensures that in analyzing the $p$-th order Taylor term for $D_{\rho,j}^{\Lambda}(zH)$, we only need to focus on the connected $p$-words, of which there are only $\bigO(n)$, as opposed to $\bigO(n^p)$ arbitrary products of $p$ Hamiltonian terms. The two lemmas above are proved in~Appendix~\ref{app:patch-proofs}.

\subsection{Local approximation error}
\label{subsec:patch-theorem}

We will then analyze the Taylor expansion of $g_\rho^{\Lambda}(z)$ and $g_\rho^B(z)$ in $z$:
\begin{equation}
    g_\rho^{\Lambda}(z)=\sum_{p=0}^{\infty}c_p^{\Lambda}z^p, \quad
    g_\rho^B(z)=\sum_{p=0}^{\infty}c_p^Bz^p.
    \label{eq:g-Taylor}
\end{equation}
Importantly, the Taylor expansion coefficients up to  $K$th order only depend on the Hamiltonian coefficients in a ball of radius $(K+1)r_0$ around the defect $k$. This we prove in the lemma below:
\begin{lemma}
\label{lem:coefficient-matching}
If \(R\ge(K+1)r_0\), then $c_p^{\Lambda}=c_p^B$, where $0\le p\le K$.
\end{lemma}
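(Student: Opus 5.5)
The plan is to compare the two recursions coefficient by coefficient in $z$, using the connected-word expansion of \Cref{lem:connected-word-expansion} together with the localization of the charge action already used in the proof of \Cref{lem:exact-embedding}.

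First, by \Cref{lem:connected-word-expansion} applied to the full-lattice recursion with input $zH$ and charge $N_\rho$, the degree-$p$ coefficient of $D_{\rho,s}^\Lambda(zH)$ is a finite sum
\[
\sum c_{\mathbf a,\mathbf Q}\,\mu_{a_1}\cdots\mu_{a_p}\,W_{\mathbf a,\mathbf Q}
\]
over connected $p$-words. The same expansion holds for the patch recursion with input $zH_B$, with words drawn only from $\{a:\supp(P_a)\subseteq B\}$. Applying the linear functional $\ell_\rho$ then gives $c_p^\Lambda$ and $c_p^B$; both vanish at $p=0$. By \Cref{eq:transition-functional-Xk}, $\ell_\rho(W_{\mathbf a,\mathbf Q})=0$ unless $W_{\mathbf a,\mathbf Q}$ acts nontrivially at $k$. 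That requires $k\in\bigcup_i\supp(P_{a_i})$. By connectedness of the overlap graph and $\operatorname{diam}\supp(P_a)\le r_0$, every $\supp(P_{a_i})$ then lies in $B_{pr_0}(k)\subseteq B_{Kr_0}(k)$ for $p\le K$. Since $R\ge(K+1)r_0$, all such input terms are retained in $H_B$.

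Second, I will show that the scalar $c_{\mathbf a,\mathbf Q}$ for such a localized word is the same in both recursions. To do this, I will strengthen the induction in the proof of \Cref{lem:connected-word-expansion}: the coefficient of a given ordered word is produced by a finite tree of operations $\mathcal P_{N,\omega}$, $\mathcal S_N$, $\operatorname{ad}$ and exponential series applied to products of subwords, via \Cref{eq:recursion-no-N}. Every intermediate operator arising from subwords of a word supported in $B_{pr_0}(k)\subset B$ is supported in $B$. On such operators, $N_\rho$ and $N_{\rho,B}$ act identically by the splitting \Cref{eq:N-split-embedding}. Hence every step assigns the same scalar, and the input coefficients $\mu_{a_i}$ are identical because $\mathcal E_B$ keeps exactly these terms. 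Words that reach outside $B$ either appear only in the full-lattice sum, where they are annihilated by $\ell_\rho$ as shown above for $p\le K$, or do not appear at all.

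The main obstacle is making the second step rigorous: the coefficient $c_{\mathbf a,\mathbf Q}$ must be a well-defined function of the ordered word, determined by an operation tree whose intermediate data involve only subwords, rather than merely of the aggregate operator $D_{p,j}$. Otherwise cancellations between words could mix localized and nonlocalized contributions. I would handle this by running the recursion formally in the free algebra generated by symbols $z\mu_aQ$, noting that $\mathcal P_{N,\omega}$ and $\mathcal S_N$ act diagonally on Pauli strings, and then evaluating. An alternative route is to write $H=H_B+H_{\rm out}$, expand in a second variable $w$ attached to $H_{\rm out}$, and argue that every monomial containing $w$ at total degree $\le K$ yields a connected word touching both $k$ and the exterior of $B$. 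Such a word has diameter exceeding $Kr_0$, so it is killed by $\ell_\rho$. The $w^0$ part is then exactly the patch recursion by \Cref{lem:exact-embedding}.
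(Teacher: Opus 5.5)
Your Step~1 matches the paper, and the ``alternative route'' in your last paragraph is exactly how the paper finishes. \Cref{lem:exact-embedding} identifies $c_p^B$ with the degree-$p$ coefficient of $E_\rho^\Lambda(zH_B)$, with the same charge $N_\rho$. Passing from $H$ to $H_B$ then just sets $\mu_a=0$ for every $a$ with $\supp(P_a)\not\subseteq B$. For $p\le K$, the $\ell_\rho$-image of every monomial containing such an $a$ vanishes, by the radius bound $\bigcup_iS_i\subseteq B_{pr_0}(k)$.

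You should lead with this route. It only uses the coefficients of monomials in $\mu$. These are canonical, being the coefficients of a polynomial map, and by \Cref{lem:connected-word-expansion} they lie in the span of connected words with the given label multiset. So the well-definedness worry you raise never comes up, and the comparison of $N_\rho$ with $N_{\rho,B}$ is handled once in \Cref{lem:exact-embedding}.

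Your primary Step~2 follows each ordered word through the operation tree, using that the two charges conjugate letters in $B$ identically. That is instead the argument the paper uses inside the proof of \Cref{thm:Ramsey-approximation}. It works if you track the specific representation built in \Cref{lem:connected-word-expansion}, where commutators of disjointly supported subwords are discarded by a rule depending only on the labels.

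The free-algebra fix you propose does not work as stated. With no relations imposed, those disjoint commutators survive formally and produce disconnected words. For example, suppose $\supp(P_a)\ni k$ and $\supp(P_b)\not\subseteq B$ are disjoint. Then $\ell_\rho(P_aP_b)=\bra{\Phi_0^\rho}P_b\ket{\Phi_0^\rho}\,\ell_\rho(P_a)$, which is generally nonzero and is cancelled only by the accompanying $-P_bP_a$ term. So the termwise claim that words reaching outside $B$ are annihilated by $\ell_\rho$ fails, unless you quotient by commutation of disjointly supported letters.

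Also, $\mathcal P_{N,\omega}$ and $\mathcal S_N$ are not diagonal on standard Pauli strings: conjugation by $N_\rho$ rotates the Pauli components transverse to $\beta_j^\rho$. All you need is that each letter's support is preserved.
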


Moreover, when $|z|<2$, $zH\in \mathfrak{B}_{\rm an}$, and consequently we can bound the final charge-preserving interactions produced by the ADHH recursion using~\Cref{thm:ADHH-normal-form}, which leads to the following lemma:

\begin{lemma}[Uniform analytic transition-energy bound]
\label{lem:analytic-gap-bound}
For every \(\mu\in\cU\), the functions in \Cref{eq:g-functions} are analytic for \(|z|<2\) and satisfy
\begin{equation}
    \sup_{|z|\le2}\max\left\{\abs{g_\rho^{\Lambda}(z)},\abs{g_\rho^B(z)}\right\}\le C_g\Jan.
    \label{eq:analytic-gap-bound}
\end{equation}
Consequently,
\begin{equation}
    |c_p^{\Lambda}|\le C_g\Jan2^{-p},\quad |c_p^B|\le C_g\Jan2^{-p}.
    \label{eq:Taylor-coeff-bound}
\end{equation}
\end{lemma}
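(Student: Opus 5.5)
The plan is to show that for every \(\mu\in\cU\) and every \(|z|<2\) the interactions \(zH(\mu)\) and \(zH_B(\mu)\) lie in the interior of the analytic ball \(\mathfrak B_{\rm an}\). Then we compose the analytic dependence of the ADHH iterates, guaranteed by \Cref{thm:uniform-normal-form}, with the bounded linear functional \(\ell_\rho\). Finally we apply \Cref{lem:Cauchy-general}.

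\textbf{Membership in the ball.} By \Cref{eq:real-basin-local}, \(\norm{H(\mu)}_{\kappa_0}\le2J\). The patch Hamiltonian \(H_B(\mu)=\mathcal E_BH(\mu)\) obeys the same bound by the contractivity in \Cref{eq:interaction-patch-restriction}. Hence, for \(|z|<2\),
\[
\norm{zH}_{\kappa_0},\ \norm{zH_B}_{\kappa_0}<4J=\Jan ,
\]
so both inputs lie in the interior of \(\mathfrak B_{\rm an}\). The map \(z\mapsto zH\) is affine into the finite-dimensional space of interactions. Therefore \(z\mapsto D_s(zH)\) is analytic on \(|z|<2\) as a composition of analytic maps. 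This holds for both the full-lattice recursion with charge \(N_\rho\) and the patch recursion with charge \(N_{\rho,B}\), because the constants in \Cref{thm:uniform-normal-form} are uniform in the onsite integer charge and in the domain. The functional \(\ell_\rho\) is linear and bounded, since \(|\ell_\rho(W)|\le\norm W_\kappa\) by \Cref{eq:ell-local-bound-general-main}. Hence \(g_\rho^\Lambda\) and \(g_\rho^B\) are analytic for \(|z|<2\).

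\textbf{Uniform bound.} Combining \Cref{eq:ell-local-bound-general-main} with \Cref{eq:uniform-D-bound} at \(j=s\le n_*\) gives
\[
|g(z)|\le\norm{D_s(zH)}_{\kappa_s}\le C\Jan \qquad \text{for } |z|<2 .
\]
The local norm at index \(\kappa_s\) is the one controlled by \Cref{eq:uniform-D-bound}, and the inequality \(|\ell_\rho(W)|\le\norm W_\kappa\) holds for any \(\kappa>0\). The statement takes the supremum over the closed disc \(|z|\le2\). On the boundary circle \(\norm{zH}_{\kappa_0}\le\Jan\) still holds, so \(zH\in\mathfrak B_{\rm an}\), and \Cref{thm:uniform-normal-form} states its bounds for every \(Z\in\mathfrak B_{\rm an}\), including the boundary. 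Thus the bound \(C_g\Jan\) extends to \(|z|\le2\). If one prefers to avoid boundary issues, continuity of the recursion in \(Z\) gives the same conclusion.

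\textbf{Taylor coefficients.} Apply the one-variable estimate \Cref{eq:Cauchy-one-variable} on circles of radius \(r<2\). This gives
\[
|c_p|=\frac{|g^{(p)}(0)|}{p!}\le \frac{C_g\Jan}{r^{p}} ,
\]
and letting \(r\uparrow2\) yields \Cref{eq:Taylor-coeff-bound}.

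The only delicate point is analyticity near and on the boundary of the disc, together with the uniformity of the constant in the charge, the patch, and the admissible order \(s\). Both are supplied by \Cref{thm:uniform-normal-form}. The rest is routine.
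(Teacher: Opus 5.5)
Your proposal is correct and takes the same route as the paper. Both arguments place $zH(\mu)$ and $zH_B(\mu)=z\mathcal E_BH(\mu)$ in $\mathfrak B_{\rm an}$ for $|z|\le 2$, using $\norm{H(\mu)}_{\kappa_0}\le 2J$ and the contractivity of $\mathcal E_B$; bound $|g|$ by $\norm{D_s}_{\kappa_s}\le C\Jan$ through the local-norm estimate for $\ell_\rho$; and then apply the Cauchy coefficient estimate on circles of radius $r<2$ and let $r\to2$.
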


The two lemmas above help us prove that the full-lattice effective half transition energy $E_\rho^{\Lambda}(\mu)$ is close to the patch effective half transition energy $F_{\rho,R}(\mu)$. This is done by studying their Taylor expansions. The low-order terms are identical by~\Cref{lem:coefficient-matching}, and the high-order terms are small by~\Cref{lem:analytic-gap-bound}.

\begin{theorem}[Local approximation of the effective half transition energy]
\label{thm:patch-locality}
Let \(R\ge(K+1)r_0\).  Uniformly for every row \(\rho=(k,\ell)\) and every \(\mu\in\cU\),
\begin{equation}
    \abs{{E_\rho^{\Lambda}(\mu)}-F_{\rho,R}(\mu)}\le C_{\rm P}\Jan2^{-K}.
    \label{eq:patch-locality}
\end{equation}
One may take \(C_{\rm P}=2C_g\).
\end{theorem}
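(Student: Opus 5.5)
The plan is to compare the two quantities through the bookkeeping functions of \Cref{eq:g-functions}. Fix a row \(\rho=(k,\ell)\) and \(\mu\in\cU\), and write \(B=B_R(k)\). Since \(zH(\mu)=H(z\mu)\) and \(zH_B(\mu)=H_B(z\mu)\), we have \(g_\rho^{\Lambda}(1)=E_\rho^\Lambda(\mu)\). We also have \(g_\rho^B(1)=E_\rho^B(H_B(\mu))\). By construction, this is the patch recursion run on \(\nu N_{\rho,B}+H_B(\mu)\) for the same \(s\) steps, so it equals \(F_{\rho,R}(\mu)\). If one wants to pass through the full-lattice recursion with input \(H_B\), \Cref{lem:exact-embedding} gives the identification. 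The quantity to bound is therefore \(\abs{g_\rho^\Lambda(1)-g_\rho^B(1)}\).

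Next, expand both functions in their Taylor series \Cref{eq:g-Taylor}. By \Cref{lem:analytic-gap-bound}, both are analytic for \(|z|<2\) and bounded by \(C_g\Jan\) there. Cauchy's estimate on radii \(r\uparrow 2\) then gives \(|c_p^\Lambda|,|c_p^B|\le C_g\Jan 2^{-p}\), so both series converge absolutely at \(z=1\) and equal the function values. By \Cref{lem:coefficient-matching}, the hypothesis \(R\ge(K+1)r_0\) yields \(c_p^\Lambda=c_p^B\) for \(0\le p\le K\). Subtracting the two series at \(z=1\) leaves only the tail:
\begin{equation}
\abs{E_\rho^\Lambda(\mu)-F_{\rho,R}(\mu)}\le\sum_{p>K}\bigl(|c_p^\Lambda|+|c_p^B|\bigr)\le 2C_g\Jan\sum_{p>K}2^{-p}=2C_g\Jan2^{-K}.
\end{equation}
This gives \(C_{\rm P}=2C_g\). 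The bound is uniform in \(\rho\) and \(\mu\in\cU\) because \(C_g\) comes from the uniform analytic bounds of \Cref{thm:uniform-normal-form}. Those bounds depend only on \(\Jan\), \(\kappa_0\), and the admissibility of \(s\le n_*(\nu)\), not on \(n\), \(k\), or the random pattern.

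The substantive work lies in the two cited lemmas, and I expect \Cref{lem:coefficient-matching} to be the main obstacle. To prove it, I would use \Cref{lem:connected-word-expansion}: the degree-\(p\) coefficient of \(D_{\rho,s}^\Lambda(zH)\) is a sum of terms \(c_{\mathbf a,\mathbf Q}\mu_{a_1}\cdots\mu_{a_p}W_{\mathbf a,\mathbf Q}\) over connected \(p\)-words. The functional \(\ell_\rho\) kills every word whose total input support misses \(k\), by \Cref{eq:transition-functional-Xk}. A connected word whose support contains \(k\) lies in \(B_{pr_0}(k)\subseteq B\) when \(p\le K\), so all its letters come from terms retained in \(H_B\).

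The remaining point is that the scalar \(c_{\mathbf a,\mathbf Q}\) is the same in the full and patch recursions. Every intermediate operator in the construction of such a word is supported in \(B\), and on such operators conjugation by \(N_\rho\) coincides with conjugation by \(N_{\rho,B}\), as in \Cref{eq:ball-local-charge-agreement}. Hence \(\mathcal P_{N,\omega}\) and \(\mathcal S_N\) act identically in the two recursions. Combined with the recursion form \Cref{eq:recursion-no-N}, this gives matching coefficients by induction on the step \(j\). Care is needed because the recursion is nonlinear, so the degree-\(p\) coefficient also involves products of lower-degree pieces. Each such piece is a connected subword inside the same ball, and the closure properties in \Cref{lem:support-calculus} keep the bookkeeping consistent.
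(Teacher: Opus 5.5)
Your proposal is correct and matches the paper's proof. Both sides are the $z=1$ values of $g_\rho^\Lambda$ and $g_\rho^B$, their Taylor coefficients agree through degree $K$ by \Cref{lem:coefficient-matching}, and the tails are bounded by the Cauchy estimates of \Cref{lem:analytic-gap-bound}, which gives $C_{\rm P}=2C_g$. For coefficient matching, the paper does not track the nonlinear recursion step by step. By \Cref{lem:exact-embedding} both coefficients use the same charge $N_\rho$, and the scalars $c_{\mathbf a,\mathbf Q}$ in \Cref{lem:connected-word-expansion} do not depend on $\mu$, so replacing $H$ by $H_B$ only sets $\mu_a=0$ for terms outside $B$; no connected word of degree $p\le K$ whose support contains $k$ uses such a term, so none is affected.
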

Detailed proofs of the lemmas and theorems above are included in~Appendix~\ref{app:patch-proofs}.

\begin{remark}\label{rem:applicability_to_parallel}
Although \Cref{lem:scalar-correction,lem:exact-embedding,lem:support-calculus,lem:connected-word-expansion,lem:coefficient-matching,lem:analytic-gap-bound} and \Cref{thm:patch-locality} are stated for the single-defect charge \(N_\rho\), they also hold for the multiple-defect charge \(N_{\mathcal D,\ell}\) defined in \Cref{eq:parallel-charge}. For each \(k\in\mathcal D\), the corresponding scalar quantities are defined using \(\ket{\Phi_{\varnothing}^{\mathcal D,\ell}}\) and \(\ket{\Phi_{\{k\}}^{\mathcal D,\ell}}\) from \Cref{eq:parallel-code-states} in place of \(\ket{\Phi_0^\rho}\) and \(\ket{\Phi_1^\rho}\), and patch charges ($N_{\rho,B}$ in \Cref{eq:patch-charge} for patch $B$ in the single-defect setting) are obtained by restricting \(N_{\mathcal D,\ell}\) to the patch. The same proofs apply because the charge remains a sum of on-site operators with integer spectra, while the two reference states are product eigenstates of the charge that differ only at \(k\). Importantly, these results do not require the relevant charge sectors to be non-degenerate.
\end{remark}

As introduced before, we will update our guess $\mu$ for the Hamiltonian coefficients to match the observed effective half-transition-energy data through experiments. However, instead of directly considering the map from $\mu$ to the full-lattice data, we consider the corresponding patch data. More precisely, the map we consider is defined as
\begin{equation}
    F(\mu):=(F_\rho(\mu))_{\rho\in\Lambda\times[m]}.
    \label{eq:global-F}
\end{equation}
Here, as before, we use $\rho=(k,\ell)$ to index each data point corresponding to defect site $k$ and repetition index $\ell$. For each $k$ there are $m$ independent estimates, each associated with a basis $\{(s_j^\rho,\beta_j^\rho):j\in \Lambda\}$, and $F_\rho(\mu)$ is the patch effective half-transition-energy function defined in \Cref{eq:patch-gap}. Replacing the full-lattice map with this patch forward map introduces sparsity in the Jacobian and Hessian matrices, thus simplifying later analysis, and it only introduces a small error controlled by~\Cref{thm:patch-locality}.

Combining \Cref{eq:full-gap-data-error} and \Cref{thm:patch-locality}, the observed data satisfy
\begin{equation}
    y=F(\lambda)+e, \quad
    \norm e_\infty\le\eta_{\rm freq}+\eta_{\rm patch}, \quad
    \eta_{\rm patch}:=C_{\rm P}\Jan \cdot 2^{-K},
    \label{eq:patch-data-model}
\end{equation}
where $\eta_{\rm patch}$ accounts for how far the patch effective half transition energies deviate from the full-lattice values. In the analysis below, we will treat this $F$ as the forward map, and locally invert it to obtain the coefficients $\lambda$ from estimates $y$. Importantly, this forward map can be evaluated on a classical computer in time polynomial in $n$ and inverse precision due to its local nature. We will discuss this in detail in Appendix~\ref{app:classical-evaluation}.

\section{Recovering coefficients by optimization}
\label{sec:optimization}

In this section we will discuss how to recover the Hamiltonian through efficiently solving an optimization problem on a classical computer. Proofs of lemmas and theorems in this section can be found in Appendix~\ref{app:optimization-proofs}.

\subsection{Least-squares objective function}
The optimization minimizes the least-squares objective function:
\begin{equation}
    \mathcal L(y,\mu):=\frac{1}{2m}\norm{F(\mu)-y}_2^2,\quad \mu\in\cU.
    \label{eq:loss}
\end{equation}

Define the gradient and residual by
\begin{equation}
    f(y,\mu):=\nabla_\mu\mathcal L(y,\mu), \quad r(y,\mu):=F(\mu)-y.
    \label{eq:gradient-residual-active}
\end{equation}
To guaranty optimization, we prove that Hessian $\mathsf H(y,\mu)$ is well conditioned. The Hessian of the objective function is decomposed into the Jacobian Gram matrix $\mathsf G(\mu):=\frac{1}{m}\mathsf J(\mu)^{\mathsf T}\mathsf J(\mu)$ and a residual term involving the second derivative of $F$:
\begin{equation}
    \mathsf H(y,\mu):=D_\mu f(y,\mu)={\mathsf G(\mu)}+\frac1m\sum_\rho r_\rho(y,\mu)\nabla^2F_\rho(\mu).
    \label{eq:Hessian-split}
\end{equation}

\Cref{subsec:loss-Walsh} derives the Jacobian of the zeroth-order forward map and the number of experimental repetitions required for its conditioning. We then prove conditioning of the effective Jacobian and Hessian in \Cref{subsec:jacobian-hessian}, followed by convergence in \Cref{subsec:convergence}.

\subsection{Zeroth-order forward map and its Jacobian}
\label{subsec:loss-Walsh}

In this section, we will show that in the strong-field limit of $\nu\to\infty$, the zeroth-order half transition energies are sufficient for recovering Hamiltonian coefficients by solving the optimization problem in \Cref{eq:loss}. We recall that $q$ is the maximal support size of Hamiltonian terms $P_a$, $D_{\rm ov}$ is the maximal number of terms a Hamiltonian term can overlap with (including itself), and $\zeta$ is the maximal number of terms that cover any given qubit $k$. $r_0$ upper bounds the diameter of the support of all Hamiltonian terms, and $d$ is the dimension of the physical lattice. These parameters are all constants that are independent of system size $n$.

For a Pauli term $P_a=\bigotimes_{j\in\Lambda}P_{a,j}$, define $p_a:=|\supp(P_a)|$ and $\chi(a)\in\{0,1\}^{\Lambda}$ by $\chi_j(a):=\ind[P_{a,j}\ne I]$. We write $P_a\sim\beta^\rho$ when $P_{a,j}=\beta_j^\rho$ for every $j\in\supp(P_a)$.

In the limit $\nu\to\infty$,~\Cref{lem:scalar-correction} shows that the patch effective half transition energy $F_{\rho,R}(\mu)$ converges to its zeroth-order counterpart. By \Cref{eq:bare-half-gap}, this zeroth-order quantity is a linear function of $\mu$.
Therefore we define
\begin{equation}
    F_{\rho,0}(\mu):=E_{\rho,0}^{B_R(k)}(H_{B_R(k)}(\mu)), \quad X_{\rho} := D F_{\rho,0}(\mu),
    \label{eq:bare-and-correction}
\end{equation}
Define the zeroth-order forward map $F_0(\mu):=(F_{\rho,0}(\mu))_\rho$ and its Jacobian $X:=DF_0(\mu)$. Note that $X_{\rho}$ does not depend on $\mu$ because of the linearity of $F_{\rho,0}(\mu)$. We then consider $X$ as the matrix whose rows are $X_{\rho}$. The matrix $X$ is therefore of size $(nm)\times |\mathcal{A}|$.  This section establishes a lower bound on the sample size $m$ required for the zeroth-order Jacobian $X$ to be well conditioned. This result is subsequently used to show that the effective Jacobian $\mathsf J(\mu)=DF(\mu)$ is also well conditioned.

For the function $F_{\rho,0}$, we will prove the following:
\begin{lemma}
\label{lem:local-Walsh}
The function $F_{\rho,0}$ is linear:
\begin{equation}
    F_{\rho,0}(\mu)=\sum_{a\in\cA}X_{\rho,a}\mu_a,
    \label{eq:bare-linear}
\end{equation}
where $X_{\rho,a}=-\ind[k\in\supp(P_a)]\ind[P_a\sim\beta^\rho](-1)^{s^\rho\cdot\chi(a)}$ is the element of $X$ on the row indexed by $\rho$ and column indexed by $a$. If $R\ge r_0$, each row of $X\in\mathbb R^{nm\times M}$ has at most $\zeta$ nonzero entries, and each column has at most $qm$ nonzero entries. $\beta^\rho$ and $s^\rho$ specify the local Pauli frame and are defined in \Cref{eq:random-pattern}.
\end{lemma}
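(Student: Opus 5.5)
The plan is to compute $F_{\rho,0}(\mu)=E^{B}_{\rho,0}(H_B(\mu))=\ell_\rho(H_B(\mu))$ directly, using the definition \Cref{eq:bare-half-gap}. Since $\ell_\rho$ is linear, we get
\begin{equation}
F_{\rho,0}(\mu)=\sum_{a:\supp(P_a)\subseteq B}\mu_a\,\ell_\rho(P_a).
\end{equation}
The proof therefore reduces to computing $\ell_\rho(P_a)$ for a single Pauli string. Afterwards we must check that the restriction $\supp(P_a)\subseteq B$ is automatic whenever the coefficient is nonzero and $R\ge r_0$.

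For a single string we use the product structure of $\ket{\Phi_0^\rho}$ and $\ket{\Phi_1^\rho}$ from \Cref{eq:Phi0_Phi1}. Each diagonal element factorizes into single-site expectation values $\bra{\phi_{j,b}^\rho}P_{a,j}\ket{\phi_{j,b}^\rho}$. For a nonidentity Pauli $P_{a,j}$, this expectation is nonzero only if $P_{a,j}=\beta_j^\rho$, and then it equals $(-1)^{s_j^\rho}(-1)^b$. The reason is that $\ket{\phi_{j,b}^\rho}$ is an eigenvector of $(-1)^{s_j^\rho}\sigma_j^{\beta_j^\rho}$, and the other two Paulis have zero expectation in it.

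Hence both diagonal elements vanish unless $P_a\sim\beta^\rho$. If $P_a\sim\beta^\rho$, then
\begin{equation}
\bra{\Phi_0^\rho}P_a\ket{\Phi_0^\rho}=(-1)^{s^\rho\cdot\chi(a)}.
\end{equation}
The $\Phi_1^\rho$ element is the same, except that it picks up an extra factor $-1$ exactly when $k\in\supp(P_a)$. Taking half the difference gives
\begin{equation}
\ell_\rho(P_a)=-\ind[k\in\supp(P_a)]\,\ind[P_a\sim\beta^\rho]\,(-1)^{s^\rho\cdot\chi(a)}.
\end{equation}
This agrees with \Cref{eq:transition-functional-Xk}: the functional kills every term whose support misses $k$.

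Next we drop the patch restriction. Any $a$ with $k\in\supp(P_a)$ has $\supp(P_a)\subseteq B_{r_0}(k)$, because $\mathrm{diam}(\supp P_a)\le r_0$. So when $R\ge r_0$, every term with a nonzero coefficient is retained in $H_B$, and the sum may be written over all $a\in\cA$. This yields \Cref{eq:bare-linear} with the stated $X_{\rho,a}$. Linearity, and the independence of $X_\rho$ from $\mu$, are then immediate.

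The sparsity claims come from counting. Row $\rho=(k,\ell)$ has nonzero entries only for $a$ with $k\in\supp(P_a)$, and there are at most $\zeta$ such $a$ by \Cref{eq:zeta-Dov}. Column $a$ has nonzero entries only at rows $(k,\ell)$ with $k\in\supp(P_a)$, which gives at most $p_a\le q$ sites times $m$ repetitions, so at most $qm$ entries.

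No step is really an obstacle. The only care needed is the sign convention: with $Q_j=(I-\tau_{j,z})/2$, the charge-one state of the defect has $\tau_{k,z}$ eigenvalue $-1$. This is what produces the overall minus sign and makes the defect site, rather than the other sites, contribute the flip.
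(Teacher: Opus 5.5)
Your proposal is correct and follows essentially the same route as the paper. You compute $\ell_\rho(P_a)$ from the product structure of $\ket{\Phi_0^\rho}$ and $\ket{\Phi_1^\rho}$, note that any term touching $k$ lies in $B_{r_0}(k)\subseteq B_R(k)$, and bound the sparsity by $\zeta$ and $qm$; the only cosmetic difference is that the paper first remarks that $\langle\cdot\rangle_{N_\rho}$ leaves these diagonal expectations unchanged, whereas you apply the equivalent definition $E^{B}_{\rho,0}(Z)=\ell_\rho(Z)$ directly.
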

The entries of the zeroth-order Jacobian $X$ have a Walsh-character sign structure restricted by locality. We will refer to $X$ as the Jacobian of the zeroth-order forward map.

In the proof of the~\Cref{thm:Walsh-conditioning}, we will show that, as the number of samples tends to infinity, the Gram matrix of the zeroth-order Jacobian $X$,
\begin{equation}
\label{eq:Gamma}
    \Gamma := \frac{1}{m}X^{\mathsf T}X
\end{equation}
converges to $\Sigma := \operatorname{diag}\bigl(p_a3^{-p_a}\bigr)_{a\in\mathcal A}$, which is a well-conditioned diagonal matrix. Therefore, we expect $\Gamma$ to be well-conditioned when the number of samples is sufficiently large.
\begin{theorem}
\label{thm:Walsh-conditioning}
There is a constant \(C_m\), such that if
\begin{equation}
    m\ge C_m{\log(2n/\delta)},
    \label{eq:m-condition}
\end{equation}
with probability at least \(1-\delta\), the following holds:
\begin{align}
    \norm{\Gamma-\Sigma}_{\infty\to\infty} & \le 3^{-q}/4, \label{eq:Gamma-close}\\
    \frac{3}{4}\cdot3^{-q}I \preceq\Gamma &\preceq\left(q+3^{-q}/4\right)I, \label{eq:Gamma-spectral}\\
    \norm{\Gamma^{-1}}_{\infty\to\infty} & \le C_\Gamma:=(4/3)\cdot3^{q}. \label{eq:Gamma-inverse}
\end{align}
\end{theorem}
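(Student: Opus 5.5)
We first compute the expectation of $\Gamma$ entrywise, then control its fluctuations by concentration plus a union bound, and finally deduce the spectral and inverse bounds by perturbation from the diagonal matrix $\Sigma$.

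\textbf{Entrywise expectation.} Write $\Gamma_{ab}=\frac1m\sum_{k}\sum_{\ell=1}^m X_{(k,\ell),a}X_{(k,\ell),b}$ using \Cref{lem:local-Walsh}. For a fixed $\rho=(k,\ell)$, the product $X_{\rho,a}X_{\rho,b}$ is nonzero only if $k\in\supp(P_a)\cap\supp(P_b)$ and both terms are compatible with $\beta^\rho$. In that case the product equals $(-1)^{s^\rho\cdot(\chi(a)\oplus\chi(b))}$.

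Suppose $a\neq b$. If $\chi(a)\ne\chi(b)$, some site $j$ lies in the symmetric difference of the supports. Since $s_j^\rho$ is uniform and independent of the axes, the sign averages to zero. If instead $\chi(a)=\chi(b)$ but $P_a\ne P_b$, the two Paulis disagree on some site $j$ of the common support. Both cannot then match $\beta_j^\rho$, so the product vanishes identically.

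For $a=b$, the entry is $\sum_{k\in\supp(P_a)}\bbP[P_a\sim\beta^\rho]=p_a3^{-p_a}$, since each row has indicator probability $3^{-p_a}$. Hence $\bbE\Gamma=\Sigma$.

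\textbf{Concentration.} For each pair $(a,b)$, $\Gamma_{ab}$ is an average over $\ell$ of $m$ i.i.d.\ bounded variables $Z_\ell^{ab}=\sum_{k\in\supp P_a\cap\supp P_b}X_{(k,\ell),a}X_{(k,\ell),b}$, with $|Z_\ell^{ab}|\le q$. The rows for distinct $\ell$ use independent patterns; within one $\ell$ different $k$ may even be independent, but we do not need this.

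Hoeffding's inequality gives $\bbP[|\Gamma_{ab}-\Sigma_{ab}|>t]\le2e^{-mt^2/(2q^2)}$. The entry $\Gamma_{ab}$ can be nonzero only when the supports overlap, so each row of $\Gamma-\Sigma$ has at most $D_{\rm ov}$ potentially nonzero entries. Choosing $t=3^{-q}/(4D_{\rm ov})$ and union-bounding over the at most $MD_{\rm ov}=\bigO(n)$ relevant pairs yields \Cref{eq:Gamma-close} with probability $\ge1-\delta$ once $m\ge C_m\log(2n/\delta)$. Here $C_m$ depends only on $q$ and $D_{\rm ov}$, and the factor $M=\bigO(n)$ is absorbed into the constant.

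\textbf{Spectral and inverse bounds.} Since $\Gamma-\Sigma$ is symmetric, its $\ell^2$ operator norm is bounded by the $\ell^\infty\to\ell^\infty$ norm (Schur test, the $\ell^1$ norm being equal by symmetry), so $\norm{\Gamma-\Sigma}_2\le3^{-q}/4$. The diagonal entries satisfy $3^{-q}\le p_a3^{-p_a}\le q$; the lower bound holds because $p3^{-p}$ is decreasing for $p\ge1$ and $p\le q$. This gives \Cref{eq:Gamma-spectral}.

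For \Cref{eq:Gamma-inverse}, write $\Gamma=\Sigma(I+\Sigma^{-1}(\Gamma-\Sigma))$. Then $\norm{\Sigma^{-1}(\Gamma-\Sigma)}_{\infty\to\infty}\le3^q\cdot3^{-q}/4=1/4$, so a Neumann series gives
\[
\norm{\Gamma^{-1}}_{\infty\to\infty}\le\frac{\norm{\Sigma^{-1}}_{\infty\to\infty}}{1-1/4}\le\frac43\,3^q .
\]

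\textbf{Main obstacle.} The step needing most care is the expectation computation for $a\ne b$. It must handle both the sign cancellation via the independent $s_j$ (when supports differ) and the axis-mismatch annihilation (when supports coincide but the Paulis differ). The sparsity counting that keeps the union bound at $\bigO(n)$ events is also essential.
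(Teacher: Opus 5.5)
Your proposal is correct and follows the paper's proof. It uses the same entrywise expectation computation: sign cancellation on the symmetric difference of supports, and axis-mismatch annihilation on equal supports. It then applies Hoeffding with $t=3^{-q}/(4D_{\rm ov})$ and a union bound over the $\bigO(n)$ overlapping pairs.

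The differences are minor. You group by $\ell$, so only independence across repetitions is needed. You obtain the spectral and inverse bounds from the symmetric-norm comparison plus Weyl and a Neumann series, where the paper uses Gershgorin and the row-diagonal-dominance inverse bound; both give the same constants.
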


In the next section, we bound the perturbation $D\mathcal R(\mu)$ and derive a sufficient lower bound on the field strength $\nu$ that guarantees that the Jacobian $\mathsf J$ is well conditioned. Specifically, we decompose the Jacobian as
\begin{equation}
    \mathsf J(\mu):= {D F(\mu)} = X + {D\mathcal R(\mu)}.
\label{eq:Jacobian-def}
\end{equation}

\subsection{Well conditioned Jacobian and Hessian}
\label{subsec:jacobian-hessian}

The fact that the Jacobian of the map $F(\mu)$ in the limit of $\nu\to\infty$ is well-conditioned provides a way for us to analyze the finite-$\nu$ situation. We note that, because we will later analyze convergence in the $\ell^{\infty}$-norm rather than $\ell^2$-norm, it is not enough to simply analyze the usual condition number defined through singular values. Rather, we will use the bound \Cref{eq:Gamma-inverse} to analyze the convergence through the Banach fixed-point theorem. We will treat $F_{\rho,0}(\mu)$ in \Cref{eq:bare-and-correction} as a known baseline and analyze the Jacobian of $F(\mu)$ by carefully bounding the deviation
\begin{equation}
    \mathcal R_\rho(\mu):=F_\rho(\mu)-F_{\rho,0}(\mu).
    \label{eq:row-coeff-set}
\end{equation}
For the deviation {$\mathcal R_\rho(\mu)$}, we have the following bounds for its first- and second-order derivatives.
\begin{lemma}[Uniform first- and second-derivative bounds]
\label{lem:row-derivatives}
Uniformly for \(\mu\in\cU\) and all rows \(\rho\),
\begin{align}
    \sum_a\abs{\partial_a\mathcal R_\rho(\mu)} & \le {C_1\Jan}/{\nu},
    \label{eq:first-derivative-bound}\\
    \sup_a\sum_b\abs{\partial_a\partial_bF_\rho(\mu)} & \le{C_2}/{\nu}.
    \label{eq:second-derivative-bound}
\end{align}
All derivatives with respect to coefficients outside \(\cA_\rho:=\{a:\supp(P_a)\subseteq B_R(k)\}\) vanish.
\end{lemma}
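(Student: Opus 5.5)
The plan is to extend $F_\rho$ and $\mathcal R_\rho$ analytically in the coefficient variables and read off derivatives with Cauchy estimates (\Cref{lem:Cauchy-general}), using the uniform analytic bounds of \Cref{thm:uniform-normal-form} and the scalar correction of \Cref{lem:scalar-correction}.

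\textbf{Vanishing derivatives.} $F_\rho(\mu)$ depends on $\mu$ only through $H_B(\mu)=\mathcal E_BH(\mu)$. The recursion on the patch is fed only the coefficients $\mu_a$ with $\supp(P_a)\subseteq B_R(k)$, i.e.\ $a\in\cA_\rho$. Hence all partial derivatives in directions $a\notin\cA_\rho$ vanish, for both $F_\rho$ and $F_{\rho,0}$. This reduces everything to the finite set $\cA_\rho$, with $|\cA_\rho|\le\zeta V_R$.

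\textbf{First derivatives of $\mathcal R_\rho$.} Fix $\mu\in\cU$ and a sign vector $\epsilon\in\{\pm1\}^{\cA_\rho}$, or more generally any direction $\xi$ with $\norm\xi_\infty\le1$. Set
\[
\phi(w):=\mathcal R_\rho(\mu+w\xi),\qquad w\in\mathbb C .
\]
For $|w|\le1$, $\norm{H_B(\mu+w\xi)}_{\kappa_0}\le 2J+J=3J<\Jan$ by \Cref{eq:coeff-to-local,eq:real-basin-local}, so the input lies in $\mathfrak B_{\rm an}$. The map $w\mapsto\phi(w)$ is then analytic, being a composition of the analytic map of \Cref{thm:uniform-normal-form} with the linear functional $\ell_\rho$. \Cref{lem:scalar-correction} gives $|\phi(w)|\le C_Q\Jan^2/\nu$ on $|w|\le1$. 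Cauchy's estimate then yields
\[
|D\mathcal R_\rho(\mu)\xi|=|\phi'(0)|\le C_Q\Jan^2/\nu .
\]
Choosing $\xi_a=\operatorname{sign}\partial_a\mathcal R_\rho(\mu)$ gives $\sum_a|\partial_a\mathcal R_\rho|\le C_Q\Jan^2/\nu$. Because $\Jan=4J=\bigO(1)$, this is \Cref{eq:first-derivative-bound} with $C_1=C_Q\Jan$. The key point is that the $\ell^1$ norm is obtained by a single directional derivative, so no factor $|\cA_\rho|\sim V_R$ appears.

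\textbf{Second derivatives.} Since $F_{\rho,0}$ is linear by \Cref{lem:local-Walsh}, we have $\partial_a\partial_bF_\rho=\partial_a\partial_b\mathcal R_\rho$. Fix $a$ and a direction $\xi$ with $\norm\xi_\infty\le1$, and take the two-variable function
\[
\psi(w_1,w_2)=\mathcal R_\rho(\mu+w_1e_a+w_2\xi).
\]
On the polydisc $|w_1|,|w_2|\le1/2$, the input has local norm at most $2J+J/2\cdot(\text{const})+J/2\le\Jan$. Indeed, a single coordinate contributes at most $J$ by \Cref{eq:coeff-to-local}, so the input again lies in $\mathfrak B_{\rm an}$, and \Cref{lem:scalar-correction} bounds $|\psi|\le C_Q\Jan^2/\nu$ there. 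The polydisc Cauchy estimate \Cref{eq:Cauchy-polydisc} then gives
\[
|\partial_{w_1}\partial_{w_2}\psi(0)|=\Bigl|\sum_b\xi_b\partial_a\partial_b\mathcal R_\rho\Bigr|\le 4C_Q\Jan^2/\nu .
\]
Choosing $\xi_b=\operatorname{sign}\partial_a\partial_b\mathcal R_\rho$ and taking the supremum over $a$ yields \Cref{eq:second-derivative-bound} with $C_2=4C_Q\Jan^2$.

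\textbf{Expected obstacle.} The delicate point is justifying that the patch recursion with complex coefficient vectors $\mu+w\xi$ is analytic in several variables, and that \Cref{lem:scalar-correction} applies uniformly on the whole complex polydisc. Both hold because the inputs remain in $\mathfrak B_{\rm an}$ and the recursion is a composition of polynomial and exponential-series operations converging uniformly there, as in \Cref{thm:uniform-normal-form}. Hartogs' theorem, or the direct convergent-series argument, upgrades separate analyticity to joint analyticity. A second point to check is that the fixed stopping order $s$ is used for all $w$, so that the domain of analyticity is not affected by $w$.
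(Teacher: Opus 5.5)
Your proposal is correct and follows essentially the same route as the paper: you bound $\mathcal R_\rho$ uniformly on the complex analytic ball via \Cref{lem:scalar-correction}, apply one- and two-variable Cauchy estimates to directional derivatives along $\sum_{a\in\cA_\rho}\xi_aP_a$ (and along $P_a$ for the second derivative), and recover the $\ell^1$ norms by duality. The only differences are in the choice of Cauchy radii: you use $1/2,1/2$ together with $\norm{P_a}_{\kappa_0}\le J$, whereas the paper uses $J/(4e^{\kappa_0(2r_0+1)^d})$ and $1/4$, and this affects only the constants.
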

This lemma is proved by combining~\Cref{lem:scalar-correction} with the Cauchy estimates \Cref{eq:Cauchy-polydisc}. More precisely,~\Cref{lem:scalar-correction} provides a bound for $\mathcal{R}_{\rho}(\mu)$ that holds uniformaly for all $\|\mu\|_\infty\leq 4$ (by~\Cref{eq:coeff-to-local}) vanishes like $\bigO(1/\nu)$, while~\Cref{eq:Cauchy-polydisc} allows us to use this bound to bound the derivatives of $\mathcal{R}_{\rho}(\mu)$ for $\mu\in\cU$.

\begin{lemma}[Jacobian Gram perturbation]
\label{lem:Jacobian-perturbation}
Uniformly for \(\mu\in\cU\), the Jacobian Gram matrix {$\mathsf G(\mu)=\mathsf{J}(\mu)^{\mathsf T} \mathsf{J}(\mu)/m$} satisfies
\begin{equation}
    \norm{{\mathsf G(\mu)}-\Gamma}_{\infty\to\infty} \le C_JV_R\left( \frac{\Jan}{\nu}+\frac{\Jan^2}{\nu^2} \right).
 \label{eq:Jacobian-Gram-close}
\end{equation} 
On the event that \Cref{eq:Gamma-close} holds, there exists an optimization field threshold $\nu_{\rm opt}=\mathcal{O}(JV_R)$ such that for all $\nu\geq \nu_{\rm opt}$, we have
\begin{equation}
    {\mathsf G(\mu)}\succeq\frac{I}{2\cdot3^q}
    \label{eq:Jacobian-conditioned}
\end{equation}
uniformly on \(\cU\).
\end{lemma}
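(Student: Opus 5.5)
We decompose the Jacobian as \(\mathsf J(\mu)=X+E\) with \(E:=D\mathcal R(\mu)\), and expand
\begin{equation*}
    \mathsf G(\mu)-\Gamma=\frac1m\bigl(X^{\mathsf T}E+E^{\mathsf T}X+E^{\mathsf T}E\bigr).
\end{equation*}
The \(\infty\to\infty\) norm of a matrix is its maximal absolute row sum. We bound each of the three terms in this norm using only row and column sparsity together with \Cref{lem:local-Walsh,lem:row-derivatives}. The first term should produce the \(\Jan/\nu\) contribution and the last the \(\Jan^2/\nu^2\) contribution.

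\textbf{Sparsity bookkeeping.} By \Cref{lem:row-derivatives}, row \(\rho=(k,\ell)\) of \(E\) is supported on \(\cA_\rho=\{a:\supp(P_a)\subseteq B_R(k)\}\), and its \(\ell^1\) norm is at most \(C_1\Jan/\nu\). Consequently column \(a\) of \(E\) can be nonzero only in rows with \(k\in B_R(x_a)\). There are at most \(mV_R\) such rows, and each entry is bounded by \(C_1\Jan/\nu\). For \(X\), \Cref{lem:local-Walsh} gives entries in \(\{0,\pm1\}\), at most \(\zeta\) nonzeros per row, and at most \(qm\) nonzeros per column.

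\textbf{Bounding each term.} For row \(a\) of \(X^{\mathsf T}E\), we write \(\sum_b|(X^{\mathsf T}E)_{ab}|\le\sum_\rho|X_{\rho a}|\sum_b|E_{\rho b}|\le qm\cdot C_1\Jan/\nu\), which after dividing by \(m\) is \(O(\Jan/\nu)\). For \(E^{\mathsf T}X\), we write \(\sum_b|(E^{\mathsf T}X)_{ab}|\le\sum_\rho|E_{\rho a}|\,\zeta\le \zeta\, mV_R\,C_1\Jan/\nu\). For \(E^{\mathsf T}E\), we write \(\sum_b|(E^{\mathsf T}E)_{ab}|\le\sum_\rho|E_{\rho a}|\,\norm{E_\rho}_1\le mV_R(C_1\Jan/\nu)^2\). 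Dividing by \(m\) and collecting constants gives \Cref{eq:Jacobian-Gram-close} with \(C_J=C_1(q+\zeta)+C_1^2\), uniformly on \(\cU\).

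\textbf{Positivity and the main obstacle.} For the positivity statement, we take \(\nu_{\rm opt}=C'\Jan V_R3^q=O(JV_R)\) large enough that the right side of \Cref{eq:Jacobian-Gram-close} is at most \(3^{-q}/4\). The obstacle is that \(\infty\to\infty\) bounds do not directly give spectral bounds. We resolve it using symmetry: \(\mathsf G-\Gamma\) is symmetric, so its \(1\to1\) norm equals its \(\infty\to\infty\) norm, and Riesz–Thorin (or Schur's test) gives \(\norm{\mathsf G-\Gamma}_{2\to2}\le\norm{\mathsf G-\Gamma}_{\infty\to\infty}\le3^{-q}/4\). Combining this with \Cref{eq:Gamma-spectral}, \(\Gamma\succeq\frac34 3^{-q}I\), we obtain \(\mathsf G(\mu)\succeq\frac12 3^{-q}I\). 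The only remaining care is to verify that the column-support count of \(E\) is at most \(mV_R\), which follows because \(\supp(P_a)\subseteq B_R(k)\) forces \(k\in B_R(x_a)\).
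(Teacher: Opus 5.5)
Your proposal is correct and follows essentially the same argument as the paper. Both expand $\mathsf G-\Gamma=\frac1m\bigl(X^{\mathsf T}D\mathcal R+D\mathcal R^{\mathsf T}X+D\mathcal R^{\mathsf T}D\mathcal R\bigr)$ and use the same row and column sparsity counts to obtain row-sum bounds of order $qC_1\Jan/\nu$, $\zeta V_RC_1\Jan/\nu$ and $V_R(C_1\Jan/\nu)^2$, then conclude via symmetry of the perturbation (spectral norm at most the $\infty\to\infty$ norm) together with $\Gamma\succeq\frac34 3^{-q}I$; your Schur-test justification just spells out what the paper states in one line.
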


Let $y_*=F(\lambda)$ be the actual patch effective half-transition-energy values, i.e., $(y_{*})_{\rho} = F_{\rho}(\lambda)$. We consider their estimates within the following domain $\cY$:
\begin{equation}
    \cY:=\{y:\norm{y-y_*}_\infty\le r_y \},
    \label{eq:data-neighborhood}
\end{equation}
where $r_y \le 1$ is chosen to satisfy \Cref{eq:ry-condition}. By \Cref{lem:local-Walsh,lem:row-derivatives}, because $X$ has at most $\zeta$ non-zero entries in each row, each of which has absolute value bounded by $1$, and $\|D\mathcal R_\rho(\mu)\|_{\ell^1}\leq C_1 J_{\rm an}/\nu$, every row of \(\mathsf J(\mu)=X + D\mathcal R(\mu)\) has \(\ell_1\) norm at most
\begin{equation}
    L_F:=\zeta+\frac{C_1\Jan}{\nu}.
    \label{eq:LF}
\end{equation}
Recall that $r(y,\mu)=F(\mu)-y$, then for $y\in\cY$, using $F(\lambda)=y_*$,
\begin{equation}
    \norm{r(y,\mu)}_\infty
    \leq \norm{F(\mu)-F(\lambda)}_\infty + \norm{y-y_*}_\infty
    \le L_F+r_y=:r_E,
    \label{eq:residual-bound}
\end{equation}
where we have used
\begin{equation}
    \norm{F(\mu)-F(\lambda)}_{\infty} \leq  \int_0^1  \norm{{DF((1-s)\lambda+s\mu)}}_\infty \norm{\mu-\lambda}_{\infty} ds\leq  L_F.    
\end{equation}

The above results enable us to show that the Hessian matrix {$\mathsf H(y,\mu)$} is close to $\Gamma$ for large $\nu$. Specifically, given the decomposition \Cref{eq:Hessian-split}, we will show that the Gram matrix part {$\mathsf G(\mu)$} is close to $\Gamma$, and the remainder is small. These observations allow us to obtain the following quantitative bounds and ensure that {$\mathsf H(y,\mu)$} is well-conditioned.
\begin{lemma}[Uniform Hessian conditioning]
\label{lem:Hessian-conditioning}
Uniformly for \((y,\mu)\in\cY\times\cU\),
\begin{equation}
    \norm{\mathsf H(y,\mu)-\Gamma}_{\infty\to\infty}
    \le C_JV_R\left(\frac{\Jan}{\nu}+\frac{\Jan^2}{\nu^2}\right)+\frac{C_2V_Rr_E}{\nu}
    \label{eq:Hessian-close}
\end{equation}
On the event that \Cref{eq:Gamma-close} holds, there exists an optimization field threshold $\nu_{\rm opt}=\mathcal{O}(JV_R)$ such that when $\nu\geq \nu_{\rm opt}$, we have
\begin{equation}
    \mathsf H(y,\mu)\succeq\frac{I}{2\cdot 3^{q}},
    \label{eq:strong-convexity}
\end{equation}
and
\begin{equation}
    \norm{I-\Gamma^{-1}\mathsf H(y,\mu)}_{\infty\to\infty}\le 1/3.
    \label{eq:preconditioned-Hessian}
\end{equation}
In particular, \(\mathcal L(y,\cdot)\) is uniformly strongly convex on \(\cU\).
\end{lemma}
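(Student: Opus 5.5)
We decompose the difference using \Cref{eq:Hessian-split}: $\mathsf H(y,\mu)-\Gamma=(\mathsf G(\mu)-\Gamma)+\frac1m\sum_\rho r_\rho(y,\mu)\nabla^2F_\rho(\mu)$. The first piece is bounded in $\infty\to\infty$ norm directly by \Cref{lem:Jacobian-perturbation}, giving the term $C_JV_R(\Jan/\nu+\Jan^2/\nu^2)$. For the residual piece we bound its $\infty\to\infty$ norm, i.e. its maximal absolute row sum. Fixing a column/row index $a$, the row sum is at most $\frac1m\sum_\rho|r_\rho|\sum_b|\partial_a\partial_bF_\rho(\mu)|$. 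We use \Cref{eq:residual-bound} to bound $|r_\rho|\le r_E$ uniformly on $\cY\times\cU$, and \Cref{eq:second-derivative-bound} to bound the inner sum by $C_2/\nu$. The key sparsity input is the last sentence of \Cref{lem:row-derivatives}: $\partial_aF_\rho$ vanishes unless $\supp(P_a)\subseteq B_R(k)$, so only rows $\rho=(k,\ell)$ with $k$ within distance $R$ of $\supp(P_a)$ contribute. The number of such defect sites $k$ is at most $V_R$, each carrying $m$ repetitions, so there are at most $mV_R$ contributing rows; the $1/m$ cancels and we get $C_2V_Rr_E/\nu$. Symmetry of the Hessian makes row and column sums interchangeable, so this proves \Cref{eq:Hessian-close}.

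For the conditioning claims, on the event of \Cref{eq:Gamma-close}, \Cref{eq:Gamma-spectral} gives $\Gamma\succeq\frac34 3^{-q}I$. Since $\mathsf H$ and $\Gamma$ are symmetric, the spectral norm of their difference is bounded by its $\infty\to\infty$ norm (for symmetric matrices $\|M\|_2\le\sqrt{\|M\|_1\|M\|_\infty}=\|M\|_\infty$). Here $r_E=L_F+r_y\le\zeta+C_1\Jan/\nu+1=\bigO(1)$. Hence the right side of \Cref{eq:Hessian-close} is $\bigO(JV_R/\nu)$, using $J=\bigO(1)$ and $\nu\ge\Jan$. Choosing $\nu_{\rm opt}=C'JV_R$ with $C'$ large makes this at most $\min\{3^{-q}/4,\ 3^{-q}/(4\cdot 3)\}$, which yields \Cref{eq:strong-convexity} since $\frac34 3^{-q}-\frac14 3^{-q}=\frac12 3^{-q}$.

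For \Cref{eq:preconditioned-Hessian} we write $I-\Gamma^{-1}\mathsf H=-\Gamma^{-1}(\mathsf H-\Gamma)$ and apply submultiplicativity with \Cref{eq:Gamma-inverse}. This gives the bound $C_\Gamma\|\mathsf H-\Gamma\|_{\infty\to\infty}\le\frac43 3^q\cdot\epsilon_H$, which is at most $1/3$ once $\epsilon_H\le 3^{-q}/4$, again ensured by enlarging the constant in $\nu_{\rm opt}$. Strong convexity on the convex set $\cU$ then follows from the uniform lower bound on the Hessian.

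The main obstacle is the residual row-sum count: one must check that the locality restriction in \Cref{lem:row-derivatives} applies to the second derivatives as well, which holds since $F_\rho$ depends only on $\mu_a$ with $a\in\cA_\rho$. One must also check that the counting of rows touching a given $a$ is $\bigO(mV_R)$ and independent of $n$.
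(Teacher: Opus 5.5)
Your proposal is correct and matches the paper's proof essentially step for step. You use the same split via \Cref{eq:Hessian-split} and the same bound on the residual block: at most $mV_R$ contributing rows, $|r_\rho|\le r_E$, and the $C_2/\nu$ second-derivative bound. You then use the same eigenvalue shift from $\lambda_{\min}(\Gamma)\ge\frac34\,3^{-q}$, passing from the $\infty\to\infty$ norm to the spectral norm by symmetry, and the same estimate $C_\Gamma\,\eta_{\mathsf H}\le 1/3$. (Your extra requirement $\epsilon_H\le 3^{-q}/12$ is harmless but unnecessary, since $3^{-q}/4$ already gives both conclusions.)
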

Strong convexity is often used to analyze convergence of optimization algorithms in $\ell^2$-norm. In our later analysis, we will analyze convergence in the $\ell^\infty$-norm, for which strong convexity, measured by the smallest eigenvalue of the Hessian matrix, is insufficient. Instead we will rely on the estimate \Cref{eq:preconditioned-Hessian}.

\subsection{Solving the optimization problem}
\label{subsec:convergence}

We use a fixed-preconditioner gradient iteration to solve the optimization problem in \Cref{eq:loss}. The preconditioner is $\Gamma^{-1}$, with $\Gamma$ defined in \Cref{eq:Gamma}; no Hessian evaluation is needed during the iteration. The update map is
\begin{equation}
    {\mathcal T_y(\mu)}:=\mu-\Gamma^{-1}f(y,\mu)=\mu-\Gamma^{-1}\frac1m\mathsf J(\mu)^{{\mathsf T}}(F(\mu)-y).
    \label{eq:iteration-map}
\end{equation}
We will then prove that this iteration converges to the unique minimizer of $\mathcal{L}(y,\cdot)$ \Cref{eq:loss}, and that the minimizer is a good approximation of the exact coefficient vector $\lambda$.

We will first provide an upper bound of $\norm{\mathsf J(\mu)^{{\mathsf T}}z/m}_\infty$ that will be useful in the proof. We note that, using the decomposition $\mathsf J(\mu)=X + D\mathcal{R}(\mu)$, we have
\begin{equation}
    \norm{\frac{1}{m}\mathsf J(\mu)^{{\mathsf T}}}_{\infty\to\infty}\leq \frac{1}{m}\max_a\sum_{\rho} |X_{\rho a}|+\frac{1}{m}\max_a\sum_\rho |\partial_a \mathcal{R}_{\rho}(\mu)|.
\end{equation}
We let  $\rho=(k,\ell)\in\Lambda\times [m]$. Note that for each term index $a\in \mathcal{A}$, $X_{\rho a}\neq 0$ only if $k\in\supp(P_a)$, and therefore the total number of $\rho$ resulting in non-zero contribution is at most $m\max_a |\supp(P_a)|=mq$. We also have $|X_{\rho a}|\leq 1$ following \Cref{lem:local-Walsh}. Therefore $(1/m)\cdot\max_a\sum_{\rho} |X_{\rho a}|\leq q$, or equivalently $\|X^{\mathsf T}\|_{\infty\to\infty}/m\leq q$. For each $a\in \mathcal{A}$, $\partial_a \mathcal{R}_{\rho}(\mu)\neq 0$ only if $\supp(P_a)\subset B_R(k)$, which restrict the number of $\rho$ with non-zero $\partial_a \mathcal{R}_{\rho}(\mu)$ to be at most $m|B_R(k)|\leq  mV_R$.  Also by \Cref{eq:first-derivative-bound} we have $|\partial_a \mathcal{R}_{\rho}(\mu)|\leq C_1 J_{\rm an}/\nu$. Therefore $(1/m)\cdot\max_a\sum_\rho |\partial_a \mathcal{R}_{\rho}(\mu)|\leq C_1 V_R J_{\rm an}/\nu$.

Combining the above, 
we define the constant $C_y$ that bounds the norm of $\mathsf J(\mu)^{\mathsf T}/m$, and we have
\begin{equation}
    \norm{\frac{1}{m}\mathsf J(\mu)^{\mathsf T}z}_\infty
    \le C_y\norm z_\infty,
    \quad C_y:=q+\frac{C_1V_R\Jan}{\nu}.
    \label{eq:Cy}
\end{equation}
With the above bound, we are ready to prove the convergence of the fixed-preconditioner gradient iteration, \Cref{eq:iteration-map} provided the estimated data vector $y$ is close to the actual $y_*$, and that the field strength $\nu$ is sufficiently large. We recall that $r_y$ is how far we allow $y$ to deviate from $y_*$ in $\ell^\infty$-norm, i.e., $y$ needs to be contained in $\mathcal Y=\left\{y:\|y-y_*\|_\infty\le r_y\right\},$ as defined in \Cref{eq:data-neighborhood}.
\begin{theorem}[Stability and convergence of the fixed-preconditioner gradient iteration]
\label{thm:local-inverse}
We condition on the event that \Cref{eq:Gamma-close} in \Cref{thm:Walsh-conditioning} is correct, and let $r_y$ in \Cref{eq:data-neighborhood} satisfy (with $C_\Gamma$ defined in \Cref{eq:Gamma-inverse} and $C_y$ in \Cref{eq:Cy})

\begin{equation}\label{eq:ry-condition}
 r_y\le\frac{1}{2C_\Gamma C_y}.
\end{equation}
There exists an optimization field threshold $\nu_{\rm opt}$:
\begin{equation}\label{eq:field-strength-optimization}
    {\nu_{\rm opt}}=\mathcal{O}(JV_R)=\mathcal{O}(JR^d)
\end{equation}
such that for every $\nu\geq \nu_{\rm opt}$, \(y\in\cY\), the loss \(\mathcal L(y,\cdot)\) has a unique stationary point \(\mu^\star(y)\in\cU\).  It is the unique minimizer in \(\cU\), and
\begin{equation}
    \norm{{\mu^\star(y)}-\lambda}_\infty \le\frac{3C_\Gamma C_y\norm{y-y_*}_\infty}{2}.
    \label{eq:stability}
\end{equation}
For every initialization \(\mu^{(0)}\in\cU\), the iteration
$\mu^{(t+1)}={\mathcal T_y}(\mu^{(t)}),$ with $\mathcal T_y$ defined in \Cref{eq:iteration-map},
converges geometrically to \(\mu^\star(y)\), with contraction factor at most \(1/3\).
\end{theorem}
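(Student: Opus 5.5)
We will prove \Cref{thm:local-inverse} with the Banach fixed-point theorem for the map \(\mathcal T_y\) on \(\cU\) in the \(\ell^\infty\) norm. Choose \(\nu_{\rm opt}\) as in \Cref{lem:Hessian-conditioning}, so that \Cref{eq:preconditioned-Hessian} holds uniformly on \(\cY\times\cU\). Only the \(V_R\) factors in \Cref{eq:Hessian-close} force \(\nu\gtrsim JV_R\). We may also enlarge the constant so that \(C_1V_R\Jan/\nu\le 1\), which makes \(C_y=\bigO(1)\).

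\textbf{Contraction.} The Jacobian of \(\mathcal T_y\) is
\begin{equation}
 D\mathcal T_y(\mu)=I-\Gamma^{-1}\mathsf H(y,\mu).
\end{equation}
By \Cref{eq:preconditioned-Hessian}, its \(\ell^\infty\to\ell^\infty\) norm is at most \(1/3\) for every \(\mu\in\cU\). The set \(\cU\) is convex, so the mean value inequality along segments gives
\begin{equation}
 \norm{\mathcal T_y(\mu)-\mathcal T_y(\mu')}_\infty\le\tfrac13\norm{\mu-\mu'}_\infty
\end{equation}
for all \(\mu,\mu'\in\cU\).

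\textbf{Self-map.} This is the step needing care. We compare \(\mathcal T_y(\mu)-\lambda\) with \(\mathcal T_{y_*}(\mu)-\mathcal T_{y_*}(\lambda)\). The gradient at the truth vanishes, \(f(y_*,\lambda)=0\) since \(F(\lambda)=y_*\), so \(\mathcal T_{y_*}(\lambda)=\lambda\). The dependence on \(y\) is linear:
\begin{equation}
 \mathcal T_y(\mu)=\mathcal T_{y_*}(\mu)+\Gamma^{-1}\tfrac1m\mathsf J(\mu)^{\mathsf T}(y-y_*).
\end{equation}
The contraction argument for \(\mathcal T_{y_*}\) applies because \(y_*\in\cY\). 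Combining it with \Cref{eq:Gamma-inverse} and \Cref{eq:Cy} gives
\begin{equation}
 \norm{\mathcal T_y(\mu)-\lambda}_\infty\le\tfrac13\norm{\mu-\lambda}_\infty+C_\Gamma C_y r_y\le \tfrac13+\tfrac12<1
\end{equation}
by \Cref{eq:ry-condition}. Hence \(\mathcal T_y(\cU)\subseteq\cU\), and \(\cU\) is complete.

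\textbf{Fixed point, stability and uniqueness.} Banach's theorem then gives a unique fixed point \(\mu^\star(y)\in\cU\) and the geometric convergence \(\norm{\mu^{(t)}-\mu^\star}_\infty\le 3^{-t}\norm{\mu^{(0)}-\mu^\star}_\infty\). Because \(\Gamma^{-1}\) is invertible, fixed points are exactly the stationary points, \(f(y,\mu)=0\).

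Applying the self-map inequality at \(\mu=\mu^\star\) gives \(\norm{\mu^\star-\lambda}_\infty\le\tfrac13\norm{\mu^\star-\lambda}_\infty+C_\Gamma C_y\norm{y-y_*}_\infty\). Rearranging yields the factor \(3/2\) in \Cref{eq:stability}.

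Strong convexity from \Cref{eq:strong-convexity} on the convex set \(\cU\) implies that any minimizer over \(\cU\) is unique. This unique minimizer is \(\mu^\star\), for two reasons. First, a strongly convex function's stationary point in a convex set is its global minimizer there. Second, \(\mu^\star\) is a stationary point even if it lies on the boundary.

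The main obstacle is checking that the Hessian bound of \Cref{lem:Hessian-conditioning} is available at every point used. This includes the \(y_*\) comparison map and segment points. Both are covered because \(y_*\in\cY\) and \(\cU\) is convex.
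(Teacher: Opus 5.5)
Your proof is correct and follows essentially the same route as the paper. Both arguments get a $1/3$-contraction of $\mathcal T_y$ on $\cU$ from $\norm{I-\Gamma^{-1}\mathsf H(y,\mu)}_{\infty\to\infty}\le 1/3$, a self-map property from $C_\Gamma C_y r_y\le 1/2$, then apply Banach's theorem and rearrange to obtain the factor $3/2$ in \Cref{eq:stability}. The only differences are cosmetic: you pivot through $\mathcal T_{y_*}$ and use $y_*\in\cY$ where the paper bounds $\mathcal T_y(\lambda)-\lambda$ directly, and you spell out the fixed-point $\Leftrightarrow$ stationary-point $\Rightarrow$ unique-minimizer identification that the paper only asserts.
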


We now explain the entire optimization process in detail. We assume that the field strength $\nu$ is above the threshold in \Cref{eq:field-strength-requirement} to ensure the correct estimation of $y$, and is above the threshold in \Cref{eq:field-strength-optimization} to ensure optimization convergence. Therefore it needs to satisfy $\nu\geq \nu_{\min}$, for a combined threshold
\begin{equation}\label{eq:nu-condition}
    \nu_{\min}:=\max\{\nu_{\rm dyn},\nu_{\rm opt}\} = \bigO\left(\max\left\{R^d,\log(1/\eps)[1+\log\log(1/\eps)]^3\right\}\right).
\end{equation}
We choose \(m=\ceil{C_m{\log(4n/\delta)}}\). By \Cref{thm:Walsh-conditioning}, the event \Cref{eq:Gamma-close} then holds with probability at least \(1-\delta/2\). We will condition on this event throughout the following initialization and convergence analysis.

\paragraph{Initialization and convergence.}

We can choose the initialization \(\mu^{(0)}\) as
\begin{equation}
    \mu^{(0)}:=\Gamma^{-1}\frac1mX^{{\mathsf T}}y.
    \label{eq:linear-initializer}
\end{equation}
If \(y=y_*+e\), then because
\begin{equation}
    \mu^{(0)}-\lambda=\Gamma^{-1}\frac{1}{m}X^{{\mathsf T}}\bigl(\mathcal R(\lambda)+e\bigr),    
\end{equation}
we use \Cref{eq:Gamma-inverse}, $\|X^{{\mathsf T}}\|_{\infty\to\infty}/m\leq q$,  and~\Cref{lem:scalar-correction} to obtain
\begin{equation}
    \norm{\mu^{(0)}-\lambda}_\infty \le C_\Gamma q\left( \frac{C_Q\Jan^2}{\nu}+\norm e_\infty\right).
    \label{eq:initializer-bound}
\end{equation}
Hence \(\mu^{(0)}\in\cU\) whenever the right-hand side is at most \(1\), which is satisfied by our field strength condition \Cref{eq:nu-condition}.

Starting from the initialization \(\mu^{(0)}\in\cU\) applying the map {$\mathcal T_y$} converges to {$\mu^\star(y)$} which satisfies \Cref{eq:stability}. Hence, in order to obtain an estimate within {$\eps$} error, we should satisfy
\begin{equation}
    \|e\|_\infty \le \eta_{\rm freq}+\eta_{\rm patch} \le \frac{{\eps}}{2C_\Gamma C_y}.
\end{equation}
Due to \Cref{eq:nu-condition}, $C_\Gamma C_y$ becomes a constant. We aim to bound the frequency and patch error as {$\eta_{\rm freq} \le \eps/(4C_\Gamma C_y)$ and $\eta_{\rm patch} \le \eps/(4C_\Gamma C_y)$}. We choose \(r_y=1/(2C_\Gamma C_y)\), so these error bounds also ensure \(y\in\cY\) for \(0<\eps\le1\).

\paragraph{Patch error and radius.}
The patch error are defined in \Cref{eq:patch-locality} and \Cref{eq:patch-data-model} such as $\eta_{\rm patch} = C_P\Jan2^{-K}$, and the radius becomes $R=(K+1)r_0$. To achieve {$\eta_{\rm patch} \le \eps/(4C_\Gamma C_y)$}, we set 
\begin{equation}
    K=\mathcal{O}(\log(J/\eps))=\mathcal{O}(\log(1/\eps)).
    \label{eq:K_choice}
\end{equation}
Then, the radius becomes $R=\mathcal{O}(\log(1/\eps))$.

\paragraph{Transition energy estimation error and total evolution time.}
Due to \Cref{eq:nu-condition}, we can estimate {each full-lattice effective half transition energy \(E_\rho^\Lambda(\lambda)\)} with single-frequency approximation from~\Cref{thm:Ramsey-approximation}. For this we use the robust frequency estimation protocol in~\Cref{cor:estimate-half-gap}. {Set \(\delta_{\rm freq}=\delta/2\) and \(\delta_{\rm est}=\delta/(2nm)\). The event that \Cref{eq:full-gap-data-error} holds for all sampled $\rho$ then holds with probability at least \(1-\delta/2\). Together with \Cref{eq:Gamma-close} and the deterministic patch bound \Cref{eq:patch-data-model}, this gives the required conditioning and data accuracy jointly with probability at least \(1-\delta\), by a union bound.} To achieve {\(\eta_{\rm freq}=\eps/(4C_\Gamma C_y)\)}, the required evolution time for each {$\rho$} becomes {\(\mathcal{O}(\log(2nm/\delta)/\eta_{\rm freq})\)}. Hence, the total evolution time becomes {\(\mathcal{O}(mn\log(2nm/\delta)/\eta_{\rm freq}) = \mathcal{O}(n\log^2(n/\delta)/\eps)\)}. Therefore, we have the final complexity as stated in \Cref{thm:main}.

\begin{theorem}[Learning a geometrically local Hamiltonian]
\label{thm:main}
Suppose that the Hamiltonian $H$ satisfies the conditions on \Cref{sec:intro} and we can apply static single-qubit fields in \Cref{sec:experiment}. We can obtain an estimate $\hat{\lambda}$ such that $\norm{\widehat\lambda-\lambda}_\infty\le\eps$ {with probability at least $1-\delta$}, with total evolution time $\mathcal{O}({n{\log^2(n/\delta)}}/{\eps})$ using field strength {$\nu\ge \nu_{\min}$} defined in \Cref{eq:nu-condition}.
\end{theorem}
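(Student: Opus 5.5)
The proof assembles results already established; the work is choosing parameters and taking a union bound over failure events. We fix the target accuracy $\eps\in(0,1]$ and failure probability $\delta$, and first set $m=\ceil{C_m\log(4n/\delta)}$. By \Cref{thm:Walsh-conditioning}, the conditioning event \Cref{eq:Gamma-close}, together with \Cref{eq:Gamma-spectral,eq:Gamma-inverse}, then holds with probability at least $1-\delta/2$. We condition on this event. Next we choose the patch order $K=\ceil{\log_2(4C_{\rm P}\Jan C_\Gamma C_y/\eps)}=\bigO(\log(1/\eps))$ and $R=(K+1)r_0$. Then \Cref{thm:patch-locality} gives $\eta_{\rm patch}\le\eps/(4C_\Gamma C_y)$ uniformly over rows and over $\mu\in\cU$.

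For the experiments we take $\eta=\eta_{\rm freq}=\eps/(4C_\Gamma C_y)$, recursion order $s=\ceil{C_0\log(eJ/\eta)}$, and the same $K,R$, enlarging $K$ if necessary so that both requirements hold. We impose $\nu\ge\nu_{\min}=\max\{\nu_{\rm dyn}(\eta),\nu_{\rm opt}\}$. Because $V_R\le(2R+1)^d=\bigO(\log^d(1/\eps))$, this threshold is the one in \Cref{eq:nu-condition}. It also makes $C_y=q+C_1V_R\Jan/\nu$ bounded by a constant, so $C_\Gamma C_y=\bigO(1)$ and the choices above are consistent.

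For each $\rho=(k,\ell)$ we apply \Cref{cor:estimate-half-gap} with $\delta_{\rm est}=\delta/(2nm)$. A union bound over the $nm$ rows shows that \Cref{eq:full-gap-data-error} holds for all rows with probability at least $1-\delta/2$. Combining with the deterministic patch error gives $y=F(\lambda)+e$ with $\norm e_\infty\le\eps/(2C_\Gamma C_y)$. Setting $r_y=1/(2C_\Gamma C_y)$, we have $y\in\cY$ since $\eps\le1$. Overall, both good events hold with probability at least $1-\delta$.

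For the classical part, we initialize at $\mu^{(0)}=0$, which lies in $\cU$ because $\norm\lambda_\infty\le1$; alternatively we can use \Cref{eq:linear-initializer} together with \Cref{eq:initializer-bound}. By \Cref{thm:local-inverse}, the iterates contract toward $\mu^\star(y)$ with factor $1/3$, and
\[
\norm{\mu^\star(y)-\lambda}_\infty\le \tfrac32 C_\Gamma C_y\norm{y-y_*}_\infty\le \tfrac34\eps .
\]
Running $t=\ceil{\log_3(4/\eps)}$ iterations brings the iterate within $\eps/4$ of $\mu^\star(y)$, so $\widehat\lambda:=\mu^{(t)}$ satisfies $\norm{\widehat\lambda-\lambda}_\infty\le\eps$.

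For the cost, each row uses total evolution time $\bigO(\log(1/\delta_{\rm est})/\eta)=\bigO(\log(nm/\delta)/\eps)$. Over $nm=\bigO(n\log(n/\delta))$ rows, and using $\log(nm/\delta)=\bigO(\log(n/\delta))$, the total is $\bigO(n\log^2(n/\delta)/\eps)$.

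The step needing the most care is keeping the parameter choices mutually consistent. The field threshold depends on $R$ through $V_R$. $R$ depends on $K$, and $K$ depends on $C_\Gamma C_y$, which in turn depends on $\nu$. We resolve this circularity by first bounding $C_y\le q+1$ for all $\nu$ beyond an $\bigO(JV_R)$ threshold, then fixing $K$ in terms of this constant bound. We must also check that $s\le n_*(\nu)$ and that the same $s$ is used in both $E_\rho^\Lambda$ and $F_{\rho,R}$, which \Cref{thm:Ramsey-approximation} requires.
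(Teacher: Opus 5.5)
Your proposal is correct and follows essentially the same argument as the paper. It uses the same choices $m=\ceil{C_m\log(4n/\delta)}$, $\eta_{\rm freq},\eta_{\rm patch}\le\eps/(4C_\Gamma C_y)$ and $\delta_{\rm est}=\delta/(2nm)$, the same $\delta/2+\delta/2$ union bound, and the stability estimate of \Cref{thm:local-inverse}; your explicit resolution of the circular dependence among $K$, $C_y$ and $\nu$ is slightly more careful than the paper's remark that $C_\Gamma C_y$ is a constant. One trivial slip: from $\mu^{(0)}=0$ the initial distance to $\mu^\star(y)$ can be as large as $1+3\eps/4$, so $\ceil{\log_3(4/\eps)}$ iterations only give closeness $(1+3\eps/4)\eps/4$ rather than $\eps/4$; taking $\ceil{\log_3(8/\eps)}$ iterations fixes this.
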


The estimated half-transition energies are converted into Hamiltonian coefficients by evaluating the local forward model and applying a stable iterative fitting procedure. The forward model is computed through a connected-cluster expansion, avoiding a dense simulation of the local patches. This yields a total classical runtime polynomial in \(n\), \(1/\eps\), and \(\log(1/\delta)\). The algorithm and its complexity analysis are given in Appendix~\ref{app:classical-evaluation}.

\section{Parallelization}
\label{sec:parallelization}

We now run the experiment of \Cref{sec:experiment} at a set \(\mathcal D\) of separated defects. The physical field is supported on \(\mathcal B:=\bigcup_{k\in\mathcal D}B_R(k)\). We will define the corresponding charge \(N_{\mathcal D,\ell}^{\mathcal B}\) and a reference charge \(N_{\mathcal D,\ell}\) with nonzero field coefficients throughout \(\Lambda\). The latter defines the transition energies used in reconstruction. To estimate the physical signal at \(k\), we apply the ADHH recursion with a charge supported only on \(B_R(k)\), and separately bound the effect of the other controlled balls. This gives \Cref{lem:parallel-stopped-dynamics,thm:parallel-Ramsey} below. The comparison with the patch function will use \Cref{prop:parallel-patch}, stated later in this section and proved later in Appendix~\ref{app:parallelization} using only the recursion and locality estimates.  In particular, the proof of \Cref{prop:parallel-patch} does not use \Cref{thm:parallel-Ramsey}.

\subsection{Parallel experiment with separated defects}
\label{subsec:parallel-experiment}

In this subsection we will define a parallel Ramsey experiment with a batch of geometrically separated defect sites, each of which will be measured separately at the end of the experiment. This construction provides the experimental basis for the parallel speedup by producing Ramsey signals for all defects in the batch simultaneously.

Let $\mathcal D\subseteq\Lambda$ be a set of defect sites, and we will call it a \emph{batch}.  Define its minimum separation and the corresponding connected-word threshold by

\begin{equation}
    L_{\mathcal D}:= \min_{\substack{k,k'\in\mathcal D\\k\ne k'}}\dist(k,k'), \quad
    P_{\mathcal D}:=\ceil{{L_{\mathcal D}}/{r_0}}.
    \label{eq:parallel-separation}
\end{equation}
For a singleton set, we use the convention $L_{\mathcal D}=P_{\mathcal D}=+\infty$.  A batch is called $L$-separated if $L_{\mathcal D}\ge L$.

For each batch $\mathcal D$ and repetition $\ell=1,\ldots,m$, we independently draw a Pauli operator and an associated sign for each site, similar to the single-defect case. 
\begin{equation}
    s_j^{\mathcal D,\ell}\in\{0,1\}, \quad
    \beta_j^{\mathcal D,\ell}\in\{X,Y,Z\},
    \label{eq:parallel-random-pattern}
\end{equation}
We then define $\tau_{j,z}^{\mathcal D,\ell}$, $\ket{\phi_{j,b}^{\mathcal D,\ell}}$, and $Q_j^{\mathcal D,\ell}$ as in \Cref{eq:tau-z}--\Cref{eq:Q-row}. Let \(B_k=B_R(k)\) and \(\mathcal B=\bigcup_{k\in\mathcal D}B_k\), with \(L_{\mathcal D}>2R\). The physical field vanishes outside \(\mathcal B\), has half strength at the defects, and full strength at the remaining sites of \(\mathcal B\):
\begin{equation}
    H_{\rm ctrl}^{\mathcal D,\ell} ={\sum_{j\in\mathcal B\setminus\mathcal D}}\tau_{j,z}^{\mathcal D,\ell} +\frac12\sum_{k\in\mathcal D}\tau_{k,z}^{\mathcal D,\ell},
    \label{eq:parallel-control}
\end{equation}
For the reference model with a field on every qubit, define
\begin{equation}
    N_{\mathcal D,\ell} :=\sum_{k\in\mathcal D}Q_k^{\mathcal D,\ell} +2\sum_{j\notin\mathcal D}Q_j^{\mathcal D,\ell}.
    \label{eq:parallel-charge}
\end{equation}
Define the charge supported in \(\mathcal B\) and the corresponding Hamiltonian by
\begin{align}
    N_{\mathcal D,\ell}^{\mathcal B} &:=\sum_{k\in\mathcal D}Q_k^{\mathcal D,\ell} +2\sum_{j\in\mathcal B\setminus\mathcal D}Q_j^{\mathcal D,\ell},\\
    G_{\mathcal D,\ell}^{\mathcal B}(\mu) &:=H(\mu)+\nu N_{\mathcal D,\ell}^{\mathcal B} =H(\mu)-\nu H_{\rm ctrl}^{\mathcal D,\ell}
    +\nu\left(|\mathcal B|-|\mathcal D|/2\right)I.
 \label{eq:parallel-G}
\end{align}
The Hamiltonian  \(G_{\mathcal D,\ell}(\mu):=H(\mu)+\nu N_{\mathcal D,\ell}\) is used as an auxiliary reference Hamiltonian in the analysis. In the subsection below, we construct the effective Hamiltonian using the ADHH recursion with charge \(N_{\mathcal D,\ell}\) and input \(zH(\lambda)\), and expand it in powers of \(z\) around zero. Fix an integer \(1\le n_{\rm par}\le n_*(\nu)\) for all recursions in the parallel protocol. Its value in terms of the target accuracy will be specified below in \Cref{eq:parallel-stopping-order}.

For $A\subseteq\mathcal D$, we define
\begin{equation}
    \ket{\Phi_A^{\mathcal D,\ell}} :=\left(\bigotimes_{k\in A}\ket{\phi_{k,1}^{\mathcal D,\ell}}\right)\left(\bigotimes_{j\in\Lambda\setminus A}\ket{\phi_{j,0}^{\mathcal D,\ell}}\right).
    \label{eq:parallel-code-states}
\end{equation}
These states satisfy $N_{\mathcal D,\ell}\ket{\Phi_A^{\mathcal D,\ell}} =|A|\ket{\Phi_A^{\mathcal D,\ell}}$. In particular, the charge-one sector contains the $|\mathcal D|$ states $\ket{\Phi_{\{k\}}^{\mathcal D,\ell}}$, so commutation with \(N_{\mathcal D,\ell}\) alone does not imply that each \(\ket{\Phi_{\{k\}}^{\mathcal D,\ell}}\) is an eigenstate of the effective Hamiltonian.

We will still choose our initial state to be a product state:
\begin{equation}
    \ket{\Psi_+^{\mathcal D,\ell}} :=\left[ \bigotimes_{k\in\mathcal D} \frac{\ket{\phi_{k,0}^{\mathcal D,\ell}}+ \ket{\phi_{k,1}^{\mathcal D,\ell}}}{\sqrt2} \right] \left[ \bigotimes_{j\notin\mathcal D} \ket{\phi_{j,0}^{\mathcal D,\ell}} \right] =2^{-|\mathcal D|/2} \sum_{A\subseteq\mathcal D}\ket{\Phi_A^{\mathcal D,\ell}}.
    \label{eq:parallel-preparation}
\end{equation}
For each $k\in\mathcal D$, we define $\tau_{k,x}^{\mathcal D,\ell}$, $\tau_{k,y}^{\mathcal D,\ell}$, and
\begin{equation}
    \tau_{k,+}^{\mathcal D,\ell} :=\tau_{k,x}^{\mathcal D,\ell}+i\tau_{k,y}^{\mathcal D,\ell} =2\ket{\phi_{k,1}^{\mathcal D,\ell}} \!\bra{\phi_{k,0}^{\mathcal D,\ell}}
    \label{eq:parallel-O-plus}
\end{equation}
as in \Cref{eq:X-Y-quadrature} and \Cref{eq:O-plus}. The expectation value of each $\tau_{k,+}^{\mathcal D,\ell}$ at the end of the experiment is
\begin{equation}
    S_{\mathcal D,\ell;k}(t) :=\bra{\Psi_+^{\mathcal D,\ell}} {e^{iG_{\mathcal D,\ell}^{\mathcal B}(\lambda)t}} \tau_{k,+}^{\mathcal D,\ell} {e^{-iG_{\mathcal D,\ell}^{\mathcal B}(\lambda)t}} \ket{\Psi_+^{\mathcal D,\ell}}.
    \label{eq:parallel-signal}
\end{equation}
All $\tau_{k,x}^{\mathcal D,\ell}$ with $k\in\mathcal D$ have non-overlapping supports and can be measured simultaneously, which is also true for all $\tau_{k,y}^{\mathcal D,\ell}$.  Hence one family of experiments yields the $\Re S_{\mathcal D,\ell;k}(t)$ at every defect, and a second family yields all $\Im S_{\mathcal D,\ell;k}(t)$.

\subsection{Taylor truncation of the effective Hamiltonian}
\label{subsec:parallel-normal-form}

This subsection uses the connected-word expansion of the effective Hamiltonian introduced in \Cref{subsec:connected-words} to obtain a truncated effective Hamiltonian whose every term involves at most one defect, with truncation error exponentially vanishing in $L_{\mathcal{D}}$ through our choice of truncation order. We will see that under this truncated effective Hamiltonian, the dynamics factorize into independent single-defect oscillations on the defect subspace.

More specifically, we use the connected-word expansion of \Cref{lem:connected-word-expansion}.  Recall that a connected \(p\)-word (\Cref{def:p-word}) consists of \(p\) Pauli operators, each having the same support as a Hamiltonian term, whose support-overlap graph is connected (the individual Pauli supports need not themselves be connected subsets of the lattice).  Since each original support has diameter at most \(r_0\), the union of the supports in any connected \(p\)-word has diameter at most \(p r_0\).

Below we will fix $(\mathcal{D},\ell)$ and omit the $\ell$ dependence for simplicity. Run ADHH recursion \Cref{eq:ADHH-recursion} with charge \(N_{\mathcal D,\ell}\) for the fixed number of steps \(n_{\rm par}\) introduced above. As in the sequential construction, this number is held fixed when the input interaction varies. For an integer \(n_{\rm par}\) that satisfies \(1\le n_{\rm par}\le n_*(\nu)\), we let \(\widehat D_{\mathcal D,\ell}(z):=D_{n_{\rm par}}(zH(\lambda))\) be the charge-preserving effective Hamiltonian obtained by stopping the ADHH recursion after \(n_{\rm par}\) steps. The order \(n_{\rm par}\) is held fixed even if the actual field permits more recursion steps. Since $\norm{2H(\lambda)}_{\kappa_0}\le2J<\Jan$, \Cref{thm:uniform-normal-form} implies that this is an operator-valued analytic function on a neighborhood of the closed disc $|z|\le2$, and consequently we have the following convergent Taylor expansion
\begin{equation}
    \widehat D_{\mathcal D,\ell}(z)=\sum_{p\ge1}z^p\widehat D_{\mathcal D,\ell}^{(p)}.
    \label{eq:parallel-taylor}
\end{equation}
Since $\widehat D_{\mathcal D,\ell}(0)=0$ the above expansion starts from $p=1$. The operator-valued version of the Cauchy estimate in \Cref{lem:Cauchy-general}, together with \Cref{eq:uniform-D-bound}, gives
\begin{equation}
    \norm{\widehat D_{\mathcal D,\ell}^{(p)}}_{\kappa_{n_{\rm par}}}\le C_{\rm T}\Jan \cdot 2^{-p},
    \label{eq:parallel-taylor-coeff}
\end{equation}
for some constant $C_{\rm T}>0$. Consequently, for
\begin{equation}
    \widehat D_{\mathcal D,\ell}^{(<P)}:=\sum_{p=1}^{P-1}\widehat D_{\mathcal D,\ell}^{(p)}, \quad
    \widehat D_{\mathcal D,\ell}^{(\ge P)} :=\widehat D_{\mathcal D,\ell}(1)-\widehat D_{\mathcal D,\ell}^{(<P)},
    \label{eq:parallel-taylor-split}
\end{equation}
we have
\begin{equation}
    \norm{\widehat D_{\mathcal D,\ell}^{(\ge P)}}_{\kappa_{n_{\rm par}}} \le 2C_{\rm T}\Jan \cdot 2^{-P}.
    \label{eq:parallel-tail}
\end{equation}
Every coefficient in \Cref{eq:parallel-taylor} is Hermitian and commutes with $N_{\mathcal D,\ell}$ because \(D_j\) is charge preserving at every recursion step for all $|z|\leq 2$.

The proof of \Cref{lem:connected-word-expansion} applies without change to the multi-defect $\widehat D_{\mathcal D,\ell}(z)$. Hence, we can write the Taylor coefficient $\widehat D_{\mathcal D,\ell}^{(p)}$ as
\begin{equation}
    \widehat D_{\mathcal D,\ell}^{(p)} = \sum_{X} W_{\mathcal{D},X}^{(p)},
\end{equation}
where each $W_{\mathcal{D},X}^{(p)}$ is a linear combination of connected $p$-words, and has support diameter at most $pr_0$. Since $\widehat D_{\mathcal D,\ell}^{(p)}$ is Hermitian, we can choose each $W_{\mathcal{D},X}^{(p)}$ to be Hermitian $[\widehat D_{\mathcal D,\ell}^{(p)},N_{\mathcal{D},\ell}]=0$, we have $\langle\widehat D_{\mathcal D,\ell}^{(p)}\rangle_{N_{\mathcal D,\ell}}=\widehat D_{\mathcal D,\ell}^{(p)}$. Consequently
\begin{equation}
\label{eq:D-taylor-expansion-commuting-support-sum}
    \widehat D_{\mathcal D,\ell}^{(p)} = \sum_{X} \widetilde{W}_{\mathcal{D},X}^{(p)}:= \sum_{X} \langle W_{\mathcal{D},X}^{(p)}\rangle_{N_{\mathcal D,\ell}}.
\end{equation}
Note that {the charge-preserving average does not enlarge the support}, and therefore each $\widetilde{W}_{\mathcal{D},X}^{(p)}$ has support diameter at most $pr_0$. Importantly, $[\widetilde{W}_{\mathcal{D},X}^{(p)},N_{\mathcal{D},\ell}]=0$, so we have the following lemma:
\begin{lemma}
    \label{lem:D-taylor-expansion-commuting-support}
    $\widehat D_{\mathcal D,\ell}^{(p)}$ defined in \Cref{eq:parallel-taylor} can be decomposed according to \Cref{eq:D-taylor-expansion-commuting-support-sum} into a sum of Hermitian terms, each commuting with $N_{\mathcal{D},\ell}$ and has support diameter at most $pr_0$.
\end{lemma}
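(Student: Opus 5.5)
The plan is to follow the argument sketched just before the statement, making each step precise.

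\emph{Step 1: organize the connected-word expansion by an explicit label.} Start from \Cref{lem:connected-word-expansion}, applied with charge \(N_{\mathcal D,\ell}\) and input \(zH(\lambda)\); by \Cref{rem:applicability_to_parallel} this application is legitimate. It expresses \(\widehat D_{\mathcal D,\ell}^{(p)}\) as a finite sum \(\sum c_{\mathbf a,\mathbf Q}\lambda_{a_1}\cdots\lambda_{a_p}W_{\mathbf a,\mathbf Q}\) over connected \(p\)-words. For each word, set \(X(\mathbf a):=\bigcup_i\supp(P_{a_i})\). Connectivity of the overlap graph, together with \(\mathrm{diam}\,\supp(P_{a_i})\le r_0\), gives \(\mathrm{diam}\,X(\mathbf a)\le pr_0\); this follows by induction along a spanning tree. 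Then group the words by this label and define \(W^{(p)}_{\mathcal D,X}:=\sum_{\mathbf a:X(\mathbf a)=X}c_{\mathbf a,\mathbf Q}\lambda_{a_1}\cdots\lambda_{a_p}W_{\mathbf a,\mathbf Q}\). Each such operator is supported in \(X\).

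\emph{Step 2: Hermitian decomposition.} Since \(z\mapsto\widehat D_{\mathcal D,\ell}(z)\) is Hermitian for real \(z\), every Taylor coefficient is Hermitian. Taking adjoints permutes the set of words with a fixed label \(X\): the adjoint of \(W_{\mathbf a,\mathbf Q}\) is the reversed word, which has the same label. We therefore replace each \(W^{(p)}_{\mathcal D,X}\) by \(\tfrac12\bigl(W^{(p)}_{\mathcal D,X}+W^{(p)\dagger}_{\mathcal D,X}\bigr)\). This does not change the total sum, because \(\widehat D^{(p)}\) equals its own adjoint, and it preserves both the label and the support.

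\emph{Step 3: charge averaging.} Apply \(\langle\cdot\rangle_{N_{\mathcal D,\ell}}\) termwise. Because \(\widehat D^{(p)}\) commutes with \(N_{\mathcal D,\ell}\), it is invariant under the average, which yields \Cref{eq:D-taylor-expansion-commuting-support-sum}. The average \(\mathcal P_{N,0}\) is conjugation by the onsite unitaries \(e^{i\theta N}\) integrated over \(\theta\). Such conjugation leaves invariant the set of operators supported in \(X\), since factors outside \(X\) cancel, and it commutes with taking adjoints because \(N\) is Hermitian. Hence each \(\widetilde W^{(p)}_{\mathcal D,X}\) is Hermitian, supported in \(X\) (so its support diameter is at most \(pr_0\)), and commutes with \(N_{\mathcal D,\ell}\), because the zero-frequency projection commutes with \(N\).

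\emph{Main obstacle.} The subtle point is Step 1: we must ensure that the expansion of \Cref{lem:connected-word-expansion} is genuinely finite for each \(p\), and that grouping by the label is well defined, even though coefficients can combine across different words. Here I rely on the lemma's explicit statement that \(D_{p,j}\) is a finite linear combination of evaluated connected words. No convergence issue arises, since the grouping is done at fixed \(p\).
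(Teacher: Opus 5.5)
Your proposal is correct and follows the paper's own argument. The paper likewise expands $\widehat D_{\mathcal D,\ell}^{(p)}$ through the connected-word lemma, groups the terms by support (diameter at most $pr_0$), uses Hermiticity of the coefficient to choose Hermitian pieces, and then applies $\langle\cdot\rangle_{N_{\mathcal D,\ell}}$ termwise, which preserves support and Hermiticity and yields terms commuting with the charge. You additionally make explicit steps the paper leaves implicit: the grouping label $X(\mathbf a)$, the diameter bound via connectivity, and the symmetrization $\tfrac12(W+W^\dagger)$.
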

Therefore, if a term $\widetilde{W}_{\mathcal{D},X}^{(p)}$ in this decomposition of $\widehat D_{\mathcal D,\ell}^{(p)}$ touched two distinct defects $k,k'\in\mathcal D$, $p$ needs to satisfy
\begin{equation}
 \dist(k,k')\le pr_0.
 \label{eq:parallel-bridge-distance}
\end{equation}
It follows from \Cref{eq:parallel-separation} that every term $\widetilde{W}_{\mathcal{D},X}^{(p)}$ of degree $p<P_{\mathcal D}$ touches at most one defect.

We will consider the defect subspace spanned by $\ket{\Phi_A^{\mathcal D,\ell}}$ together with its orthogonal projection operator:
\begin{equation}
    \mathcal H_{\rm def}(\mathcal D,\ell):=\operatorname{span}\left\{\ket{\Phi_A^{\mathcal D,\ell}}:A\subseteq\mathcal D\right\}, \quad \Pi_{\mathcal D} :=\sum_{A\subseteq\mathcal D} \ket{\Phi_A^{\mathcal D,\ell}} \!\bra{\Phi_A^{\mathcal D,\ell}}.
    \label{eq:parallel-code}
\end{equation}
In the lemma below, we will show that the Taylor truncation of $\widehat D_{\mathcal D,\ell}(1)$ below degree $P_{\mathcal D}$ leaves this subspace invariant, and within this subspace the dynamics take a especially simple form.

\begin{lemma}[Factorization on the defect subspace]
\label{lem:parallel-code-factorization}
$\mathcal{H}_{\rm def}(\mathcal{D},\ell)$ is an invariant subspace of $\widehat D_{\mathcal D,\ell}^{(<P_{\mathcal D})}$:
\begin{equation}
    (I-\Pi_{\mathcal D})\widehat D_{\mathcal D,\ell}^{(<P_{\mathcal D})}\Pi_{\mathcal D}=0.
    \label{eq:parallel-no-leakage}
\end{equation}
Moreover, for each $k\in\mathcal D$, define
\begin{equation}
    E_{\mathcal D,\ell;k}^{(<P_{\mathcal D})} := \frac12\left( \bra{\Phi_{\{k\}}^{\mathcal D,\ell}} \widehat D_{\mathcal D,\ell}^{(<P_{\mathcal D})} \ket{\Phi_{\{k\}}^{\mathcal D,\ell}} - \bra{\Phi_{\varnothing}^{\mathcal D,\ell}} \widehat D_{\mathcal D,\ell}^{(<P_{\mathcal D})} \ket{\Phi_{\varnothing}^{\mathcal D,\ell}} \right),
\end{equation}
and then is a real number $c_{\mathcal D}$ such that
\begin{align}
    \Pi_{\mathcal D} \widehat D_{\mathcal D,\ell}^{(<P_{\mathcal D})} \Pi_{\mathcal D} =\Pi_{\mathcal D}\left( c_{\mathcal D}I +2\sum_{k\in\mathcal D} E_{\mathcal D,\ell;k}^{(<P_{\mathcal D})}Q_k^{\mathcal D,\ell} \right)\Pi_{\mathcal D}.
    \label{eq:parallel-factorization}
\end{align}
Therefore, for the evolution generated by $G_{\mathcal D,\ell}^{(<P_{\mathcal D})}:=\nu N_{\mathcal D,\ell}+\widehat D_{\mathcal D,\ell}^{(<P_{\mathcal D})}$, the expectation value of $\tau_{k,+}^{\mathcal D,\ell}$ at every defect site $k$ at time $t$ is
\begin{equation}
    \bra{\Psi_+^{\mathcal D,\ell}} e^{iG_{\mathcal D,\ell}^{(<P_{\mathcal D})}t} \tau_{k,+}^{\mathcal D,\ell} e^{-iG_{\mathcal D,\ell}^{(<P_{\mathcal D})}t} \ket{\Psi_+^{\mathcal D,\ell}} =e^{i[\nu+2E_{\mathcal D,\ell;k}^{(<P_{\mathcal D})}]t}.
    \label{eq:parallel-ideal-tone}
\end{equation}
In the above $Q_k^{\mathcal{D},\ell}$ is defined as in \Cref{eq:Q-row} and $\tau_{k,+}^{\mathcal{D},\ell}$ in \Cref{eq:parallel-O-plus}.
\end{lemma}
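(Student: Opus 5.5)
The plan is to work term by term with the decomposition of \Cref{lem:D-taylor-expansion-commuting-support}. Write $\widehat D_{\mathcal D,\ell}^{(<P_{\mathcal D})}=\sum_{p<P_{\mathcal D}}\sum_X\widetilde W_{\mathcal D,X}^{(p)}$. Each term $W:=\widetilde W_{\mathcal D,X}^{(p)}$ is Hermitian, commutes with $N_{\mathcal D,\ell}$, and has support diameter at most $pr_0<L_{\mathcal D}$. By \Cref{eq:parallel-bridge-distance}, $W$ touches at most one defect. The key claim is that every such $W$ acts diagonally on the product basis $\{\ket{\Phi_A^{\mathcal D,\ell}}\}$. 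Concretely, $W\ket{\Phi_A}=w(A)\ket{\Phi_A}$, where $w(A)$ depends on $A$ only through whether the (at most one) defect $k_W\in\supp W$ lies in $A$.

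To prove the claim, fix $A$ and a configuration $\ket{\Phi_B}$ in the computational product basis of the $\rho$-dependent frame. Since $W$ commutes with $N_{\mathcal D,\ell}$, the matrix element $\bra{\Phi_B}W\ket{\Phi_A}$ vanishes unless the two configurations have the same charge. The configurations also agree outside $S:=\supp W$. The charge restricted to $S$ is a sum of weight-$1$ contributions at defects in $S$ and weight-$2$ contributions at the other sites. There are two cases.
\begin{itemize}
\item If $S$ contains no defect, the restriction of $\Phi_A$ to $S$ is the all-zero configuration, which is the unique charge-zero configuration on $S$.
\item If $S$ contains exactly one defect $k$, the local charge is $0$ or $1$ according to the occupation of $k$. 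The only configurations on $S$ with local charge $0$ or $1$ are the all-zero one and the single flip at $k$, so each is unique.
\end{itemize}
In either case charge conservation forces the output configuration on $S$ to equal the input, and the claim follows. This gives \Cref{eq:parallel-no-leakage}, and shows that $\widehat D^{(<P)}$ is diagonal on $\mathcal H_{\rm def}$ with eigenvalue $\sum_W w_W(A)$.

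Next we extract the form \Cref{eq:parallel-factorization}. Every term contributes either a constant (no defect in its support) or $a_W+b_W\,\ind[k_W\in A]$. Summing, the eigenvalue on $\ket{\Phi_A}$ is $c_{\mathcal D}+\sum_{k\in A}\epsilon_k$. The coefficients are real because each $W$ is Hermitian, so its diagonal entries are real. Comparing $A=\{k\}$ with $A=\varnothing$ identifies $\epsilon_k=2E_{\mathcal D,\ell;k}^{(<P_{\mathcal D})}$. Since $Q_k\ket{\Phi_A}=\ind[k\in A]\ket{\Phi_A}$, this is exactly \Cref{eq:parallel-factorization}.

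For the signal \Cref{eq:parallel-ideal-tone}, note that $\ket{\Phi_A}$ is an eigenvector of $G^{(<P)}$ with eigenvalue $\nu|A|+c_{\mathcal D}+\sum_{j\in A}2E_j$, using $N\ket{\Phi_A}=|A|\ket{\Phi_A}$. Expanding $\ket{\Psi_+}=2^{-|\mathcal D|/2}\sum_A\ket{\Phi_A}$, the operator $\tau_{k,+}$ maps $\ket{\Phi_A}$ to $2\ket{\Phi_{A\cup\{k\}}}$ for $k\notin A$ and kills it otherwise. The signal therefore equals $2^{-|\mathcal D|}\sum_{A\not\ni k}2\,e^{i(\nu+2E_k)t}$, which is $e^{i(\nu+2E_k)t}$ because exactly $2^{|\mathcal D|-1}$ subsets omit $k$. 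The only delicate point is the charge-uniqueness argument on $S$, in particular keeping track of the weight-$2$ sites. Everything else is bookkeeping.
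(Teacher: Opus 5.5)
Your proposal is correct and follows essentially the same route as the paper. You decompose $\widehat D_{\mathcal D,\ell}^{(<P_{\mathcal D})}$ into Hermitian, charge-commuting terms, each touching at most one defect since $pr_0<L_{\mathcal D}$. You then use uniqueness of the charge-$0$ and charge-$1$ configurations on each term's support to show every $\ket{\Phi_A^{\mathcal D,\ell}}$ is an eigenvector, sum the contributions, and compute the Ramsey signal in this eigenbasis. The only cosmetic difference is in the uniqueness step: the paper shows $[W_X,N_X]=0$ for the local charge $N_X$ and uses that its eigenvalue-$0$ and eigenvalue-$1$ eigenspaces are one-dimensional, while you use vanishing matrix elements of the total charge together with agreement of configurations outside the support; the two arguments are equivalent.
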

We therefore have the desired single-frequency signal $e^{i[\nu+2E_{\mathcal D,\ell;k}^{(<P_{\mathcal D})}]t}$ for each defect site, similar to the single-defect setting described in \Cref{thm:Ramsey-approximation}. This enables us to estimate $E_{\mathcal D,\ell;k}^{(<P_{\mathcal D})}$ from experiments and reconstruct coefficient estimates accordingly. We will discuss the estimation procedure in the next subsection. \Cref{eq:parallel-ideal-tone} concerns evolution under \(G_{\mathcal D,\ell}^{(<P_{\mathcal D})}\). In the next subsection we will compare the signal generated by \(G_{\mathcal D,\ell}^{\mathcal B}\) directly with the scalar \(e^{i[\nu+2E_{\mathcal D,\ell;k}^{(<P_{\mathcal D})}]t}\).

\subsection{Parallel estimation of  transition energies}
\label{subsec:parallel-stability}

We now compare the physical signal with the oscillation in \Cref{eq:parallel-ideal-tone}. The difference between the physical and auxiliary Hamiltonians includes fields on every site outside \(\mathcal B\), so its global operator norm is not the small quantity used in this argument.

The two charges act identically on an operator \(W\) supported inside \(B_R(k)\). Indeed, \(N_{\mathcal D,\ell}-N_{\mathcal D,\ell}^{\mathcal B}=2\sum_{j\notin\mathcal B}Q_j^{\mathcal D,\ell}\), which commutes with \(W\), and the onsite charge operators commute with
one another. It follows that
\begin{equation}
    e^{i\theta N_{\mathcal D,\ell}}We^{-i\theta N_{\mathcal D,\ell}} =e^{i\theta N_{\mathcal D,\ell}^{\mathcal B}} We^{-i\theta N_{\mathcal D,\ell}^{\mathcal B}}, \qquad \supp(W)\subseteq B_R(k).
    \label{eq:parallel-local-charge-agreement}
\end{equation}
Hence the maps \(\mathcal P_{N,\omega}\) and \(\mathcal S_N\) agree for these two charges on such operators. Contributions to the recursion built entirely from input terms in \(B_R(k)\) therefore agree, even though the fields differ outside the ball. Contributions reaching beyond it must still be bounded.

For the dynamical estimate, we fix the measured defect \(k\) and use the field on \(B_R(k)\) to construct an effective Hamiltonian, while retaining the fields on the other controlled balls. The separation between balls bounds the correction induced by this construction. This gives \Cref{lem:parallel-stopped-dynamics} below, which is then used to prove \Cref{thm:parallel-Ramsey}, also stated below. The following general perturbation lemma concerns adding an interaction with a small local norm; the proof of \Cref{thm:parallel-Ramsey} uses the subsequent \Cref{lem:parallel-stopped-dynamics} and does not invoke that general lemma.

\begin{lemma}[Local dynamics under a small interaction]
\label{lem:parallel-local-perturbation}
Let \(A\) and \(B\) be Hermitian interactions satisfying \(\norm A_{2\kappa}\le J_A,\norm B_{2\kappa}\le\delta_B,\) for some \(\kappa>0\). For an operator \(O_x\) supported on one site,
\begin{equation}
    \norm{ e^{i(A+B)t}O_xe^{-i(A+B)t} -e^{iAt}O_xe^{-iAt}} \le C_{d,\kappa}\norm{O_x}\, \delta_B|t|(1+v_A|t|)^d,
    \label{eq:parallel-local-perturbation}
\end{equation}
where \(v_A=C(d)\kappa^{-(d+2)}e^\kappa J_A\) is the Lieb-Robinson velocity. The prefactor \(C_{d,\kappa}\) can be chosen such that
\begin{equation}
    C_{d,\kappa}\le C_d(1+\kappa^{-1})^d.
    \label{eq:parallel-local-perturbation-prefactor}
\end{equation}
Here \(C_d\) and \(C(d)\) depend only on \(d\), and are independent of the system size.
\end{lemma}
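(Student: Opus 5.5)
Our plan is to combine Duhamel's formula with the Lieb--Robinson estimate of \Cref{lem:ADHH-LR}, localizing the perturbation \(B\) around the site \(x\). Writing \(U(t)=e^{-i(A+B)t}\) and \(U_0(t)=e^{-iAt}\), the interaction picture gives
\begin{equation}
    e^{i(A+B)t}O_xe^{-i(A+B)t}-e^{iAt}O_xe^{-iAt}
    =i\int_0^t e^{i(A+B)u}\bigl[B,\,e^{iA(t-u)}O_xe^{-iA(t-u)}\bigr]e^{-i(A+B)u}\,du,
\end{equation}
which follows by differentiating \(u\mapsto e^{i(A+B)u}e^{iA(t-u)}O_xe^{-iA(t-u)}e^{-i(A+B)u}\). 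Conjugation by unitaries preserves the norm. Hence it suffices to bound \(\norm{[B,O_x(\tau)]}\) for \(0\le\tau\le|t|\), where \(O_x(\tau)=e^{iA\tau}O_xe^{-iA\tau}\). We then integrate over \(u\), which accounts for the factor \(|t|\).

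Expanding \(B=\sum_SB_S\), we have \(\norm{[B,O_x(\tau)]}\le\sum_S\norm{[B_S,O_x(\tau)]}\). We split the components according to \(r=\dist(x,S)\). For each component we use the better of two bounds: the trivial bound \(2\norm{B_S}\norm{O_x}\), and \Cref{lem:ADHH-LR} applied with \(Z=A\), \(\norm A_{2\kappa}\le J_A\), and \(\min\{|X|,|Y|\}=1\) because \(O_x\) acts on one site. The latter gives \(2\norm{B_S}\norm{O_x}e^{-\kappa(r-v_A\tau)}\). To organize the sum, fix \(r\) and note that every \(S\) at distance \(r\) from \(x\) contains a site \(y\) with \(\dist(x,y)=r\). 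There are at most \(C_d(1+r)^{d-1}\) such sites, and \(\sum_{S\ni y}\norm{B_S}\le\norm B_{2\kappa}\le\delta_B\) for each of them. This yields
\begin{equation}
    \norm{[B,O_x(\tau)]}\le 2\norm{O_x}\delta_B\sum_{r\ge0}C_d(1+r)^{d-1}\min\{1,e^{-\kappa(r-v_A\tau)}\}.
\end{equation}

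The main bookkeeping step is evaluating this sum so that the prefactor has the form \eqref{eq:parallel-local-perturbation-prefactor}. For \(r\le v_A\tau\) we use the bound \(1\), and these terms contribute at most \(C_d(1+v_A\tau)^d\). For \(r>v_A\tau\) we substitute \(r=v_A\tau+s\). The remaining terms are bounded by
\begin{equation}
    C_d\sum_{s\ge0}(1+v_A\tau+s)^{d-1}e^{-\kappa s}\le C_d(1+v_A\tau)^{d-1}\sum_{s\ge 0}(1+s)^{d-1}e^{-\kappa s}\le C_d(1+v_A\tau)^{d-1}(1+\kappa^{-1})^d,
\end{equation}
where the middle step uses \((1+a+s)\le(1+a)(1+s)\), and the final sum is compared with \(\int_0^\infty(1+s)^{d-1}e^{-\kappa s}\,ds\lesssim_d(1+\kappa^{-1})^{d}\).

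Combining the two ranges gives \(\norm{[B,O_x(\tau)]}\le C_d(1+\kappa^{-1})^d\norm{O_x}\delta_B(1+v_A\tau)^d\), with \(C_d\) depending only on \(d\). Since this bound is monotone in \(\tau\) and \(\tau\le|t|\), integrating over \(u\in[0,|t|]\) proves \eqref{eq:parallel-local-perturbation}. The constant \(C_{d,\kappa}\) does not depend on the system size, because only the local norm \(\norm B_{2\kappa}\) and the lattice growth rate enter. The point that needs care is the counting of labels \(S\) at distance \(r\), which should be done through a witness site \(y\) and the local norm. Because \(\norm B_{2\kappa}\) carries the weight \(e^{2\kappa|S|}\ge1\), using \(\norm B_{2\kappa}\) in place of the unweighted sum \(\sum_{S\ni y}\norm{B_S}\) only loosens the estimate, so the bound remains valid.
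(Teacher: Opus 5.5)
Your proposal is correct and follows essentially the same route as the paper: Duhamel's formula, a componentwise split of \(B\) using the smaller of the trivial bound and the ADHH Lieb--Robinson bound (with \(\min\{|S|,1\}=1\)), counting components through a witness site and the local norm, and summing a near region of size \((1+v_A\tau)^d\) plus an exponentially weighted far tail. The differences are cosmetic. You index by exact distance \(r\) rather than by unit shells in \(\dist(y,x)-v_As\); this needs \(s\) to be integer-valued, e.g.\ \(r=\lfloor v_A\tau\rfloor+1+j\). You also bound \(\sum_s(1+s)^{d-1}e^{-\kappa s}\) by comparison with an integral rather than the paper's identity \(\sum_j\binom{j+d-1}{d-1}e^{-\kappa j}=(1-e^{-\kappa})^{-d}\). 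The paper's identity is cleaner for large \(\kappa\), where the \(s=0\) term exceeds the integral.
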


With the stopping order $n_{\rm par}$ introduced in \Cref{subsec:parallel-normal-form}, we set \(\kappa_{\rm par}:=\kappa_{n_{\rm par}}/2\). \Cref{eq:parallel-taylor-coeff,eq:parallel-tail} imply
\begin{equation}
    \norm{\widehat D_{\mathcal D,\ell}^{(<P_{\mathcal D})}}_{2\kappa_{\rm par}} \le C\Jan, \quad
    \norm{\widehat D_{\mathcal D,\ell}^{(\ge P_{\mathcal D})}}_{2\kappa_{\rm par}}\le C\Jan2^{-P_{\mathcal D}}.
    \label{eq:parallel-normal-form-norms}
\end{equation}
We then define the corresponding Lieb--Robinson velocity as in \Cref{lem:parallel-local-perturbation}.
\begin{equation}
    v_*:=C(d)\kappa_{\rm par}^{-(d+2)}e^{\kappa_{\rm par}}\Jan.
    \label{eq:parallel-vstar}
\end{equation}

For each \(k\in\mathcal D\), put \(B_k=B_R(k)\) and define
\begin{equation}
    N_k:=Q_k^{\mathcal D,\ell}+2\sum_{j\in B_k\setminus\{k\}}Q_j^{\mathcal D,\ell}, \qquad
    N_{{\rm rem},k}:=N_{\mathcal D,\ell}^{\mathcal B}-N_k.
    \label{eq:parallel-single-ball-charge}
\end{equation}
Here \(N_k\) is supported on \(B_k\), and \(N_{{\rm rem},k}\) is supported on the other controlled balls. In particular, \(G_{\mathcal D,\ell}^{\mathcal B}=H+\nu N_k+\nu N_{{\rm rem},k}\), where \(H=H(\lambda)\) and \(\norm H_{\kappa_0}\le J\). Run \Cref{eq:ADHH-recursion} with input \(H\), charge \(N_k\), and \(s=n_{\rm par}\) steps, and denote the resulting operators by \(Y_{k,s},D_{k,s},V_{k,s}\). Iterating the defining conjugations gives \(Y_{k,s}(H+\nu N_k)Y_{k,s}^\dagger=\nu N_k+D_{k,s}+V_{k,s}\). When we apply this same unitary to the physical Hamiltonian, we must also transform \(\nu N_{{\rm rem},k}\). The extra remainder is \(\nu(Y_{k,s}N_{{\rm rem},k}Y_{k,s}^\dagger-N_{{\rm rem},k})\). It is bounded in the proof below by using the separation of the controlled balls. Keeping this term separate allows the recursion itself to use the input \(H\), whose local norm is independent of \(\nu\).

\begin{lemma}[Single-ball comparison in a parallel experiment]
\label{lem:parallel-stopped-dynamics}
Assume \(\nu\ge\max\{\bar\nu_0,54\pi\Jan/\kappa_0\}\), \(1\le s=n_{\rm par}\le n_*(\nu)\), and \(L_{\mathcal D}\ge2R+(P+1)r_0\), with integer \(P\ge1\). Set \(\mathcal K_k:=\nu N_k+D_{k,s}+\nu N_{{\rm rem},k}\). For every bounded operator \(O\) and every real \(t\),
\begin{equation}
    \norm{e^{iG_{\mathcal D,\ell}^{\mathcal B}t}Oe^{-iG_{\mathcal D,\ell}^{\mathcal B}t}-e^{i\mathcal K_kt}Oe^{-i\mathcal K_kt}}\le C\norm O\Jan|B_k|\left\{\nu^{-1}+|t|\left[(2/3)^s+(P+1)2^{-P}\right]\right\}.
    \label{eq:parallel-stopped-dynamics}
\end{equation}
The constant is independent of \(n,|\mathcal D|,s,P,\nu\).
\end{lemma}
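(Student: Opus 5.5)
The plan mirrors the proof of \Cref{thm:ADHH-dynamics}, with one extra remainder coming from the other controlled balls. Write \(Y:=Y_{k,s}\) and \(G:=G_{\mathcal D,\ell}^{\mathcal B}=H+\nu N_k+\nu N_{{\rm rem},k}\). Conjugating by \(Y\) gives
\begin{equation*}
 YGY^\dagger=\mathcal K_k+V_{k,s}+\nu\bigl(YN_{{\rm rem},k}Y^\dagger-N_{{\rm rem},k}\bigr)=:\mathcal K_k+\mathcal E_k .
\end{equation*}
Since the charge \(N_k\) is supported on \(B_k\), the argument of \Cref{thm:ADHH-dynamics} applies verbatim. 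Every nonzero component of \(A_j\) and \(V_j\) has a label meeting \(B_k\), so \Cref{eq:finite-ball-global-norm} together with \Cref{eq:uniform-A-bound,eq:uniform-V-bound} yields
\begin{equation*}
 \norm{Y-I}\le C\Jan|B_k|/\nu,\qquad \norm{V_{k,s}}\le C\Jan|B_k|(2/3)^s .
\end{equation*}
The decomposition \(e^{-iGt}=Y^\dagger e^{-i(\mathcal K_k+\mathcal E_k)t}Y\), Duhamel's formula, and the observable identity from part (ii) of \Cref{thm:ADHH-dynamics} then reduce \Cref{eq:parallel-stopped-dynamics} to the single estimate
\begin{equation*}
 \nu\norm{YN_{{\rm rem},k}Y^\dagger-N_{{\rm rem},k}}\le C\Jan|B_k|(P+1)2^{-P}.
\end{equation*}
The factor \(|t|\) in the stated bound then comes from Duhamel.

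To bound this remainder, I would telescope over the product \(Y=e^{A_{s-1}}\cdots e^{A_0}\). For each \(j\), write
\begin{equation*}
 e^{A_j}We^{-A_j}-W=\int_0^1 e^{uA_j}[A_j,W]e^{-uA_j}\,du .
\end{equation*}
This gives a sum over \(j\) of unitary conjugates of \([A_j,N_{{\rm rem},k}^{(j)}]\), where \(N_{{\rm rem},k}^{(j)}\) is \(N_{{\rm rem},k}\) partially conjugated by the earlier factors. Handling the partial conjugates is awkward. It is simpler to expand \(YN_{{\rm rem},k}Y^\dagger\) in nested commutators \(\operatorname{ad}_{A_{j_1}}\cdots\operatorname{ad}_{A_{j_m}}(N_{{\rm rem},k})\). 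The key point is that every nonzero component of \(A_j\) has a label meeting \(B_k\). A nested commutator can therefore fail to vanish only if the union of the labels it uses, a connected set, reaches from \(B_k\) to some other controlled ball \(B_{k'}\). That is a distance at least \(L_{\mathcal D}-2R\ge(P+1)r_0\).

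I would then extract the decay in two ways. The \emph{first} is the weight \(e^{\kappa|X|}\). A connected label of length at least \(P r_0\) forces \(|X|\gtrsim P\), and converting from \(\norm{\cdot}_{\kappa_j}\) to \(\norm{\cdot}_{\kappa_j/2}\) costs a factor \(e^{-\kappa_j|X|/2}\). Because \(\kappa_s\) is only logarithmically small, this gives decay only like \(e^{-cP/\log s}\), which is too weak. The \emph{second} is analyticity in \(z\), and this is the route I would actually pursue, in the style of \Cref{subsec:parallel-normal-form}. Replace \(H\) by \(zH\), so that \(A_j(z)=\sum_{p\ge1}z^pA_j^{(p)}\) with \(A_j^{(p)}\in\mathcal C_{p,H}\) by \Cref{lem:support-calculus}. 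The \(z^p\) coefficient of \(\nu(YN_{{\rm rem},k}Y^\dagger-N_{{\rm rem},k})\) is a sum of connected \(p\)-words attached to \(N_{{\rm rem},k}\). For such a term to touch both \(B_k\) and another ball, it needs \(pr_0\ge L_{\mathcal D}-2R\), that is \(p>P\). All coefficients with \(p\le P\) therefore vanish. The natural-looking argument for this is circular: a word that stays away from \(B_k\) has \(A_j\)-components equal to zero, but \(N_{{\rm rem},k}\) is not itself a word. The cleanest substitute is to argue that the recursion run with charge \(N_k\) in the presence of \(N_{{\rm rem},k}\) agrees degree by degree with the recursion run with the global charge. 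This is the analogue of \Cref{eq:ball-local-charge-agreement}, and it makes the low-degree terms of \(\nu(YN_{{\rm rem},k}Y^\dagger-N_{{\rm rem},k})\) telescope to zero. Once that is established, Cauchy's estimate on \(|z|<2\) gives coefficients of size \(\le C\Jan|B_k|2^{-p}\), using the bound \(\norm{\nu(YWY^\dagger-W)}\lesssim\norm{A}\nu\norm{W}_{\rm loc}\lesssim \Jan|B_k|\) restricted to components near \(B_k\). Summing the tail \(p>P\) then gives \(C\Jan|B_k|2^{-P}\).

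The factor \((P+1)\) in the statement suggests the authors instead sum over \(P+1\) shells or orders. I expect to pick it up when bounding the number of crossing paths, or from \(\sum_{p>P}p\,2^{-p}\). The main obstacle is making the uniform analytic bound for the \(\nu\)-weighted remainder rigorous. The bound must hold for complex \(z\), where \(Y\) is no longer unitary, using \Cref{lem:ADHH-conjugation} with \(\nu\norm{A_j}_{\kappa_j}\le C\Jan(2/3)^j\). On the disc this needs only the uniform local-norm bounds of \Cref{thm:uniform-normal-form}, and the cancellation of the \(\nu\) factor against \(1/\nu\) in \(A_j\) is what makes the constant independent of \(\nu\).
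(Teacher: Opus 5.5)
Your reduction matches the paper's. You conjugate $G_{\mathcal D,\ell}^{\mathcal B}$ by $Y_{k,s}$, bound $\norm{Y_{k,s}-I}$ and $\norm{V_{k,s}}$ by the finite-ball argument of \Cref{thm:ADHH-dynamics}, and use Duhamel to reduce everything to $\norm{W_{k,s}}\le C\Jan|B_k|(P+1)2^{-P}$, where $W_{k,s}=\nu(Y_{k,s}N_{{\rm rem},k}Y_{k,s}^\dagger-N_{{\rm rem},k})$. The gap is in this last estimate, and it begins with a misstep. The telescoping you abandoned produces no partial conjugates of $N_{{\rm rem},k}$ if the partial product is placed \emph{outside} the difference. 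Set $U_j:=e^{A_{s-1}}\cdots e^{A_j}$ and $U_s:=I$. Then $U_j=U_{j+1}e^{A_j}$, and
\begin{equation*}
Y_{k,s}N_{{\rm rem},k}Y_{k,s}^\dagger-N_{{\rm rem},k}=\sum_{j=0}^{s-1}U_{j+1}\bigl(e^{A_j}N_{{\rm rem},k}e^{-A_j}-N_{{\rm rem},k}\bigr)U_{j+1}^\dagger .
\end{equation*}
At $z=1$ the $U_{j+1}$ are unitary, so $\norm{W_{k,s}}\le\nu\sum_j\norm{[A_j,N_{{\rm rem},k}]}$. The paper then Taylor-expands only the generators, $A_j(z)=\sum_pz^pA_{j,p}$:
\begin{itemize}
\item Cauchy's estimate applied to \Cref{eq:uniform-A-bound} immediately gives $\norm{A_{j,p}}_{\kappa_j}\le C\Jan\nu^{-1}(2/3)^j2^{-p}$.
\item It shows $[A_{j,p},N_{{\rm rem},k}]=0$ for $p\le P$ by a support argument.
\item For $p>P$ it uses $\norm{[(A_{j,p})_X,N_{{\rm rem},k}]}\le4|X|\norm{(A_{j,p})_X}\le4pq_{\rm lab}\norm{(A_{j,p})_X}$, with $q_{\rm lab}=\max_a|X_a|$, summed over labels meeting $B_k$ via \Cref{eq:finite-ball-global-norm}.
\end{itemize}
The factor $(P+1)$ is exactly $\sum_{p>P}p\,2^{-p}$. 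The prefactor $\nu$ cancels against $\nu^{-1}$ before the sum of $(2/3)^j$ over $j$.

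Your replacement route, Taylor-expanding the whole remainder, leaves both of its essential steps unproved.

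\emph{Vanishing of low-degree coefficients.} This step is not circular. The comparison with the global-charge recursion you propose is not the relevant statement: it says nothing about $[A_j,N_{{\rm rem},k}]$, since that recursion has generators near every ball. The direct argument runs as follows. Every degree-$p'$ contribution to $A_j$ contains an input support meeting $B_k$, because contributions whose supports all miss $B_k$ commute with $N_k$ and are killed by $\mathrm{Id}-\mathcal P_{N_k,0}$. Its support union has diameter at most $p'r_0<(P+1)r_0\le\dist(B_k,B_{k'})$. Hence $[A_{j,p'},N_{{\rm rem},k}]=0$ for $p'\le P$. Every degree-$p$ coefficient of $Y(z)N_{{\rm rem},k}Y(z)^{-1}$ is a sum of nested commutators whose innermost factor is some $[A_{i,p'},N_{{\rm rem},k}]$ with $p'\le p$. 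So all coefficients with $p\le P$ vanish.

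\emph{Uniform bound on the complex disc.} You flag this as the main obstacle, and it is only asserted. Since the conjugations are no longer unitary, you would need:
\begin{itemize}
\item the telescoping above carried out in local norm, with \Cref{lem:ADHH-conjugation} at every step and its locality-loss factors $m_{\rm A}(i)$;
\item a product bound $\prod_i(1+m_{\rm A}(i)\norm{A_i}_{\kappa_i})=\bigO(1)$;
\item summability of $\sum_jm_{\rm A}(j)(2/3)^j$;
\item the fact that every label meets $B_k$, to pass to the operator norm.
\end{itemize}
This can be carried out, and it would give $2^{-P}$ instead of $(P+1)2^{-P}$. But it is extra work that the paper avoids entirely by working at $z=1$, where only analyticity of each $A_j$ is needed.
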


The three terms in this bound have separate origins. The term \(\nu^{-1}\) comes from \(Y_{k,s}-I\), the factor \((2/3)^s\) bounds the remainder \(V_{k,s}\), and \((P+1)2^{-P}\) bounds the change in the fields on the other balls under conjugation by \(Y_{k,s}\). In the last estimate, terms of Taylor degree at most \(P\) cannot reach from \(B_k\) to another controlled ball and therefore commute with \(N_{{\rm rem},k}\). The proof, given below in Appendix~\ref{app:parallelization}, sums the remaining coefficients using their exponential decay. Each estimate involves \(|B_k|\), because the recursion is performed separately for the charge \(N_k\) associated with the measured defect.

To deduce a single-frequency signal, we next restrict \(\mathcal K_k\) to the eigenspaces of \(N_k\) with eigenvalues zero and one. These eigenspaces still contain the entire exterior Hilbert space. Their two exterior Hamiltonians differ by approximately the scalar \(2F_{\rho,R}^{\rm par}(\lambda)\), where the patch function is defined below in \Cref{eq:parallel-patch-map}. This comparison controls the relative phase of the two components of the Ramsey state, even though the exterior continues to evolve.

\begin{theorem}[Single-frequency approximation for multiple defects]
\label{thm:parallel-Ramsey}
Fix integers \(P\ge1\) and \(K\ge0\). Assume $\nu\ge\max\{\bar\nu_0,54\pi\Jan/\kappa_0\}$, $1\le n_{\rm par}\le n_*(\nu)$, $L_{\mathcal D}\ge2R+(P+1)r_0$, and $R\ge(K+1)r_0$. For every \(k\in\mathcal D\)
and every real \(t\), the signal in \Cref{eq:parallel-signal} satisfies
\begin{equation}
    \abs{S_{\mathcal D,\ell;k}(t) -e^{i[\nu+2E_{\mathcal D,\ell;k}^{(<P_{\mathcal D})}]t}} \le \frac{C_1\Jan V_R}{\nu} +C_2\Jan V_R|t|\left[(2/3)^{n_{\rm par}}+(P+1)2^{-P}\right] +C_3\Jan|t|2^{-K}.
    \label{eq:parallel-Ramsey-error}
\end{equation}
The same right-hand side bounds
\(\abs{S_{\mathcal D,\ell;k}(t)-e^{i[\nu+2F_{\rho,R}^{\rm par}(\lambda)]t}}\), where \(F_{\rho,R}^{\rm par}\) is the patch function defined below in \Cref{eq:parallel-patch-map}, with \(\rho=(k,\ell)\) and matching Pauli axes on \(B_R(k)\). All constants are independent of \(n,|\mathcal D|,n_{\rm par},L_{\mathcal D},\nu\).
\end{theorem}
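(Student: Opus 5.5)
The plan follows the proof of \Cref{thm:Ramsey-approximation}, with the single-ball charge \(N_k\) playing the role of \(N_\rho^B\). The remaining controlled balls are handled by \Cref{lem:parallel-stopped-dynamics}.

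\emph{Step 1: reduce to the effective dynamics.} Apply \Cref{lem:parallel-stopped-dynamics} with \(O=\tau_{k,+}^{\mathcal D,\ell}\), whose norm is two, and take the expectation in \(\ket{\Psi_+^{\mathcal D,\ell}}\). This replaces \(S_{\mathcal D,\ell;k}(t)\) by the signal generated by \(\mathcal K_k=\nu N_k+D_{k,s}+\nu N_{{\rm rem},k}\). The cost is \(C\Jan V_R\{\nu^{-1}+|t|[(2/3)^{n_{\rm par}}+(P+1)2^{-P}]\}\), which is exactly the first two error terms.

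\emph{Step 2: compute the effective signal.} Decompose \(\ket{\Psi_+^{\mathcal D,\ell}}\) as \(\ket{\phi_{B_k,+}}\otimes\ket{\psi_C}\), where \(C=\Lambda\setminus B_k\) and \(\ket{\psi_C}\) is a product state. The state \(\ket{\psi_C}\) contains the other defects' superpositions; it is branch independent. Since \(D_{k,s}\) commutes with \(N_k\), and \(N_{{\rm rem},k}\) is supported in \(C\) and commutes with \(N_k\), the eigenspaces \(\ket{\phi_{B_k,b}}\otimes\mathcal H_C\) for \(b=0,1\) are invariant under \(\mathcal K_k\). On them \(\mathcal K_k\) acts as \(\nu b+K_b+\nu N_{{\rm rem},k}\) with \(K_b=\mathcal M_b(D_{k,s})\). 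The common term \(\nu N_{{\rm rem},k}\) is an operator on \(C\) that need not commute with \(K_b\), so it cannot simply be cancelled. I would nevertheless run the same Duhamel argument as in \Cref{eq:ball-exterior-Duhamel}, now with \(K_b'=K_b+\nu N_{{\rm rem},k}\). Only the difference \(K_1'-K_0'=K_1-K_0\) enters the bound, so the large term never appears. This gives \(|S_{\rm eff}(t)-e^{i(\nu+\delta)t}|\le|t|\norm{K_1-K_0-\delta I_C}\) for any real scalar \(\delta\).

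\emph{Step 3: identify the scalar.} Take \(\delta=2F_{\rho,R}^{\rm par}(\lambda)\), where the patch map uses the charge restricted to \(B_k\), stopping order \(n_{\rm par}\) and input \(H_{B_k}\). The coefficient-matching argument from the proof of \Cref{thm:Ramsey-approximation} then applies: connected-word expansion, agreement of the charge actions on operators inside \(B_k\), Cauchy bounds of order \(2^{-p}\) on the disc of radius two, and \(R\ge(K+1)r_0\). It yields \(\norm{K_1-K_0-2F_{\rho,R}^{\rm par}(\lambda)I_C}\le C\Jan2^{-K}\). This proves the second claim, the one involving \(F_{\rho,R}^{\rm par}\).

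\emph{Step 4: pass to \(E_{\mathcal D,\ell;k}^{(<P_{\mathcal D})}\).} This is the main obstacle. I would split it into two comparisons.

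First compare \(E_{\mathcal D,\ell;k}^{(<P_{\mathcal D})}\) with the untruncated \(\ell_k(\widehat D_{\mathcal D,\ell}(1))\), using the transition functional built from \(\Phi_\varnothing\) and \(\Phi_{\{k\}}\). The tail bound \Cref{eq:parallel-tail} together with \Cref{eq:ell-local-bound-general-main} gives a difference \(\le C\Jan2^{-P_{\mathcal D}}\).

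Next compare \(\ell_k(\widehat D_{\mathcal D,\ell}(1))\) with \(F_{\rho,R}^{\rm par}(\lambda)\). This is the multi-defect version of \Cref{thm:patch-locality}, which applies by \Cref{rem:applicability_to_parallel}, since the charges agree on \(B_k\) by \Cref{eq:parallel-local-charge-agreement}. It gives an error \(\le C\Jan2^{-K}\). This is presumably the content of \Cref{prop:parallel-patch}.

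Because \(P_{\mathcal D}\ge L_{\mathcal D}/r_0\ge P\), the first error is dominated by \(2^{-P}\). The phase difference between the two exponentials is then at most \(2|t|\) times these errors, which can be absorbed into the \(C_2\) and \(C_3\) terms. The care needed is to check that all constants are uniform in \(|\mathcal D|\), \(n_{\rm par}\) and \(L_{\mathcal D}\). This uniformity follows because every bound used is a local-norm bound at the single site \(k\), together with \(\Jan\) and \(V_R\).
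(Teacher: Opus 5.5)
Your proposal is correct and follows the paper's proof essentially step for step: \Cref{lem:parallel-stopped-dynamics} for the physical-to-effective comparison, the exterior Duhamel estimate in which the common term \(\nu N_{{\rm rem},k}\) drops out of \(K_1-K_0\), and Taylor-coefficient matching against \(F_{\rho,R}^{\rm par}\) for the \(2^{-K}\) bound. The only difference is in Step 4: you pass through the untruncated \(\ell_k(\widehat D_{\mathcal D,\ell}(1))\) and absorb the extra \(\bigO(\Jan2^{-P_{\mathcal D}})\) error using \(P_{\mathcal D}\ge P+1\), whereas \Cref{prop:parallel-patch} compares the truncated sum directly with the patch series, using \(P_{\mathcal D}\ge K+2\) so that only the coefficients with \(K<p<P_{\mathcal D}\) and the patch tail remain; this avoids the extra term but is otherwise the same argument.
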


The proof is given below in Appendix~\ref{app:parallelization}. It first bounds error of approximating the physical signal using \(F_{\rho,R}^{\rm par}(\lambda)\). It then uses \Cref{prop:parallel-patch}, stated later in this section, to replace \(F_{\rho,R}^{\rm par}(\lambda)\) by \(E_{\mathcal D,\ell;k}^{(<P_{\mathcal D})}\). The proposition bounds their difference by \(C\Jan2^{-K}\), so the resulting change in the oscillation is at most \(C\Jan|t|2^{-K}\), which is absorbed into the last term of \Cref{eq:parallel-Ramsey-error}. The proof of \Cref{prop:parallel-patch} uses coefficient matching and Taylor coefficient bounds and does not use the present theorem; this forward reference is therefore not circular.

This theorem implies that we can estimate $E_{\mathcal D,\ell;k}^{(<P_{\mathcal D})}$ from experiments for multiple $k\in \mathcal{D}$ in parallel using the robust frequency estimation procedure in \Cref{lem:robust-frequency}. We then summarize the cost of extracting these frequencies in the theorem below.

\begin{theorem}[Parallel estimation of  transition energies]\label{thm:parallel-estimate-half-gaps}
Fix a nonempty batch \(\mathcal D\), a repetition \(\ell\), a sufficiently small accuracy \(\eta>0\), and a failure probability \(\delta_{\rm par}\in(0,1)\). Assume that the actual field strength $\nu$ obeys  the lower bound
\begin{equation}
    \nu\ge\nu_{\rm par}=O\!\left(J\max\left\{[\log(eJ/\eta)]^d,(d+1)\log(eJ/\eta)[1+\log((d+1)\log(eJ/\eta))]^3\right\}\right).
    \label{eq:parallel-field-scaling}
\end{equation}
Then, with \(K=P=n_{\rm par}\) and \(R=(K+1)r_0\) specified below in \Cref{eq:parallel-stopping-order}, there exists a sufficient separation threshold \(P_{\rm sep}=\bigO((d+1)\log(1/\eta))\), corresponding to a minimum distance of order \(\bigO(r_0(d+1)\log(1/\eta))\), such that \(L_{\mathcal D}\ge r_0P_{\rm sep}\) allows simultaneous estimation of all {effective half transition energies} \(E_{\mathcal D,\ell;k}^{(<P_{\mathcal D})}\), \(k\in\mathcal D\), to precision \(\eta\) with probability at least {\(1-\delta_{\rm par}\)}, using \(\bigO(\log^2(1/\eta)\log(2|\mathcal D|/\delta_{\rm par}))\) independent non-adaptive global experiments, \(\bigO(\log(2|\mathcal D|/\delta_{\rm par})/\eta)\) total evolution time.
\end{theorem}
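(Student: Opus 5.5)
The plan is to combine the single-frequency bound of \Cref{thm:parallel-Ramsey} with the robust frequency estimation of \Cref{lem:robust-frequency}, in the same way \Cref{cor:estimate-half-gap} was obtained from \Cref{thm:Ramsey-approximation}. The new ingredients are the parallel error term \((P+1)2^{-P}\) and a union bound over the defects in \(\mathcal D\).

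\textbf{Step 1: parameters.} Set
\begin{equation}
n_{\rm par}=K=P=\ceil{C_{\rm par}(d+1)\log(eJ/\eta)}, \qquad R=(K+1)r_0,
\end{equation}
and \(P_{\rm sep}:=2(K+1)+P+1\), so that \(L_{\mathcal D}\ge r_0P_{\rm sep}\) implies \(L_{\mathcal D}\ge2R+(P+1)r_0\). This gives \(P_{\rm sep}=\bigO((d+1)\log(1/\eta))\).

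The field bound \Cref{eq:parallel-field-scaling} should do two jobs. First, through its second term and the definition \Cref{eq:new-n*}, it should guarantee \(n_{\rm par}\le n_*(\nu)\) together with the thresholds \(\bar\nu_0\) and \(54\pi\Jan/\kappa_0\). Second, since \(V_R=\bigO(R^d)=\bigO(\log^d(eJ/\eta))\), its first term should make \(C_1\Jan V_R/\nu\) below a small constant, say \(1/(8\sqrt8)\).

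\textbf{Step 2: signal error over the required time window.} \Cref{lem:robust-frequency} only queries times \(|t|\le T_{\max}=C_T/\eta\). On that window I bound the remaining terms of \Cref{eq:parallel-Ramsey-error}:
\begin{equation}
C_2\Jan V_R T_{\max}\bigl[(2/3)^{n_{\rm par}}+(P+1)2^{-P}\bigr]+C_3\Jan T_{\max}2^{-K}.
\end{equation}
Here \(V_R\lesssim\log^d(1/\eta)\) and \(P+1\lesssim\log(1/\eta)\), so this quantity is at most
\begin{equation}
\mathrm{poly}\log(1/\eta)\cdot\eta^{-1}\cdot(eJ/\eta)^{-cC_{\rm par}(d+1)}.
\end{equation}
A large enough \(C_{\rm par}\) absorbs both the \(1/\eta\) factor and the polylog factor \(\log^{d+1}\), which explains the \((d+1)\) in the choice. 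For sufficiently small \(\eta\), the total deterministic error is then below, say, \(1/(2\sqrt8)\).

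I expect this bookkeeping to be the main point to get right. The constants in \Cref{thm:parallel-Ramsey} are independent of \(n_{\rm par}\), \(L_{\mathcal D}\) and \(\nu\), so fixing \(C_{\rm par}\) first and then \(\nu\) creates no circularity. The only subtlety is that the \(s\)-term in \(\nu_{\rm par}\) depends on \(C_{\rm par}\), which is harmless.

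\textbf{Step 3: frequency estimation and union bound.} A single global experiment prepares \(\ket{\Psi_+^{\mathcal D,\ell}}\), evolves under \(G^{\mathcal B}_{\mathcal D,\ell}\), and measures all \(\tau_{k,x}\), or all \(\tau_{k,y}\), simultaneously, because their supports are disjoint. This yields samples of \(\Re S_{\mathcal D,\ell;k}(t)\) and \(\Im S_{\mathcal D,\ell;k}(t)\) for every \(k\) at once.

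\begin{itemize}
\item Monte Carlo averaging with \(\bigO(1)\) shots per time gives \(\widehat c,\widehat s\) within the remaining \(1/(2\sqrt8)\) margin with probability at least \(2/3\). Note that \(\widehat c\) and \(\widehat s\) are taken as estimates of \(\cos\) and \(\sin\) of the true frequency \(\nu+2E^{(<P_{\mathcal D})}_{\mathcal D,\ell;k}\), and they need not be unbiased.
\item Take \(\Phi=\bigO(\Jan)\) from \Cref{eq:parallel-taylor-coeff}, which bounds \(|E^{(<P_{\mathcal D})}_{\mathcal D,\ell;k}|\). Then \Cref{lem:robust-frequency} uses \(\bigO(\log^2(1/\eta))\) time points with total time \(\bigO(1/\eta)\).
\item As in \Cref{cor:estimate-half-gap}, Markov's inequality and a median of \(\bigO(\log(2|\mathcal D|/\delta_{\rm par}))\) independent repetitions give per-defect failure probability \(\delta_{\rm par}/|\mathcal D|\).
\item The same global runs serve every defect, so a union bound over \(k\in\mathcal D\) yields the stated experiment count and total evolution time, independent of \(|\mathcal D|\) apart from the logarithm.
\end{itemize}
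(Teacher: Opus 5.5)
Your proposal is correct and follows essentially the same route as the paper's proof. It makes the same choices $n_{\rm par}=K=P=\lceil C_{\rm par}(d+1)\log(eJ/\eta)\rceil$, $R=(K+1)r_0$ and $P_{\rm sep}=\lceil 2R/r_0\rceil+P+1=3(n_{\rm par}+1)$, controls each deterministic error term of \Cref{thm:parallel-Ramsey} on $|t|\le C_T/\eta$ using the volume part of the field bound and a large $C_{\rm par}$, and then applies the same Markov-plus-median boosting with a union bound over $k\in\mathcal D$ using shared global runs.
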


We choose the ADHH recursion stopping order to be
\begin{equation}
    n_{\rm par}:=\ceil{C_{\rm par}(d+1)\log(eJ/\eta)},\quad K=P=n_{\rm par},\quad R=(K+1)r_0,
    \label{eq:parallel-stopping-order}
\end{equation}
where \(C_{\rm par}>0\) is a sufficiently large constant. We also assume that the actual field strength $\nu$ obeys the lower bound
\begin{equation}
    \nu\ge\nu_{\rm par}:=\max\left\{{54\pi\Jan}/{\kappa_0},\,{C_{\rm vol}\Jan V_R,\,}C_{\rm fld}\bar\nu_0(n_{\rm par}+2)[1+\log(n_{\rm par}+2)]^3\right\}.
    \label{eq:parallel-field-condition}
\end{equation}
Here \(C_{\rm vol},C_{\rm fld}>0\) are sufficiently large constants. The volume term controls the first term in \Cref{eq:parallel-Ramsey-error}: by \Cref{eq:finite-ball-dressing}, \(\norm{Y_{k,n_{\rm par}}-I}\le C\Jan|B_k|/\nu\), and \(|B_k|\le V_R\). Thus \(\nu\ge C_{\rm vol}\Jan V_R\) makes the error from this unitary transformation smaller than a prescribed constant. The last term in the maximum ensures \(n_{\rm par}\le n_*(\nu)\), with \(n_*(\nu)\) from \Cref{eq:new-n*}, so the recursion estimates are valid through all the chosen steps. The first term supplies the remaining smallness condition \(\nu\ge54\pi\Jan/\kappa_0\).

The estimator uses evolution times at most \(C_T/\eta\), where \(C_T\) is fixed. On this time interval, the other terms in \Cref{eq:parallel-Ramsey-error} are small because the choices \(K=P=n_{\rm par}\) make \((2/3)^{n_{\rm par}}\), \((P+1)2^{-P}\), and \(2^{-K}\) decay faster than the factors \(V_R/\eta\) grow. The quantitative estimates are given below in the proof of \Cref{thm:parallel-estimate-half-gaps} in Appendix~\ref{app:parallelization}. Together, these choices keep the deterministic signal error below the fixed tolerance of \Cref{lem:robust-frequency}; precision \(\eta\) is obtained by
using times of order \(1/\eta\).

The value of \(n_{\rm par}\) in \Cref{eq:parallel-stopping-order} is fixed by \(\eta\). We use that same number of steps for the full-system recursions with charges \(N_{\mathcal D,\ell}\) and \(N_k\), and for the patch recursion on \(B_k\). The two charges agree on \(B_k\), because that ball contains only the defect \(k\), so they define the same charge for the patch recursion. Using a common number of steps makes the coefficients from the reference, physical comparison, and patch calculations comparable at each Taylor degree. When the trial coefficients vary, we keep \(n_{\rm par}\) fixed so that each calculation defines one analytic function of those coefficients. Increasing \(\nu\) enlarges the allowed maximum \(n_*(\nu)\) and lowers the bound \(C\Jan|B_k|/\nu\) on \(\norm{Y_{k,n_{\rm par}}-I}\); it does not change the chosen number of steps used to evaluate the patch function.

\subsection{Final complexity}
\label{subsec:parallel-resources}

In the previous subsection we showed that all effective half transition energies in \(\mathcal D\) can be estimated simultaneously. As in the single-defect analysis, we compare the effective transition energy with one obtained by applying the same ADHH construction on a local patch $B_R(k)$. Both \(E_{\mathcal D,\ell;k}^{(<P_{\mathcal D})}\) and the patch function defined next use exactly \(n_{\rm par}\) recursion steps. The superscript \((<P_{\mathcal D})\) denotes a separate Taylor truncation of the reference interaction: only degrees \(p<P_{\mathcal D}\) are retained. The patch function uses the output of the \(n_{\rm par}\)-step recursion on \(B_R(k)\) without this Taylor truncation. Their low-degree coefficients agree because every contribution to the transition at \(k\) of degree \(p\le K\) is supported within \(B_R(k)\), where the reference charge equals the patch charge. The remaining coefficients are bounded by geometric tails. This is the content of \Cref{prop:parallel-patch}, stated below and proved later in Appendix~\ref{app:parallelization}. We let \(D_{\rho,R}^{\rm par}(\mu)\) be the charge-preserving interaction obtained by running the patch recursion of \Cref{eq:patch-charge} for \(n_{\rm par}\) steps, and define
\begin{equation}
    {F_{\rho,R}^{\rm par}(\mu) :=\frac12\left( \bra{\Phi_1^\rho}D_{\rho,R}^{\rm par}(\mu)\ket{\Phi_1^\rho} -\bra{\Phi_0^\rho}D_{\rho,R}^{\rm par}(\mu)\ket{\Phi_0^\rho} \right).}
    \label{eq:parallel-patch-map}
\end{equation}
Here the patch basis is specified by \(s_j^\rho=s_j^{\mathcal D,\ell}\) and \(\beta_j^\rho=\beta_j^{\mathcal D,\ell}\) for \(j\in B_R(k)\), where \(\rho=(k,\ell)\). No basis choice outside the patch is needed.

\begin{proposition}
\label{prop:parallel-patch}
Let $k\in\mathcal D$, let $\rho=(k,\ell)$, and let \(s_j^\rho=s_j^{\mathcal D,\ell}\), \(\beta_j^\rho=\beta_j^{\mathcal D,\ell}\), for every \(j\in B_R(k)\). We assume $R\ge(K+1)r_0$, and $R<L_{\mathcal D}$. Then $P_{\mathcal D}\ge K+2$ and
\begin{equation}
    \abs{ E_{\mathcal D,\ell;k}^{(<P_{\mathcal D})} {-F_{\rho,R}^{\rm par}(\lambda)}} \le C_{\rm P}^{\rm par}\Jan2^{-K},
    \label{eq:parallel-patch-error}
\end{equation}
for a universal constant $C_{\rm P}^{\rm par}$, uniformly in $n$, $|\mathcal D|$, and the defect locations.
\end{proposition}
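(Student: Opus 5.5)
First, $P_{\mathcal D}\ge K+2$ follows from the hypotheses. Since $R<L_{\mathcal D}$ and $R\ge(K+1)r_0$, we get $L_{\mathcal D}>(K+1)r_0$. Hence $P_{\mathcal D}=\lceil L_{\mathcal D}/r_0\rceil\ge K+2$ when $L_{\mathcal D}/r_0$ is not an integer. In the integer case we only get $\ge K+1$ a priori. To handle this, I would either use strictness of $R<L_{\mathcal D}$ in the form $L_{\mathcal D}\ge R+1$, or check that degree $K+1$ is still harmless in the matching below. Either way, all degrees $p\le K$ lie strictly below $P_{\mathcal D}$, which is all the argument needs.

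The plan mirrors \Cref{thm:patch-locality}, via the bookkeeping variable $z$. Define
\[
g(z):=\ell_k\bigl(\widehat D_{\mathcal D,\ell}(z)\bigr),\qquad f(z):=F_{\rho,R}^{\rm par}(z\lambda),
\]
where $\ell_k(W):=\tfrac12(\bra{\Phi_{\{k\}}}W\ket{\Phi_{\{k\}}}-\bra{\Phi_\varnothing}W\ket{\Phi_\varnothing})$. Both recursions stop at the common order $n_{\rm par}$. By linearity, $E_{\mathcal D,\ell;k}^{(<P_{\mathcal D})}=\sum_{p<P_{\mathcal D}}\ell_k(\widehat D^{(p)}_{\mathcal D,\ell})=\sum_{p<P_{\mathcal D}}g_p$, where $g_p$ is the degree-$p$ Taylor coefficient of $g$. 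Also $F_{\rho,R}^{\rm par}(\lambda)=f(1)=\sum_p f_p$. By \Cref{eq:ell-local-bound-general-main} (which holds for $\ell_k$, since the two states differ only at $k$) together with \Cref{eq:parallel-taylor-coeff}, we have $|g_p|\le C\Jan2^{-p}$. By \Cref{lem:analytic-gap-bound} applied to the patch recursion (legitimate per \Cref{rem:applicability_to_parallel}), we have $|f_p|\le C\Jan 2^{-p}$.

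The key step is coefficient matching: $g_p=f_p$ for $1\le p\le K$. I would proceed as in \Cref{lem:coefficient-matching}. By \Cref{lem:connected-word-expansion}, applied to charge $N_{\mathcal D,\ell}$, the degree-$p$ coefficient is a combination of connected $p$-words with $\mu$-independent scalars. Any word surviving $\ell_k$ must touch $k$, so its input terms lie in $B_{pr_0}(k)\subseteq B_R(k)$ for $p\le K$. Thus every such input term belongs to $H_{B_R(k)}(\lambda)$. All intermediate operators producing it are supported in $B_R(k)$. On such operators $N_{\mathcal D,\ell}$ acts exactly as the patch charge: $B_R(k)$ contains no other defect since $R<L_{\mathcal D}$, and sites outside $B_R(k)$ commute, as in \Cref{eq:ball-local-charge-agreement}. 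Hence $\mathcal P_{N,\omega}$ and $\mathcal S_N$ agree step by step, and the recursion \Cref{eq:recursion-no-N} produces identical coefficients $c_{\mathbf a,\mathbf Q}$ in both computations. Words not touching $k$ are annihilated by both functionals. Moreover, $\ell_k$ evaluated on an operator supported in $B_R(k)$ coincides with the patch functional using $\ket{\Phi_0^\rho},\ket{\Phi_1^\rho}$, because the states agree on $B_R(k)$ with the matched Pauli axes and are product states with identical exterior factors. Finally, $g_0=f_0=0$.

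Combining these, since $K<P_{\mathcal D}-1$,
\[
\Bigl|\sum_{p<P_{\mathcal D}}g_p-\sum_{p\ge0}f_p\Bigr|\le\sum_{K<p<P_{\mathcal D}}|g_p|+\sum_{p>K}|f_p|\le 2C\Jan 2^{-K},
\]
uniformly in $n,|\mathcal D|$ and defect locations, which gives the bound with constant $C_{\rm P}^{\rm par}$.

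The main obstacle is making the step-by-step agreement rigorous at the level of the interaction recursion. One must track the ambient labels, which may extend beyond the support. One must also confirm that the multi-defect charge, whose charge-one sector is degenerate, does not affect \Cref{lem:connected-word-expansion}. Note that this lemma uses only the onsite integer structure of the charge, not non-degeneracy.
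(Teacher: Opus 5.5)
Your proposal is correct and follows the paper's proof essentially step for step. It uses the same pair of scalar functions of $z$ stopped at order $n_{\rm par}$ and the same coefficient matching through degree $K$ (the paper simply invokes \Cref{lem:coefficient-matching} via \Cref{rem:applicability_to_parallel}, where the rigor you flag as the main obstacle comes from \Cref{lem:exact-embedding} together with the $\mu$-independence of the monomial coefficients in \Cref{lem:connected-word-expansion}), and then the same $2^{-p}$ Cauchy tails on both sides.

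One correction: your integer-case worry is unfounded. Since $L_{\mathcal D}/r_0>K+1$ and $K+1$ is an integer, $P_{\mathcal D}=\lceil L_{\mathcal D}/r_0\rceil\ge K+2$ in every case; if $L_{\mathcal D}/r_0$ is an integer, it is an integer exceeding $K+1$. This is how the paper gets it, and it is also what your later step $K<P_{\mathcal D}-1$ actually uses.
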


Therefore, to control the approximation error in \Cref{eq:parallel-patch-error}, we choose {$K=\mathcal{O}(\log(1/\eps))$} that satisfies  {\(C_{\rm P}^{\rm par}\Jan2^{-K} \le \eps/(4C_\Gamma C_y)\)}, and set  {\(\eta=\eps/(4C_\Gamma C_y)\)} and {\(\delta_{\rm par}=\delta|\mathcal D|/(2nm)\)} in \Cref{thm:parallel-estimate-half-gaps}. For fixed {$\ell$}, we simultaneously obtain the estimate of  {\(F_{\rho,R}^{\rm par}(\lambda)\)} for all $k\in\mathcal D$ within  {\(\eps/(2C_\Gamma C_y)\)-additive} error using total evolution time {\(\mathcal{O}(\log(4nm/\delta)/\eps)\)} with probability {\(1-\delta_{\rm par}\)}. The same separation must keep the controlled balls disjoint and suppress the effect of their fields on one another. Choose \(L_{\mathcal D}\ge2R+(P+1)r_0\), so the distance between any two balls is at least \((P+1)r_0\). With \(P=n_{\rm par}\) and \(R=(n_{\rm par}+1)r_0\) from \Cref{eq:parallel-stopping-order}, the right-hand side equals \(3(n_{\rm par}+1)r_0\). Since \(n_{\rm par}=O((d+1)\log(eJ/\eta))\) and the preceding choice has \(\eta=\Theta(\eps)\), a sufficient separation has scaling
\begin{equation}
    L_{\mathcal D} = \bigO(\log(1/\eps)).
    \label{eq:min-separation}
\end{equation}
We note that $L_{\mathcal D}$ also needs to satisfy another constraint \Cref{eq:parallel-separation-estimation}, but the above scaling is still valid. Since $L_{\mathcal{D}}>2R$, $B_R(k)$ and $B_{R}(k')$ do not overlap for any distinct $k,k'\in \mathcal{D}$,  there is no conflict in the choice of $s_j^\rho$ and $\beta_j^\rho$ for each $j$. Moreover, \(F_{\rho,R}^{\rm par}\) has the same zeroth-order Jacobian \(X\) as the original patch map. The scalar-correction and analytic derivative bounds used in \Cref{sec:optimization} are uniform in the stopping order \(j\le n_*(\nu)\), so the conditioning, stability, and convergence proofs apply to \(F_{\rho,R}^{\rm par}\) with the same thresholds. Measurement outcomes at different defects may be correlated, but because we are analyzing the failure probability using a worst-case union bound, which accounts for the worst effect of correlation, we do not need to change the convergence analysis.

To recover the coefficient by using the optimization process of \Cref{sec:optimization}, the estimate of  {\(F_{\rho,R}^{\rm par}(\lambda)\)} within $\eps$-additive error for all $k=1,2,\dots,n$ and $l=1,2,\dots,m$ are required. Hence, we distribute every {defect site} $k\in \Lambda$ into batches $\mathcal D^{(1)},\dots,\mathcal D^{(\chi)}$ that the minimum separation of each batch satisfies \Cref{eq:min-separation}.

We will design our experiments with the minimum separation distance in \Cref{eq:min-separation} being at least $L$, where $L$ is sufficiently large to ensure $L>2R$ and satisfies \Cref{eq:parallel-separation-estimation}, with scaling $L=\bigO(\log(1/\eps))$. To process all {defect sites} $k$ with as little overhead as possible, we color the $d$-dimensional lattice $\Lambda$ using
\begin{equation}
    \chi\le C_d(1+L)^d=\bigO(L^d) = \bigO(\log^d(1/\eps))
    \label{eq:parallel-color-count}
\end{equation}
color classes, such that defect sites of the same color are at least distance $L$ apart. Defect sites $k$ of the same color are collected into a batch $\mathcal D^{(r)}$, for $r=1,2,\cdots,\chi$. Hence, every batch $\mathcal D^{(r)}$ has minimum separation at least $L_{\mathcal{D}^{(r)}}\geq L$.  

For every color class and every repetition $\ell=1,\ldots,m$, we perform one parallel experimental estimation of the forward map {\(F_{\rho,R}^{\rm par}(\lambda)\)}. Choose \(m=\ceil{C_m{\log(4n/\delta)}}\), reserving failure probability \(\delta/2\) for \Cref{eq:Gamma-close}. Since \(\sum_{r=1}^{\chi}|\mathcal D^{(r)}|=n\), summing the budgets \(\delta|\mathcal D^{(r)}|/(2nm)\) over all batches and repetitions gives total frequency-estimation failure probability at most \(\delta/2\). Thus conditioning and accurate data hold jointly with probability at least \(1-\delta\). Since \(\log(4nm/\delta)=\bigO(\log(n/\delta))\), the total evolution time becomes
\begin{equation}
    T_{\rm tot} =\bigO(\chi \cdot m \cdot (1/\eps)\cdot {\log(4nm/\delta)}) =\bigO(\log^d(1/\eps)\log^2(n/\delta)/\eps)
    \label{eq:parallel-sensing-time}
\end{equation}
For each $(\ell,r)$, we use $\bigO(\log^{2}(1/\eps)\log(n/\delta))$ independent non-adaptive global experiments as in \Cref{thm:parallel-estimate-half-gaps}. Hence, the number of experiments becomes
\begin{equation}
    \bigO(\chi \cdot m \cdot \log^{2}(1/\eps)\cdot \log(n/\delta)) =\bigO(\log^{d+2}(1/\eps)\log^2(n/\delta)).
\end{equation}

The optimization field threshold remains \(\nu_{\rm opt}=\bigO(JR^d)\), as in the single-defect case, because \(F_{\rho,R}^{\rm par}\) has the same zeroth-order Jacobian \(X\), while the error bounds for the forward map and the derivatives (in \Cref{lem:scalar-correction} and \Cref{lem:row-derivatives} respectively) used in the conditioning analysis hold uniformly for every stopping order \(n_{\rm par}\le n_*(\nu)\). Moreover, upon setting \(\eta=\Theta(\eps)\), \Cref{eq:parallel-field-scaling} shows that \(\nu_{\rm par}\) has the same scaling as the single-defect dynamical threshold \(\nu_{\rm dyn}\) (as in \Cref{eq:field-strength-requirement}). Consequently, the combined parallel threshold \(\nu_{\min}^{\rm par}=\max\{\nu_{\rm opt},\nu_{\rm par}\}\) has the same scaling as \(\nu_{\min}\) in \Cref{eq:nu-condition}. Hence, we have the final theorem.

\begin{theorem}[{Learning a geometrically local Hamiltonian with parallel defects}]\label{thm:main-para}
Let \(H(\lambda)\) be the Hamiltonian in \Cref{eq:intro-H} on the box lattice \(\Lambda\) defined in \Cref{sec:intro}, with \(|\Lambda|=n\), satisfying the locality conditions in \Cref{eq:locality-bounds-H}. Assume that \(d,q,r_0\) are fixed and \(\norm{\lambda}_\infty\le1\). Suppose that we can apply the static single-qubit controls \(-\nu H_{\rm ctrl}^{\mathcal D,\ell}\) specified in \Cref{eq:parallel-random-pattern,eq:parallel-control}, prepare the product states in \Cref{eq:parallel-preparation}, and measure the single-qubit Pauli observables \(\tau_{k,x}^{\mathcal D,\ell}\) and \(\tau_{k,y}^{\mathcal D,\ell}\) appearing in \Cref{eq:parallel-O-plus}. For any target accuracy \(\eps\in(0,1)\) and failure probability \(\delta\in(0,1)\), the parallel protocol produces an estimate \(\widehat\lambda\) satisfying \(\norm{\widehat\lambda-\lambda}_\infty\le\eps\) with probability at least \(1-\delta\), using total evolution time $\bigO(\log^d(1/\eps)\log^2(n/\delta)/\eps)$. It suffices to use any field strength $\nu\geq \nu^{\rm par}_{\min}$, where $ {\nu_{\min}^{\rm par}} =\bigO (\max\{\log^d(1/\eps), \log(1/\eps)[1+\log\log(1/\eps)]^3\})$ The protocol uses $\bigO(\log^{d+2}(1/\eps)\log^2(n/\delta))$ independent non-adaptive global experiments. The classical post-processing time is polynomial in \(n\), \(1/\eps\), and \(\log(1/\delta)\), independently of how large the actual field $\nu$ is.
\end{theorem}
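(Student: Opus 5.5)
The plan is to assemble the ingredients of \Cref{sec:parallelization} in the same order as the sequential \Cref{thm:main}, replacing \Cref{cor:estimate-half-gap} by \Cref{thm:parallel-estimate-half-gaps} and the patch map $F_{\rho,R}$ by $F^{\rm par}_{\rho,R}$. I will use two events: the conditioning event \Cref{eq:Gamma-close} and the data-accuracy event. With $m=\ceil{C_m\log(4n/\delta)}$, \Cref{thm:Walsh-conditioning} gives the first with probability at least $1-\delta/2$. The random pattern restricted to each $B_R(k)$ has the same law as in the sequential protocol, and the balls of one batch are disjoint, so the rows $\rho=(k,\ell)$ are i.i.d.\ across $\ell$ for each fixed $k$. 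This is exactly what the proof of \Cref{thm:Walsh-conditioning} uses.

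The parameters are chosen as follows. Set $\eta=\eps/(4C_\Gamma C_y)$ and pick $K=\bigO(\log(1/\eps))$ so that $C^{\rm par}_{\rm P}\Jan2^{-K}\le\eps/(4C_\Gamma C_y)$. Take $n_{\rm par}=K=P$ and $R=(K+1)r_0$ as in \Cref{eq:parallel-stopping-order}. Next, choose $L=\max\{3(n_{\rm par}+1)r_0,\,r_0P_{\rm sep}\}+1=\bigO(\log(1/\eps))$, which exceeds $2R$, so the hypotheses of \Cref{thm:parallel-Ramsey}, \Cref{prop:parallel-patch} and \Cref{thm:parallel-estimate-half-gaps} all hold. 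Color $\Lambda$ by residues of coordinates modulo $L$. This gives $\chi\le L^d=\bigO(\log^d(1/\eps))$ classes, and two distinct same-colored sites differ by at least $L$ in some coordinate, so they are at distance at least $L$.

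For each batch $\mathcal D^{(r)}$ and each repetition $\ell$, I apply \Cref{thm:parallel-estimate-half-gaps} with $\delta_{\rm par}=\delta|\mathcal D^{(r)}|/(2nm)$. This estimates every $E^{(<P_{\mathcal D})}_{\mathcal D,\ell;k}$ to accuracy $\eta$. \Cref{prop:parallel-patch} then converts these into estimates $y_\rho$ of $F^{\rm par}_{\rho,R}(\lambda)$ with error $\le\eps/(2C_\Gamma C_y)$. A union bound over $(r,\ell)$ gives total failure probability at most $\sum_{r,\ell}\delta|\mathcal D^{(r)}|/(2nm)=\delta/2$, and combining with the conditioning event yields joint success probability at least $1-\delta$. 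The per-run cost $\bigO(\log(4nm/\delta)/\eta)$ of evolution time and $\bigO(\log^2(1/\eps)\log(n/\delta))$ experiments, multiplied by $\chi m$, gives the stated totals.

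The remaining step is the inversion. I will invoke \Cref{thm:local-inverse} with $F$ replaced by $F^{\rm par}=(F^{\rm par}_{\rho,R})_\rho$, and I expect this transfer to be the main point to verify. \Cref{thm:local-inverse} rests on four facts: the limiting Jacobian is $X$ (\Cref{lem:local-Walsh}), the scalar correction bound \Cref{lem:scalar-correction}, the derivative bounds \Cref{lem:row-derivatives}, and the sparsity $\partial_a F_\rho=0$ for $\supp(P_a)\not\subseteq B_R(k)$. Each of these depends only on the patch charge and on the recursion order being at most $n_*(\nu)$, and the patch charges coincide with $N_{\rho,B}$ by \Cref{rem:applicability_to_parallel}. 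So they hold verbatim at order $n_{\rm par}$, and \Cref{lem:Jacobian-perturbation,lem:Hessian-conditioning} give a contraction for $\nu\ge\nu_{\rm opt}=\bigO(JV_R)=\bigO(\log^d(1/\eps))$. With $r_y=1/(2C_\Gamma C_y)$ and $\norm{y-F^{\rm par}(\lambda)}_\infty\le\eps/(2C_\Gamma C_y)$, I run the iteration from $\mu^{(0)}=0\in\cU$ (using $\norm\lambda_\infty\le1$) for $\bigO(\log(1/\eps))$ steps. \Cref{eq:stability} then gives error $\tfrac34\eps$, and the contraction error adds at most $\eps/4$. The field condition is $\nu\ge\max\{\nu_{\rm opt},\nu_{\rm par}\}$, whose scaling is read off \Cref{eq:parallel-field-scaling}. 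Finally, the classical cost follows from \Cref{prop:classical-row-evaluation}: each iteration evaluates $nm$ rows by connected-cluster enumeration to degree $\bigO(\log(1/\eps))$, which is polynomial in $n$, $1/\eps$ and $\log(1/\delta)$. This cost is independent of $\nu$, because $n_{\rm par}$ is fixed by $\eta$ rather than by $\nu$.
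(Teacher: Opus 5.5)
Your proposal follows the paper's own proof, namely the argument in \Cref{subsec:parallel-resources} together with Appendix~\ref{app:classical-evaluation}, essentially step by step. The choices match: $\eta=\eps/(4C_\Gamma C_y)$, $K=P=n_{\rm par}$, $R=(K+1)r_0$, separation $3(n_{\rm par}+1)r_0$, and budgets $\delta_{\rm par}=\delta|\mathcal D^{(r)}|/(2nm)$. So do the transfer of \Cref{thm:local-inverse} to $F^{\rm par}$ via \Cref{rem:applicability_to_parallel} and the zero initializer. Your residue-mod-$L$ coloring is a valid concrete instance of the paper's $\chi\le C_d(1+L)^d$.

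\textbf{Where the argument does not close.} The gap is in the final accuracy bookkeeping. You spend the whole budget $\tfrac34\eps+\tfrac14\eps$ on the \emph{exact} iteration $\mathcal T_y$. However, the algorithm whose cost you bound is a different one: it uses the degree-$L$ truncations $F^{[L]}_\rho$ and $\mathsf J^{[L]}$ from \Cref{prop:classical-row-evaluation}. You cannot avoid this, because exact evaluation of $F^{\rm par}_{\rho,R}$ by dense simulation of $B_R(k)$ costs $2^{\bigO(R^d)}$, which is not polynomial in $1/\eps$ for $d\ge2$. The output $\widehat\lambda$ is therefore produced by a perturbed map $\widetilde{\mathcal T}_y$, and your accuracy claim does not cover it.

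\textbf{How to repair it.} You need the inexact-iteration step of Appendix~\ref{app:classical-evaluation}:
\begin{enumerate}
\item Bound the per-step map error $\tau_\Phi$ via \Cref{lem:classical-gradient-error}, plus a solver tolerance if you apply $\Gamma^{-1}$ iteratively.
\item Check that the implemented map still sends $\cU$ into itself, which holds because $5/6+\tau_\Phi<1$.
\item Iterate $\|\widetilde\mu^{(t+1)}-\mu^\star(y)\|_\infty\le\frac13\|\widetilde\mu^{(t)}-\mu^\star(y)\|_\infty+\tau_\Phi$.
\end{enumerate}
This forces you to reserve part of the last $\eps/4$ for the $\tfrac32\tau_\Phi$ term. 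The paper takes $\tau_\Phi\le\eps/12$ and $N_{\rm out}=\lceil\log_3(16/\eps)\rceil$. With these choices $L$ remains $\bigO(\log(1/\eps))$, so your cost claim is unaffected once the budget is re-split.

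\textbf{A smaller point.} The Hoeffding step in the proof of \Cref{thm:Walsh-conditioning} treats all $\le qm$ summands of $\Gamma_{ab}$ as independent. For fixed $\ell$ you therefore also need independence across the $\le q$ sites $k\in\supp(P_a)\cap\supp(P_b)$, not only across $\ell$. This does hold: such sites are within distance $r_0<L$ of each other, so they lie in different color classes and use independently drawn patterns. Alternatively, group the summands by $\ell$ and apply Hoeffding to $m$ independent sums, each bounded by $q$.
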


The parallel reconstruction fits the measured transition energies using \(F_{\rho,R}^{\rm par}\), evaluated after \(n_{\rm par}\) recursion steps. The difference between these data and the patch values is controlled by the estimation error and \Cref{eq:parallel-patch-error}. The sequential reconstruction uses \(F_{\rho,R}\), evaluated after \(s\) steps. For a fixed  \(\rho\), both patch functions are constructed from the same local Hamiltonian and onsite axes; only the prescribed recursion order changes. The derivative bounds and conditioning estimates used for reconstruction hold uniformly over every allowed order. Because the defects within each batch are sufficiently separated, the forward-model row associated with any one defect is determined locally and has the same conditioning structure as its single-defect counterpart, up to the controlled patch and dynamical errors. Consequently, the same connected-cluster evaluation and fixed-preconditioned iterative fitting procedure applies, with the sequential order \(s\) replaced by \(n_{\rm par}\). Since \(n_{\rm par}=\bigO(\log(1/\eps))\), the resulting classical runtime is polynomial in \(n\), \(1/\eps\), and \(\log(1/\delta)\), independently of the actual field strength \(\nu\). See Appendix~\ref{app:classical-evaluation} for a detailed discussion.

\section*{Statement on the Use of Artificial Intelligence}
The Hamiltonian learning protocol, the idea of applying ADHH prethermalization analysis to the protocol, and the overall proof framework, including the identification and formulation of the principal theorems and lemmas, were developed entirely by the authors. Generative artificial intelligence tools (ChatGPT~5 and 6) assisted the authors in deriving and refining some of the proofs and in drafting and revising portions of the manuscript. The authors reviewed and verified all AI-assisted arguments and text and take full responsibility for the content of this work. 

\section*{Acknowledgments}
We thank John Preskill for helpful discussion. M.S. acknowledges support from the Caltech Summer Undergraduate Research Fellowship and the KAIST Presidential Fellowship. J.L. acknowledges support from the Harvard Quantum Initiative and the Kwanjeong Foundation, and this work benefited from interactions with the NSF AI Institute for Artificial Intelligence and Fundamental Interactions (IAIFI), which is supported by the National Science Foundation under Cooperative Agreement PHY-2525568. Y.T. acknowledges support from the U.S. Department of Energy, Office of Science, Accelerated Research in Quantum Computing Centers, Quantum Utility through Advanced Computational Quantum Algorithms, grant no. DE-SC0025572. I.M. acknowledges support from NSF PHY-2046195 and NSF QLCI grant OMA-2120757.

\bibliographystyle{ieeetr}
\bibliography{note/References}

\newpage

\appendix

\section{Proof of analytic continuation (\Cref{sec:prelim})}
\label{app:prelim-proofs}

\begin{theorem}[\Cref{thm:uniform-normal-form}, Uniform analytic bounds for the ADHH recursion]
We assume $\nu>0$ satisfies {\(\nu\ge\bar\nu_0\)},
\begin{equation}
\label{eq:nu-requirement-app}
    \nu \ge 54\pi J_{\rm an}/\kappa_0,
\end{equation}
and $n_* \ge 1$, with $n_*$ defined in \Cref{eq:new-n*}. The ADHH recursion \Cref{eq:ADHH-recursion} is well defined for every \(Z\in\mathfrak B_{\rm an}\), including non-Hermitian \(Z\), and obeys
\begin{align}
    \norm{D_j}_{\kappa_j}&\le C\Jan,\\
    \norm{V_j}_{\kappa_j}&\le C\Jan(2/3)^j,\\
    \norm{A_j}_{\kappa_j}&\le {C\Jan(2/3)^j}/{\nu},
\end{align}
where $0\le j\le n_*$. The maps \(Z\mapsto D_j(Z),V_j(Z),A_j(Z)\) are analytic on the interior of \(\mathfrak B_{\rm an}\).  
\end{theorem}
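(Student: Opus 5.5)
The plan is to reproduce the ADHH induction, which gives geometric decay of $V_j$ and uniform boundedness of $D_j$ in the running norms $\norm{\cdot}_{\kappa_j}$, and to check at each step that no step uses Hermiticity. The only nonlinear tool is \Cref{lem:ADHH-conjugation}, whose proof expands $e^{Q}Ze^{-Q}=\sum_m \ad_Q^m(Z)/m!$ and bounds nested commutators combinatorially in the local norm. It is therefore valid for arbitrary complex interactions $Q,Z$. The linear maps $\mathcal P_{N,\omega}$ and $\mathcal S_N$ act componentwise. The bounds in \Cref{eq:projection-norms} and the estimate $\norm{\mathcal S_N(V)}_\kappa\le \frac{\pi}{\nu}\norm{V}_\kappa$ (from the integral formula in \Cref{eq:homological-inverse}, with $T=2\pi/\nu$ and each conjugation by the unitary $e^{isN}$ being norm preserving) hold verbatim for complex $V$.

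The induction hypothesis at step $j$ is
\begin{equation}
\norm{D_j}_{\kappa_j}\le \norm{D_0}_{\kappa_0}+\sum_{i<j}c\,\epsilon_i,\qquad \norm{V_j}_{\kappa_j}\le (2/3)^j\norm{V_0}_{\kappa_0},\label{eq:plan-IH}
\end{equation}
where the $\epsilon_i$ are summable increments. Since $\norm{Z}_{\kappa_0}\le\Jan$, the base case gives $\norm{D_0}_{\kappa_0}\le\Jan$ and $\norm{V_0}_{\kappa_0}\le 2\Jan$. Given the hypothesis at step $j$, set $A_j=\mathcal S_N(V_j)$, so that $\norm{A_j}_{\kappa_j}\le \pi(2/3)^j 2\Jan/\nu$; this is \Cref{eq:uniform-A-bound}. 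Next we verify the admissibility condition $3\norm{A_j}_{\kappa_j}\le\delta\kappa_j$. Here $\delta\kappa_j=\kappa_j-\kappa_{j+1}\gtrsim \kappa_0/[(j+2)(1+\log(j+2))^2]$. Because $\nu\ge\bar\nu_0$ contains the factor $270\pi/\kappa_0^2$ and $j\le n_*$, with $n_*$ chosen as in \Cref{eq:new-n*}, the factor $(2/3)^j$ (or, more crudely, the definition of $n_*$) makes this hold for all $j\le n_*$. I would follow the ADHH bookkeeping: the worst case is the polynomial loss in $j$ compared with the constant $\nu/\bar\nu_0$, and this is exactly what the cube of the logarithm in $n_*$ absorbs.

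Then apply the recursion in the form \Cref{eq:recursion-no-N}. By \Cref{lem:ADHH-conjugation} with $Q=A_j$, the pieces $\gamma_j(D_j)-D_j$ and $\gamma_j(V_j)-V_j$ are bounded by $18\norm{\cdot}_{\kappa_j}\norm{A_j}_{\kappa_j}/(\delta\kappa_j\kappa_{j+1})$. The piece $\alpha_j(V_j)-V_j$ is the average over $t\in[0,1]$ of the same estimate with $Q=tA_j$. Hence $H_{j+1}=D_j+W_j$ with
\begin{equation}
\norm{W_j}_{\kappa_{j+1}}\lesssim \frac{(\Jan+\norm{V_j}_{\kappa_j})\,\norm{A_j}_{\kappa_j}}{\delta\kappa_j\kappa_{j+1}}\lesssim \frac{\Jan}{\nu}\cdot\frac{(j+2)(1+\log(j+2))^3}{\kappa_0^2}\,\norm{V_j}_{\kappa_j}.\label{eq:plan-Wbound}
\end{equation}
The condition $j\le n_*$ together with the constant $270\pi$ in $\bar\nu_0$ makes the factor in front of $\norm{V_j}_{\kappa_j}$ at most $1/3$. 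Since $D_{j+1}=D_j+\langle W_j\rangle_N$ and $V_{j+1}=W_j-\langle W_j\rangle_N$, \Cref{eq:projection-norms} gives $\norm{V_{j+1}}_{\kappa_{j+1}}\le 2\norm{W_j}\le (2/3)\norm{V_j}_{\kappa_j}$ and $\norm{D_{j+1}}_{\kappa_{j+1}}\le\norm{D_j}_{\kappa_j}+\tfrac13(2/3)^j 2\Jan$. Summing the geometric series yields \Cref{eq:uniform-D-bound,eq:uniform-V-bound}. The condition $\nu\ge54\pi\Jan/\kappa_0$ is used at $j=0$, where $\delta\kappa_0$ is largest, to start the admissibility check.

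Analyticity follows by induction as well. $D_0$ and $V_0$ are linear in $Z$. $A_j$ is a bounded linear image of $V_j$. The conjugation $e^{\ad_{A_j}}$ is a power series of polynomial maps that converges uniformly in $\norm{\cdot}_{\kappa_{j+1}}$ on the interior of $\mathfrak B_{\rm an}$, by the same estimates that underlie \Cref{lem:ADHH-conjugation} (uniform convergence of the nested-commutator series). Uniform limits of analytic maps are analytic, and the $t$-integral in $\alpha_j$ preserves analyticity. The main obstacle is the constant chase in the admissibility and contraction steps: one must check that the enlarged $\bar\nu_0$ (with $\Jan$ in place of $\norm{D}+2\norm{V}$) dominates the $j$-dependent loss $\delta\kappa_j^{-1}\kappa_{j+1}^{-1}$ uniformly up to $n_*$. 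I would handle this by replicating ADHH's inequality chain with $\norm{V_0}_{\kappa_0}\le2\Jan$ and $\norm{D_0}_{\kappa_0}\le\Jan$ substituted throughout.
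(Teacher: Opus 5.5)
This is essentially the paper's proof. It uses the same induction on $d_j,v_j,a_j$ and the same observation that \Cref{lem:ADHH-conjugation} and $a_j\le\pi v_j/\nu$ never use Hermiticity. The split $H_{j+1}=D_j+W_j$ is also the paper's, as is the use of the definition of $n_*$ to get $m_{\rm A}(j)\le\nu/(15\pi\Jan)$. Together with $d_j+2v_j\le5\Jan$, this gives $v_{j+1}\le\frac23 v_j$. Your summable increments $v_i/3$ do yield that bound, since $d_j+2v_j\le d_0+v_0+v_0(2/3)^j\le d_0+2v_0\le5\Jan$.

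One attribution needs fixing. For $j\ge1$, the admissibility condition $3a_j\le\delta\kappa_j$ does not follow from $\nu\ge\bar\nu_0$ or from $j\le n_*$ for general $\kappa_0$. Because $\bar\nu_0\propto\kappa_0^{-2}$, those hypotheses bound $a_j$ only at order $\kappa_0^2$, against $\delta\kappa_j\propto\kappa_0$. So they close the argument only for bounded $\kappa_0$, whereas $\kappa_0$ is arbitrary. The paper instead uses $\nu\ge54\pi\Jan/\kappa_0$ at every step, not just at $j=0$. This gives $a_j\le(\kappa_0/27)(2/3)^j$, and the step then closes via the $\kappa_0$-independent inequality $\frac19(2/3)^{j+1}\le\frac{1}{1+\log(j+2)}-\frac{1}{1+\log(j+3)}$.

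Your uniform-convergence argument for analyticity works. The paper's route is simpler: since $\mathfrak I_\Lambda$ is finite dimensional, each step is a composition of linear maps with the entire map $(A,W)\mapsto e^{\ad_A}W$.
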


\begin{proof}
We will prove that we can extend the ADHH recursion and estimates to non-Hermitian interactions. This is achieved by repeating the finite-order local-norm induction used in Ref.~\cite{abanin2017rigorous} in the complex interaction spaces.

For \(0\le j\leq n_*\), introduce the notation $d_j:=\norm{D_j}_{\kappa_j}$, $v_j:=\norm{V_j}_{\kappa_j}$, and $a_j:=\norm{A_j}_{\kappa_j}$. Set $\delta\kappa_j:=\kappa_j-\kappa_{j+1}$, and $m_{\rm A}(j):={18}/{(\delta\kappa_j\,\kappa_{j+1})}$. Thus \(m_{\rm A}(j)\) is the locality-loss factor denoted $m(j)$ in ADHH and appearing in
\Cref{lem:ADHH-conjugation}. 

Define $\gamma_j(W):=e^{A_j}We^{-A_j}$, and $\alpha_j(W):=\int_0^1 e^{tA_j}We^{-tA_j}\,dt$. Ref.~\cite{abanin2017rigorous} proves that Duhamel's formula gives
\begin{align}
    H_{j+1} &=\gamma_j(D_j)+\gamma_j(V_j)-\alpha_j(V_j)=D_j+W_j,
    \label{eq:uniform-proof-H-step}
\end{align}
where
\begin{equation}
    W_j :=\bigl(\gamma_j(D_j)-D_j\bigr) +\bigl(\gamma_j(V_j)-V_j\bigr) -\bigl(\alpha_j(V_j)-V_j\bigr).
    \label{eq:uniform-proof-Wj}
\end{equation}
Note that the Duhamel's formula does not require the Hermiticity of $A_j$. Thus every change produced in one ADHH step is expressed as a sum of changes  induced by conjugation with \(A_j\). We will apply \Cref{lem:ADHH-conjugation} to bound these changes in the local norm, and these bounds indicate that the changes are local. An important point is that the proof of that lemma in Ref.~\cite{abanin2017rigorous} does not rely on the Hermiticity of $Z$ or unitarity of $Q$. Therefore, the same estimate applies without modification to complex
interactions.

\medskip
\noindent
\textbf{Proving the bounds by induction.} We will then prove the following by induction on $j$:
\begin{equation}
    \label{eq:inductive-proof-goals-analytic-extension}
    3a_j\le\delta\kappa_j,\quad d_j+2v_j\le 5J_{\rm an}, \quad v_j\leq v_0(2/3)^j ,
\end{equation}
for all $0\leq j\leq n_*$. Throughout the proof we will set $T=\frac{2\pi}{\nu}$, and use the following to control $a_j$:
\begin{equation}
\label{eq:uniform-proof-A-bound}
    a_j = \norm{\mathcal S_N(V_j)}_{\kappa_j} \le \frac1T \int_0^Tdt \int_0^t ds\, \norm{e^{is\nu N}V_je^{-is\nu N}}_{\kappa_j} \le \frac{\pi v_j}{\nu}.
\end{equation}
This estimate follows from the definition of $\mathcal{{S}_N}$ in \Cref{eq:homological-inverse}. It does not require \(V_j\) to be Hermitian, since the conjugation by \(e^{is\nu N}\) preserves the support and the local norm because \(N\) is an onsite Hermitian charge operator.

First, for $j=0$, we only need to prove $3a_0\leq \delta\kappa_0$. Because we require $\nu \geq {54\pi J_{\rm an}}/{\kappa_0}\geq {27\pi v_0}/{\kappa_0}$ as stated in \Cref{eq:nu-requirement-app}, we have
\begin{equation}
    3a_0\leq \frac{3\pi v_0}{\nu}\leq \frac{\kappa_0}{9}\leq \left(1-\frac{1}{1+\log(2)}\right)\kappa_0=\delta\kappa_0.    
\end{equation}

Next, we assume \Cref{eq:inductive-proof-goals-analytic-extension} holds for $j$ and prove it for $j+1$, for $j<n_*$. By the inductive hypothesis $3a_j\le\delta\kappa_j$ \Cref{lem:ADHH-conjugation} gives $\norm{\gamma_j(D_j)-D_j}_{\kappa_{j+1}} \le m_{\rm A}(j) a_jd_j$, and $\norm{\gamma_j(V_j)-V_j}_{\kappa_{j+1}} \le m_{\rm A}(j) a_jv_j$. The same estimate applied to \(tA_j\), followed by integration over \(0\le t\le1\), gives
\begin{align}
    \norm{\alpha_j(V_j)-V_j}_{\kappa_{j+1}} \le \int_0^1 \norm{ e^{tA_j}V_je^{-tA_j}-V_j }_{\kappa_{j+1}}\,dt \le \int_0^1 t\,m_{\rm A}(j)a_jv_j\,dt \le \frac{m_{\rm A}(j)a_jv_j}{2}.
    \label{eq:uniform-proof-alpha-V}
\end{align}
Substituting these estimates into
\Cref{eq:uniform-proof-Wj} yields
\begin{equation}
    \norm{W_j}_{\kappa_{j+1}} \le m_{\rm A}(j) a_j \left(d_j+\frac32v_j\right) \le m_{\rm A}(j)a_j(d_j+2v_j).
    \label{eq:uniform-proof-W-bound}
\end{equation}

Since \(D_{j+1}=\langle H_{j+1}\rangle_N\), we have $D_{j+1}-D_j=\langle W_j\rangle_N$, and $V_{j+1}=W_j-\langle W_j\rangle_N$. Using $\norm{\langle W\rangle_N}_{\kappa}\le\norm W_{\kappa}$ and $\norm{W-\langle W\rangle_N}_{\kappa}\le2\norm W_{\kappa}$ (stated as \Cref{eq:projection-norms} in the main text), we obtain
\begin{equation}
\label{eq:uniform-proof-D-V-increment}
    \norm{D_{j+1}-D_j}_{\kappa_{j+1}} \le m_{\rm A}(j) a_j(d_j+2v_j), \quad v_{j+1} \le 2m_{\rm A}(j)a_j(d_j+2v_j).
\end{equation}

\noindent
\textbf{Bound for $V_{j+1}$.} We then prove $v_{j+1}\leq v_0 (2/3)^{j+1}$.
We can relate $v_{j+1}$ to $v_j$ using \Cref{eq:uniform-proof-A-bound}:
\begin{equation}
\label{eq:shrink-vj-intermediate}
    v_{j+1}\leq {2\pi m_{\rm A}(j)(d_j+2v_j)v_j}/{\nu}.
\end{equation}
Note that, using an elementary inequality implicit in the paragraph below Eq.~(4.12) in \cite{abanin2017rigorous}, we have
\begin{equation}
    m_{\rm A}(j)\leq \frac{18(j+2)(1+\log(j+2))^3}{\kappa_0^2}\leq \frac{18 \nu}{\kappa_0^2 \bar\nu_0},    
\end{equation}
for all $0\leq j<n_*$, where the second inequality holds due to our choice of $n_*$ in \Cref{eq:new-n*}. With our current definition of $\bar\nu_0$ in \Cref{eq:uniform-nu0}, we then have
\begin{equation}
    m_{\rm A}(j)\leq \frac{18 \nu}{\kappa_0^2 }\frac{\kappa_0^2}{270\pi J_{\rm an}} = \frac{\nu}{15 \pi J_{\rm an}}.    
\end{equation}

Using the induction hypothesis $d_j+2v_j\leq 5 J_{\rm an}$, we have
\begin{equation}
    {2\pi m_{\rm A}(j)(d_j+2v_j)}/{\nu} \le 2/3.
    \label{eq:uniform-proof-smallness}
\end{equation}
Substituting this into \Cref{eq:shrink-vj-intermediate}, we then obtain $v_{j+1}\leq 2v_j/3$. This proves
\begin{equation}
\label{eq:vj+1-bound-appendix}
    v_{j+1}\leq v_0 (2/3)^{j+1}.
\end{equation}

\medskip
\noindent
\textbf{Bound for $D_{j+1}$.} Next, we derive a bound for $d_{j+1}$ to prove $d_{j+1}+2v_{j+1}\leq 5 J_{\rm an}$.  Note that from \Cref{eq:uniform-proof-D-V-increment} and \Cref{eq:uniform-proof-smallness}, we have
\begin{equation}
    \label{eq:D-increment-bound}
    \|D_{j+1}-D_j\|_{\kappa_{j+1}}\leq \frac{v_j}{3}\leq \frac{v_0}{3}\left(\frac{2}{3}\right)^{j},
\end{equation}
where the second inequality follows from the induction hypothesis. Because $\kappa_{j+1}\leq \kappa_j$, we have $\|D_j\|_{\kappa_{j+1}}\leq \|D_j\|_{\kappa_{j}}$. Therefore, using the triangle inequality
\begin{align}
    d_{j+1}-d_j&=\|D_{j+1}\|_{\kappa_{j+1}}-\|D_{j}\|_{\kappa_{j}}\\
    & \leq |D_{j+1}\|_{\kappa_{j+1}}-\|D_{j}\|_{\kappa_{j+1}} \\
    & \leq \|D_{j+1}-D_j\|_{\kappa_{j+1}}.    
\end{align}
Using \Cref{eq:D-increment-bound} we then have $d_{j+1}-d_j\leq v_j/3$. This ensures
\begin{equation}
    d_{j+1}+2v_{j+1}\leq d_j + \frac{v_j}{3} + \frac{4v_j}{3}< d_j+2v_j\leq 5J_{\rm an}.
\end{equation}

\medskip
\noindent
\textbf{Bound for $A_{j+1}$.} Using \Cref{eq:uniform-proof-A-bound} and the bound for $v_{j+1}$ in \Cref{eq:vj+1-bound-appendix}, we immediately have a bound $a_{j+1}\leq ({\pi v_0}/{\nu})(2/3)^{j+1}$.  We want to ensure $3a_{j+1}\leq \delta\kappa_{j+1}$. Because of the assumption \Cref{eq:nu-requirement-app}, we have $\nu \geq {27\pi v_0}/{\kappa_0}$, which leads to $a_{j+1}\leq ({\kappa_0}/{27})(2/3)^{j+1}$.

Therefore to show $3a_{j+1}\leq \delta\kappa_{j+1}$, using the definition of $\kappa_j$ in \Cref{eq:kappa-j-defn}, we only need to show
\begin{equation}
    \frac{1}{9}\left(\frac{2}{3}\right)^{j+1}\leq \frac{1}{1+\log(j+2)}-\frac{1}{1+\log(j+3)}.    
\end{equation}
This elementary inequality holds for all $j$.

\Cref{eq:inductive-proof-goals-analytic-extension} directly implies the estimates we want to prove in the statement of the theorem. For $D_j$, we have $\|D_j\|_{\kappa_j}< d_j+2v_j\leq 5J_{\rm an}$. For $V_j$, by noting $v_0\leq 2J_{\rm an}$, we have $\|V_j\|_{\kappa_j}\leq 2J_{\rm an}(2/3)^j$. For $A_j$, using \Cref{eq:uniform-proof-A-bound} and the bound for $V_j$, we have
\begin{equation}
    \|A_j\|_{\kappa_j}\leq \frac{2\pi J_{\rm an}}{\nu}\left(\frac{2}{3}\right)^j.    
\end{equation}

\medskip
\noindent \textbf{Analyticity.} Since \(\Lambda\) is finite, the space \(\mathfrak I_\Lambda\) of labelled interactions is finite dimensional. Charge projections and \(\mathcal S_N\) are complex-linear on this space. The interaction commutator in \Cref{eq:interaction-operations} is bilinear, so \(\operatorname{ad}_A\) is an endomorphism of \(\mathfrak I_\Lambda\) whose matrix entries depend linearly on \(A\). Its exponential makes \((A,W)\mapsto e^{\operatorname{ad}_A}W\) entire as an interaction-valued map. Thus each fixed ADHH step is analytic in \(\mathfrak I_\Lambda\), and so are \(Z\mapsto D_j(Z),V_j(Z),A_j(Z)\) for every fixed \(0\le j\le n_*\). Cauchy's integral formula consequently holds for these labelled families in each local norm \(\norm{\cdot}_{\kappa_j}\). Applying the triangle inequality to that integral uses the preceding uniform local-norm bounds directly, without a volume-dependent comparison to the norm of the summed operator.
\end{proof}

\section{Proof of local approximation (\Cref{sec:local-approx})}
\label{app:patch-proofs}
\begin{lemma}[\Cref{lem:scalar-correction}]
For every complex interaction \(Z\in\mathfrak B_{\rm an}\), there exists a constant \(C_Q\) such that
\begin{equation}
 \abs{E_\rho^{\Lambda}(Z)-E_{\rho,0}^{\Lambda}(Z)}, \abs{E_\rho^{B}(Z)-E_{\rho,0}^{B}(Z)}
 \le {C_Q \Jan^2}/{\nu}.
 \label{eq:scalar-correction-full-app}
\end{equation}
The bound holds for every patch $B$ and full domain $\Lambda$.
\end{lemma}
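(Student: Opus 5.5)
The plan is to write the difference as a telescoping sum over the ADHH steps and bound each increment with the estimates already established in the proof of \Cref{thm:uniform-normal-form}. By definition, $E_\rho^{\Lambda}(Z)=\ell_\rho(D_s)$ and $E_{\rho,0}^{\Lambda}(Z)=\ell_\rho(D_0)$, where $D_0=\langle Z\rangle_{N_\rho}$ and $D_s$ is produced by the recursion with input $Z$ and charge $N_\rho$ after $s\le n_*$ steps. The functional $\ell_\rho$ is linear. Therefore
\begin{equation*}
    E_\rho^{\Lambda}(Z)-E_{\rho,0}^{\Lambda}(Z)=\sum_{j=0}^{s-1}\ell_\rho(D_{j+1}-D_j).
\end{equation*}
By \Cref{eq:ell-local-bound-general-main}, each term is at most $\norm{D_{j+1}-D_j}_{\kappa_{j+1}}$. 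This holds for any $\kappa>0$ and for complex interactions, because only $\sum_{X\ni k}\norm{W_X}$ is used and $e^{\kappa|X|}\ge1$.

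For each increment, I will use \Cref{eq:uniform-proof-D-V-increment}. It gives
\begin{equation*}
    \norm{D_{j+1}-D_j}_{\kappa_{j+1}}\le m_{\rm A}(j)\,a_j\,(d_j+2v_j).
\end{equation*}
In the appendix proof, this bound was weakened to $v_j/3$ through \Cref{eq:uniform-proof-smallness}. That weakening loses the factor $1/\nu$, so here I keep the $a_j$ factor. By \Cref{eq:uniform-proof-A-bound}, $a_j\le \pi v_j/\nu\le 2\pi\Jan(2/3)^j/\nu$. The induction also gives $d_j+2v_j\le5\Jan$. The remaining issue is $m_{\rm A}(j)$, which grows like $(j+2)(1+\log(j+2))^3/\kappa_0^2$. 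Against the geometric factor $(2/3)^j$, the series $\sum_j m_{\rm A}(j)(2/3)^j$ converges to a constant depending only on $\kappa_0$. Summing over $j$ then gives
\begin{equation*}
    |E_\rho^{\Lambda}(Z)-E_{\rho,0}^{\Lambda}(Z)|\le \frac{10\pi\Jan^2}{\nu}\sum_{j\ge0}m_{\rm A}(j)(2/3)^j=:\frac{C_Q\Jan^2}{\nu},
\end{equation*}
uniformly in $s$ and in $\Lambda$.

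The patch statement needs no separate argument. The patch recursion starts from $\mathcal E_B Z$, and $\norm{\mathcal E_B Z}_{\kappa_0}\le\norm Z_{\kappa_0}\le\Jan$ by \Cref{eq:interaction-patch-restriction}. The charge $N_{\rho,B}$ is again an onsite integer-valued charge. Hence \Cref{thm:uniform-normal-form} and all the intermediate bounds above apply verbatim, and the same telescoping argument gives the bound with the same constant for every patch $B$.

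The main obstacle is the first step of the telescope. At $j=0$, the factor $m_{\rm A}(0)$ must be a fixed constant, and $a_0\le\pi v_0/\nu$ with $v_0\le2\Jan$, so this term is also $O(\Jan^2/\nu)$. I also need that every $m_{\rm A}(j)$ involved is independent of $\nu$. This holds because $\kappa_j$ depends only on $j$ and $\kappa_0$, so $C_Q$ depends only on $\kappa_0$. The key point is to avoid using the weakened bound \Cref{eq:uniform-proof-smallness}, which would only give $O(\Jan)$.
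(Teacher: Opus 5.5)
Your proposal is correct and follows essentially the same route as the paper. You telescope $\ell_\rho(D_s)-\ell_\rho(D_0)$ over the ADHH steps, bound each increment by $m_{\rm A}(j)\,a_j\,(d_j+2v_j)$ while keeping the factor $a_j\le\pi v_j/\nu$, sum $m_{\rm A}(j)(2/3)^j$ to a constant depending only on $\kappa_0$ (giving $C_Q=10\pi C_{\kappa_0}$), and handle the patch case via $\norm{\mathcal E_B Z}_{\kappa_0}\le\norm{Z}_{\kappa_0}$ and the same uniform induction bounds.
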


\begin{proof}
Let \(\Omega\) denote either the full domain \(\Lambda\) or a patch \(B\), and let \(Z_\Omega\) be the interaction entering the corresponding ADHH recursion. Thus \(Z_\Lambda=Z\) and \(Z_B=\mathcal E_BZ\), with \(\mathcal E_B\) defined in \Cref{eq:interaction-patch-restriction}. For each label \(X\), the map \(\mathcal E_B\) takes the normalized partial trace over \(X\setminus B\) and inserts the identity there. The resulting component acts only in \(X\cap B\), but keeps the original label \(X\) and the weight \(e^{\kappa_0|X|}\) in the local norm. Since this map decreases the norm of each component, the same local-norm estimate applies to both choices of \(\Omega\). We suppress the labels \(\rho\) and \(\Omega\) below. By \Cref{eq:interaction-patch-restriction}, $\norm{Z_\Omega}_{\kappa_0} \le \norm{Z}_{\kappa_0} \le \Jan$.
 
As in the proof of~\Cref{thm:ADHH-normal-form}, we define $d_j:=\norm{D_j}_{\kappa_j}$, $v_j:=\norm{V_j}_{\kappa_j}$, and $a_j:=\norm{A_j}_{\kappa_j}$. The induction in the proof of \Cref{thm:uniform-normal-form} gives, for \(0\le j\le n_*\),
\begin{equation}
 v_j\le v_0(2/3)^j,
 \quad
 d_j+2v_j\le 5\Jan,
 \quad
 a_j\le\frac{\pi}{\nu}v_j,
 \label{eq:scalar-uniform-bounds}
\end{equation}
and \(v_0\le2\norm{Z_\Omega}_{\kappa_0}\le2\Jan\).

Since \(\ell_\rho\) vanishes on the charge-changing component, i.e., $\ell_\rho((I-\mathcal{P}_0)(Z)) = 0$ for $\mathcal{P}_0$ defined in \Cref{eq:charge-projection}, $E_{\rho,0}^{\Omega}(Z)=\ell_\rho(Z_\Omega)=\ell_\rho(D_0)$ and $E_\rho^\Omega(Z)=\ell_\rho(D_{{s}})$. Consequently, using \Cref{eq:ell-local-bound-general-main},
\begin{align}
    \abs{E_\rho^\Omega(Z)-E_{\rho,0}^\Omega(Z)} & =\abs{\sum_{j=0}^{{s}-1}\ell_\rho(D_{j+1}-D_j)}\\
    & \le \sum_{j=0}^{{s}-1}\abs{\ell_\rho(D_{j+1}-D_j)} \\
    & \le \sum_{j=0}^{{s}-1} \norm{D_{j+1}-D_j}_{\kappa_{j+1}}.
    \label{eq:scalar-telescoping}
\end{align}

By \Cref{eq:uniform-proof-D-V-increment},
\begin{equation}
    \norm{D_{j+1}-D_j}_{\kappa_{j+1}}\le m_{\rm A}(j) a_j(d_j+2v_j),
    \label{eq:scalar-D-increment}
\end{equation}
where $m_{\rm A}(j):={18}/{(\delta\kappa_j\,\kappa_{j+1})}$. As before, we introduce $\kappa_j$ \Cref{eq:kappa-j-defn} to monitor the loss of locality during the ADHH recursion.  Set \(L_j:=1+\log(j+1)\), so that \(\kappa_j=\kappa_0/L_j\).  Then
\begin{align}
    m_{\rm A}(j)  = \frac{18 L_jL_{j+1}^2}{\kappa_0^2 \log((j+2)/(j+1))} \le \frac{18 (j+2)\bigl[1+\log(j+2)\bigr]^3}{\kappa_0^2},
\end{align}
where we used \(\log((j+2)/(j+1))\ge1/(j+2)\) and \(L_j\le L_{j+1}\).  Combining \Cref{eq:scalar-uniform-bounds,eq:scalar-D-increment} therefore gives
\begin{equation}
    \norm{D_{j+1}-D_j}_{\kappa_{j+1}} \le
    \frac{5\pi\Jan}{\nu}\,m_{\rm A}(j)v_j \le
    \frac{10\pi\Jan^2}{\nu}\,
    m_{\rm A}(j)\left(\frac23\right)^j.
    \label{eq:scalar-increment-final}
\end{equation}
The sum of the $j$-dependent part above over $j$ is upper bounded by a constant:
\begin{equation}
    \sum_{j=0}^{s-1}m_{\rm A}(j)\left(\frac23\right)^j
    \le \frac{18}{\kappa_0^2}
    \sum_{j=0}^{\infty} (j+2)\bigl[1+\log(j+2)\bigr]^3 \left(\frac23\right)^j
    =:C_{\kappa_0}<\infty.
    \label{eq:scalar-summability}
\end{equation}
In particular, this bound is independent of \(n_*\).  Substitution into
\Cref{eq:scalar-telescoping} yields
\begin{equation}
    \abs{E_\rho^\Omega(Z)-E_{\rho,0}^\Omega(Z)}\le \frac{10\pi C_{\kappa_0} \Jan^2}{\nu}.
\end{equation}
Taking \(C_Q:=10\pi C_{\kappa_0}\) and then \(\Omega=\Lambda\) or \(\Omega=B\) proves \Cref{eq:scalar-correction-full-app}.  All estimates used above hold for complex interactions and are uniform in the domain $\mathfrak{B}_{\rm an}$, so the same constant applies to the full system $\Lambda$ and every patch $B$.
\end{proof}

\begin{lemma}[\Cref{lem:support-calculus}, Closure properties of connected $p$-words]
Let $W_p\in \mathcal C_{p,H},W_q\in \mathcal C_{q,H}$. Then:
\begin{enumerate}[label=(\roman*)]
\item The connected $p$-word space is closed under $\mathcal P_{N_\rho,\omega}$ and $\mathcal S_{N_\rho}$:
\begin{equation}
    \mathcal P_{N_\rho,\omega}(W_p), \mathcal S_{N_\rho}(W_p) \in \mathcal C_{p,H}.    
\end{equation}
\item The commutator of connected $p$-word and $q$-word generate connected $p+q$-word:
\begin{equation}
    [W_p,W_q] \in \mathcal C_{p+q,H}.    
\end{equation}

\item If $A(z)=\sum_{p\ge 1} A_pz^p,B(z)={B_0+}\sum_{p\ge 1} B_pz^p$, satisfies $A_p,B_p\in \mathcal C_{p,H}$ for every $p{\geq 1}$ {and $B_0\in \mathrm{span}(N_\rho)$}, we have:
\begin{equation}
    e^{A(z)}B(z)e^{-A(z)} ={B_0+} \sum_{p\ge 1} C_pz^p,
\end{equation}
where $C_p\in \mathcal C_{p,H}\;(\forall p \ge 1)$.
\end{enumerate}
\end{lemma}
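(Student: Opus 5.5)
The plan is to reduce all three items to two elementary facts about the onsite charge $N_\rho=\sum_x N_x$. Each $N_x$ is a Hermitian operator on one site, so for every $\theta$ the map $W\mapsto e^{i\theta N_\rho}We^{-i\theta N_\rho}$ conjugates each site by a single-qubit unitary. Conjugation by a single-qubit unitary sends a traceless single-qubit operator to a traceless one, that is, to a combination of $X,Y,Z$ on the same site. Hence for any Pauli $Q_i\in\mathcal Q_{a_i}$, the conjugate $e^{i\theta N_\rho}Q_ie^{-i\theta N_\rho}$ lies in $\operatorname{span}\mathcal Q_{a_i}$.

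\textbf{Item (i).} For a word $W_{\mathbf a,\mathbf Q}=Q_1\cdots Q_p$ we have $e^{i\theta N}W_{\mathbf a,\mathbf Q}e^{-i\theta N}=\prod_i (e^{i\theta N}Q_ie^{-i\theta N})$. Expanding multilinearly writes this as a combination of words $W_{\mathbf a,\mathbf Q'}$ with the same tuple $\mathbf a$. Connectedness depends only on $\mathbf a$, so every conjugate of an element of $\mathcal C_{p,H}$ stays in $\mathcal C_{p,H}$. Since $\mathcal C_{p,H}$ is a finite-dimensional, hence closed, subspace, the integral defining $\mathcal P_{N_\rho,\omega}$ in \Cref{eq:charge-projection} stays in it. Because $N_\rho$ has finite integer spectrum, only finitely many $\omega$ contribute, so $\mathcal S_{N_\rho}$ in \Cref{eq:homological-inverse} is a finite linear combination of these projections and also preserves $\mathcal C_{p,H}$. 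The same argument gives $[W,N_\rho]=-i\,\frac{d}{d\theta}\big|_{\theta=0}e^{i\theta N_\rho}We^{-i\theta N_\rho}\in\mathcal C_{p,H}$, which is needed in (iii).

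\textbf{Item (ii).} By bilinearity it suffices to treat two words $W=W_{\mathbf a,\mathbf Q}$ and $W'=W_{\mathbf b,\mathbf Q'}$. Let $U=\bigcup_i\supp(P_{a_i})$ and $U'=\bigcup_j\supp(P_{b_j})$. If $U\cap U'=\varnothing$, then $W$ and $W'$ act on disjoint sites and $[W,W']=0$. Otherwise some $\supp(P_{a_i})$ meets some $\supp(P_{b_j})$. That single edge joins the two connected overlap graphs, so the concatenated tuples $(\mathbf a,\mathbf b)$ and $(\mathbf b,\mathbf a)$ are both connected $(p+q)$-words. Then $[W,W']=W_{(\mathbf a,\mathbf b),(\mathbf Q,\mathbf Q')}-W_{(\mathbf b,\mathbf a),(\mathbf Q',\mathbf Q)}\in\mathcal C_{p+q,H}$.

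\textbf{Item (iii).} Expand $e^{A}Be^{-A}=\sum_{m\ge0}\operatorname{ad}_{A}^m(B)/m!$. This series converges absolutely for every fixed $z$ because everything lives in a finite-dimensional space, so coefficients of $z^p$ may be extracted term by term, either as an analytic function or as a formal power series. A monomial of total degree $p\ge1$ is $\operatorname{ad}_{A_{p_1}}\cdots\operatorname{ad}_{A_{p_m}}(B_{r})$ with $p_1+\dots+p_m+r=p$. There are two cases.
\begin{itemize}
\item If $r\ge1$, iterating (ii) puts the monomial in $\mathcal C_{p,H}$.
\item If $r=0$ and $m\ge1$, the innermost factor is $[A_{p_m},B_0]$, a multiple of $[A_{p_m},N_\rho]\in\mathcal C_{p_m,H}$ by the derivative remark in (i). Applying (ii) to the remaining commutators again gives $\mathcal C_{p,H}$.
\end{itemize}
The only degree-zero term is $m=0$, $r=0$, which is exactly $B_0$. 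Summing the finitely many monomials of each degree $p$ gives $C_p\in\mathcal C_{p,H}$.

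The step needing most care is the $B_0$ term in (iii): $N_\rho$ is not itself a connected word, so the argument must show that commuting a word with $N_\rho$ adds no degree and stays in the same word space. The derivative-of-conjugation argument from (i) handles this.
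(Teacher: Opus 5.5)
Your proposal is correct and follows essentially the same route as the paper. For (i), onsite conjugation maps each $\mathcal Q_{a_i}$ into its own span; for (ii), either one overlap edge joins the two words or their supports are disjoint; for (iii), you expand in $\operatorname{ad}$ and treat the $B_0$ term separately. The only difference is that the paper writes $[A_{p_1},\beta N]=-\beta\sum_\omega\omega\,\mathcal P_{N,\omega}(A_{p_1})$ instead of differentiating the conjugation. One harmless sign slip: since $\frac{d}{d\theta}\big|_{\theta=0}e^{i\theta N_\rho}We^{-i\theta N_\rho}=i[N_\rho,W]$, the identity should read $[W,N_\rho]=+i\,\frac{d}{d\theta}\big|_{\theta=0}e^{i\theta N_\rho}We^{-i\theta N_\rho}$, which does not affect membership in $\mathcal C_{p,H}$.
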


\begin{proof} We will prove the three statements in the lemma one by one.
\begin{enumerate}[label=(\roman*)]
\item By linearity we only need to consider the case where $W_p$ is the evaluation connected $p$-word $(\mathbf{a},\mathbf{Q})$, where $\mathbf{Q}=(Q_1,\cdots,Q_p)$, as in \Cref{def:p-word}. Because $\rho$ is fixed in this lemma we will omit it in the notation. Recall the definition of $\mathcal P_{N,\omega}$:
\begin{equation}
    \mathcal P_{N,\omega}(W_p)=\frac{1}{2\pi}\int_0^{2\pi}e^{-i\omega\theta}e^{i\theta N}W_pe^{-i\theta N}\,d\theta.   
\end{equation}
Note that $e^{i\theta N}W_p e^{-i\theta N}=e^{i\theta N}Q_1 e^{-i\theta N}\cdots e^{i\theta N}Q_p e^{-i\theta N}$. Because $N$ is a sum of single-qubit operators, each $e^{i\theta N}Q_i e^{-i\theta N}=\sum_{j}\alpha_{ij}Q_{i,j}$, where $Q_{i,j}$ is a Pauli operator with exactly the same support as $Q_i$. Therefore
\begin{equation}
    e^{i\theta N}W_p e^{-i\theta N}=\sum_{j_1,\cdots,j_p}\alpha_{1,j_1}\cdots\alpha_{p,j_p} Q_{1,j_1}\cdots Q_{p,j_p}\in \mathcal{C}_{p,H},
\end{equation}
since each $Q_{1,j_1}\cdots Q_{p,j_p}$ is the evaluation of a $p$-word $(\mathbf{a},(Q_{1,j_1},\cdots, Q_{p,j_p}))$. Consequently $\mathcal P_{N,\omega}(W_p)\in \mathcal{C}_{p,H}$. The same argument also shows $\mathcal S_{N}(W_p)\in \mathcal{C}_{p,H}$ by its definition in \Cref{eq:homological-inverse}.

\item By the bilinearity of the commutator, we only need to consider the case where $W_p$ and $W_q$ are evaluations of the connected $p$-word $(\mathbf{a},(Q_1,\cdots,Q_p))$ and connected $q$-word $(\mathbf{a}',(Q'_1,\cdots,Q'_q))$ respectively. If the overlap graph of $Q_1,\cdots,Q_p,Q'_1,\cdots,Q'_q$ (as defined in \Cref{def:p-word}) is connected then
\begin{equation}
    [W_p,W_q]=[Q_1Q_2\cdots Q_p,Q'_1Q'_2\cdots Q'_q]=Q_1Q_2\cdots Q_pQ'_1Q'_2\cdots Q'_q-Q'_1Q'_2\cdots Q'_qQ_1Q_2\cdots Q_p    
\end{equation}
is in $\mathcal{C}_{p+q,H}$ because both terms above are in $\mathcal{C}_{p+q,H}$. On the other hand, if the overlap graph is disconnected, then because $(\mathbf{a},(Q_1,\cdots,Q_p))$ and $(\mathbf{a}',(Q'_1,\cdots,Q'_q))$ are connected words, we must have $\mathrm{supp}(W_p)\cap \mathrm{supp}(W_q)=\varnothing$, and hence $[W_p,W_q]=0\in \mathcal{C}_{p+q,H}$.

\item Finally, using BCH-formula:
\begin{equation}
    e^{A(z)}B(z)e^{-A(z)}=\sum_{m=0}^{\infty}\frac1{m!}\ad_{A(z)}^m(B(z)).    
\end{equation}
Then, the degree $p$ coefficient of $e^{A(z)}B(z)e^{-A(z)}$, $C_p$ are represented as
\begin{equation}
\label{eq:Cp_expression}
    C_p=\sum_{m=0}^p\frac{1}{m!}\sum_{\substack{q\ge 0,p_1,\dots,p_m\ge 1, \\ q+p_1+\cdots+p_m = p}}\ad_{A_{p_m}}\cdots\ad_{A_{p_1}}(B_{q}).    
\end{equation}
Using (ii), {for $q\geq 1$,} we have $\ad_{A_{p_m}}\cdots\ad_{A_{p_1}}(B_{q}) \in \mathcal C_{p_m+\cdots+p_1+q,H}=\mathcal C_{p,H}$. 
{For $q=0$, because $B_0\in\mathrm{span}(N)$, we can write $B_0=\beta N$ 
\begin{equation}
    \ad_{A_{p_1}}(B_{0})=[A_{p_1},\beta N]=-\beta\sum_\omega\omega\,\mathcal P_{N,\omega}(A_{p_1})\in\mathcal C_{p_1,H}.    
\end{equation}
Therefore $\ad_{A_{p_m}}\cdots\ad_{A_{p_1}}(B_{0}) \in \mathcal C_{p_m+\cdots+p_1,H}=\mathcal C_{p,H}$}. Therefore, {each summand in \Cref{eq:Cp_expression} is in $\mathcal C_{p,H}$, and as a result} $C_p \in \mathcal C_{p,H}$.
\end{enumerate}
\end{proof}

\begin{lemma}[\Cref{lem:connected-word-expansion}, Connected-word expansion]
We consider ADHH recursion \Cref{eq:ADHH-recursion} for \(H=H(\mu)=\sum_{a\in\cA}\mu_aP_a\) and static field $\nu N_\rho$. For every ADHH step $0\le j\le n_*$ and every $p\ge1$, the degree-$p$ coefficient of $D_{\rho,j}^{\Lambda}(zH)$ is an element of $\mathcal C_{p,H}$:
\begin{equation}
    D_{\rho,j}^{\Lambda}(zH) = \sum_{p\ge 1}D_{p,j}z^p,
\end{equation}
where $D_{p,j} \in \mathcal C_{p,H}$.
More precisely, each \(D_{p,j}\) admits a finite expansion into terms
$c_{\mathbf{a},\mathbf{Q}}\mu_{a_1}\cdots\mu_{a_p}W_{\mathbf a,\mathbf Q},$ where \((\mathbf a,\mathbf Q)\) is a connected \(p\)-word, and for fixed \(j\), \(\nu\), and $N$, the scalar $c_{\mathbf{a},\mathbf{Q}}$ is independent of \(\mu\) and depends only on $\mathbf{a},\mathbf{Q}$.
\end{lemma}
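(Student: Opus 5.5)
We prove a slightly stronger statement by induction on the step index \(j\): for every \(0\le j\le n_*\), each of \(H_j(zH)\), \(D_j(zH)\), \(V_j(zH)\) and \(A_j(zH)\) is an analytic function of \(z\) on \(|z|<2\). Each vanishes at \(z=0\), and its degree-\(p\) Taylor coefficient lies in \(\mathcal C_{p,H}\). Moreover, that coefficient is a finite sum of terms \(c_{\mathbf a,\mathbf Q}\mu_{a_1}\cdots\mu_{a_p}W_{\mathbf a,\mathbf Q}\) with \(\mu\)-independent scalars. Analyticity and convergence of the Taylor series on \(|z|<2\) come from \Cref{thm:uniform-normal-form}, since \(zH(\mu)\in\mathfrak B_{\rm an}\) for \(|z|<2\) when \(\mu\in\cU\). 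Inside the unit disc the formal manipulations below are therefore legitimate, and identifying coefficients there suffices.

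\textbf{Base case \(j=0\).} Here \(H_0(zH)=zH=\sum_a z\mu_aP_a\). Its only Taylor coefficient is the degree-one coefficient \(\sum_a\mu_aP_a\), a combination of connected \(1\)-words \(((a),(P_a))\) with \(c=1\). By \Cref{lem:support-calculus}(i) applied coefficientwise, \(D_0=\mathcal P_{N,0}(H_0)\), \(V_0=H_0-D_0\) and \(A_0=\mathcal S_N(V_0)\) all have degree-\(p\) coefficients in \(\mathcal C_{p,H}\). For the refined form, note that conjugating each letter \(Q_i\) by \(e^{i\theta N}\) yields a combination of Pauli letters \(Q_{i,j}\) with the same support. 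The coefficients \(\alpha_{ij}(\theta)\) depend only on \(N\) and the letter, not on \(\mu\). So \(\mathcal P_{N,\omega}\) and \(\mathcal S_N\) map \(\mu_{a_1}\cdots\mu_{a_p}W_{\mathbf a,\mathbf Q}\) to \(\mu_{a_1}\cdots\mu_{a_p}\) times a \(\mu\)-independent combination of \(W_{\mathbf a,\mathbf Q'}\), with the same tuple \(\mathbf a\). The scalars there depend on \(\nu\) through the factor \(1/(\nu\omega)\), which is allowed.

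\textbf{Inductive step.} Assume the claim for step \(j\) and write
\[
\nu N+H_{j+1}=e^{A_j}(\nu N+H_j)e^{-A_j}.
\]
Apply \Cref{lem:support-calculus}(iii) with \(A(z)=A_j(zH)\) and \(B(z)=\nu N+H_j(zH)\), where \(B_0=\nu N\in\mathrm{span}(N)\). The lemma gives \(\nu N+\sum_{p\ge1}C_pz^p\) with \(C_p\in\mathcal C_{p,H}\), so \(H_{j+1}\) has coefficients \(C_p\in\mathcal C_{p,H}\) and no constant term. Then \(D_{j+1}\), \(V_{j+1}\) and \(A_{j+1}\) follow from part (i) exactly as in the base case.

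For the refined form, track the expansion \eqref{eq:Cp_expression}. Each summand \(\ad_{A_{p_m}}\cdots\ad_{A_{p_1}}(B_q)\) is multilinear in coefficients that are, by hypothesis, sums of \(\mu\)-monomials times words. The commutator of \(\mu_{\mathbf a}W_{\mathbf a,\mathbf Q}\) and \(\mu_{\mathbf a'}W_{\mathbf a',\mathbf Q'}\) is \(\mu_{\mathbf a}\mu_{\mathbf a'}\) times the difference of the two concatenated words. It is either zero or a connected \((p+q)\)-word with concatenated tuple, as shown in the proof of part (ii). The \(q=0\) terms \([A_{p_1},\nu N]\) reduce by the identity used in part (iii) to \(\mathcal P_{N,\omega}\) images, which are handled above. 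The prefactors \(1/m!\), powers of \(\nu\), and charge-projection scalars are all \(\mu\)-independent, and only finitely many terms occur at each degree. Collecting these terms gives the claimed expansion of \(D_{p,j+1}\); the statement of the lemma is the \(D_j\) part of this claim.

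\textbf{Main obstacle.} The delicate point is justifying the coefficientwise manipulation of \(e^{\ad_{A(z)}}\) as an infinite series in \(z\). We handle it using finite-dimensionality of \(\mathfrak I_\Lambda\), which makes \((A,W)\mapsto e^{\ad_A}W\) entire. The composite \(z\mapsto H_{j+1}(zH)\) is then analytic near \(0\), and its Taylor coefficients are obtained by the Cauchy product of the convergent series. A second point needing care is that the coefficients \(c_{\mathbf a,\mathbf Q}\) are well defined only up to the non-uniqueness of word representations. We avoid this by defining them through the explicit recursion above rather than through \(D_{p,j}\) itself.
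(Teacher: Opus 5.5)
Your proposal is correct and follows essentially the same induction as the paper. Part (i) of \Cref{lem:support-calculus} handles $D_j,V_j,A_j$, part (iii) with $B_0=\nu N$ handles $H_{j+1}$, and the $\mu$-monomials and label tuples are tracked through same-support Pauli-letter expansions and concatenating commutators. The only difference is that you justify the coefficientwise manipulations analytically, via the entireness of $(A,W)\mapsto e^{\ad_A}W$ on the finite-dimensional $\mathfrak I_\Lambda$, whereas the paper treats $H_j,D_j,V_j,A_j$ as formal power series in $z$ with polynomial coefficients in $\mu$. Both work, and your appeal to \Cref{thm:uniform-normal-form} with $\mu\in\cU$ is unnecessary, since each fixed ADHH step is already entire in its input.
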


\begin{proof}
Suppose that \(H_0=zH\) for a complex \(z\) and \(H=H(\mu)\). We employ the ADHH recursion defined in \Cref{eq:ADHH-recursion}, with \(N=N_\rho\). Since \(H\in\mathcal C_{1,H}\), linearity and \Cref{lem:support-calculus}(i) show that the degree-\(p\) coefficients of \(H_0,D_0,V_0,A_0\) are elements of \(\mathcal C_{p,H}\) for every \(p\ge1\).  Because when $z=0$ we have $H_0=D_0=V_0=A_0=0$, the constant terms in $H_0,D_0,V_0,A_0$ are therefore all zero.

Suppose that the degree-\(p\) coefficients of \(H_j,D_j,V_j,A_j\) are elements of \(\mathcal C_{p,H}\) for every \(p\ge1\), and that their constant terms vanish. By \Cref{lem:support-calculus}(iii), with \(B_0=\nu N\), the positive-degree coefficients of \(e^{A_j}(\nu N+H_j)e^{-A_j}\) belong to the appropriate connected-word spaces. Since subtracting \(\nu N\) changes only the degree-zero coefficient, the degree-\(p\) coefficient of \(H_{j+1}\) is in \(\mathcal C_{p,H}\), and \(H_{j+1}(0)=0\).

\(D_{j+1}\) and \(V_{j+1}\) are calculated from \(H_{j+1}\) using \(\langle\,\cdot\,\rangle_{N_\rho}=\mathcal P_{N_\rho,0}\), and \(A_{j+1}\) is calculated by applying \(\mathcal S_{N_\rho}\) to \(V_{j+1}\). \Cref{lem:support-calculus}(i) shows that \(\mathcal C_{p,H}\) is closed under these maps. Hence, the degree-\(p\) coefficients of \(D_{j+1},V_{j+1},A_{j+1}\) are in \(\mathcal C_{p,H}\), and their constant terms also vanish. Therefore, by induction, for every \(j\), the degree-\(p\) coefficient of \(D_j\) is an element of \(\mathcal C_{p,H}\).

We now track the Hamiltonian coefficients $\{\mu_a\}$ in  \(H_j,D_j,V_j,A_j\) using the equivalent expression \Cref{eq:recursion-no-N} for ADHH recursion.  \Cref{eq:recursion-no-N} allows us to only access $N$ through $\mathcal P_{N,\omega}$ and $\mathcal S_N$.  Here we think of \(H_j,D_j,V_j,A_j\) as formal power series in \(z\) whose coefficients are polynomials in the independent variables \(\{\mu_a\}\). Initially, each summand \(z\mu_aP_a\) pairs one coupling factor \(\mu_a\) with one interaction label \(a\). $\mathcal P_{N,\omega}$ and $\mathcal S_N$ expand \(W_{\mathbf a,\mathbf Q}\) into linear combinations of \(W_{\mathbf a,\mathbf Q'}\), with \(Q_i'\in\mathcal Q_{a_i}\). Thus every resulting term retains the original interaction-label tuple \(\mathbf a=(a_1,\ldots,a_p)\) and coupling monomial \(\mu_{a_1}\cdots\mu_{a_p}\). Commutators between two such terms multiply their coupling monomials and concatenate their label and Pauli tuples. Consequently, the preceding induction argument ensures that the resulting terms can be represented by connected words. Thus the same induction shows that every degree-\(p\) coefficient is a homogeneous polynomial of degree \(p\) in \(\mu\). In particular, \(D_{p,j}\) has a finite expansion into terms \(c_{\mathbf a,\mathbf Q}\mu_{a_1}\cdots\mu_{a_p} W_{\mathbf a,\mathbf Q}\), with \((\mathbf a,\mathbf Q)\) a connected \(p\)-word. For fixed \(N,\nu,j\), the coefficients \(c_{\mathbf a,\mathbf Q}\) are obtained by finite sums and products of scalar weights from the recursion, all independent of \(\mu\).
\end{proof}

\begin{lemma}[\Cref{lem:coefficient-matching}]
If \(R\ge(K+1)r_0\), then $c_p^{\Lambda}=c_p^B$, where $0\le p\le K$.
\end{lemma}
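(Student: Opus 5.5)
We compare the two recursions degree by degree in $z$, using the connected-word expansion of \Cref{lem:connected-word-expansion}. Both $g_\rho^\Lambda(z)=\ell_\rho(D^\Lambda_{\rho,s}(zH))$ and $g_\rho^B(z)=\ell_\rho(\widehat D_{\rho,R}(zH_B))$ are built from the same ADHH recursion \Cref{eq:ADHH-recursion}, run for the same fixed $s$ steps. The only differences are in the input ($H$ versus $H_B=\mathcal E_BH$) and in the charge ($N_\rho$ versus $N_{\rho,B}$). The constant coefficients vanish in both cases because the output is zero at zero input. So fix $1\le p\le K$.

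\textbf{Step 1: organize the expansion by words and identify what survives $\ell_\rho$.} By \Cref{lem:connected-word-expansion}, the degree-$p$ coefficient of $D_{\rho,s}^\Lambda(zH)$ is a finite sum
\[
\sum c_{\mathbf a,\mathbf Q}\,\mu_{a_1}\cdots\mu_{a_p}\,W_{\mathbf a,\mathbf Q}
\]
over connected $p$-words. Here $c_{\mathbf a,\mathbf Q}$ is produced by a fixed sequence of the operations $\mathcal P_{N,\omega}$, $\mathcal S_N$, and commutators, applied to the ordered letters. The same expansion holds for the patch recursion, with words drawn only from $\cA_B:=\{a:\supp(P_a)\subseteq B\}$. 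By \Cref{eq:transition-functional-Xk}, $\ell_\rho$ annihilates any word whose support misses $k$. A word that survives therefore has a union of letter supports containing $k$. Connectedness of the overlap graph, together with $\mathrm{diam}\,\supp(P_a)\le r_0$, then places every letter support inside $B_{pr_0}(k)\subseteq B_{Kr_0}(k)\subseteq B$. In particular, every surviving label tuple $\mathbf a$ lies in $\cA_B$, so the coupling monomials agree between the two expansions.

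\textbf{Step 2: show the scalar weights agree for such words.} This is the main step. We must show that for a label tuple $\mathbf a\subseteq\cA_B$, the weight $c_{\mathbf a,\mathbf Q}$ computed with charge $N_\rho$ on input $zH$ equals the weight computed with charge $N_{\rho,B}$ on input $zH_B$. We plan to argue by induction on the recursion step $j$, using the formulation \Cref{eq:recursion-no-N}. In that formulation $N$ enters only through $\mathcal P_{N,\omega}$ and $\mathcal S_N$. The induction hypothesis is that the component of $H_j,D_j,V_j,A_j$ indexed by any label tuple in $\cA_B$ coincides in the two recursions.

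The inductive step has two parts. First, a product of letters from $\cA_B$ is supported in $B$, so it commutes with $N_\rho-N_{\rho,B}$. This is exactly the argument in the proof of \Cref{lem:exact-embedding}. Consequently $\mathcal P_{N_\rho,\omega}$ and $\mathcal P_{N_{\rho,B},\omega}$ agree on it, and $\mathcal S_{N_\rho}$ and $\mathcal S_{N_{\rho,B}}$ agree as well. Second, consider a commutator or conjugation term whose concatenated label tuple lies in $\cA_B$. Every sub-tuple feeding into it also lies in $\cA_B$, so the inductive hypothesis supplies identical factors.

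Input terms outside $\cA_B$ appear only in the full recursion. They can only generate tuples that contain a label outside $\cA_B$, and these are excluded by Step 1. The subtle point is that the word expansion is not unique, because a Pauli product can have several representations. To handle this, we track formal label tuples exactly as in the proof of \Cref{lem:connected-word-expansion}, treating the $\mu_a$ as independent variables. The comparison then becomes an identity of polynomials, which we finally specialize to $z\mu$.

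\textbf{Step 3: conclude.} Applying $\ell_\rho$, which is the same functional in both settings (the patch operators are extended by the identity), to the matched degree-$p$ coefficients gives $c_p^\Lambda=c_p^B$ for $1\le p\le K$. Together with the vanishing of the constant terms, this covers all $0\le p\le K$.
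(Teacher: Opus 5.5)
Your proof is correct and follows essentially the paper's argument. You use the same connectedness step to show that every word surviving $\ell_\rho$ has all its letters in $B_{pr_0}(k)\subseteq B$, together with two facts: the charges $N_\rho$ and $N_{\rho,B}$ act identically on $B$-supported operators, and the weights $c_{\mathbf a,\mathbf Q}$ do not depend on $\mu$.

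The only difference is packaging. The paper does not need your new joint induction in Step 2. It first invokes \Cref{lem:exact-embedding} to replace the patch recursion (charge $N_{\rho,B}$) by the full-lattice recursion with charge $N_\rho$ on input $zH_B$. After that, passing from $H$ to $H_B$ is simply setting $\mu_a=0$ for $a\notin\cA_B$ in the fixed-charge expansion of \Cref{lem:connected-word-expansion}.
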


\begin{proof}
The constant coefficients satisfy \(c_0^\Lambda=c_0^B=0\), because the recursion vanishes at zero input. Fix \(1\le p\le K\). By \Cref{lem:connected-word-expansion}, a nonzero contribution to \(c_p^\Lambda\) has the form \(c\,\mu_{a_1}\cdots\mu_{a_p}\ell_\rho(W_{\mathbf a,\mathbf Q})\), with \((\mathbf a,\mathbf Q)\) a connected \(p\)-word. Since \(\ell_\rho\) annihilates operators whose support does not contain \(k\), any $W_{\mathbf{a,Q}}$ having a nonzero contribution must satisfy \(k\in\supp(W_{\mathbf a,\mathbf Q})\subseteq\bigcup_iS_i\), where \(S_i:=\supp(Q_i)=\supp(P_{a_i})\).

Choose a set $S_{i_0}$ containing $k$ and set $x_0=k$. For any $S_{i_m}$ and any $x\in S_{i_m}$,  connectedness gives a simple path $S_{i_0},S_{i_1},\ldots,S_{i_m}$, for $m\le p-1$, where $S_{i_{t-1}}$ overlaps with $S_{i_t}$. Pick \(x_t\in S_{i_{t-1}}\cap S_{i_t}\) for \(t=1,\ldots,m\){, and let $x_{m+1}=x$}.  Because the diameter of each $S_{i_t}$ is at most \(r_0\), we have $\dist(x_{t-1},x_t)\leq r_0$, and consequently $\dist(k,x) \le(m+1)r_0 \le pr_0$. Since \(x\) was arbitrary, the whole union lies in \(B_{pr_0}(k)\). Hence, 
\begin{equation}
    \bigcup_{i=1}^pS_i\subseteq B_{pr_0}(k)\subseteq B_R(k)=B.
    \label{eq:cluster-radius}
\end{equation}
We first note that even though $c_p^B$ is the $p$-th Taylor coefficient of $E_\rho^B(zH_B)$, it is in fact equal to that of  $E_\rho^\Lambda(zH_B)$, because $E_\rho^B(zH_B)=E_\rho^\Lambda(zH_B)$ by \Cref{lem:exact-embedding}. Hence going from $c_p^\Lambda$ to $c_p^B$ is equivalent to replacing $H$ with $H_B$ while keeping $N$ unchanged, which is in turn equivalent to setting all $\mu_a=0$ when $\mathrm{supp}(P_a)\not\subseteq B$. Therefore, $c_p^\Lambda\neq c_p^B$ only if a nonzero contribution in $c_p^\Lambda$ \(c\,\mu_{a_1}\cdots\mu_{a_p}\ell_\rho(W_{\mathbf a,\mathbf Q})\) involves a label $a_i$ for which $S_i=\mathrm{supp}(P_{a_i})\not\subseteq B$, which contradicts \Cref{eq:cluster-radius}.
\end{proof}

\begin{lemma}[\Cref{lem:analytic-gap-bound}, Uniform analytic transition-energy bound]
For every \(\mu\in\cU\), the functions in \Cref{eq:g-functions} are analytic for \(|z|<2\) and satisfy
\begin{equation}
    \sup_{|z|\le2}
    \max\left\{\abs{g_\rho^{\Lambda}(z)},\abs{g_\rho^B(z)}\right\}
    \le C_g\Jan.
    \label{eq:analytic-gap-bound-app}
\end{equation}
Consequently,
\begin{equation}
    |c_p^{\Lambda}|\le C_g\Jan2^{-p}, \quad
    |c_p^B|\le C_g\Jan2^{-p}.
    \label{eq:Taylor-coeff-bound-app}
\end{equation}
\end{lemma}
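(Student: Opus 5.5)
The plan is to reduce everything to the uniform local-norm bound of \Cref{thm:uniform-normal-form}, combined with the elementary estimate \Cref{eq:ell-local-bound-general-main} for the transition functional.

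\textbf{Step 1: the inputs lie in the analytic ball.} Fix \(\mu\in\cU\). By \Cref{eq:real-basin-local}, \(\norm{H(\mu)}_{\kappa_0}\le2J\). Patch restriction is contractive by \Cref{eq:interaction-patch-restriction}, so \(\norm{H_B(\mu)}_{\kappa_0}\le2J\) as well. Hence, for \(|z|\le2\), both \(zH\) and \(zH_B\) have local norm at most \(4J=\Jan\), so they lie in \(\mathfrak B_{\rm an}\).

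\textbf{Step 2: analyticity.} \Cref{thm:uniform-normal-form} gives that \(Z\mapsto D_s(Z)\) is analytic on the interior of \(\mathfrak B_{\rm an}\), for the same fixed \(s\le n_*(\nu)\) and charge \(N_\rho\) or \(N_{\rho,B}\). The map \(z\mapsto zH\) is linear, and \(\ell_\rho\) is a linear functional on the finite-dimensional interaction space. Therefore \(g_\rho^\Lambda\) and \(g_\rho^B\) are analytic for \(|z|<2\).

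\textbf{Step 3: the uniform bound.} For the sup bound on the closed disc \(|z|\le2\), I use \Cref{eq:ell-local-bound-general-main} together with \Cref{eq:uniform-D-bound}:
\[
|g(z)|\le\norm{D_s(zH)}_{\kappa_s}\le C\Jan .
\]
The induction proving \Cref{thm:uniform-normal-form} only uses \(\norm{Z}_{\kappa_0}\le\Jan\), so it applies on the closed ball, including the boundary circle. One subtlety is the ball case: \(g_\rho^B(z)=E_\rho^B(zH_B)\) is \(\ell_\rho\) applied to the patch recursion output. The patch recursion is the same recursion run on \(\mathcal E_B(zH)\) with ambient labels, so the same bound holds; equivalently, \Cref{lem:exact-embedding} identifies it with the full-lattice recursion on input \(zH_B\).

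\textbf{Step 4: Taylor coefficients.} Apply \Cref{lem:Cauchy-general} on circles of radius \(r<2\) and let \(r\uparrow2\), using continuity of the bound. This gives \(|c_p|\le C_g\Jan r^{-p}\to C_g\Jan2^{-p}\).

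\textbf{Main obstacle.} The delicate point is justifying the bound exactly on \(|z|=2\), where the input sits on the boundary of \(\mathfrak B_{\rm an}\). Since the recursion at fixed \(s\) is an entire composition of finitely many analytic maps in finite dimensions, \(g\) is in fact analytic on a neighborhood of the closed disc. The uniform estimates hold on the closed ball because its hypotheses are non-strict inequalities, so taking the limit \(r\to2\) is legitimate.
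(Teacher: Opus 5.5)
Your proposal is correct and follows essentially the same route as the paper. Like the paper, you place \(zH\) and \(zH_B=z\mathcal E_BH\) in \(\mathfrak B_{\rm an}\) for \(|z|\le2\), and combine the uniform bound \eqref{eq:uniform-D-bound} at the fixed order \(s\) with the transition-functional estimate \eqref{eq:ell-local-bound-general-main}. You then apply Cauchy's estimate on circles of radius \(r<2\) and let \(r\uparrow2\), exactly as the paper does. Your extra observation that the fixed-\(s\) recursion is an entire map on the finite-dimensional interaction space is correct and harmless; the paper implicitly relies on the same fact in the parallel section.
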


\begin{proof}
For \(\mu\in\cU\), \Cref{eq:real-basin-local} gives $\norm H_{\kappa_0}\le2J$. The map \(\mathcal E_B\) keeps the Pauli terms supported in \(B\) and annihilates the others, so \(H_B=\mathcal E_BH\). Its componentwise norm bound in \Cref{eq:interaction-patch-restriction} gives \(\norm{H_B}_{\kappa_0}\le\norm H_{\kappa_0}\le2J\). The full-system recursion has input \(zH(\mu)\), and the patch recursion has input \(zH_B(\mu)=z\mathcal E_BH(\mu)\). Both inputs lie in \(\mathfrak B_{\rm an}\) for every \(|z|\le2\), so the uniform recursion bounds apply to both on this disc. Hence, for \(|z|\le2\),
\begin{equation}
    \norm{zH}_{\kappa_0},\ \norm{zH_B}_{\kappa_0}
    \le4J=\Jan.
\end{equation}
The ADHH recursion \Cref{eq:ADHH-recursion} is analytic and \Cref{eq:uniform-D-bound} gives
\begin{equation}
    \norm{\widehat D_\rho^{\Lambda}(zH)}_{\kappa_{{s}}},\
    \norm{\widehat D_{\rho,R}(zH_B)}_{\kappa_{{s}}}
    \le C\Jan.
    \label{eq:D-analytic-bound-proof}
\end{equation}
Using \Cref{eq:ell-local-bound-general-main} proves \Cref{eq:analytic-gap-bound-app}.  Cauchy's coefficient estimate on every circle \(|z|=r<2\) gives \(|c_p|\le C_g\Jan r^{-p}\); letting \(r=2\) yields \Cref{eq:Taylor-coeff-bound-app}.
\end{proof}

\begin{theorem}[\Cref{thm:patch-locality}, Local approximation of the effective half transition energy]
Let \(R\ge(K+1)r_0\).  Uniformly for every row \(\rho=(k,\ell)\) and every \(\mu\in\cU\),
\begin{equation}
    \abs{{E_\rho^{\Lambda}(\mu)}-F_{\rho,R}(\mu)}
    \le C_{\rm P}\Jan2^{-K}.
    \label{eq:patch-locality-app}
\end{equation}
One may take \(C_{\rm P}=2C_g\).
\end{theorem}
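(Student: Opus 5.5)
The plan is to compare the two scalar functions $g_\rho^{\Lambda}(z)=E_\rho^{\Lambda}(zH(\mu))$ and $g_\rho^{B}(z)=E_\rho^{B}(zH_B(\mu))$ from \Cref{eq:g-functions}, with $B=B_R(k)$, and evaluate both at $z=1$. First, I will check that $g_\rho^{\Lambda}(1)=E_\rho^{\Lambda}(\mu)$ directly from the definition. Next, I will check that $g_\rho^{B}(1)=E_\rho^{B}(H_B(\mu))$ equals $F_{\rho,R}(\mu)$. This holds because $F_{\rho,R}$ in \Cref{eq:patch-gap} is, by definition, the half transition energy of the $s$-step recursion on $\nu N_{\rho,B}+H_B(\mu)$, and the two reference states $\ket{\Phi_0^\rho},\ket{\Phi_1^\rho}$ restrict on $B$ to the patch states extended by the identity. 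The same fixed stopping order $s$ is used throughout, so both sides are the analytic functions studied in \Cref{subsec:patch-theorem}.

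For fixed $\mu\in\cU$, \Cref{lem:analytic-gap-bound} shows that both functions are analytic on $|z|<2$, bounded by $C_g\Jan$ on $|z|\le 2$, and have Taylor coefficients obeying $|c_p^{\Lambda}|,|c_p^{B}|\le C_g\Jan 2^{-p}$. Since the radius of convergence is at least $2>1$, both Taylor series converge absolutely at $z=1$. Therefore
\begin{equation}
E_\rho^{\Lambda}(\mu)-F_{\rho,R}(\mu)=\sum_{p\ge0}(c_p^{\Lambda}-c_p^{B}).
\end{equation}
By \Cref{lem:coefficient-matching}, the hypothesis $R\ge(K+1)r_0$ gives $c_p^{\Lambda}=c_p^{B}$ for $0\le p\le K$. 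Only the tail $p>K$ survives, and the triangle inequality bounds it by
\begin{equation}
\sum_{p>K}\bigl(|c_p^{\Lambda}|+|c_p^{B}|\bigr)\le 2C_g\Jan\sum_{p>K}2^{-p}=2C_g\Jan 2^{-K},
\end{equation}
which yields $C_{\rm P}=2C_g$.

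Uniformity follows because none of the constants depend on $\rho$, $\mu\in\cU$, or $n$. The constant $C_g$ comes from \Cref{thm:uniform-normal-form} and \Cref{eq:ell-local-bound-general-main}, which are uniform over $\mathfrak B_{\rm an}$. Moreover, $\|zH(\mu)\|_{\kappa_0},\|zH_B(\mu)\|_{\kappa_0}\le 4J=\Jan$ for all $\mu\in\cU$ and $|z|\le 2$.

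The main point of care is the identification $g_\rho^{B}(1)=F_{\rho,R}(\mu)$, together with the fact that \Cref{lem:coefficient-matching} compares exactly these coefficients. \Cref{lem:exact-embedding} handles the switch between the patch charge $N_{\rho,B}$ and the full charge $N_\rho$ for inputs supported in $B$, so I will invoke it explicitly. A second subtlety is the Cauchy bound at the boundary radius $2$. Since \Cref{lem:analytic-gap-bound} already states the bound with $2^{-p}$, I will simply use it. If one prefers to avoid the boundary, one can apply Cauchy's estimate on $|z|=r<2$ and let $r\uparrow 2$. Either way the geometric tail still converges at $z=1$, and the bound is unchanged.
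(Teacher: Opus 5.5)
Your proposal is correct and follows essentially the same route as the paper: both evaluate $g_\rho^{\Lambda}$ and $g_\rho^{B}$ at $z=1$, use \Cref{lem:coefficient-matching} to cancel the coefficients of degree $p\le K$, and bound the remaining tail via the coefficient estimate $C_g\Jan 2^{-p}$ from \Cref{lem:analytic-gap-bound}. This yields $C_{\rm P}=2C_g$, exactly as in the paper.
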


\begin{proof}
In \Cref{eq:patch-locality-app}, $E_\rho^\Lambda(\mu)$ and $F_{\rho,R}(\mu)$ are the two functions in \Cref{eq:g-functions} for $z=1$. Their Taylor coefficients agree through degree \(K\) by \Cref{lem:coefficient-matching}.  The bounds in \Cref{lem:analytic-gap-bound} imply absolute convergence at \(z=1\), and therefore
\begin{align}
    \abs{g_\rho^{\Lambda}(1)-g_\rho^B(1)} & \le\sum_{p=K+1}^{\infty}\bigl(|c_p^{\Lambda}|+|c_p^B|\bigr) \\
    & \le2C_g\Jan\sum_{p=K+1}^{\infty}2^{-p}
    =2C_g\Jan2^{-K}.
    \label{eq:patch-tail-proof}
\end{align}
This is the claimed result with \(C_{\rm P}=2C_g\).
\end{proof}

\section{Proof of coefficient recovery (\Cref{sec:optimization})}
\label{app:optimization-proofs}

For \(P_a=\bigotimes_{j\in\Lambda}P_{a,j}\), let \(\chi(a)\in\{0,1\}^{\Lambda}\) is defined as \(\chi_j(a)=\ind[P_{a,j}\neq I]\). We write  \(P_a\sim\beta^\rho\) if \(P_{a,j}=\beta_j^\rho\) for every \(j\in\supp(P_a)\). We recall that $q$ is the maximal support size of Hamiltonian terms $P_a$, $D_{\rm ov}$ is the maximal number of terms a Hamiltonian term can overlap with (including itself), and $\zeta$ is the maximal number of terms that cover any given qubit $k$.

\begin{lemma}[\Cref{lem:local-Walsh}]
The function $F_{\rho,0}$ is linear:
\begin{equation}
    F_{\rho,0}(\mu)=\sum_{a\in\cA}X_{\rho,a}\mu_a,
\end{equation}
where $X_{\rho,a}=-\ind[k\in\supp(P_a)]\ind[P_a\sim\beta^\rho](-1)^{s^\rho\cdot\chi(a)}$ is the element of $X$ on the row indexed by $\rho$ and column indexed by $a$. If $R\ge r_0$, each row of $X\in\mathbb R^{nm\times M}$ has at most $\zeta$ nonzero entries, and each column has at most $qm$ nonzero entries. $\beta^\rho$ and $s^\rho$ specify the local Pauli frame and are defined in \Cref{eq:random-pattern}.
\end{lemma}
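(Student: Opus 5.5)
We unfold the definitions. By \Cref{eq:bare-and-correction,eq:bare-half-gap}, $F_{\rho,0}(\mu)=\ell_\rho(H_{B_R(k)}(\mu))=\sum_{a:\supp(P_a)\subseteq B_R(k)}\mu_a\,\ell_\rho(P_a)$, since $\ell_\rho$ in \Cref{eq:transition-functional} is linear. This gives linearity at once. It remains to compute the scalar $\ell_\rho(P_a)$ for each Pauli term and to dispose of the patch restriction.

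\textbf{The patch restriction.} If $k\in\supp(P_a)$, then every site of $\supp(P_a)$ lies within distance $\mathrm{diam}(\supp(P_a))\le r_0\le R$ of $k$. Hence $\supp(P_a)\subseteq B_R(k)$ and the term is retained in $H_{B_R(k)}$. If instead $k\notin\supp(P_a)$, then $\ell_\rho(P_a)=0$ by \Cref{eq:transition-functional-Xk}. In either case the patch restriction has no effect, and the indicator $\ind[k\in\supp(P_a)]$ appears naturally. (I am reading the row-sparsity claim as the one using $R\ge r_0$.)

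\textbf{Evaluating $\ell_\rho(P_a)$.} Both $\ket{\Phi_0^\rho}$ and $\ket{\Phi_1^\rho}$ are product states. Therefore $\bra{\Phi_b^\rho}P_a\ket{\Phi_b^\rho}=\prod_{j\in\supp(P_a)}\bra{\phi_{j,b_j}}P_{a,j}\ket{\phi_{j,b_j}}$, where $b_j=0$ for $j\neq k$ and $b_k=b$.

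For a single qubit, $\ket{\phi_{j,c}}$ is an eigenvector of $(-1)^{s_j}\sigma^{\beta_j}$ with eigenvalue $(-1)^c$. Consequently $\bra{\phi_{j,c}}\sigma^{\gamma}\ket{\phi_{j,c}}=\ind[\gamma=\beta_j](-1)^{s_j+c}$, because the other two Paulis have zero expectation in an eigenstate of $\sigma^{\beta_j}$.

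The product is therefore nonzero only if $P_a\sim\beta^\rho$. When it is nonzero, it equals $(-1)^{s^\rho\cdot\chi(a)}(-1)^{b}$, with the factor $(-1)^b$ present only when $k\in\supp(P_a)$. The half difference then gives $\ell_\rho(P_a)=\tfrac12\bigl(-1-1\bigr)(-1)^{s^\rho\cdot\chi(a)}=-(-1)^{s^\rho\cdot\chi(a)}$ when $k\in\supp(P_a)$ and $P_a\sim\beta^\rho$, and $0$ otherwise. This matches the claimed $X_{\rho,a}$. Note that $X_{\rho,a}$ is the gradient of a linear function, so it is independent of $\mu$.

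\textbf{Sparsity.} Row $\rho=(k,\ell)$ has nonzeros only for terms $a$ with $k\in\supp(P_a)$, and there are at most $\zeta$ of these by \Cref{eq:zeta-Dov}. Column $a$ has nonzeros only for rows with $k\in\supp(P_a)$. There are at most $q$ such sites $k$, each with $m$ repetitions, giving at most $qm$ nonzeros.

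The only mildly delicate point is the single-qubit expectation computation, in particular the sign bookkeeping at the defect. Everything else is direct.
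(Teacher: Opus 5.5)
Your proposal is correct and follows the paper's proof essentially step for step. You use linearity through $\ell_\rho$, then compute the product-state expectation $\bra{\Phi_b^\rho}P_a\ket{\Phi_b^\rho}$, which vanishes unless $P_a\sim\beta^\rho$ and otherwise equals $(-1)^{s^\rho\cdot\chi(a)}$ with an extra sign at the defect, and then apply the same $\zeta$ and $qm$ counts; the paper phrases the first step as saying that the charge average $\langle\cdot\rangle_{N_\rho}$ leaves these diagonal expectations unchanged, which is equivalent to your direct use of $\ell_\rho$. You are also right that $R\ge r_0$ is what makes the patch restriction harmless in the entry formula itself, not only in the sparsity claim, and the paper relies on it in the same way.
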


\begin{proof}
For $\ket{\Phi_b^\rho}$, the charge-preserving average $\langle\,\cdot\,\rangle_{N_\rho}$ leaves expectation unchanged:
\begin{equation}
    \bra{\Phi_b^\rho}\langle P_a\rangle_{N_\rho}\ket{\Phi_b^\rho}=\bra{\Phi_b^\rho}P_a\ket{\Phi_b^\rho}.
\end{equation}
Hence, $X_{\rho,a}=\frac{1}{2}(\bra{\Phi_1^\rho}P_a\ket{\Phi_1^\rho}-\bra{\Phi_0^\rho}P_a\ket{\Phi_0^\rho})$. If \(P_a\nsim\beta^\rho\), $\bra{\Phi_b^\rho}P_a\ket{\Phi_b^\rho} = 0$ for both $b=0,1$. If \(P_a\sim\beta^\rho\), then $\bra{\Phi_0^\rho}P_a\ket{\Phi_0^\rho}=(-1)^{s^\rho\cdot\chi(a)}$. The state \(\Phi_1^\rho\) differs only at \(k\), hence
\begin{equation}
    \bra{\Phi_1^\rho}P_a\ket{\Phi_1^\rho}
    =(-1)^{\ind[k\in\supp(P_a)]}(-1)^{s^\rho\cdot\chi(a)}.
    \label{eq:Phi1-Pa-expectation}
\end{equation}
Finally, we have $ X_{\rho,a}=-\ind[k\in\supp(P_a)]\ind[P_a\sim\beta^\rho](-1)^{s^\rho\cdot\chi(a)}$. If \(k\in\supp(P_a)\), \(R\ge r_0\) imply \(\supp(P_a)\subseteq B_R(k)\). At fixed \(k\), at most \(\zeta\) (defined in \Cref{eq:zeta-Dov}) different Pauli contains the defect qubit $k$, proving the row sparsity. For a fixed \(a\), nonzero entries occur only at the \(q\) possible defect sites and the \(m\) repetitions, proving the column sparsity.
\end{proof}

\begin{theorem}[\Cref{thm:Walsh-conditioning}]
There is a constant \(C_m\), such that if
\begin{equation}
    m\ge C_m{\log(2n/\delta)},
    \label{eq:m-condition-app}
\end{equation}
with probability at least \(1-\delta\), the following holds:
\begin{align}
    &\norm{\Gamma-\Sigma}_{\infty\to\infty}
    \le 3^{-q} /4 , \label{eq:Gamma-close-app}\\
    &\frac{3}{4}\cdot3^{-q}I \preceq\Gamma
    \preceq\left(q+3^{-q}/4\right)I, \label{eq:Gamma-spectral-app}\\
    &\norm{\Gamma^{-1}}_{\infty\to\infty}\le C_\Gamma:=(4/3)\cdot3^{q}. \label{eq:Gamma-inverse-app}
\end{align}
\end{theorem}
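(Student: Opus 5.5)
We will prove \Cref{eq:Gamma-close-app} first and then derive the other two statements from it. Write \(\Gamma=\frac1m\sum_{\ell=1}^m\Gamma^{(\ell)}\) with
\begin{equation}
    \Gamma^{(\ell)}_{ab}:=\sum_{k\in\Lambda}X_{(k,\ell),a}X_{(k,\ell),b}.
\end{equation}
By \Cref{lem:local-Walsh}, the summand with index \(k\) is nonzero only if \(k\in\supp(P_a)\cap\supp(P_b)\). Hence \(\Gamma_{ab}=0\) unless \(P_a\) and \(P_b\) overlap. Each row of \(\Gamma\) therefore has at most \(D_{\rm ov}\) nonzero entries, and the sum over \(k\) has at most \(q\) terms.

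\textbf{Computing the mean.} Fix \(k\in\supp(P_a)\cap\supp(P_b)\). The summand is
\begin{equation}
    \ind[P_a\sim\beta]\,\ind[P_b\sim\beta]\,(-1)^{s\cdot(\chi(a)\oplus\chi(b))}.
\end{equation}
The signs \(s_j\) are independent and uniform. So the expectation vanishes unless \(\chi(a)=\chi(b)\), that is, unless the two terms have the same support. If the supports agree, both indicators can hold only when \(P_a\) and \(P_b\) carry the same Pauli letter on every site, which forces \(a=b\). For \(a=b\), the probability that \(P_a\sim\beta\) is \(3^{-p_a}\), and there are \(p_a\) admissible sites \(k\). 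Thus \(\bbE\Gamma^{(\ell)}=\Sigma\).

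\textbf{Concentration.} Each entry \(\Gamma^{(\ell)}_{ab}\) is bounded by \(q\) in absolute value. The repetitions \(\ell=1,\dots,m\) are independent, because each uses fresh draws of the axes and signs. Hoeffding's inequality therefore gives
\begin{equation}
    \bbP\bigl(|\Gamma_{ab}-\Sigma_{ab}|>t\bigr)\le 2e^{-mt^2/(2q^2)}.
\end{equation}
Set \(t=3^{-q}/(4D_{\rm ov})\). Take a union bound over the at most \(nD_{\rm ov}\) overlapping pairs \((a,b)\); there are \(O(n)\) terms and each overlaps at most \(D_{\rm ov}\) others. The failure probability is then at most \(\delta\) once \(m\ge C_m\log(2n/\delta)\), where \(C_m\) depends only on \(q\) and \(D_{\rm ov}\). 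On this event, each row of \(\Gamma-\Sigma\) has at most \(D_{\rm ov}\) entries, each of size at most \(t\). This gives \(\norm{\Gamma-\Sigma}_{\infty\to\infty}\le 3^{-q}/4\), which is \Cref{eq:Gamma-close-app}.

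\textbf{Consequences.} Since \(\Gamma-\Sigma\) is symmetric, its spectral norm is at most \(\sqrt{\norm{\cdot}_{1\to1}\norm{\cdot}_{\infty\to\infty}}=\norm{\cdot}_{\infty\to\infty}\le3^{-q}/4\).
\begin{enumerate}
    \item \emph{Spectral bounds.} The diagonal entries of \(\Sigma\) satisfy \(3^{-q}\le p3^{-p}\le 1/3\le q\) for \(1\le p\le q\), because \(p3^{-p}\) is decreasing for \(p\ge1\). Combining this with the spectral-norm bound on \(\Gamma-\Sigma\) gives \Cref{eq:Gamma-spectral-app}.
    \item \emph{Inverse bound.} Write \(\Gamma=\Sigma(I+\Sigma^{-1}(\Gamma-\Sigma))\). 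We have \(\norm{\Sigma^{-1}}_{\infty\to\infty}\le 3^q\), so \(\norm{\Sigma^{-1}(\Gamma-\Sigma)}_{\infty\to\infty}\le1/4\). A Neumann series then gives
    \begin{equation}
        \norm{\Gamma^{-1}}_{\infty\to\infty}\le \frac{3^q}{1-1/4}=\frac{4}{3}\,3^q,
    \end{equation}
    which is \Cref{eq:Gamma-inverse-app}.
\end{enumerate}

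\textbf{Main obstacle.} The delicate step is the mean computation. One must check that no off-diagonal pairs survive. Terms with the same support but different Pauli letters are killed by the axis indicators. Terms with different supports are killed by the sign parity. The remaining steps are routine sparsity bookkeeping.
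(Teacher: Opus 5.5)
Your proof is correct and follows essentially the same route as the paper. The mean is computed via the sign-parity and axis-indicator cancellations, sparsity limits each row of $\Gamma-\Sigma$ to $D_{\rm ov}$ entries, entrywise Hoeffding with $t=3^{-q}/(4D_{\rm ov})$ plus a union bound gives the $\infty\to\infty$ estimate, and the spectral and inverse bounds follow. The differences are cosmetic: you apply Hoeffding to the per-repetition sums $\Gamma^{(\ell)}_{ab}$, which needs only independence across $\ell$, and you use Weyl plus a Neumann series where the paper uses Gershgorin and the diagonal-dominance inverse bound, with identical constants. One small correction: the union bound runs over at most $MD_{\rm ov}\le\zeta nD_{\rm ov}$ pairs rather than $nD_{\rm ov}$, which is absorbed into $C_m$.
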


\begin{proof}
We first compute the expectation of \(\Gamma\). For \(a=b\), the product \(X_{(k,\ell),a}^2\) is nonzero only when \(k\in\supp(P_a)\) and the sampled axes agree with \(P_a\) on all \(p_a\) sites in its support. Therefore, $\bbE[\Gamma_{aa}]=p_a3^{-p_a}$.

Now let \(a\neq b\). If \(\supp(P_a)\neq\supp(P_b)\), choose a site in the symmetric difference of the two supports. Averaging over the independent sign at this site gives $\bbE\bigl[X_{(k,\ell),a}X_{(k,\ell),b}\bigr]=0$. If the supports are equal, then \(P_a\neq P_b\) means that their Pauli labels differ at some site. In this case, no sampled axis can agree with both strings, so $X_{(k,\ell),a}X_{(k,\ell),b}=0$. Thus, $\bbE[\Gamma]=\Sigma$, and $\Sigma=\operatorname{diag}\bigl(p_a3^{-p_a}\bigr)_{a\in\cA}$.

Notice also that \(X_{(k,\ell),a}X_{(k,\ell),b}\) can be nonzero only when $k\in\supp(P_a)\cap\supp(P_b)$. Hence \(\Gamma_{ab}=0\) whenever the two supports are disjoint, and each row of \(\Gamma\) has at most \(D_{\rm ov}\) entries that can
be nonzero.

Fix \(a,b\) whose supports overlap. At most \(q\) defect sites contribute to \(\Gamma_{ab}\). Hence \(\Gamma_{ab}-\Sigma_{ab}\) is a sum of at most \(qm\) independent centered bounded random variables, divided by \(m\). Hoeffding's inequality therefore gives
\begin{equation}
    \bbP\left( |\Gamma_{ab}-\Sigma_{ab}|>t \right) \le 2e^{-c_qmt^2},
    \label{eq:Hoeffding-Gram}
\end{equation}
where \(c_q>0\) depends only on \(q\). Set $t={3^{-q}}/{(4D_{\rm ov})}$. There are at most \(MD_{\rm ov}\) pairs whose supports overlap. Using the bounds on \(M\) and \(D_{\rm ov}\), we have $\log(2MD_{\rm ov}/{\delta}) = {\bigO(\log(2n/\delta))}$.
    
Thus, by the union bound and the choice of $m$ in \Cref{eq:m-condition-app}, with probability at least \(1-\delta\),
\begin{equation}
    |\Gamma_{ab}-\Sigma_{ab}| \le \frac{3^{-q}}{4D_{\rm ov}}
\end{equation}
for all such pairs. Consequently, for every \(a\),
\begin{equation}
    \sum_b|\Gamma_{ab}-\Sigma_{ab}| \le 3^{-q}/4.
    \label{eq:Gram-row-error}
\end{equation}
By the definition of the induced infinity norm, $\norm{\Gamma-\Sigma}_{\infty\to\infty}=\max_a\sum_b|\Gamma_{ab}-\Sigma_{ab}|$. Therefore, \Cref{eq:Gram-row-error} proves $\norm{\Gamma-\Sigma}_{\infty\to\infty} \le 3^{-q}/4$, which is \Cref{eq:Gamma-close-app}.

Let \(\lambda\) be an eigenvalue of \(\Gamma\). By Gershgorin's circle theorem, there is an \(a\) such that
\begin{equation}
    \Gamma_{aa}-\sum_{b\neq a}|\Gamma_{ab}| \le \lambda \le \Gamma_{aa}+\sum_{b\neq a}|\Gamma_{ab}|.
\end{equation}
Since \(\Sigma\) is diagonal, \Cref{eq:Gram-row-error} implies
\begin{equation}
    \lambda\ge\Sigma_{aa}-\sum_b|\Gamma_{ab}-\Sigma_{ab}|\ge p_a3^{-p_a}-3^{-q}/4,
\end{equation}
and similarly, $\lambda \le p_a3^{-p_a}+3^{-q}/4$. Finally, \(1\le p_a\le q\) gives $3^{-q}\le p_a3^{-p_a}\le q$. Therefore, every eigenvalue of \(\Gamma\) lies in $[(3/4)\cdot3^{-q}, q+3^{-q}/4]$, which proves \Cref{eq:Gamma-spectral-app}.

It remains to bound the inverse. Since $\Sigma$ is diagonal, \Cref{eq:Gram-row-error} gives, for every $a$,
\begin{align}
    |\Gamma_{aa}|-\sum_{b\neq a}|\Gamma_{ab}| &\ge \Sigma_{aa} -|\Gamma_{aa}-\Sigma_{aa}| -\sum_{b\neq a}|\Gamma_{ab}-\Sigma_{ab}| \\
    &= \Sigma_{aa}-\sum_b|\Gamma_{ab}-\Sigma_{ab}| \\
    &\ge 3^{-q}-\frac{3^{-q}}{4}= (3/4)\cdot3^{-q}.
\end{align}
Thus, $\Gamma$ is strictly row-diagonally dominant. The standard inverse bound for a strictly row-diagonally-dominant matrix gives
\begin{align}
    \norm{\Gamma^{-1}}_{\infty\to\infty} &\le \left[\min_a\left( |\Gamma_{aa}|-\sum_{b\neq a}|\Gamma_{ab}|\right)\right]^{-1} \le (4/3)\cdot3^q =C_\Gamma,
\end{align}
which proves \Cref{eq:Gamma-inverse-app}.
\end{proof}

\begin{lemma}[\Cref{lem:row-derivatives}, Uniform first- and second-derivative bounds]
Uniformly for \(\mu\in\cU\) and all rows \(\rho\),
\begin{align}
    \sum_a\abs{\partial_a\mathcal R_\rho(\mu)} & \le {C_1\Jan}/{\nu},
    \label{eq:first-derivative-bound-app}\\
    \sup_a\sum_b\abs{\partial_a\partial_bF_\rho(\mu)} & \le{C_2}/{\nu}. \label{eq:second-derivative-bound-app}
\end{align}
All derivatives with respect to coefficients outside \(\cA_\rho:=\{a:\supp(P_a)\subseteq B_R(k)\}\) vanish.
\end{lemma}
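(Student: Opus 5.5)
The plan is to use the Cauchy polydisc estimate \Cref{eq:Cauchy-polydisc} on the analytic extension of $\mathcal R_\rho$ to complex coefficients. The uniform bound to feed into it is \Cref{lem:scalar-correction}. Fix $\rho=(k,\ell)$ and $\mu\in\cU$. For complex perturbation vectors $w\in\mathbb C^M$, consider $\mathcal R_\rho(\mu+w)=F_\rho(\mu+w)-F_{\rho,0}(\mu+w)$.

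\textbf{Step 1: uniform bound on a complex box.} Since $\norm{H(\mu)}_{\kappa_0}\le2J$ by \Cref{eq:real-basin-local}, and by \Cref{eq:coeff-to-local} applied to complex coefficients (the local-norm bound only uses $|w_a|$), we have
\[
\norm{H_B(\mu+w)}_{\kappa_0}\le 2J+J\norm w_\infty\le\Jan
\]
whenever $\norm w_\infty\le2$. So $H_B(\mu+w)\in\mathfrak B_{\rm an}$. By \Cref{thm:uniform-normal-form}, $F_\rho(\mu+w)=\ell_\rho(D_s(H_B(\mu+w)))$ is analytic on the interior. The quantity $F_{\rho,0}$ is linear. \Cref{lem:scalar-correction} then gives $|\mathcal R_\rho(\mu+w)|\le C_Q\Jan^2/\nu$ uniformly on $\norm w_\infty<2$, and similarly $|F_\rho(\mu+w)|\le C\Jan$ by \Cref{eq:ell-local-bound-general-main} and \Cref{eq:uniform-D-bound}. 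Dependence only on $a\in\cA_\rho$ is immediate: the patch recursion sees only $H_B(\mu)$, so for $a\notin\cA_\rho$ the function does not depend on $\mu_a$ at all.

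\textbf{Step 2: first derivatives in $\ell^1$.} Bounding each $|\partial_a\mathcal R_\rho|\le C_Q\Jan^2/\nu$ and summing over $|\cA_\rho|$ would give a factor $V_R$, which the statement avoids. Instead I take a sign vector: let $\xi_a=\overline{\operatorname{sgn}\partial_a\mathcal R_\rho(\mu)}$ for $a\in\cA_\rho$, so $|\xi_a|\le1$. Then
\[
\sum_a|\partial_a\mathcal R_\rho(\mu)|=\frac{d}{dz}\mathcal R_\rho(\mu+z\xi)\Big|_{z=0}.
\]
The one-variable function $z\mapsto\mathcal R_\rho(\mu+z\xi)$ is analytic for $|z|<2$ and bounded by $C_Q\Jan^2/\nu$. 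The one-variable Cauchy estimate on radius $r\uparrow2$ gives the bound $C_Q\Jan^2/(2\nu)$. Since $\Jan=4J=\bigO(1)$, this is $C_1\Jan/\nu$.

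\textbf{Step 3: second derivatives.} Note that $\partial_a\partial_bF_\rho=\partial_a\partial_b\mathcal R_\rho$ because $F_{\rho,0}$ is linear. Fix $a$ and choose phases $\xi_b$ as above for the row $(\partial_a\partial_b\mathcal R_\rho)_b$. Then
\[
\sum_b|\partial_a\partial_bF_\rho|=\partial_{z_1}\partial_{z_2}\mathcal R_\rho(\mu+z_1e_a+z_2\xi)\big|_0 .
\]
Here $\norm{z_1e_a+z_2\xi}_\infty\le|z_1|+|z_2|$, so on the polydisc $|z_1|,|z_2|\le1-\epsilon$ we stay inside the box where Step 1 applies. \Cref{eq:Cauchy-polydisc} with $r_1=r_2\to1$ gives the bound $C_Q\Jan^2/\nu=:C_2/\nu$.

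\textbf{Main obstacle.} The delicate point is Step 1's claim that \Cref{lem:scalar-correction} applies to complex $H_B(\mu+w)$ with an $n$-independent constant. This rests on \Cref{thm:uniform-normal-form} holding for non-Hermitian inputs, and on patch restriction keeping the input in $\mathfrak B_{\rm an}$. Both are already established, so the remaining care is mainly in choosing radii so the arguments remain in $\mathfrak B_{\rm an}$. The phase-vector trick is what keeps the bounds free of $V_R$.
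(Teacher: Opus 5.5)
Your proof is correct and follows the paper's argument. The paper likewise feeds the uniform bound of \Cref{lem:scalar-correction} into one- and two-variable Cauchy estimates along directions $K_\xi=\sum_{a\in\cA_\rho}\xi_aP_a$ with $\norm{\xi}_\infty\le1$, and recovers the $\ell^1$ sums through the duality $\sum_a|v_a|=\sup_{|\xi_a|\le1}|\sum_a\xi_av_a|$, which your explicit sign vector realizes; the only differences are in the radii. The paper perturbs along $P_a$ itself with a radius $r_a$ set by $\norm{P_a}_{\kappa_0}$, and uses $r_1=1$, $r_2=1/4$, whereas your coefficient-space radii ($2$ for the first derivative, $1,1$ for the second) are equally valid and give slightly better constants.
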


\begin{proof}
By \Cref{lem:scalar-correction}, we have $|\mathcal R_\rho(Z)|\le C_Q{\Jan^2}/{\nu}$ throughout the complex local-norm ball.  Fix \(\mu\in\cU\) and a direction \(K_\xi\) as 
\begin{equation}
    K_\xi:=\sum_{a\in\cA_\rho}\xi_aP_a,
    \label{eq:coefficient-direction}
\end{equation}
where $\norm\xi_\infty\le1$. By \Cref{eq:coeff-to-local}, \(\norm{K_\xi}_{\kappa_0}\le J\). 
For \(|z|\le 1\),
\begin{equation}
    \norm{H_B(\mu)+zK_\xi}_{\kappa_0}
    \le2J+J<\Jan.
    \label{eq:direction-inside-ball}
\end{equation}
Denoting the directional derivative of $\mathcal R_\rho(\mu)$ in the $K_\xi$ direction by $D\mathcal R_\rho(\mu)[K_\xi]$, Cauchy's estimate in \Cref{eq:Cauchy-polydisc} gives
\begin{equation}
    |D\mathcal R_\rho(\mu)[K_\xi]|
    \le\frac{C_Q\Jan^2/\nu}{r_1}
    \le \frac{C_1\Jan}{\nu}.
    \label{eq:first-directional-Cauchy}
\end{equation}
Taking the supremum over \(|\xi_a|\le1\) and using $\sum_a|v_a|=\sup_{|\xi_a|\le1}\abs{\sum_a\xi_av_a}$ proves \Cref{eq:first-derivative-bound-app}.

For the second derivative, fix \(a\) and again take \(\norm\xi_\infty\le1\). Since 
{$\norm{P_a}_{\kappa_0}\le e^{\kappa_0(2r_0+1)^d}$,}
choose $r_a:={J}/{(4e^{\kappa_0(2r_0+1)^d})}$, 
and $r_2:=1/4$. The polydisc \(|z|\le r_a\), \(|w|\le r_2\) for
\begin{equation}
    (z,w)\mapsto\mathcal R_\rho(H_B(\mu)+zP_a+wK_\xi)
    \label{eq:two-variable-map}
\end{equation}
remains in \(\mathfrak B_{\rm an}\).  The two-variable Cauchy estimate yields
\begin{equation}
    |D^2\mathcal R_\rho(\mu)[P_a,K_\xi]|
    \le\frac{C_Q\Jan^2/\nu}{r_ar_2}
    \le\frac{C_2}{\nu}.
    \label{eq:second-directional-Cauchy}
\end{equation}
$F_{\rho,0}$ is linear, so \(D^2F_\rho=D^2\mathcal R_\rho\).  Duality in the second index proves \Cref{eq:second-derivative-bound-app}.  Dependence only on \(H_B\) makes every derivative outside \(\cA_\rho\) vanish.
\end{proof}

\begin{lemma}[\Cref{lem:Jacobian-perturbation}, Jacobian Gram perturbation]
Uniformly for \(\mu\in\cU\), the Jacobian Gram matrix {$\mathsf G(\mu)=\mathsf{J}(\mu)^{\mathsf T} \mathsf{J}(\mu)/m$} satisfies
\begin{equation}
    \norm{{\mathsf G(\mu)}-\Gamma}_{\infty\to\infty}
    \le C_JV_R\left(
    \frac{\Jan}{\nu}+\frac{\Jan^2}{\nu^2}
    \right).
    \label{eq:Jacobian-Gram-close-app}
\end{equation}
On the event that \Cref{eq:Gamma-close} holds, there exists an optimization field threshold $\nu_{\rm opt}=\mathcal{O}(JV_R)$ such that for all $\nu\geq \nu_{\rm opt}$, we have
\begin{equation}
    {\mathsf G(\mu)}\succeq\frac{I}{2\cdot3^q}
    \label{eq:Jacobian-conditioned-app}
\end{equation}
uniformly on \(\cU\).
\end{lemma}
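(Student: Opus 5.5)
Write $\mathsf J(\mu)=X+E$ with $E:=D\mathcal R(\mu)$. Expanding the Gram matrix gives
\begin{equation*}
    \mathsf G(\mu)-\Gamma=\frac1m\left(X^{\mathsf T}E+E^{\mathsf T}X+E^{\mathsf T}E\right).
\end{equation*}
I will bound each of the three terms in $\norm{\cdot}_{\infty\to\infty}$ by submultiplicativity. This requires four ingredients: the row and column $\ell^1$ norms of $X$ and of $E$.

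For $X$, \Cref{lem:local-Walsh} gives $\norm X_{\infty\to\infty}\le\zeta$ (row sums), and $\norm{X^{\mathsf T}}_{\infty\to\infty}/m\le q$ (column sums), as already recorded before \Cref{eq:Cy}.

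For $E$, \Cref{lem:row-derivatives} gives row sums $\norm E_{\infty\to\infty}\le C_1\Jan/\nu$. For the column sums, $\partial_a\mathcal R_\rho\neq0$ requires $\supp(P_a)\subseteq B_R(k)$. For fixed $a$ this leaves at most $mV_R$ rows, each bounded by $C_1\Jan/\nu$, so $\norm{E^{\mathsf T}}_{\infty\to\infty}/m\le C_1V_R\Jan/\nu$.

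Combining these:
\begin{align*}
    \norm{X^{\mathsf T}E}_{\infty\to\infty}/m&\le qC_1\Jan/\nu,\\
    \norm{E^{\mathsf T}X}_{\infty\to\infty}/m&\le C_1V_R\zeta\Jan/\nu,\\
    \norm{E^{\mathsf T}E}_{\infty\to\infty}/m&\le C_1^2V_R\Jan^2/\nu^2.
\end{align*}
Since $V_R\ge1$, these absorb into $C_JV_R(\Jan/\nu+\Jan^2/\nu^2)$ with $C_J$ depending only on $q,\zeta,C_1$. This proves \Cref{eq:Jacobian-Gram-close-app}. The only delicate point is the column count for $E$, and this is exactly where the factor $V_R$ enters rather than $n$. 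I expect this counting to be the main thing to get right, but it is straightforward given the vanishing statement in \Cref{lem:row-derivatives}.

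For \Cref{eq:Jacobian-conditioned-app}, I work on the event \Cref{eq:Gamma-close}. The triangle inequality gives
\begin{equation*}
    \norm{\mathsf G(\mu)-\Sigma}_{\infty\to\infty}\le 3^{-q}/4+C_JV_R(\Jan/\nu+\Jan^2/\nu^2).
\end{equation*}
Choose $\nu_{\rm opt}=C'JV_R$ (recall $\Jan=4J$) with $C'$ large enough that the second term is at most $3^{-q}/4$. Then every row satisfies $\sum_b|\mathsf G_{ab}-\Sigma_{ab}|\le3^{-q}/2$. Because $\mathsf G(\mu)$ is symmetric, the Gershgorin argument from the proof of \Cref{thm:Walsh-conditioning} bounds every eigenvalue below by
\begin{equation*}
    \min_a\Sigma_{aa}-3^{-q}/2\ge3^{-q}-3^{-q}/2=3^{-q}/2,
\end{equation*}
which is the claimed bound. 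This holds uniformly on $\cU$, since every bound used is uniform in $\mu\in\cU$.
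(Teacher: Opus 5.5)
Your proposal is correct and follows the paper's proof essentially step by step. It uses the same expansion $\mathsf G(\mu)-\Gamma=\frac{1}{m}\bigl(X^{\mathsf T}E+E^{\mathsf T}X+E^{\mathsf T}E\bigr)$ and the same row/column counts, which give $qC_1\Jan/\nu$, $\zeta V_RC_1\Jan/\nu$ and $V_RC_1^2\Jan^2/\nu^2$. The only cosmetic difference is the final step: you apply Gershgorin directly against $\Sigma$ using $\norm{\Gamma-\Sigma}_{\infty\to\infty}\le 3^{-q}/4$, whereas the paper bounds the spectral norm of the symmetric perturbation by its $\infty\to\infty$ norm and applies Weyl with $\lambda_{\min}(\Gamma)\ge\frac{3}{4}\cdot 3^{-q}$. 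Both routes give the same constant $3^{-q}/2$.
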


\begin{proof}
Set $\Delta(\mu):=D\mathcal R(\mu)$, so that $\mathsf J(\mu)=X+\Delta(\mu)$. Let $d_\Delta:=C_1{\Jan}/{\nu}$. By \Cref{eq:first-derivative-bound}, every row of \(\Delta\) has \(\ell_1\) norm at most \(d_\Delta\).  A fixed column of \(\Delta\) can be nonzero only in rows whose patch contains the corresponding support, hence in at most \(mV_R\) rows.  By \Cref{lem:local-Walsh}, every row of \(X\) has at most \(\zeta\) nonzero entries and every column has at most \(qm\) nonzero entries, and all entries are bounded by $1$ in absolute value.

Expand
\begin{equation}
    \mathsf G(\mu)-\Gamma=\frac1m\bigl(X^{\mathsf T}\Delta+\Delta^{\mathsf T}X+\Delta^{\mathsf T}\Delta\bigr).
    \label{eq:Jacobian-Gram-expansion}
\end{equation}
For a fixed coefficient row \(a\), we have $\frac{1}{m}\sum_{\rho}|\Delta_{\rho a}|\leq V_R d_\Delta$, and
\begin{align}
    \frac1m\sum_b|(X^{\mathsf T}\Delta)_{ab}|
    &\le\frac1m\sum_\rho|X_{\rho a}|
    \sum_b|\Delta_{\rho b}|
    \le qd_\Delta,\label{eq:XTDelta-bound}\\
    \frac1m\sum_b|(\Delta^{\mathsf T}X)_{ab}|
    &\le\frac1m\sum_\rho|\Delta_{\rho a}|
    \sum_b|X_{\rho b}|
    \le\zeta V_Rd_\Delta,\label{eq:DeltaTX-bound}\\
    \frac1m\sum_b|(\Delta^{\mathsf T}\Delta)_{ab}|
    &\le\frac1m\sum_\rho|\Delta_{\rho a}|
    \sum_b|\Delta_{\rho b}|
    \le V_Rd_\Delta^2.
    \label{eq:DeltaTDelta-bound}
\end{align}
These inequalities imply \Cref{eq:Jacobian-Gram-close-app} by the definition of $\|\cdot\|_{\infty\to\infty}$ norm. The perturbation is symmetric, so its spectral norm is bounded by its infinity norm. Choosing $\nu$ larger than $3^q\cdot8C_J\Jan V_R=\mathcal{O}(J\log^d(1/\eps))$ yields the perturbation to be less than $3^{-q}/4$. Combining this with \Cref{eq:Gamma-spectral} on the assumed event \Cref{eq:Gamma-close} gives us \Cref{eq:Jacobian-conditioned-app}.
\end{proof}

\begin{lemma}[\Cref{lem:Hessian-conditioning}, Uniform Hessian conditioning]
Uniformly for \((y,\mu)\in\cY\times\cU\),
\begin{equation}
    \norm{\mathsf H(y,\mu)-\Gamma}_{\infty\to\infty}
    \le C_JV_R\left(\frac{\Jan}{\nu}+\frac{\Jan^2}{\nu^2}\right)
    +C_2V_R\frac{r_E}{\nu}
    \label{eq:Hessian-close-app}
\end{equation}
On the event that \Cref{eq:Gamma-close} holds, there exists an optimization field threshold $\nu_{\rm opt}=\mathcal{O}(JV_R)$ such that when $\nu\geq \nu_{\rm opt}$, we have
\begin{equation}
    \mathsf H(y,\mu)\succeq\frac{I}{2\cdot 3^q},
\end{equation}
and
\begin{equation}
    \norm{I-\Gamma^{-1}\mathsf H(y,\mu)}_{\infty\to\infty}\le 1/3.
\end{equation}
In particular, \(\mathcal L(y,\cdot)\) is uniformly strongly convex on \(\cU\).
\end{lemma}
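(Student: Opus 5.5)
We prove the statement by splitting the Hessian as in \Cref{eq:Hessian-split}, $\mathsf H(y,\mu)-\Gamma=(\mathsf G(\mu)-\Gamma)+\frac1m\sum_\rho r_\rho(y,\mu)\nabla^2F_\rho(\mu)$, and bounding the two pieces separately in $\|\cdot\|_{\infty\to\infty}$.

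\textbf{Bounding the two pieces.} The first piece is controlled directly by \Cref{lem:Jacobian-perturbation}. This gives the term $C_JV_R(\Jan/\nu+\Jan^2/\nu^2)$ uniformly on $\cU$.

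For the residual piece, fix a row index $a$ and estimate $\frac1m\sum_b\bigl|\sum_\rho r_\rho\,\partial_a\partial_bF_\rho\bigr|\le\frac1m\sum_\rho|r_\rho|\sum_b|\partial_a\partial_bF_\rho|$. We use three facts:
\begin{enumerate}
\item By \Cref{eq:residual-bound}, $|r_\rho|\le r_E$ for $(y,\mu)\in\cY\times\cU$.
\item By \Cref{lem:row-derivatives}, each inner sum over $b$ is at most $C_2/\nu$.
\item Also by \Cref{lem:row-derivatives}, $\partial_aF_\rho$ vanishes unless $\supp(P_a)\subseteq B_R(k)$. Hence only rows $\rho=(k,\ell)$ with $k$ within distance $R$ of $\supp(P_a)$ contribute, which is at most $mV_R$ rows.
\end{enumerate}
Together these give $C_2V_Rr_E/\nu$, and \Cref{eq:Hessian-close-app} follows by the triangle inequality.

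\textbf{Choosing the threshold and deducing the conclusions.} Since $r_E=L_F+r_y\le\zeta+C_1\Jan/\nu+1=\bigO(1)$ once $\nu\ge\Jan$, the right-hand side of \Cref{eq:Hessian-close-app} is $\bigO(JV_R/\nu)$. We choose $\nu_{\rm opt}=\bigO(JV_R)$ so that this is at most $3^{-q}/8$. Three consequences follow on the event \Cref{eq:Gamma-close}.

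\begin{enumerate}
\item \emph{Strong convexity.} $\mathsf H$ is symmetric, being a sum of the symmetric Gram matrix and symmetric Hessians, so its spectral deviation from $\Gamma$ is bounded by the $\infty\to\infty$ deviation. Combined with $\Gamma\succeq\frac34 3^{-q}I$ from \Cref{eq:Gamma-spectral}, this gives $\mathsf H\succeq(\frac34-\frac18)3^{-q}I\succeq\frac{1}{2\cdot3^q}I$.
\item \emph{Preconditioned bound.} We write $I-\Gamma^{-1}\mathsf H=-\Gamma^{-1}(\mathsf H-\Gamma)$. Then \Cref{eq:Gamma-inverse} gives $\|I-\Gamma^{-1}\mathsf H\|_{\infty\to\infty}\le C_\Gamma\cdot3^{-q}/8=\frac{4}{3}\cdot\frac18<1/3$.
\item \emph{Strong convexity of the loss.} The uniform positive definiteness of the Hessian on the convex set $\cU$ yields uniform strong convexity of $\mathcal L(y,\cdot)$.
\end{enumerate}

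\textbf{Main obstacle.} The main point needing care is the locality counting in the residual term: only $\bigO(mV_R)$ rows contribute for each fixed $a$. Combined with the row-wise second-derivative bound, this keeps the estimate independent of $n$. Everything else is bookkeeping of constants.
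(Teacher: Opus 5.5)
Your proof is correct and follows the paper's argument essentially step by step. You use the same Hessian split, the Jacobian Gram perturbation lemma for the Gram term, and the $mV_R$ row count combined with the second-derivative and residual bounds for the residual term, followed by the spectral lower bound on $\Gamma$ and $\norm{\Gamma^{-1}}_{\infty\to\infty}\le C_\Gamma$. The differences are cosmetic: you target a deviation of $3^{-q}/8$ instead of the paper's $3^{-q}/4$, which leaves slack in both conclusions, and you state explicitly the symmetry of $\mathsf H$ that justifies passing from the $\infty\to\infty$ norm to the eigenvalue bound.
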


\begin{proof}
By triangle inequality and \Cref{eq:Hessian-split},
\begin{equation}
    \norm{\mathsf H(y,\mu)-\Gamma}_{\infty\to\infty} \le \norm{\mathsf G(\mu)-\Gamma}_{\infty\to\infty}+\frac{1}{m}\norm{\sum_{\rho}r_\rho(y,u)\nabla^2F_\rho(\mu)}_{\infty\to\infty}.
\end{equation}
The $\norm{\mathsf G(\mu)-\Gamma}_{\infty\to\infty}$ term is bounded by \Cref{lem:Jacobian-perturbation}.  Consider the residual term in \Cref{eq:Hessian-split}, which is expressed as
\begin{equation}
    \frac{1}{m}\norm{\sum_{\rho}r_\rho(y,u)\nabla^2F_\rho(\mu)}_{\infty\to\infty} = \max_a\sum_b\left|
    \frac1m\sum_\rho r_\rho(y,\mu)
    \partial_a\partial_bF_\rho(\mu)
    \right|.
\end{equation}
Fix a coefficient index \(a\).  Only rows $\rho$ whose patch $B_{R}(k)$ contains \(\supp(P_a)\) can contribute to row \(a\), so there are at most \(mV_R\) such rows.  Denoting the set of such rows $\rho$ by $\cA_\rho$, $|\cA_\rho|\leq mV_R$, by \Cref{eq:second-derivative-bound} and \Cref{eq:residual-bound},
\begin{align}
    &\sum_b\left|
    \frac1m\sum_\rho r_\rho(y,\mu)
    \partial_a\partial_bF_\rho(\mu)
    \right|
    \le\frac1m\sum_{\rho:\,a\in\cA_\rho}
    |r_\rho(y,\mu)|\sum_b|\partial_a\partial_bF_\rho(\mu)|
    \le \frac{C_2V_Rr_E}{\nu}.
    \label{eq:residual-Hessian-row-bound}
\end{align}
Combining this with \Cref{eq:Jacobian-Gram-close} proves \Cref{eq:Hessian-close-app}.

On the assumed event \Cref{eq:Gamma-close}, \Cref{eq:Gamma-spectral} gives $\lambda_{\min}(\Gamma) \ge (3/4)\cdot3^{-q}$. We can choose $\nu=\mathcal{O}(V_R\Jan)=\mathcal{O}(J\log^d(1/\eps))$ such that 
\begin{equation}
    \eta_{\mathsf{H}} = \|\Gamma-\mathsf H(y,\mu)\|_{\infty\to\infty} \leq  C_JV_R\left(\frac{\Jan}{\nu}+\frac{\Jan^2}{\nu^2}\right)
    +C_2V_R\frac{r_E}{\nu} \le \frac{1}{4}\cdot3^{-q}.
\end{equation}
We have
\begin{equation}
    \lambda_{\min}(\mathsf H)
    \ge\lambda_{\min}(\Gamma)-\eta_{\mathsf{H}}
    \ge\frac{3}{4}\cdot3^{-q}-\frac{1}{4}\cdot3^{-q}
    =\frac{1}{2}\cdot3^{-q},
    \label{eq:Hessian-Weyl-proof}
\end{equation}
which proves strong convexity.  Finally,
\begin{equation}
    \norm{I-\Gamma^{-1}\mathsf H}_{\infty\to\infty}
    =\norm{\Gamma^{-1}(\Gamma-\mathsf H)}_{\infty\to\infty}
    \le C_\Gamma\eta_{\mathsf{H}}
    \le1/3,
    \label{eq:preconditioned-Hessian-proof}
\end{equation}
where the final inequality uses \Cref{eq:Gamma-inverse} where the constant $C_\Gamma$ is also defined.
\end{proof}

\begin{theorem}[\Cref{thm:local-inverse}, Stability and convergence of the fixed-preconditioner gradient iteration]
We condition on the event that \Cref{eq:Gamma-close} in \Cref{thm:Walsh-conditioning} is correct, and let $r_y$ in \Cref{eq:data-neighborhood} satisfy (with $C_\Gamma$ defined in \Cref{eq:Gamma-inverse} and $C_y$ in \Cref{eq:Cy})
\begin{equation}
 r_y\le\frac{1}{2C_\Gamma C_y}.
\end{equation}
There exists an optimization field threshold $\nu_{\rm opt}$:
\begin{equation}
    {\nu_{\rm opt}}=\mathcal{O}(JV_R)=\mathcal{O}(JR^d)
\end{equation}
such that for every $\nu\geq \nu_{\rm opt}$, \(y\in\cY\), the loss \(\mathcal L(y,\cdot)\) has a unique stationary point \(\mu^\star(y)\in\cU\).  It is the unique minimizer in \(\cU\), and
\begin{equation}
 \norm{{\mu^\star(y)}-\lambda}_\infty
 \le\frac{3C_\Gamma C_y\norm{y-y_*}_\infty}{2}.
\end{equation}
For every initialization \(\mu^{(0)}\in\cU\), the iteration
$\mu^{(t+1)}={\mathcal T_y}(\mu^{(t)}),$ with $\mathcal T_y$ defined in \Cref{eq:iteration-map}, converges geometrically to \(\mu^\star(y)\), with contraction factor at most \(1/3\).
\end{theorem}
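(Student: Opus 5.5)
The plan is a Banach fixed-point argument for $\mathcal T_y$ on $\cU$ in the $\ell^\infty$ norm. Fix $\nu\ge\nu_{\rm opt}$, with $\nu_{\rm opt}=\bigO(JV_R)$ chosen so that \Cref{lem:Hessian-conditioning} applies. On the event \Cref{eq:Gamma-close}, that lemma gives $\norm{I-\Gamma^{-1}\mathsf H(y,\mu)}_{\infty\to\infty}\le1/3$ uniformly for $(y,\mu)\in\cY\times\cU$.

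\emph{Step 1: $\mathcal T_y$ is a contraction.} Since $\cU$ is convex and $D_\mu\mathcal T_y(\mu)=I-\Gamma^{-1}\mathsf H(y,\mu)$, the mean value inequality in integral form gives, for $\mu,\mu'\in\cU$,
\[
\norm{\mathcal T_y(\mu)-\mathcal T_y(\mu')}_\infty\le\sup_{\xi\in\cU}\norm{I-\Gamma^{-1}\mathsf H(y,\xi)}_{\infty\to\infty}\norm{\mu-\mu'}_\infty\le\tfrac13\norm{\mu-\mu'}_\infty .
\]

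\emph{Step 2: $\mathcal T_y$ maps $\cU$ into itself.} I would compare with $y_*$. Since $F(\lambda)=y_*$, we have $f(y_*,\lambda)=0$, so $\mathcal T_{y_*}(\lambda)=\lambda$. Moreover $\mathcal T_y(\mu)-\mathcal T_{y_*}(\mu)=\Gamma^{-1}\frac1m\mathsf J(\mu)^{\mathsf T}(y-y_*)$. By \Cref{eq:Gamma-inverse} and \Cref{eq:Cy}, this has $\ell^\infty$ norm at most $C_\Gamma C_y\norm{y-y_*}_\infty\le C_\Gamma C_y r_y\le 1/2$. The contraction estimate of Step 1 also holds with $y_*$, since $y_*\in\cY$. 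Hence
\[
\norm{\mathcal T_y(\mu)-\lambda}_\infty\le\norm{\mathcal T_{y_*}(\mu)-\mathcal T_{y_*}(\lambda)}_\infty+\tfrac12\le\tfrac13+\tfrac12<1 .
\]
So $\mathcal T_y(\cU)\subseteq\cU$. Since $\cU$ is closed, Banach's theorem gives a unique fixed point $\mu^\star(y)\in\cU$. Geometric convergence with factor $1/3$ follows for every $\mu^{(0)}\in\cU$.

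\emph{Step 3: stationary points, minimizers and stability.} Because $\Gamma^{-1}$ is invertible, fixed points of $\mathcal T_y$ are exactly the stationary points of $\mathcal L(y,\cdot)$. So $\mu^\star(y)$ is the unique stationary point in $\cU$. Strong convexity from \Cref{lem:Hessian-conditioning} shows it is the unique minimizer over the convex set $\cU$. One also checks that it is not merely a boundary minimizer: any minimizer with nonzero gradient would have to lie on the boundary, but uniqueness of the stationary point together with strong convexity rules this out. For stability, write
\[
\mu^\star-\lambda=\mathcal T_y(\mu^\star)-\mathcal T_{y}(\lambda)+\mathcal T_y(\lambda)-\mathcal T_{y_*}(\lambda).
\]
The first difference is at most $\frac13\norm{\mu^\star-\lambda}_\infty$ and the second is at most $C_\Gamma C_y\norm{y-y_*}_\infty$. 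Rearranging gives $\norm{\mu^\star-\lambda}_\infty\le\frac32C_\Gamma C_y\norm{y-y_*}_\infty$, which is \Cref{eq:stability}.

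The main obstacle I expect is the self-map property in Step 2. It needs the uniform Hessian bound to hold at both $y$ and $y_*$, which is why $\cY$ contains $y_*$. It also needs the numerical margin $1/3+C_\Gamma C_y r_y\le1$, which is exactly where the hypothesis $r_y\le1/(2C_\Gamma C_y)$ enters. The remaining steps are routine consequences of the lemmas.
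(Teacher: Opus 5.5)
Your proof is correct and follows the paper's argument step for step: contraction from $\|I-\Gamma^{-1}\mathsf H(y,\mu)\|_{\infty\to\infty}\le 1/3$, the self-map margin $1/3+1/2<1$, Banach's theorem, and the same stability rearrangement. The differences are minor. In Step 2 you perturb the data at $\mu$ and contract with $\mathcal T_{y_*}$, using $y_*\in\cY$, whereas the paper bounds $\mathcal T_y(\lambda)-\lambda$ and contracts with $\mathcal T_y$. Your boundary-minimizer remark is unnecessary: a stationary point of a convex function is already a minimizer over any convex set containing it, and strong convexity then gives uniqueness.
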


\begin{proof}
Recall that $f(y,\mu):=\nabla_\mu \mathcal L(y,\mu)$, and $\mathcal T_y(\mu):=\mu-\Gamma^{-1}f(y,\mu)$, and that $\cU = \left\{\mu:\norm{\mu-\lambda}_\infty\le 1\right\}$.  We work on the assumed event \Cref{eq:Gamma-close}. By \Cref{lem:Hessian-conditioning}, there exists the optimization field threshold $\nu_{\rm opt}=\mathcal{O}(J\log^d(1/\eps))$ such that when $\nu\geq \nu_{\rm opt}$,
\begin{equation}
\label{eq:preconditioned-Hessian-used}
    \norm{I-\Gamma^{-1}\mathsf H(y,\mu)}_{\infty\to\infty} \le 1/3.
\end{equation}
We divide the proof into four steps.

\medskip
\noindent
\textbf{Step 1: \(\mathcal T_y\) is a contraction on \(\cU\).}
Differentiating the fixed-point map gives
\begin{equation}
\label{eq:D-Phi-detailed}
    D_\mu\mathcal T_y(\mu) = I-\Gamma^{-1}\mathsf H(y,\mu).
\end{equation}
Fix arbitrary \(\mu,\mu'\in\cU\). Since \(\cU\) is convex, the line segment $\mu_t:=\mu'+t(\mu-\mu')$, where $0\le t\le1$, is contained in \(\cU\). The fundamental theorem of calculus therefore gives
\begin{equation}
    \mathcal T_y(\mu)-\mathcal T_y(\mu') = \int_0^1 D_\mu\mathcal T_y(\mu_t) (\mu-\mu') \,\mathrm dt.
\end{equation}
Using \Cref{eq:preconditioned-Hessian-used}, we obtain
\begin{equation}
\label{eq:Phi-contraction-detailed}
    \norm{ \mathcal T_y(\mu)-\mathcal T_y(\mu') }_\infty \le \int_0^1 \norm{ D_\mu\mathcal T_y(\mu_t) }_{\infty\to\infty} \norm{\mu-\mu'}_\infty \,\mathrm dt \le \frac{1}{3} \norm{\mu-\mu'}_\infty.
\end{equation}
Thus \(\mathcal T_y\) is a contraction on \(\cU\) with contraction factor at most $1/3$.

\medskip
\noindent
\textbf{Step 2: \(\mathcal T_y\) maps \(\cU\) into itself.}
At the true parameter \(\lambda\), the definition of the score map gives
\begin{equation}
\label{eq:Phi-at-lambda-detailed}
    \mathcal T_y(\lambda)-\lambda = \Gamma^{-1} \frac1m \mathsf J(\lambda)^{\mathsf T} (y-y_*).
\end{equation}
Hence, by \Cref{eq:Cy},
\begin{equation}
\label{eq:Phi-lambda-data-bound}
\begin{aligned}
    \norm{ \mathcal T_y(\lambda)-\lambda }_\infty \le C_\Gamma C_y \norm{y-y_*}_\infty \le C_\Gamma C_y r_y.
\end{aligned}
\end{equation}
The assumed radius condition $r_y\le{1}/{(2C_\Gamma C_y)}$ therefore implies
\begin{equation}
\label{eq:Phi-lambda-half-basin-detailed}
    \norm{ \mathcal T_y(\lambda)-\lambda }_\infty \le 1/2.
\end{equation}
Now let \(\mu\in\cU\). Applying the contraction estimate
\Cref{eq:Phi-contraction-detailed} with \(\mu'=\lambda\), we obtain
\begin{align}
    \norm{ \mathcal T_y(\mu)-\lambda }_\infty \le \norm{ \mathcal T_y(\mu)-\mathcal T_y(\lambda) }_\infty + \norm{ \mathcal T_y(\lambda)-\lambda }_\infty \le \frac{1}{3}\norm{\mu-\lambda}_\infty + \frac{1}2 < 1.
\label{eq:self-map-proof-detailed}
\end{align}
Consequently, $\mathcal T_y(\cU)\subseteq\cU$. Because \(\cU\) is closed in a finite-dimensional normed space, it is complete under the metric induced by \(\norm{\cdot}_\infty\). Banach's fixed-point theorem therefore gives a unique point and minimizer $\mu^\star(y)\in\cU$ satisfying $\mathcal T_y(\mu^\star(y))=\mu^\star(y).$ Moreover, for every initialization \(\mu^{(0)}\in\cU\), all iterates
$\mu^{(t+1)}=\mathcal T_y(\mu^{(t)})$ remain in \(\cU\), and
\begin{equation}
\label{eq:geometric-convergence-detailed}
\begin{aligned}
    \norm{ \mu^{(t)}-\mu^\star(y) }_\infty = \norm{ \mathcal T_y^{\,t}(\mu^{(0)}) - \mathcal T_y^{\,t}(\mu^\star(y)) }_\infty \le \frac{1}{3^t} \norm{ \mu^{(0)}-\mu^\star(y) }_\infty.
\end{aligned}
\end{equation}
Thus the iteration converges geometrically with contraction factor at most $1/3$.

\medskip
\noindent
\textbf{Step 3: stability with respect to the data.}
Using the fixed-point identities for \(\mu^\star(y)\) and \(\lambda\), we obtain
\begin{align}
    \norm{ \mu^\star(y)-\lambda }_\infty &= \norm{ \mathcal T_y(\mu^\star(y)) - \mathcal T_{y_*}(\lambda) }_\infty \le \norm{ \mathcal T_y(\mu^\star(y)) - \mathcal T_y(\lambda) }_\infty + \norm{ \mathcal T_y(\lambda) - \mathcal T_{y_*}(\lambda) }_\infty.
\label{eq:stability-split-detailed}
\end{align}
The first term is bounded using the contraction property:
\begin{equation}
    \norm{ \mathcal T_y(\mu^\star(y)) - \mathcal T_y(\lambda) }_\infty \le {\norm{ \mu^\star(y)-\lambda }_\infty}/{3}.
\end{equation}
For the second term, \Cref{eq:Phi-at-lambda-detailed} gives
\begin{equation}
    \norm{ \mathcal T_y(\lambda) - \mathcal T_{y_*}(\lambda) }_\infty = \norm{ \Gamma^{-1} \frac1m \mathsf J(\lambda)^{\mathsf T} (y-y_*) }_\infty \le C_\Gamma C_y \norm{y-y_*}_\infty.
\end{equation}
Substitution into \Cref{eq:stability-split-detailed} gives
\begin{equation}
\label{eq:stability-before-rearranging}
    \norm{ \mu^\star(y)-\lambda }_\infty \le \frac{1}{3} \norm{ \mu^\star(y)-\lambda }_\infty + C_\Gamma C_y \norm{y-y_*}_\infty.
\end{equation}
Rearranging yields
\begin{align}
    \norm{ \mu^\star(y)-\lambda }_\infty \le \frac{3C_\Gamma C_y \norm{y-y_*}_\infty}{2}
\label{eq:stability-first-detailed}
\end{align}
This proves the existence and uniqueness of the stationary point, its minimizing property, the stability bound, and the geometric convergence of the fixed-point iteration.
\end{proof}

\section{Proof of parallelization (\Cref{sec:parallelization})}
\label{app:parallelization}
\begin{lemma}[\Cref{lem:parallel-code-factorization}, Factorization on the defect subspace]
$\mathcal{H}_{\rm def}(\mathcal{D},\ell)$ is an invariant subspace of $\widehat D_{\mathcal D,\ell}^{(<P_{\mathcal D})}$:
\begin{equation}
    (I-\Pi_{\mathcal D}) \widehat D_{\mathcal D,\ell}^{(<P_{\mathcal D})} \Pi_{\mathcal D}=0.
\end{equation}
Moreover, for each $k\in\mathcal D$,
define
\begin{equation}
    E_{\mathcal D,\ell;k}^{(<P_{\mathcal D})} := \frac12\left( \bra{\Phi_{\{k\}}^{\mathcal D,\ell}} \widehat D_{\mathcal D,\ell}^{(<P_{\mathcal D})} \ket{\Phi_{\{k\}}^{\mathcal D,\ell}} - \bra{\Phi_{\varnothing}^{\mathcal D,\ell}} \widehat D_{\mathcal D,\ell}^{(<P_{\mathcal D})} \ket{\Phi_{\varnothing}^{\mathcal D,\ell}} \right),
\end{equation}
and then is a real number $c_{\mathcal D}$ such that
\begin{align}
    \Pi_{\mathcal D} \widehat D_{\mathcal D,\ell}^{(<P_{\mathcal D})} \Pi_{\mathcal D} =\Pi_{\mathcal D}\left( c_{\mathcal D}I +2\sum_{k\in\mathcal D} E_{\mathcal D,\ell;k}^{(<P_{\mathcal D})}Q_k^{\mathcal D,\ell} \right)\Pi_{\mathcal D}.
\end{align}
Therefore, for the evolution generated by $G_{\mathcal D,\ell}^{(<P_{\mathcal D})}:=\nu N_{\mathcal D,\ell}+\widehat D_{\mathcal D,\ell}^{(<P_{\mathcal D})}$, the expectation value of $\tau_{k,+}^{\mathcal D,\ell}$ at every defect site $k$ at time $t$ is
\begin{equation}
    \bra{\Psi_+^{\mathcal D,\ell}} e^{iG_{\mathcal D,\ell}^{(<P_{\mathcal D})}t} \tau_{k,+}^{\mathcal D,\ell} e^{-iG_{\mathcal D,\ell}^{(<P_{\mathcal D})}t} \ket{\Psi_+^{\mathcal D,\ell}} =e^{i[\nu+2E_{\mathcal D,\ell;k}^{(<P_{\mathcal D})}]t}.
\end{equation}
In the above $Q_k^{\mathcal{D},\ell}$ is defined as in \Cref{eq:Q-row} and $\tau_{k,+}^{\mathcal{D},\ell}$ in \Cref{eq:parallel-O-plus}.
\end{lemma}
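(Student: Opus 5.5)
The plan is to work term by term with the decomposition of \Cref{lem:D-taylor-expansion-commuting-support}: write $\widehat D_{\mathcal D,\ell}^{(<P_{\mathcal D})}=\sum_{p<P_{\mathcal D}}\sum_X \widetilde W^{(p)}_{\mathcal D,X}$. Each summand $W:=\widetilde W^{(p)}_{\mathcal D,X}$ is Hermitian, commutes with $N_{\mathcal D,\ell}$, and has support of diameter at most $pr_0<L_{\mathcal D}$. By \Cref{eq:parallel-bridge-distance}, its support therefore contains at most one defect. It suffices to prove, for each such $W$, that $W\ket{\Phi_A}$ is a scalar multiple $w(A)\ket{\Phi_A}$, and that $w(A)=w_0+w_1\ind[k\in A]$ when $\supp W\ni k$, while $w(A)=w_0$ when $\supp W$ contains no defect. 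Summing over all terms then gives both invariance and the stated affine form.

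\textbf{Step 1 (diagonality).} Expand $W\ket{\Phi_A}$ in the product basis $\{\ket{\phi_{j,b}}\}$. Each resulting basis vector agrees with $\ket{\Phi_A}$ outside $S=\supp W$. Since $W$ commutes with $N_{\mathcal D,\ell}$, only vectors with the same total charge $|A|$ can appear, and because the charge is a sum of onsite terms, the charge restricted to $S$ must also be preserved. Consider first the case where $S$ contains no defect. Then every site of $S$ is in state $0$ with weight $2$, so local charge $0$ forces every site of $S$ to remain $0$. Now consider the case $S\ni k$ with $k$ the only defect in $S$. The initial local charge is $b_k\in\{0,1\}$, where $b_k=\ind[k\in A]$. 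A non-defect flip contributes $+2$ to the local charge, and defect contributions lie in $\{0,1\}$, so charge $b_k$ can only be achieved by the configuration in which all non-defect sites of $S$ stay $0$ and the defect stays $b_k$. Hence $W\ket{\Phi_A}\propto\ket{\Phi_A}$. This gives \Cref{eq:parallel-no-leakage}, and also shows that $\Pi_{\mathcal D}\widehat D^{(<P_{\mathcal D})}\Pi_{\mathcal D}$ is diagonal in the $\ket{\Phi_A}$ basis.

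\textbf{Step 2 (additivity).} The eigenvalue $w(A)=\bra{\Phi_A}W\ket{\Phi_A}$ depends only on the restriction of $\ket{\Phi_A}$ to $S$, which is determined by $b_k$ alone, or by nothing when $S$ contains no defect. So $w(A)=w_0+(w_1-w_0')\ind[k\in A]$ for suitable real scalars, reality following from Hermiticity. Summing over all terms, the diagonal entry of $\widehat D^{(<P_{\mathcal D})}$ on $\ket{\Phi_A}$ equals $c_{\mathcal D}+\sum_{k\in A}\varepsilon_k$. Evaluating at $A=\varnothing$ gives $c_{\mathcal D}$, and evaluating at $A=\{k\}$ identifies $\varepsilon_k=2E^{(<P_{\mathcal D})}_{\mathcal D,\ell;k}$. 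Since $Q_k\ket{\Phi_A}=\ind[k\in A]\ket{\Phi_A}$, this is exactly \Cref{eq:parallel-factorization}.

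\textbf{Step 3 (signal).} The state $\ket{\Phi_A}$ is an eigenvector of $G^{(<P_{\mathcal D})}$ with eigenvalue $\nu|A|+c_{\mathcal D}+2\sum_{j\in A}E_j$. Expand $\ket{\Psi_+}=2^{-|\mathcal D|/2}\sum_A\ket{\Phi_A}$ and use $\tau_{k,+}\ket{\Phi_A}=2\ind[k\notin A]\ket{\Phi_{A\cup\{k\}}}$. Each pair $(A,A\cup\{k\})$ contributes $2\cdot2^{-|\mathcal D|}e^{i(\nu+2E_k)t}$, and there are $2^{|\mathcal D|-1}$ such pairs, which gives \Cref{eq:parallel-ideal-tone}.

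The main obstacle is Step 1, which requires the local charge-conservation argument: commuting with the global $N_{\mathcal D,\ell}$ must be converted into conservation of the charge restricted to $S$. This follows because $N_{\mathcal D,\ell}$ restricted to $S^c$ commutes with $W$, so $W$ commutes with $N_{\mathcal D,\ell}|_S$. With that in hand, the integer weights $1$ and $2$ complete the case analysis.
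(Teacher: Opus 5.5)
Your proposal is correct and follows essentially the same route as the paper. Both arguments decompose $\widehat D_{\mathcal D,\ell}^{(<P_{\mathcal D})}$ via \Cref{lem:D-taylor-expansion-commuting-support} into Hermitian, charge-preserving terms, each touching at most one defect. Both then convert global commutation with $N_{\mathcal D,\ell}$ into commutation with the charge restricted to each term's support (the paper's $N_X$), and use that the local charge-$0$ and charge-$1$ sectors are one-dimensional to get the eigenvalues $a_X,d_X$. Summing gives $c_{\mathcal D}=\sum_X a_X$ and $2E_{\mathcal D,\ell;k}^{(<P_{\mathcal D})}=\sum_{X\cap\mathcal D=\{k\}}(d_X-a_X)$, after which the signal is computed pairwise over $(A,A\cup\{k\})$. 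Your product-basis phrasing of Step 1 is the same eigenspace argument written in coordinates, and the stray $w_0'$ in Step 2 should read $w_0$.
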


\begin{proof}
For simplicity, we write $\Pi:=\Pi_{\mathcal D}$, $N:=N_{\mathcal D,\ell}$, $Q_k:=Q_k^{\mathcal D,\ell}$, $G:=G_{\mathcal D,\ell}^{(<P_{\mathcal D})}$, and $O_k:=\tau_{k,+}^{\mathcal D,\ell}$. We then write $\widehat D_{\mathcal D,\ell}^{(<P_{\mathcal D})}$ as a sum of interaction terms,  $\widehat D_{\mathcal D,\ell}^{(<P_{\mathcal D})}=\sum_X W_X$

By \Cref{lem:D-taylor-expansion-commuting-support,eq:parallel-separation,eq:parallel-bridge-distance}, every term $W_X$ is Hermitian and satisfies $|X\cap\mathcal D|\le 1$ and $[W_X,N]=0$. For support $X$, define
\begin{equation}
    N_X:=\sum_{k\in X\cap\mathcal D}Q_k+
2\sum_{j\in X\setminus\mathcal D}Q_j .  
\end{equation}
Since $W_X$ is supported on $X$, it commutes with $N-N_X$. Therefore,
$[W_X,N_X]=[W_X,N]-[W_X,N-N_X]=0.$

We consider two cases. First suppose that $X\cap\mathcal D=\varnothing$. In this case, every defect-subspace basis state $\ket{\Phi_A^{\mathcal D,\ell}}$ has the same restriction on $X$:
\begin{equation}
\ket{\Phi_{X,0}}=\bigotimes_{j\in X}\ket{\phi_{j,0}} .    
\end{equation}
This is the unique eigenstate of $N_X$ with eigenvalue zero. Since $W_X$ commutes with $N_X$, it must preserve this one-dimensional eigenspace, and therefore $W_X\ket{\Phi_A^{\mathcal D,\ell}}=a_X\ket{\Phi_A^{\mathcal D,\ell}}$ where $a_X=\braket{\Phi_A^{\mathcal D,\ell}|W_X|\Phi_A^{\mathcal D,\ell}}=\braket{\Phi_{X,0}|W_X|\Phi_{X,0}}$, and is therefore independent of $A$.

Now suppose that $X\cap\mathcal D=\{k\}$. The restriction of a defect-subspace basis state $\ket{\Phi_A^{\mathcal D,\ell}}$ to $X$ is either $\ket{\Phi_{X,0}}=\bigotimes_{j\in X}\ket{\phi_{j,0}}$, or $\ket{\Phi_{X,1}^{k}}=\ket{\phi_{k,1}}\bigotimes_{j\in X\setminus\{k\}}\ket{\phi_{j,0}}$ . These states satisfy $N_X\ket{\Phi_{X,0}}=0$, and $N_X\ket{\Phi_{X,1}^{k}}=\ket{\Phi_{X,1}^{k}}$.

The eigenvalue-zero and eigenvalue-one eigenspaces of $N_X$ are both one dimensional. Indeed, the contribution of every non-defect site to $N_X$ is either zero or two, whereas the defect site $k$ contributes either zero or one. Hence, the only state with eigenvalue one is the state where the defect site $k$ is excited and all other sites remain in their zero states.  Since $[W_X,N_X]=0$ and both eigenspaces are one dimensional, $\ket{\Phi_{X,0}}$ and $\ket{\Phi_{X,1}^{k}}$ are eigenstates of $W_X$. Because $W_X$ is supported on $X$, every $\ket{\Phi_A^{\mathcal D,\ell}}$ is therefore an eigenstate of $W_X$, with its eigenvalue depending only on whether $k\in A$. Thus,

\begin{equation}
W_X\ket{\Phi_{A}^{\mathcal D,\ell}} =
\begin{cases}
    a_X\ket{\Phi_{A}^{\mathcal D,\ell}}, & k\notin A,\\
    d_X\ket{\Phi_{A}^{\mathcal D,\ell}}, & k\in A,
\end{cases}
\label{eq:parallel-W-action}
\end{equation}
where $a_X=\braket{\Phi_{X,0}|W_X|\Phi_{X,0}}$, $d_X=\braket{\Phi_{X,1}^{k}|W_X|\Phi_{X,1}^{k}}$, and both are real numbers that are independent of $A$. Therefore, every term $W_X$ maps the defect subspace into itself, and $(I-\Pi)W_X\Pi=0$. Summing over all interaction terms proves
\begin{equation}\label{eq:parallel-no-leakage-proof}
    (I-\Pi)\widehat D_{\mathcal D,\ell}^{(<P_{\mathcal D})}\Pi=0.
\end{equation}

Next we determine the restriction of the Hamiltonian to the defect
subspace. On the defect subspace,
\begin{equation}
Q_k\ket{\Phi_A^{\mathcal D,\ell}}
=\ind[k\in A]\ket{\Phi_A^{\mathcal D,\ell}} .
\label{eq:parallel-Q-action}
\end{equation}
Therefore, the action in \Cref{eq:parallel-W-action} can be written as $\Pi W_X\Pi=\Pi(a_X I+(d_X-a_X)Q_k)\Pi$ when $X$ contains the defect $k$, and as $\Pi W_X\Pi=a_X\Pi$ when $X$ contains no defect.

Set \(c_{\mathcal D}:=\bra{\Phi_{\varnothing}^{\mathcal D,\ell}}\widehat D_{\mathcal D,\ell}^{(<P_{\mathcal D})}\ket{\Phi_{\varnothing}^{\mathcal D,\ell}}=\sum_X a_X.\) For each $k\in\mathcal D$, the definition of $E_{\mathcal D,\ell;k}^{(<P_{\mathcal D})}$ in the lemma and \Cref{eq:parallel-W-action} give

\begin{align}
    2E_{\mathcal D,\ell;k}^{(<P_{\mathcal D})}
    &= \bra{\Phi_{\{k\}}^{\mathcal D,\ell}} \widehat D_{\mathcal D,\ell}^{(<P_{\mathcal D})} \ket{\Phi_{\{k\}}^{\mathcal D,\ell}}
    -\bra{\Phi_{\varnothing}^{\mathcal D,\ell}} \widehat D_{\mathcal D,\ell}^{(<P_{\mathcal D})}\ket{\Phi_{\varnothing}^{\mathcal D,\ell}}
    =\sum_{X:X\cap\mathcal D=\{k\}}(d_X-a_X).
\label{eq:parallel-half-gap-coefficients}
\end{align}

Then summing over all $X$ gives
\begin{equation}
\Pi \widehat D_{\mathcal D,\ell}^{(<P_{\mathcal D})} \Pi = \Pi \left( c_{\mathcal D}I + 2\sum_{k\in\mathcal D} E_{\mathcal D,\ell;k}^{(<P_{\mathcal D})} Q_k \right) \Pi .
\label{eq:parallel-factorization-proof}
\end{equation}

Since all non-defect qubits are in the $Q_j=0$ state on the defect
subspace,
\begin{equation}
\Pi N\Pi = \Pi \left( \sum_{k\in\mathcal D}Q_k \right) \Pi .
\label{eq:parallel-charge-restriction}
\end{equation}
Combining
\Cref{eq:parallel-factorization-proof} and
\Cref{eq:parallel-charge-restriction}, we obtain
\begin{equation}
\Pi G \Pi = \Pi \left( c_{\mathcal D}I + \sum_{k\in\mathcal D} \left( \nu+ 2E_{\mathcal D,\ell;k}^{(<P_{\mathcal D})} \right) Q_k \right) \Pi .
\label{eq:parallel-projected-G}
\end{equation}

We can see that $(I-\Pi) N \Pi = 0$ holds. Combining with \Cref{eq:parallel-no-leakage-proof} we have $(I-\Pi) G_{\mathcal D,\ell}^{(<P_{\mathcal D})}\Pi = 0$. Therefore, we have
\begin{equation}
G_{\mathcal D,\ell}^{(<P_{\mathcal D})} \Pi = \Pi \left( c_{\mathcal D}I + \sum_{k\in\mathcal D} \left( \nu+ 2E_{\mathcal D,\ell;k}^{(<P_{\mathcal D})} \right) Q_k \right) \Pi .
\end{equation}
And this yields 
\begin{align}
    &G\ket{\Phi_A^{\mathcal D,\ell}} = G\Pi\ket{\Phi_A^{\mathcal D,\ell}} \nonumber \\ 
    &= \Pi \left( c_{\mathcal D}I + \sum_{k\in\mathcal D} \left( \nu+ 2E_{\mathcal D,\ell;k}^{(<P_{\mathcal D})} \right) Q_k \right) \Pi\ket{\Phi_A^{\mathcal D,\ell}} 
    = \left( c_{\mathcal D} + \sum_{j\in A} \left( \nu+ 2E_{\mathcal D,\ell;j}^{(<P_{\mathcal D})} \right) \right) \ket{\Phi_A^{\mathcal D,\ell}} .
\label{eq:parallel-code-energy}
\end{align}

It remains to compute the Ramsey signal. From \Cref{eq:parallel-code-energy}, every defect-subspace basis state is an eigenstate of $G$, so applying the time evolution of a defect-subspace basis state is
\begin{equation}
    e^{-iGt}\ket{\Phi_A^{\mathcal D,\ell}} = \exp\left( -i \left[ c_{\mathcal D} + \sum_{j\in A} \left( \nu+ 2E_{\mathcal D,\ell;j}^{(<P_{\mathcal D})} \right) \right]t \right) \ket{\Phi_A^{\mathcal D,\ell}} .
\label{eq:parallel-forward-evolution}
\end{equation}

Now consider $\tau_{k,+}^{\mathcal D,\ell}$. By definition,
it changes only the state of defect $k$:
\begin{equation}
    \tau_{k,+}^{\mathcal D,\ell}\ket{\Phi_A^{\mathcal D,\ell}} =
    \begin{cases}
        2\ket{\Phi_{A\cup\{k\}}^{\mathcal D,\ell}}, & k\notin A,\\
        0, & k\in A .
    \end{cases}
\label{eq:parallel-O-action}
\end{equation}

For $k\notin A$, applying the three operations appearing in the
Ramsey signal gives
\begin{align}
    e^{iGt} \tau_{k,+}^{\mathcal D,\ell} e^{-iGt} \ket{\Phi_A^{\mathcal D,\ell}} = 2 e^{-i\varepsilon_A t} e^{i\varepsilon_{A\cup\{k\}}t} \ket{\Phi_{A\cup\{k\}}^{\mathcal D,\ell}},
    \label{eq:parallel-O-three-step}
\end{align}
where \(\varepsilon_A := c_{\mathcal D} + \sum_{j\in A} \left( \nu+ 2E_{\mathcal D,\ell;j}^{(<P_{\mathcal D})} \right).\) The phase difference is
\begin{align}
    \varepsilon_{A\cup\{k\}}-\varepsilon_A 
    &= \left( c_{\mathcal D} + \sum_{j\in A} \left( \nu+ 2E_{\mathcal D,\ell;j}^{(<P_{\mathcal D})} \right) + \nu+ 2E_{\mathcal D,\ell;k}^{(<P_{\mathcal D})} \right) - \left( c_{\mathcal D} + \sum_{j\in A} \left( \nu+ 2E_{\mathcal D,\ell;j}^{(<P_{\mathcal D})} \right) \right) \\ 
    &= \nu+ 2E_{\mathcal D,\ell;k}^{(<P_{\mathcal D})}.
\end{align}

Hence, using \Cref{eq:parallel-O-action} we have
\begin{equation}
    e^{iGt} \tau_{k,+}^{\mathcal D,\ell} e^{-iGt} \ket{\Phi_A^{\mathcal D,\ell}} = e^{i\left[\nu+ 2E_{\mathcal D,\ell;k}^{(<P_{\mathcal D})}\right]t} \tau_{k,+}^{\mathcal D,\ell} \ket{\Phi_A^{\mathcal D,\ell}} .
    \label{eq:parallel-O-action-evolution}
\end{equation}
The same equation holds for $k\in A$, because both sides vanish to $0$. Finally, using the expansion
\begin{equation}
    \ket{\Psi_+^{\mathcal D,\ell}} = 2^{-|\mathcal D|/2} \sum_{A\subseteq\mathcal D} \ket{\Phi_A^{\mathcal D,\ell}},
\end{equation}
and applying \Cref{eq:parallel-O-action-evolution}, we obtain
\begin{align}
    \bra{\Psi_+^{\mathcal D,\ell}} e^{iGt} \tau_{k,+}^{\mathcal D,\ell} e^{-iGt} \ket{\Psi_+^{\mathcal D,\ell}}= 
    e^{i\left[\nu+ 2E_{\mathcal D,\ell;k}^{(<P_{\mathcal D})}\right]t} \bra{\Psi_+^{\mathcal D,\ell}} \tau_{k,+}^{\mathcal D,\ell} \ket{\Psi_+^{\mathcal D,\ell}} .
\end{align}
Since the $k$-th defect is prepared in $\left({\ket{\phi_{k,0}^{\mathcal D,\ell}}+\ket{\phi_{k,1}^{\mathcal D,\ell}}}\right)/{\sqrt2}$ we have $\bra{\Psi_+^{\mathcal D,\ell}}
\tau_{k,+}^{\mathcal D,\ell}
\ket{\Psi_+^{\mathcal D,\ell}}
=1$. Therefore,
\begin{equation}
    \bra{\Psi_+^{\mathcal D,\ell}} e^{iG_{\mathcal D,\ell}^{(<P_{\mathcal D})}t} \tau_{k,+}^{\mathcal D,\ell} e^{-iG_{\mathcal D,\ell}^{(<P_{\mathcal D})}t} \ket{\Psi_+^{\mathcal D,\ell}} = e^{i\left[\nu+ 2E_{\mathcal D,\ell;k}^{(<P_{\mathcal D})}\right]t}.
\end{equation}
\end{proof}

\begin{lemma}[\Cref{lem:parallel-local-perturbation}, Local dynamics under a small interaction]
Let \(A\) and \(B\) be Hermitian interactions satisfying \(\norm A_{2\kappa}\le J_A,\norm B_{2\kappa}\le\delta_B,\) for some \(\kappa>0\). For an operator \(O_x\) supported on one site,
\begin{equation}
    \norm{ e^{i(A+B)t}O_xe^{-i(A+B)t} -e^{iAt}O_xe^{-iAt}} \le C_{d,\kappa}\norm{O_x}\, \delta_B|t|(1+v_A|t|)^d,
\end{equation}
where \(v_A=C(d)\kappa^{-(d+2)}e^\kappa J_A\) is the Lieb-Robinson velocity. The prefactor \(C_{d,\kappa}\) can be chosen such that $C_{d,\kappa}\le C_d(1+\kappa^{-1})^d$. Here \(C_d\) and \(C(d)\) depend only on \(d\), and are independent of the system size.
\end{lemma}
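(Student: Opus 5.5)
Apply Duhamel's formula in the Heisenberg picture, then use the Lieb--Robinson estimate of \Cref{lem:ADHH-LR} to localize the perturbation around the site \(x\). Write \(O(t):=e^{iAt}O_xe^{-iAt}\) and \(\widetilde O(t):=e^{i(A+B)t}O_xe^{-i(A+B)t}\). Differentiating \(e^{i(A+B)(t-u)}e^{iAu}O_xe^{-iAu}e^{-i(A+B)(t-u)}\) in \(u\) and integrating from \(0\) to \(t\) gives
\begin{equation}
    \widetilde O(t)-O(t)=i\int_0^t e^{i(A+B)(t-u)}\,[B,O(u)]\,e^{-i(A+B)(t-u)}\,du .
\end{equation}
Since the outer conjugation is unitary, the left-hand side is at most \(\int_0^{|t|}\norm{[B,O(u)]}\,du\). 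It therefore suffices to show
\begin{equation}
    \norm{[B,O(u)]}\le C_{d,\kappa}\norm{O_x}\delta_B(1+v_A|u|)^d .
\end{equation}

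To bound this commutator, decompose \(B=\sum_S B_S\) into its components and estimate each term \(\norm{[B_S,O(u)]}\) in two ways. The trivial bound is \(2\norm{B_S}\norm{O_x}\). The Lieb--Robinson bound (\Cref{lem:ADHH-LR} applied to \(A\), which is legitimate since \(\norm A_{2\kappa}\le J_A\)) gives
\begin{equation}
    2\norm{B_S}\norm{O_x}\,e^{-\kappa[\dist(x,S)-v_A|u|]},
\end{equation}
where the factor \(\min\{|X|,|Y|\}\) equals \(1\) because \(O_x\) is supported on a single site. Group the components by the distance \(r=\dist(x,S)\), choose for each \(S\) a site \(y\in S\) realizing this distance, and use the minimum of the two bounds. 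This yields
\begin{equation}
    \norm{[B,O(u)]}\le 2\norm{O_x}\sum_{y\in\Lambda}\min\{1,e^{-\kappa(\dist(x,y)-v_A|u|)}\}\sum_{S\ni y}\norm{B_S}.
\end{equation}
The inner sum is bounded by \(\sum_{S\ni y}e^{2\kappa|S|}\norm{B_S}\le\norm B_{2\kappa}\le\delta_B\). The remaining lattice sum splits into two parts. The sites with \(\dist(x,y)\le v_A|u|\) number at most \(C_d(1+v_A|u|)^d\). The sites beyond this radius contribute \(\sum_{r\ge0}C_d(1+v_A|u|+r)^{d-1}e^{-\kappa r}\le C_d(1+\kappa^{-1})^d(1+v_A|u|)^{d-1}\), using \((a+r)^{d-1}\le 2^{d}(a^{d-1}+r^{d-1})\) and \(\sum_r r^{d-1}e^{-\kappa r}\lesssim \kappa^{-d}+1\). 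The two parts together give the pointwise bound above with \(C_{d,\kappa}\le C_d(1+\kappa^{-1})^d\).

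Integrating over \(u\in[0,|t|]\) then gives \(C_{d,\kappa}\norm{O_x}\delta_B|t|(1+v_A|t|)^d\), as claimed. The main point requiring care is the lattice sum, specifically keeping the \(\kappa\)-dependence at the stated \((1+\kappa^{-1})^d\). This is why I would state the count of sites at distance \(r\) explicitly as \(C_d(1+r)^{d-1}\) and bound the exponential tail by comparison with a Gamma integral. Finally, it must be checked that \(\dist(x,S)\) is exactly the separation entering \Cref{lem:ADHH-LR}, with \(X=\{x\}\) and \(Y=S\). No constant in the argument depends on \(n\).
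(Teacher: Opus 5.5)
Your proposal is correct and follows essentially the same route as the paper's proof. Both use Duhamel's formula in the Heisenberg picture, then apply the Lieb--Robinson bound for $A$, capped by the trivial commutator bound, to each component $B_S$. Both then assign each component to a nearest site, bound the on-site sums by $\norm{B}_{2\kappa}\le\delta_B$, and split the lattice sum into the ball of radius $v_A|u|$ and exponentially weighted shells. The only cosmetic difference is in the shell count: you split $(1+v_A|u|+r)^{d-1}$ additively, whereas the paper bounds it by the product $(1+v_A|u|)^{d}(1+r)^{d-1}$. Both give the prefactor $C_d(1+\kappa^{-1})^d$.
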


\begin{proof}
It suffices to consider \(t\ge0\), since replacing \(A,B\) by \(-A,-B\) gives the same bound for \(t<0\). Let $O_x(t):=e^{iAt}O_xe^{-iAt}$.

Using Duhamel's formula, we have
\begin{equation}
    e^{i(A+B)t}O_xe^{-i(A+B)t} - e^{iAt}O_xe^{-iAt} = i\int_0^t e^{i(A+B)(t-s)} [B,O_x(s)] e^{-i(A+B)(t-s)} \,ds .
    \label{eq:parallel-duhamel}
\end{equation}
Therefore,
\begin{equation}
    \norm{ e^{i(A+B)t}O_xe^{-i(A+B)t} - e^{iAt}O_xe^{-iAt} } \le \int_0^{|t|} \norm{[B,O_x(s)]}\,ds .
\label{eq:parallel-duhamel-bound}
\end{equation}

Write $B=\sum_X B_X$, where \(B_X\) is supported on \(X\). Then
\begin{equation}
    \norm{[B,O_x(s)]} \le \sum_X\norm{[B_X,O_x(s)]}.
\label{eq:parallel-commutator-sum}
\end{equation}

We apply the Lieb--Robinson bound in
\Cref{lem:ADHH-LR} to the interaction \(A\). For each \(X\),
the evolved observable \(O_x(s)\) satisfies
\begin{equation}
    \norm{[B_X,O_x(s)]} \le {2}\norm{B_X}\norm{O_x} e^{-\kappa(\dist(X,x)-v_A s)}
\label{eq:parallel-LR-bound}
\end{equation}
when \(\dist(X,x)>v_As\).

For nearby terms, we use the trivial bound $\norm{[B_X,O_x(s)]}\le2\norm{B_X}\norm{O_x}$. Combining the two bounds gives
\begin{equation}
    \norm{[B_X,O_x(s)]}\le2\norm{B_X}\norm{O_x}\min\left\{1,e^{-\kappa(\dist(X,x)-v_As)}\right\}.
    \label{eq:parallel-local-LR-min}
\end{equation}

We now bound the sum over \(X\), keeping track of the dependence on \(\kappa\). Throughout this proof, \(C_d\) denotes a positive constant depending only on \(d\), whose value may change from line to line. By \Cref{eq:local-norm}, \(\sum_{X\ni y}\norm{B_X}\le\delta_B\) for every site \(y\). Split the interaction into terms with \(\dist(X,x)\le v_As\) and the remaining terms. Each support in the first part contains a site within distance \(v_As\) of \(x\), so
\begin{equation}
    \sum_{X:\dist(X,x)\le v_As}\norm{B_X} \le \sum_{\substack{y\in\Lambda:\\\dist(y,x)\le v_As}} \sum_{X\ni y}\norm{B_X} \le C_d\delta_B(1+v_As)^d.
\label{eq:parallel-near-bound}
\end{equation}

For the second part, assign each support to a site nearest to \(x\), and then sum over lattice shells of width one. The shell
\(j<\dist(y,x)-v_As\le j+1\), for \(j\ge0\), contains at most
\begin{equation}
    C_d(1+v_As+j)^{d-1}\leq C_d(1+v_As)^{d-1}(1+j)^{d-1}\leq C_d(1+v_As)^{d}(1+j)^{d-1}    
\end{equation}
sites (this bound is deliberately loose to match \Cref{eq:parallel-near-bound}). Thus
\begin{align}
    \sum_{X:\dist(X,x)>v_As}\norm{B_X}e^{-\kappa(\dist(X,x)-v_As)} &\le \delta_B\sum_{\substack{y\in\Lambda:\\\dist(y,x)>v_As}} e^{-\kappa(\dist(y,x)-v_As)} \\
    &\le C_d\delta_B(1+v_As)^d \sum_{j=0}^{\infty}e^{-\kappa j}(1+j)^{d-1} \\
    & \le C_d(1+\kappa^{-1})^d\delta_B(1+v_As)^d.
    \label{eq:parallel-far-bound}
\end{align}
Summing the geometric series gives
\begin{equation}
    \sum_{j=0}^{\infty}e^{-\kappa j}(1+j)^{d-1} \le (d-1)!\sum_{j=0}^{\infty} e^{-\kappa j}\binom{j+d-1}{d-1} =\frac{(d-1)!}{(1-e^{-\kappa})^d} \le (d-1)!(1+\kappa^{-1})^d,
\end{equation}
where the final inequality follows from \(e^\kappa\ge1+\kappa\). Therefore, combining \Cref{eq:parallel-local-LR-min,eq:parallel-near-bound,eq:parallel-far-bound}, we obtain
\begin{equation}
    \norm{[B,O_x(s)]} \le C_d(1+\kappa^{-1})^d\norm{O_x} \delta_B(1+v_As)^d .
    \label{eq:parallel-commutator-final}
\end{equation}

Finally, substituting \Cref{eq:parallel-commutator-final} into \Cref{eq:parallel-duhamel-bound} gives
\begin{align}
    \norm{ e^{i(A+B)t}O_xe^{-i(A+B)t} - e^{iAt}O_xe^{-iAt} } &\le C_d(1+\kappa^{-1})^d\norm{O_x} \delta_B \int_0^{|t|} (1+v_As)^d ds \\ 
    &\le C_d(1+\kappa^{-1})^d\norm{O_x} \delta_B |t|(1+v_A|t|)^d .
\end{align}
Taking \(C_{d,\kappa}=C_d(1+\kappa^{-1})^d\), with \(C_d\) sufficiently large and depending only on \(d\), proves the lemma.
\end{proof}

\begin{lemma}[\Cref{lem:parallel-stopped-dynamics}, Single-ball comparison in a parallel experiment]
Assume \(\nu\ge\max\{\bar\nu_0,54\pi\Jan/\kappa_0\}\), \(1\le s=n_{\rm par}\le n_*(\nu)\), and \(L_{\mathcal D}\ge2R+(P+1)r_0\), with integer \(P\ge1\). Set \(\mathcal K_k:=\nu N_k+D_{k,s}+\nu N_{{\rm rem},k}\). For every bounded operator \(O\) and every real \(t\),
\begin{equation}
    \norm{e^{iG_{\mathcal D,\ell}^{\mathcal B}t}O e^{-iG_{\mathcal D,\ell}^{\mathcal B}t} -e^{i\mathcal K_kt}Oe^{-i\mathcal K_kt}}\le C\norm O\Jan|B_k|\left\{ \nu^{-1}+|t|\left[(2/3)^s+(P+1)2^{-P}\right]\right\}.
\end{equation}
The constant is independent of \(n,|\mathcal D|,s,P,\nu\).
\end{lemma}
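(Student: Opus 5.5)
The argument follows the template of \Cref{thm:ADHH-dynamics}, now applied to the full physical generator $G:=G_{\mathcal D,\ell}^{\mathcal B}=H+\nu N_k+\nu N_{{\rm rem},k}$. Write $Y:=Y_{k,s}$. The recursion with input $H$ and charge $N_k$ satisfies \Cref{thm:uniform-normal-form}, since $\norm H_{\kappa_0}\le J\le\Jan$. Every nonzero component of $A_j$ and $V_j$ has a label meeting $B_k$, by the argument at the start of the proof of \Cref{thm:ADHH-dynamics}. Therefore \Cref{eq:finite-ball-dressing} gives $\norm{Y-I}\le C\Jan|B_k|/\nu$ and $\norm{V_{k,s}}\le C\Jan|B_k|(2/3)^s$. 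Conjugating yields
\begin{equation*}
  YGY^\dagger=\mathcal K_k+V_{k,s}+\nu\bigl(YN_{{\rm rem},k}Y^\dagger-N_{{\rm rem},k}\bigr)=:\mathcal K_k+\mathcal E .
\end{equation*}
The same telescoping and Duhamel argument as in \Cref{thm:ADHH-dynamics}, first for the propagators and then for the conjugated observable, gives the bound $2\norm O\bigl(2\norm{Y-I}+|t|\norm{\mathcal E}\bigr)$. Everything therefore reduces to showing
\begin{equation*}
  \nu\norm{YN_{{\rm rem},k}Y^\dagger-N_{{\rm rem},k}}\le C\Jan|B_k|(P+1)2^{-P}.
\end{equation*}

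For this key estimate, the approach is to introduce the bookkeeping parameter $z$: run the recursion with input $zH$ (charge $N_k$) and set $Y(z)=e^{A_{s-1}(z)}\cdots e^{A_0(z)}$. Define the operator-valued function
\begin{equation*}
  g(z):=\nu\bigl(Y(z)N_{{\rm rem},k}Y(z)^{-1}-N_{{\rm rem},k}\bigr),
\end{equation*}
which is analytic for $|z|<2$ by \Cref{thm:uniform-normal-form}. Its Taylor coefficients come from nested commutators $\ad_{A_{j_m}^{(p_m)}}\cdots\ad_{A_{j_1}^{(p_1)}}(N_{{\rm rem},k})$. By \Cref{lem:support-calculus} and the proof of \Cref{lem:connected-word-expansion} (with charge $N_k$), each coefficient $A_j^{(p)}$ lies in $\mathcal C_{p,H}$, and the components of $A_j$ are confined to connected words whose labels meet $B_k$. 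A connected word of total degree $p\le P$ touching $B_k$ stays within distance $Pr_0$ of $B_k$. Because $L_{\mathcal D}\ge 2R+(P+1)r_0$, the other balls lie at distance greater than $(P+1)r_0$ from $B_k$. Such terms therefore commute with the onsite charges in $N_{{\rm rem},k}$, and the Taylor coefficients of $g$ vanish through degree $P$.

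For the tail, the plan is to bound $\sup_{|z|\le 2-\epsilon}\norm{g(z)}$ by $C\Jan|B_k|$. The expansion $g=\nu\sum_j\prod(\cdots)$ telescopes into terms $\nu\,[A_j(z),\,\cdot\,]$ applied to conjugated $N_{{\rm rem},k}$. Each is controlled by $\nu\norm{A_j(z)}_{\rm op}\cdot\norm{[\,\cdot\,,N_{\rm rem}]}$ restricted to components of $A_j$ reaching the other balls, and $\nu\norm{A_j}\le C\Jan|B_k|(2/3)^j$ by \Cref{eq:finite-ball-global-norm,eq:uniform-A-bound}. Since $Y(z)$ is not unitary for complex $z$, $\norm{Y(z)^{\pm1}}\le e^{\sum_j\norm{A_j(z)}}\le e^{C\Jan|B_k|/\nu}$, which is bounded after absorbing $|B_k|/\nu\le C$ (or else the claimed bound is trivial, as its left side is at most $2\norm O$). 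Cauchy's estimate on radius just below $2$ then gives $\norm{g^{(p)}(0)/p!}\le C\Jan|B_k|2^{-p}$ up to a polynomial factor. Summing over $p>P$ and evaluating at $z=1$ produces $C\Jan|B_k|(P+1)2^{-P}$; the factor $P+1$ is the slack I expect to lose from taking the radius $2(1-1/(P+1))$ rather than $2$.

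The main obstacle is this tail bound. The quantity $\nu N_{{\rm rem},k}$ is not small and its operator norm is extensive, so naive bounds on $g$ fail. The factor $\nu$ must be cancelled by the $1/\nu$ in $A_j$, and only the commutator $[A_j,N_{{\rm rem},k}]$ may enter, so that the extensive field outside $B_k$ never appears. I would handle this by writing
\begin{equation*}
  g(z)=\sum_j \nu\int_0^1 Y_{>j}\,e^{uA_j}\,[A_j,\widetilde N_j]\,e^{-uA_j}\,Y_{>j}^{-1}\,du ,
\end{equation*}
where $\widetilde N_j$ is $N_{{\rm rem},k}$ conjugated by the earlier factors. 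I would then bound $[A_j,\widetilde N_j]$ through the local norm: the components are labelled within reach of $B_k$, and each has norm times charge weight at most $2|X\cap\mathcal B|$, which is controlled by $e^{\kappa|X|}$. If the conjugated $\widetilde N_j$ resists this, the fallback is \Cref{lem:ADHH-conjugation} applied to $N_{{\rm rem},k}$ viewed as an interaction with onsite components. Its local norm is $O(1)$, and the output is restricted to labels meeting $B_k$ via \Cref{eq:finite-ball-global-norm}.
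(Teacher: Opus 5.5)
Your overall structure matches the paper: the conjugation \(Y_{k,s}G_{\mathcal D,\ell}^{\mathcal B}Y_{k,s}^\dagger=\mathcal K_k+V_{k,s}+W_{k,s}\), the bounds on \(Y_{k,s}-I\) and \(V_{k,s}\) from \Cref{eq:finite-ball-dressing}, and the propagator argument are the same, so everything rests on \(\norm{W_{k,s}}\). There your route genuinely differs.

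The paper applies Cauchy's estimate to each generator \(A_j(z)\) in the local norm, getting \(\norm{A_{j,p}}_{\kappa_j}\le C\Jan(2/3)^j2^{-p}/\nu\). It discards degrees \(p\le P\) by the connected-word argument and bounds the surviving commutator \([A_j(1),N_{{\rm rem},k}]\) componentwise via \(|X|\le pq_{\rm lab}\). Only then does it conjugate, at the physical point \(z=1\), where every factor is unitary. The factor \(P+1\) comes from \(\sum_{p>P}p2^{-p}\), not from any radius slack.

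You instead apply Cauchy to the whole operator-valued \(g(z)\) in operator norm. This also works, but it has costs. You must handle non-unitary factors \(\norm{Y(z)^{\pm1}}\le e^{C\Jan|B_k|/\nu}\), hence your trivial-case reduction. You also need a \(z\)-uniform bound on \(\nu\norm{[A_j(z),N_{{\rm rem},k}]}\). That bound follows from \(\norm{[(A_j)_X,N_{{\rm rem},k}]}\le4|X|\norm{(A_j)_X}\), \(|X|\le e^{\kappa_j|X|}/(e\kappa_j)\), \Cref{eq:finite-ball-global-norm}, and \(\sum_j(2/3)^j/\kappa_j<\infty\). In exchange you get \(C\Jan|B_k|2^{-P}\) with no \(P+1\): your sup bound is uniform up to radius \(2\), so nothing is lost at the radius.

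One step must be repaired: your displayed representation of \(g(z)\) is not an identity. Take \(\widetilde N_j\) to be \(N_{{\rm rem},k}\) conjugated by the earlier factors, and also place \(Y_{>j}\) outside. Then every summand contains the full \(Y N_{{\rm rem},k}Y^{-1}\), and the sum over \(j\) overcounts.

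The correct telescoping keeps the later factors outside and the bare onsite charge inside. With \(Y_{>j}:=e^{A_{s-1}}\cdots e^{A_{j+1}}\) and \(N=N_{{\rm rem},k}\), it reads \(YNY^{-1}-N=\sum_jY_{>j}\bigl(e^{A_j}Ne^{-A_j}-N\bigr)Y_{>j}^{-1}\), which is exactly the identity in the paper. With it, only \([A_j,N_{{\rm rem},k}]\) with the unconjugated charge appears, and the componentwise estimate applies directly. The obstacle you anticipate disappears, and so does the need for the fallback through \Cref{lem:ADHH-conjugation}.
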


\begin{proof}[Proof of \Cref{lem:parallel-stopped-dynamics}]
Fix \(k\in\mathcal D\), set \(s=n_{\rm par}\), and use the
operators in \Cref{eq:parallel-single-ball-charge}. Throughout this
proof, \(H=H(\lambda)\), so \(\norm H_{\kappa_0}\le J\). We first conjugate the physical Hamiltonian \(G_{\mathcal D,\ell}^{\mathcal B}=H+\nu N_k+\nu N_{{\rm rem},k}\) by the unitary \(Y_{k,s}\) constructed by the recursion for \(H+\nu N_k\). Besides the recursion remainder \(V_{k,s}\), this produces an additional remainder \(W_{k,s}\). More precisely, the recursion identity gives \(Y_{k,s}(H+\nu N_k)Y_{k,s}^\dagger=\nu N_k+D_{k,s}+V_{k,s}\). Adding the conjugated term \(\nu N_{{\rm rem},k}\) and recalling \(\mathcal K_k=\nu N_k+D_{k,s}+\nu N_{{\rm rem},k}\), we obtain
\begin{align}
    Y_{k,s}G_{\mathcal D,\ell}^{\mathcal B}Y_{k,s}^\dagger&=\mathcal K_k+V_{k,s}+W_{k,s},\\
    W_{k,s}&:=\nu\left(Y_{k,s}N_{{\rm rem},k}Y_{k,s}^\dagger-N_{{\rm rem},k}\right).
    \label{eq:parallel-stopped-dynamics-app}
\end{align}
The operator \(W_{k,s}\) is therefore the change in the fields on the
other controlled balls induced by the unitary for the ball at \(k\).

For a label \(X\) disjoint from \(B_k\), every operator supported in \(X\) commutes with \(N_k\). As in the proof of \Cref{thm:ADHH-dynamics}, this implies \((V_{k,j})_X=(A_j)_X=0\). Here \(A_j\) denotes the generator at step \(j\) of the recursion with charge \(N_k\). All its nonzero components, and those of \(V_{k,j}\), thus have labels meeting \(B_k\). \Cref{eq:finite-ball-dressing} gives \(\norm{Y_{k,s}-I}\le C\Jan|B_k|/\nu\) and \(\norm{V_{k,s}}\le C\Jan|B_k|(2/3)^s\). It remains to bound \(W_{k,s}\).

Consider the same recursion with input \(zH\), holding \(N_k\), \(\nu\), and \(s\) fixed, and write \(A_j(z)=\sum_{p\ge1}z^p A_{j,p}\). Since \(\norm{zH}_{\kappa_0}\le2J<\Jan\) for \(|z|\le2\), the uniform analytic estimates apply on a neighborhood of this disc. Cauchy's coefficient estimate on \(|z|=2\), applied to \Cref{eq:uniform-A-bound}, gives
\begin{equation}
    \norm{A_{j,p}}_{\kappa_j}\le\frac{C\Jan}{\nu}(2/3)^j2^{-p}.
    \label{eq:parallel-generator-coefficients}
\end{equation}
The induction in \Cref{lem:connected-word-expansion} applies to \(A_j\) as well as to \(D_j\): the maps \(\mathcal P_{N_k,\omega}\) and \(\mathcal S_{N_k}\) do not enlarge the support of an operator, and nonzero commutators join overlapping input supports. Each degree-\(p\) contribution is therefore built from \(p\) input Pauli terms whose support union has diameter at most \(pr_0\).

A nonzero contribution to \(A_j\) must involve an input support meeting \(B_k\). If all its input supports miss \(B_k\), every intermediate operator commutes with \(N_k\); the projection \(\mathrm{Id}-\mathcal P_{N_k,0}\) then gives zero, so it cannot produce a generator. To contribute to \([A_j,N_{{\rm rem},k}]\), the same union of supports must also meet \(\bigcup_{k'\ne k}B_R(k')\), where \(N_{{\rm rem},k}\) is supported. The distance between these two sets is at least \(L_{\mathcal D}-2R\ge(P+1)r_0\). A union of diameter at most \(pr_0\) cannot meet both when \(p\le P\). Thus \([A_{j,p},N_{{\rm rem},k}]=0\) for \(p\le P\), and only the Taylor coefficients of degree greater than \(P\) remain.

To bound \(\sum_{p>P}\|[A_{j,p},N_{{\rm rem},k}]\|\), let \(q_{\rm lab}:=\max_a|X_a|\le(2r_0+1)^d\), where \(X_a\) are the fixed component labels of the Pauli terms in \(H(\lambda)\). A label at degree \(p\) is a union of at most \(p\) such labels and has at most \(pq_{\rm lab}\) sites. Write \(N_{{\rm rem},k}=\sum_x(N_{{\rm rem},k})_x\), with \(\norm{(N_{{\rm rem},k})_x}\le2\). For a component \((A_{j,p})_X\), only sites \(x\in X\) contribute to its commutator with this charge. The elementary commutator inequality therefore gives
\begin{equation}
    \norm{[(A_{j,p})_X,N_{{\rm rem},k}]} \le 4|X|\norm{(A_{j,p})_X} \le4pq_{\rm lab}\norm{(A_{j,p})_X}.
\end{equation}

This estimate counts only sites of \(X\), independently of the
total number of other controlled balls. Summing over the labels
meeting \(B_k\) and using \Cref{eq:finite-ball-global-norm} yields
\begin{align}
    \norm{[A_j(1),N_{{\rm rem},k}]} &\le4q_{\rm lab}\sum_{p>P}p
    \sum_{X:X\cap B_k\ne\varnothing}\norm{(A_{j,p})_X}\\
    &\le4q_{\rm lab}|B_k|\sum_{p>P}p\norm{A_{j,p}}_{\kappa_j}\\
    &\le\frac{C\Jan|B_k|}{\nu}(2/3)^j(P+1)2^{-P}.
    \label{eq:parallel-remote-commutator}
\end{align}
The last line uses \Cref{eq:parallel-generator-coefficients} and
\(\sum_{p>P}p2^{-p}=(P+2)2^{-P}\le2(P+1)2^{-P}\).

At the physical input \(z=1\), \(H\) is Hermitian and
\(A_j(1)^\dagger=-A_j(1)\); abbreviate \(A_j(1)\) by \(A_j\).
For each step,
\begin{equation}
    e^{A_j}N_{{\rm rem},k}e^{-A_j}-N_{{\rm rem},k}=\int_0^1 e^{uA_j}[A_j,N_{{\rm rem},k}]e^{-uA_j}\,du.
\end{equation}

Unitary invariance bounds the norm of this difference by \(\norm{[A_j,N_{{\rm rem},k}]}\). Telescoping the product \(Y_{k,s}=e^{A_{s-1}}\cdots e^{A_0}\) expresses \(Y_{k,s}N_{{\rm rem},k}Y_{k,s}^\dagger-N_{{\rm rem},k}\) as a sum of these single-step differences, each conjugated by the subsequent unitary factors. More precisely,
\begin{equation}
    Y_{k,s}N_{{\rm rem},k}Y_{k,s}^{\dagger}-N_{{\rm rem},k}=\sum_{j=0}^{s-1}
    \bigl(e^{A_{s-1}}\cdots e^{A_{j+1}}\bigr)
    \bigl(e^{A_j}N_{{\rm rem},k}e^{-A_j}-N_{{\rm rem},k}\bigr)
    \bigl(e^{-A_{j+1}}\cdots e^{-A_{s-1}}\bigr).    
\end{equation}

Hence
\begin{equation}
    \norm{W_{k,s}}
    \le\nu\sum_{j=0}^{s-1}\norm{[A_j,N_{{\rm rem},k}]}
    \le C\Jan|B_k|(P+1)2^{-P},
    \label{eq:parallel-remote-field-error}
\end{equation}
where \(\sum_{j<s}(2/3)^j\le3\). The factor \(\nu\) in \(W_{k,s}\) cancels the factor \(\nu^{-1}\) in \Cref{eq:parallel-remote-commutator}. If \(|\mathcal D|=1\), then \(N_{{\rm rem},k}=W_{k,s}=0\).

The Hamiltonians and remainders in \Cref{eq:parallel-stopped-dynamics-app} are Hermitian, and \(Y_{k,s}\) is unitary. Apply the propagator argument of \Cref{thm:ADHH-dynamics} to that identity, with remainder \(V_{k,s}+W_{k,s}\). The two conjugating unitaries contribute \(2\norm{Y_{k,s}-I}\), and Duhamel's formula bounds the change of generator by \(|t|(\norm{V_{k,s}}+\norm{W_{k,s}})\). Thus
\begin{equation}
    \norm{e^{-iG_{\mathcal D,\ell}^{\mathcal B}t}-e^{-i\mathcal K_kt}}
    \le2\norm{Y_{k,s}-I}
    +|t|\left(\norm{V_{k,s}}+\norm{W_{k,s}}\right).
\end{equation}
For unitary \(U,U_0\), \(\norm{U^\dagger OU-U_0^\dagger OU_0} \le2\norm O\norm{U-U_0}\). Substitution of the three bounds above into this inequality proves \Cref{eq:parallel-stopped-dynamics}.
\end{proof}

\begin{theorem}[\Cref{thm:parallel-Ramsey}, Single-frequency approximation for multiple defects]
Fix integers \(P\ge1\) and \(K\ge0\). Assume $\nu\ge\max\{\bar\nu_0,54\pi\Jan/\kappa_0\}$, $1\le n_{\rm par}\le n_*(\nu)$,  $L_{\mathcal D}\ge2R+(P+1)r_0$, and $R\ge(K+1)r_0$. For every \(k\in\mathcal D\) and every real \(t\), the signal in \Cref{eq:parallel-signal} satisfies
\begin{equation}
 \begin{aligned}
 \abs{S_{\mathcal D,\ell;k}(t)
 -e^{i[\nu+2E_{\mathcal D,\ell;k}^{(<P_{\mathcal D})}]t}}
 \le \frac{C_1\Jan V_R}{\nu}
 +C_2\Jan V_R|t|\left[(2/3)^{n_{\rm par}}+(P+1)2^{-P}\right]
 +C_3\Jan|t|2^{-K}.
 \end{aligned}
\end{equation}
The same right-hand side bounds \(\abs{S_{\mathcal D,\ell;k}(t) -e^{i[\nu+2F_{\rho,R}^{\rm par}(\lambda)]t}}\), where \(F_{\rho,R}^{\rm par}\) is the patch function defined below in \Cref{eq:parallel-patch-map}, with \(\rho=(k,\ell)\) and matching Pauli axes on \(B_R(k)\). All constants are independent of \(n,|\mathcal D|,n_{\rm par},L_{\mathcal D},\nu\).
\end{theorem}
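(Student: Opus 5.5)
The plan follows the proof of \Cref{thm:Ramsey-approximation}, with \Cref{lem:parallel-stopped-dynamics} taking the place of \Cref{thm:ADHH-dynamics}. First we prove the bound with \(F_{\rho,R}^{\rm par}(\lambda)\). Then we pass to \(E_{\mathcal D,\ell;k}^{(<P_{\mathcal D})}\) using \Cref{prop:parallel-patch}; this costs at most \(2|t|C_{\rm P}^{\rm par}\Jan2^{-K}\), which is absorbed into the last term.

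\emph{Step 1: reduction to the effective generator.} Apply \Cref{lem:parallel-stopped-dynamics} with \(O=\tau_{k,+}^{\mathcal D,\ell}\), whose norm is \(2\), and take the expectation in \(\ket{\Psi_+^{\mathcal D,\ell}}\). Using \(|B_k|\le V_R\), the physical signal differs from
\[
\bra{\Psi_+^{\mathcal D,\ell}}e^{i\mathcal K_kt}\tau_{k,+}^{\mathcal D,\ell}e^{-i\mathcal K_kt}\ket{\Psi_+^{\mathcal D,\ell}}
\]
by the first two terms of \Cref{eq:parallel-Ramsey-error}. Here \(\mathcal K_k=\nu N_k+D_{k,s}+\nu N_{{\rm rem},k}\) with \(s=n_{\rm par}\).

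\emph{Step 2: sector decomposition.} Write \(C=\Lambda\setminus B_k\). The zero and one eigenspaces of \(N_k\) are \(\ket{\phi_{B,b}}\otimes\mathcal H_C\), exactly as in \Cref{thm:Ramsey-approximation}, because \(B_k\) contains only the defect \(k\) (recall \(L_{\mathcal D}>2R\)). The operator \(D_{k,s}\) commutes with \(N_k\). The operator \(\nu N_{{\rm rem},k}\) is supported in \(C\) and commutes with \(N_k\). Hence \(\mathcal K_k\) leaves both sectors invariant and acts there as \(\nu b+K_b+\nu N_{{\rm rem},k}\), with \(K_b=\mathcal M_b(D_{k,s})\). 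The exterior state is the product \(\sigma_C\), which contains the other defects in \(\ket{+}\) and all other sites unexcited. By \Cref{eq:ideal-Ramsey-computation}, the effective signal is
\[
e^{i\nu t}\,\Tr_C\!\left(\sigma_C\,e^{i(K_1+\nu N_{\rm rem})t}e^{-i(K_0+\nu N_{\rm rem})t}\right).
\]
The two exterior generators differ by \(K_1-K_0\), and \(\nu N_{\rm rem}\) cancels in this difference. The Duhamel argument of \Cref{eq:ball-exterior-Duhamel} then bounds the deviation from \(e^{i(\nu+2F)t}\) by \(|t|\norm{K_1-K_0-2F_{\rho,R}^{\rm par}(\lambda)I_C}\). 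It does not require any bound on \(\nu N_{\rm rem}\).

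\emph{Step 3: the main obstacle, bounding \(K_1-K_0\).} We repeat the coefficient matching from the proof of \Cref{thm:Ramsey-approximation}. Expand \(\mathcal T_B(D_{k,s}(zH))=\sum_pT_pz^p\) and \(F_{\rho,R}^{\rm par}(z\lambda)=\sum_pc_pz^p\). Consider a contribution of degree \(p\le K\) that survives \(\mathcal T_B\). It is a connected word containing \(k\), so its support lies in \(B_{pr_0}(k)\subseteq B_k\). On such operators, \(N_k\) agrees with the patch charge in the sense of \Cref{eq:ball-local-charge-agreement}. Therefore \(T_p=c_pI_C\) for \(p\le K\), because the patch charge built from \(N_{\mathcal D,\ell}\) on \(B_k\) equals \(N_k\).

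The delicate point is that the recursion for \(D_{k,s}\) uses the charge \(N_k\), not \(N_{\mathcal D,\ell}\). We must check that the connected-word induction of \Cref{lem:connected-word-expansion} is valid for \(N_k\). It is, since \(N_k\) is onsite with integer spectrum, as noted in the proof of \Cref{lem:parallel-stopped-dynamics}. The high-degree terms are then handled by \Cref{thm:uniform-normal-form}, Cauchy's estimate on \(|z|<2\), and \Cref{eq:ball-transition-functional}. These give \(\norm{T_p},|c_p|\le C\Jan2^{-p}\), so the tail beyond degree \(K\) is at most \(C\Jan2^{-K}\). This produces the \(C_3\) term.

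Combining Steps 1–3 with the triangle inequality proves the bound for \(F_{\rho,R}^{\rm par}\). Finally, \Cref{prop:parallel-patch} applies because \(R<L_{\mathcal D}\), and it yields the stated bound for \(E_{\mathcal D,\ell;k}^{(<P_{\mathcal D})}\).
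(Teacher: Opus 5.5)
Your proposal is correct and takes essentially the same route as the paper. You use \Cref{lem:parallel-stopped-dynamics} to pass from the physical evolution to \(\mathcal K_k\). You then decompose into the two sectors, where the common exterior field \(\nu N_{{\rm rem},k}\) cancels in \(K_1-K_0\) before the Duhamel step. Next, you match Taylor coefficients through degree \(K\) and bound the tails by Cauchy estimates, which controls \(\norm{K_1-K_0-2F_{\rho,R}^{\rm par}(\lambda)I_C}\). Finally, \Cref{prop:parallel-patch} exchanges \(F_{\rho,R}^{\rm par}(\lambda)\) for \(E_{\mathcal D,\ell;k}^{(<P_{\mathcal D})}\). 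The differences from the paper are only in ordering and notation; for example, the paper folds \(\nu N_{{\rm rem},k}\) into its \(K_b\).
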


\begin{proof}[Proof of \Cref{thm:parallel-Ramsey}]
Fix \(k\in\mathcal D\), set \(s=n_{\rm par}\), put \(C=\Lambda\setminus B_k\), and write \(\rho=(k,\ell)\). We first analyze the effective signal \(S_{{\rm eff},k}(t)\) generated by \(\mathcal K_k\) and show that it is close to \(e^{i[\nu+2F_{\rho,R}^{\rm par}(\lambda)]t}\). We then apply \Cref{lem:parallel-stopped-dynamics} to compare the physical signal with \(S_{{\rm eff},k}(t)\), and combine these two estimates. Finally, we bound the additional error from replacing \(F_{\rho,R}^{\rm par}(\lambda)\) by \(E_{\mathcal D,\ell;k}^{(<P_{\mathcal D})}\) in the oscillation.

\medskip
\noindent
\textbf{Approximating \(S_{{\rm eff},k}(t)\) with \(e^{i[\nu+2F_{\rho,R}^{\rm par}(\lambda)]t}\).}
The state in \Cref{eq:parallel-preparation} has density matrix \(\ket{\phi_{B_k,+}}\bra{\phi_{B_k,+}}\otimes\sigma_C\), where \(\ket{\phi_{B_k,+}}=(\ket{\phi_{B_k,0}}+ \ket{\phi_{B_k,1}})/\sqrt2\) is defined as in \Cref{eq:ball-Ramsey-states}, using the axes of experiment \((\mathcal D,\ell)\). The matrix \(\sigma_C\) describes all remaining qubits, including the other defects in their superposition states. It is the same for the two values of the measured defect. The eigenspaces of \(N_k\) with eigenvalues \(b=0,1\) are \(\operatorname{span}\{\ket{\phi_{B_k,b}}\}\otimes\mathcal H_C\). Since \([D_{k,s},N_k]=[N_{{\rm rem},k},N_k]=0\), \(\mathcal K_k=\nu N_k+D_{k,s}+\nu N_{{\rm rem},k}\) (defined in \Cref{lem:parallel-stopped-dynamics}) preserves each of these eigenspaces.

For a full-system operator \(W\), let \(\mathcal M_b(W):=(\bra{\phi_{B_k,b}}\otimes I_C) W(\ket{\phi_{B_k,b}}\otimes I_C)\), as in \Cref{eq:ball-partial-matrix-element}. Define the operators on the exterior by
\begin{equation}
    K_b:=\mathcal M_b(D_{k,s})+\nu N_{{\rm rem},k},
    \qquad b=0,1.
\end{equation}
Here \(N_{{\rm rem},k}\) is viewed as an operator on \(C\). The restriction of \(\mathcal K_k\) to the eigenspace with eigenvalue \(b\) acts as \(\nu bI_C+K_b\). In particular, by the same argument as in \Cref{eq:ball-sector-generator,eq:ball-sector-evolution}, for every \(\ket\psi\in\mathcal H_C\),
\begin{equation}
    e^{-i\mathcal K_kt}(\ket{\phi_{B_k,b}}\otimes\ket\psi)=e^{-i\nu bt}\ket{\phi_{B_k,b}}\otimes e^{-iK_bt}\ket\psi.
\end{equation}
 
The same term \(\nu N_{{\rm rem},k}\) occurs in both $K_0$ and $K_1$, so \(K_1-K_0=\mathcal M_1(D_{k,s})-\mathcal M_0(D_{k,s})\). We next show that \(K_1-K_0\) is close to the scalar operator \(2F_{\rho,R}^{\rm par}(\lambda)I_C\). To do so, we compare two applications of the ADHH recursion, both stopped after \(s=n_{\rm par}\) steps.

The first recursion has input \(H(\lambda)\) and charge \(N_k\), and produces \(D_{k,s}\). This is the recursion used above to compare with the physical dynamics. It acts on the full system and includes interactions across the boundary of \(B_k\) and outside of it. Its output determines \(K_1-K_0\) through \(\mathcal M_1(D_{k,s})-\mathcal M_0(D_{k,s})\).

The second recursion defines the patch quantity \(F_{\rho,R}^{\rm par}(\lambda)\) in \Cref{eq:parallel-patch-map}. Its input is \(H_{B_k}(\lambda)\), which contains only the Pauli terms supported inside \(B_k\), and its charge is \(N_{\rho,B_k}\), which agrees with \(N_k\) on the patch. Its output is \(D_{\rho,R}^{\rm par}(\lambda)\). Viewing this output as an operator on the full system by extending it with \(I_C\), the definition of the patch function gives
\begin{equation}
    \mathcal M_1(D_{\rho,R}^{\rm par}(\lambda))-\mathcal M_0(D_{\rho,R}^{\rm par}(\lambda))=2F_{\rho,R}^{\rm par}(\lambda)I_C.
\end{equation}
 
Thus the desired estimate compares the same partial matrix-element difference for the outputs of these two recursions. We compare their Taylor coefficients as in the proof of \Cref{thm:Ramsey-approximation}: multiply both input Hamiltonians by a scalar \(z\), expand the outputs in \(z\), and keep \(s=n_{\rm par}\) fixed. The states \(\ket{\phi_{B_k,0}}\) and \(\ket{\phi_{B_k,1}}\) differ only at \(k\). Consequently, any contribution acting trivially on \(k\) vanishes under \(\mathcal M_1-\mathcal M_0\). A surviving degree-\(p\) contribution must therefore have an input support union containing \(k\).

By \Cref{lem:connected-word-expansion}, the overlap graph of these input supports is connected. Since each input support has diameter at most \(r_0\), their union has diameter at most \(pr_0\). As this union contains \(k\), it lies inside \(B_{pr_0}(k)\).

For \(p\le K\), the assumption \(R\ge(K+1)r_0\) places this entire union inside \(B_k\). All input terms involved in such a contribution are therefore present in both recursions. Moreover, the charges agree on \(B_k\), so the maps \(\mathcal P_{N,\omega}\) and \(\mathcal S_N\) agree on every intermediate operator in the contribution. The two recursions thus give identical contributions through degree \(K\) after applying \(\mathcal M_1-\mathcal M_0\).

Only Taylor degrees greater than \(K\) remain in the difference. Applying the same Cauchy estimates and geometric-tail summation used to obtain \Cref{eq:ball-conditional-difference} gives
\begin{equation}
    \norm{K_1-K_0-2F_{\rho,R}^{\rm par}(\lambda)I_C}
    \le C\Jan2^{-K}.
    \label{eq:parallel-exterior-difference}
\end{equation}
We note that this bound holds in operator norm on \(\mathcal H_C\). 

Let \(S_{{\rm eff},k}(t)\) denote the expectation of \(\tau_{k,+}^{\mathcal D,\ell}\) in this initial state under \(\mathcal K_k\). On the two eigenspaces under consideration, \(\tau_{k,+}^{\mathcal D,\ell}\) maps \(\ket{\phi_{B_k,0}}\) to \(2\ket{\phi_{B_k,1}}\) and annihilates \(\ket{\phi_{B_k,1}}\). Inserting the preceding evolution formula for the two components of \(\ket{\phi_{B_k,+}}\), and then averaging over \(\sigma_C\), gives
\begin{equation}
    S_{{\rm eff},k}(t)=e^{i\nu t}\Tr_C\!\left(\sigma_C e^{iK_1t}e^{-iK_0t}\right).
    \label{eq:parallel-effective-signal}
\end{equation}
Set \(\delta:=2F_{\rho,R}^{\rm par}(\lambda)\) and \(\Delta:=K_1-K_0-\delta I_C\). Since \(K_1=K_0+\delta I_C+\Delta\), Duhamel's formula gives
\begin{equation}
    \norm{e^{iK_1t}e^{-iK_0t}-e^{i\delta t}I_C}
    =\norm{e^{iK_1t}-e^{i(K_0+\delta I_C)t}}\le |t|\norm\Delta.
\end{equation}

Using \(\abs{\Tr_C(\sigma_C A)}\le\norm A\) and \Cref{eq:parallel-exterior-difference}, we conclude that
\begin{equation}
    \label{eq:from_effective_to_single_frequency}
    \abs{S_{{\rm eff},k}(t)-e^{i[\nu+2F_{\rho,R}^{\rm par}(\lambda)]t}} \le C\Jan|t|2^{-K}.
\end{equation}
The common exterior field cancels in \(K_1-K_0\), but it remains in both propagators in \Cref{eq:parallel-effective-signal}. The Duhamel estimate uses $K_1-K_0-\delta I_C,$ and does not require \(N_{{\rm rem},k}\) to commute with \(\mathcal M_b(D_{k,s})\).

\medskip
\noindent \textbf{Approximating \(S_{\mathcal D,\ell;k}(t)\) with \(S_{{\rm eff},k}(t)\).}
Apply \Cref{lem:parallel-stopped-dynamics} with \(O=\tau_{k,+}^{\mathcal D,\ell}\), whose norm is two, and take the expectation in the initial density matrix. This bounds \(\abs{S_{\mathcal D,\ell;k}(t)-S_{{\rm eff},k}(t)}\) as
\begin{equation}
    \label{eq:from_physical_to_effective_app}
    \abs{S_{\mathcal D,\ell;k}(t)-S_{{\rm eff},k}(t)}\leq C\Jan|B_k|\left\{\nu^{-1}+|t|\left[(2/3)^s+(P+1)2^{-P}\right]\right\}.
\end{equation}
Combining the two bounds \Cref{eq:from_effective_to_single_frequency,eq:from_physical_to_effective_app} and using \(|B_k|\le V_R\)
proves the stated approximation 
\begin{equation}
    \abs{S_{\mathcal D,\ell;k}(t) -e^{i[\nu+2F_{\rho,R}^{\rm par}(\lambda)]t}} \le \frac{C_1\Jan V_R}{\nu} +C_2\Jan V_R|t|\left[(2/3)^{n_{\rm par}}+(P+1)2^{-P}\right] +C_3\Jan|t|2^{-K}.
\end{equation}

\medskip
\noindent
\textbf{Replacing \(F_{\rho,R}^{\rm par}(\lambda)\) with \(E_{\mathcal D,\ell;k}^{(<P_{\mathcal D})}\).}
Finally, \Cref{prop:parallel-patch}, whose proof appears later in this appendix, gives \(\abs{E_{\mathcal D,\ell;k}^{(<P_{\mathcal D})} -F_{\rho,R}^{\rm par}(\lambda)}\le C\Jan2^{-K}\). That proposition compares the recursion with charge \(N_{\mathcal D,\ell}\) to the patch recursion. Their charges act identically on operators supported in \(B_k\), by the same reason as \Cref{eq:parallel-local-charge-agreement}. Consequently, the coefficients contributing to the transition at \(k\) agree through degree \(K\), and their tails obey the analytic bounds. This argument does not use the present signal estimate. Since both energies are real,
\begin{equation}
    \abs{e^{i[\nu+2F_{\rho,R}^{\rm par}(\lambda)]t} -e^{i[\nu+2E_{\mathcal D,\ell;k}^{(<P_{\mathcal D})}]t}}\le2|t|\abs{F_{\rho,R}^{\rm par}(\lambda) -E_{\mathcal D,\ell;k}^{(<P_{\mathcal D})}} \le C\Jan|t|2^{-K}.
\end{equation}
Adding this error to the approximation using the patch function proves \Cref{eq:parallel-Ramsey-error}.
\end{proof}

\begin{theorem}[\Cref{thm:parallel-estimate-half-gaps}, Parallel estimation of  transition energies]
Fix a nonempty batch \(\mathcal D\), a repetition \(\ell\), a sufficiently small accuracy \(\eta>0\), and a failure probability \(\delta_{\rm par}\in(0,1)\). Assume that the actual field strength $\nu$ obeys  the lower bound
\begin{equation}
    \nu\ge\nu_{\rm par} =O\!\left(J\max\left\{[\log(eJ/\eta)]^d, (d+1)\log(eJ/\eta) [1+\log((d+1)\log(eJ/\eta))]^3\right\}\right).
\end{equation}
Then, with \(K=P=n_{\rm par}\) and \(R=(K+1)r_0\) chosen as in \Cref{eq:parallel-stopping-order}, there exists a sufficient separation threshold \(P_{\rm sep}=\bigO((d+1)\log(1/\eta))\), corresponding to a minimum distance of order \(\bigO(r_0(d+1)\log(1/\eta))\), such that \(L_{\mathcal D}\ge r_0P_{\rm sep}\) allows simultaneous estimation of all {effective half transition energies} \(E_{\mathcal D,\ell;k}^{(<P_{\mathcal D})}\), \(k\in\mathcal D\), to precision \(\eta\) with probability at least {\(1-\delta_{\rm par}\)}, using \(\bigO(\log^2(1/\eta)\log(2|\mathcal D|/\delta_{\rm par}))\) independent non-adaptive global experiments, \(\bigO(\log(2|\mathcal D|/\delta_{\rm par})/\eta)\) total evolution time.
\end{theorem}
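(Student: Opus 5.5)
The plan is to combine \Cref{thm:parallel-Ramsey} with the robust frequency estimation of \Cref{lem:robust-frequency}. The argument mirrors \Cref{cor:estimate-half-gap}. The new ingredients are that all defects of the batch are read out from the same runs, and that the separation must be large enough for the $(P+1)2^{-P}$ term.

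\textbf{Step 1 (parameters).} Fix the constant $C_T$ such that the robust estimator only queries times $t\le T_{\max}=C_T/\eta$ when the target frequency accuracy is $\eta$. Choose $n_{\rm par}=K=P$ and $R=(K+1)r_0$ as in \Cref{eq:parallel-stopping-order}. Set the separation threshold to $P_{\rm sep}:=2(K+1)+(P+1)$, so that $L_{\mathcal D}\ge r_0P_{\rm sep}$ gives $L_{\mathcal D}\ge 2R+(P+1)r_0$. This yields $P_{\rm sep}=\bigO((d+1)\log(1/\eta))$. The threshold \Cref{eq:parallel-field-condition} guarantees three things: $\nu\ge\max\{\bar\nu_0,54\pi\Jan/\kappa_0\}$, $n_{\rm par}\le n_*(\nu)$ (this is the term $C_{\rm fld}\bar\nu_0(n_{\rm par}+2)[1+\log(n_{\rm par}+2)]^3$ inverted against \Cref{eq:new-n*}), and $\nu\ge C_{\rm vol}\Jan V_R$. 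Expanding $V_R\le(2R+1)^d$ and $n_{\rm par}=\Theta((d+1)\log(eJ/\eta))$ gives the asymptotic form stated for $\nu_{\rm par}$.

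\textbf{Step 2 (deterministic signal error).} Under these choices every hypothesis of \Cref{thm:parallel-Ramsey} holds for every $k\in\mathcal D$. We need the right-hand side of \Cref{eq:parallel-Ramsey-error} to be below some fixed $\epsilon_0<1/\sqrt8$ for all $|t|\le C_T/\eta$.
\begin{itemize}
\item The first term is at most $C_1/C_{\rm vol}$, which is small once $C_{\rm vol}$ is large.
\item The remaining terms are bounded by $C\Jan V_R(C_T/\eta)\,(P+2)\,2^{-c\,n_{\rm par}}$, with $c=\log_2(3/2)$.
\item Since $V_R=\bigO(n_{\rm par}^d)$ and $n_{\rm par}\ge C_{\rm par}(d+1)\log(eJ/\eta)$, this is at most $C'(eJ/\eta)^{1-cC_{\rm par}(d+1)}\,\mathrm{polylog}(1/\eta)$.
\item Taking $C_{\rm par}$ large and $\eta$ sufficiently small makes this below $\epsilon_0/2$.
\end{itemize}
This bookkeeping is where I expect the only real care to be needed. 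The polynomial factors $V_R\,(P+1)/\eta$ must be beaten by the exponential decay, and that is exactly why $n_{\rm par}$ carries the factor $(d+1)$: it absorbs the extra $V_R\sim\log^d$ factor together with $T_{\max}\sim 1/\eta$.

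\textbf{Step 3 (frequency estimation in parallel).} Each global experiment prepares \Cref{eq:parallel-preparation}, evolves for time $t$, and measures either all $\tau^{\mathcal D,\ell}_{k,x}$ or all $\tau^{\mathcal D,\ell}_{k,y}$ simultaneously, since these have disjoint supports. Averaging $\bigO(1)$ shots gives, for every $k$ simultaneously, estimates of $\Re S_{\mathcal D,\ell;k}(t)$ and $\Im S_{\mathcal D,\ell;k}(t)$ within $1/\sqrt8-\epsilon_0$, each with probability at least $2/3$. The estimates for different $k$ may be correlated, but \Cref{lem:robust-frequency} is applied per $k$, so only marginal success probabilities matter. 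We apply it with $\omega_0=\nu$ and $\Phi=\bigO(\Jan)$, which is valid by \Cref{eq:parallel-taylor-coeff}. This gives RMS accuracy $\eta$ for the frequency $\nu+2E^{(<P_{\mathcal D})}_{\mathcal D,\ell;k}$ using $\bigO(\log^2(1/\eta))$ time points and total time $\bigO(1/\eta)$. The same schedule of times serves every $k$.

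\textbf{Step 4 (boosting).} By Markov's inequality, each $k$ has half-transition error at most $\eta$ with probability at least $3/4$. Repeat the whole schedule $\bigO(\log(2|\mathcal D|/\delta_{\rm par}))$ times and take the median for each $k$. The Chernoff bound then gives per-$k$ failure probability at most $\delta_{\rm par}/|\mathcal D|$, and a union bound over $k\in\mathcal D$ gives overall failure probability at most $\delta_{\rm par}$. Multiplying the costs yields $\bigO(\log^2(1/\eta)\log(2|\mathcal D|/\delta_{\rm par}))$ experiments and $\bigO(\log(2|\mathcal D|/\delta_{\rm par})/\eta)$ total evolution time.
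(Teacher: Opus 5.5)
Your proposal is correct and follows essentially the same route as the paper's proof. Your $P_{\rm sep}=2(K+1)+(P+1)$ coincides with the paper's $\lceil 2R/r_0\rceil+P+1$, and you make every term of \Cref{eq:parallel-Ramsey-error} small uniformly for $|t|\le C_T/\eta$ by letting the $(d+1)$ factor in $n_{\rm par}$ dominate $V_R(P+1)T_{\max}$. You then apply \Cref{lem:robust-frequency} per defect on shared global runs, followed by Markov's inequality, a median over $\bigO(\log(2|\mathcal D|/\delta_{\rm par}))$ repetitions, and a union bound over $k\in\mathcal D$, exactly as the paper does. The only cosmetic difference is that you absorb the three decaying error terms into one $(P+2)(2/3)^{n_{\rm par}}$-type bound, whereas the paper budgets the $(2/3)^{n_{\rm par}}$ term separately from the $(P+1)2^{-P}$ and $2^{-K}$ terms.
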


\begin{proof}
Fix \(\mathcal D\) and \(\ell\). We recover each effective half transition energy by estimating the oscillation frequency of the corresponding parallel Ramsey signal. To express this relation, we write the frequency as $\omega_{k,\ell} := \nu+2E_{\mathcal D,\ell;k}^{(<P_{\mathcal D})}$, where $k\in\mathcal D$. Since \(\nu\) is known, frequency accuracy \(2\eta\) will give half-transition-energy accuracy \(\eta\).

We first identify a common prior interval for these frequencies. The two product states defining \(E_{\mathcal D,\ell;k}^{(<P_{\mathcal D})}\) differ only at site \(k\). Hence, \Cref{eq:ell-local-bound-general-main,eq:parallel-normal-form-norms} give $\abs{E_{\mathcal D,\ell;k}^{(<P_{\mathcal D})}} \le C_E\Jan$, where \(C_E\) is a uniform constant. Thus every frequency belongs to \([\nu-\Phi,\nu+\Phi]\), with the known half-width \(\Phi:=2C_E\Jan=\bigO(1)\). This supplies the prior interval required by \Cref{lem:robust-frequency}.

We now verify that the parameter choices in \Cref{eq:parallel-stopping-order} keep the deterministic signal error below the tolerance required by \Cref{lem:robust-frequency} for all evolution times used by the estimator. The estimator uses times at most \(T_{\max}=C_T/\eta\). We bound the deterministic signal error in \Cref{eq:parallel-Ramsey-error} uniformly over this entire interval. Choose \(n_{\rm par},K,P,R\) as in \Cref{eq:parallel-stopping-order}. For \(\ell_\eta=\log(eJ/\eta)\), these choices give \(n_{\rm par}=\ceil{C_{\rm par}(d+1)\ell_\eta}\) and \(V_R=O(\ell_\eta^d)\), with fixed locality parameters. The field condition \Cref{eq:parallel-field-condition} ensures \(n_{\rm par}\le n_*(\nu)\). Its volume term also controls the time-independent contribution to the signal error: since \(\nu\ge C_{\rm vol}\Jan V_R\), taking \(C_{\rm vol}\) sufficiently large gives
\begin{equation}
    \frac{C_1\Jan V_R}{\nu}\le\frac1{12\sqrt8}.
    \label{eq:parallel-v-bound}
\end{equation}

Having bounded the time-independent contribution, we next control the term \(C_2\Jan V_R|t|(2/3)^{n_{\rm par}}\) in \Cref{eq:parallel-Ramsey-error}, which arises from \(V_{k,n_{\rm par}}\). Its maximum over the estimator's time interval occurs at \(t=T_{\max}\). The resulting upper bound is a constant times \(\ell_\eta^d \exp\{[1-C_{\rm par}(d+1)\log(3/2)]\ell_\eta\}\). The positive term in the exponent comes from \(J/\eta=e^{\ell_\eta-1}\), and the negative term comes from \((2/3)^{n_{\rm par}}\). A sufficiently large fixed \(C_{\rm par}\) makes the exponential decay dominate the polynomial factor, giving
\begin{equation}
    C_2\Jan V_RT_{\max}(2/3)^{n_{\rm par}}\le\frac1{12\sqrt8}.
    \label{eq:parallel-stopped-distortion-small}
\end{equation}

It remains to bound the terms involving \((P+1)2^{-P}\) and \(2^{-K}\) in \Cref{eq:parallel-Ramsey-error}. The former arises from \(W_{k,n_{\rm par}}\), the change in the fields on the other controlled balls under conjugation by \(Y_{k,n_{\rm par}}\). To ensure the separation required to bound this term, set
\begin{equation}
    P_{\rm sep}:=\ceil{2R/r_0}+P+1
    \label{eq:parallel-Psep-estimation}
\end{equation}
and require
\begin{equation}
    L_{\mathcal D}\ge r_0P_{\rm sep}.
    \label{eq:parallel-separation-estimation}
\end{equation}
Then the distance between any two controlled balls is at least \(L_{\mathcal D}-2R\ge(P+1)r_0\), which is the separation hypothesis of \Cref{thm:parallel-Ramsey}. The remaining two error terms at \(T_{\max}\) are bounded by a constant times \(\ell_\eta^{d+1} \exp\{[1-C_{\rm par}(d+1)\log2]\ell_\eta\}\): here the additional polynomial factor allows for \(P+1\), and \(P=K=n_{\rm par}\). Increasing the fixed \(C_{\rm par}\) if necessary therefore gives
\begin{equation}
    C_2\Jan V_RT_{\max}(P+1)2^{-P}+C_3\Jan T_{\max}2^{-K}\le\frac1{12\sqrt8}.
    \label{eq:parallel-tail-distortion-small}
\end{equation}

The bounds in \Cref{eq:parallel-v-bound,eq:parallel-stopped-distortion-small,eq:parallel-tail-distortion-small} hold simultaneously for sufficiently small \(\eta\). Their sum bounds the difference between the exact expected signal and the ideal single-frequency signal.  The required signal-error tolerance is a fixed constant, while the transition-energy accuracy \(\eta\) is achieved by using evolution times up to \(T_{\max}=C_T/\eta\) in the frequency estimator. Sampling error from finitely many measurements is accounted for separately below.

Combining the three error budgets with \Cref{thm:parallel-Ramsey}, we obtain, for every \(k\in\mathcal D\) and \(0\le t\le T_{\max}\),
\begin{equation}
    \abs{ S_{\mathcal D,\ell;k}(t)-e^{i\omega_{k,\ell}t}} \le\frac{1}{4\sqrt8} <\frac{1}{2\sqrt8}.
    \label{eq:parallel-signal-frequency-condition}
\end{equation}
This deterministic bound leaves enough room for the sampling error in the inputs to \Cref{lem:robust-frequency}.

To construct those inputs, let \(\widehat c_{k,\ell}(t)\) and \(\widehat s_{k,\ell}(t)\) be sample means obtained by measuring \(\tau_{k,x}^{\mathcal D,\ell}\) and \(\tau_{k,y}^{\mathcal D,\ell}\), respectively. Their expectations are the real and imaginary parts of \(S_{\mathcal D,\ell;k}(t)\). Each measurement outcome lies in \(\{-1,1\}\), so a fixed number of independent repetitions makes either sample mean accurate to \(1/(2\sqrt8)\) with probability at least \(2/3\). Disjoint experimental runs are used for the two means and for distinct input samples, giving the independence required by the frequency-estimation lemma.

If \(\widehat c_{k,\ell}(t)\) differs from \(\operatorname{Re}S_{\mathcal D,\ell;k}(t)\) by at most \(1/(2\sqrt8)\), the triangle inequality and \Cref{eq:parallel-signal-frequency-condition} give the cosine error bound
\begin{equation}
    \abs{\widehat c_{k,\ell}(t)-\cos(\omega_{k,\ell}t)}\le\frac{1}{2\sqrt8}+\frac{1}{4\sqrt8}=\frac{3}{4\sqrt8}<\frac{1}{\sqrt8}.
\end{equation}
A similar sine error bound follows whenever \(\widehat s_{k,\ell}(t)\) differs from \(\operatorname{Im}S_{\mathcal D,\ell;k}(t)\) by at most \(1/(2\sqrt8)\). Each inequality holds with probability at least \(2/3\). These are the input-accuracy conditions of \Cref{lem:robust-frequency}.

A single invocation of that lemma with root-mean-square frequency error \(\eta\) returns an estimate \(\widetilde\omega_{k,\ell}\) satisfying \(\mathbb E|\widetilde\omega_{k,\ell}-\omega_{k,\ell}|^2\le\eta^2\). To convert this into a constant-success accuracy guarantee, Markov's inequality applied to the squared error gives $\Pr\!\left( \abs{\widetilde\omega_{k,\ell}-\omega_{k,\ell}}>2\eta \right)\le 1/4$.

To make the estimates simultaneous over the batch, we require failure probability at most \(\delta_{\rm par}/|\mathcal D|\) at each defect. For this purpose, repeat the frequency estimator independently an odd number $ N_{\rm rep} =\bigO( \log({2|\mathcal D|}/{\delta_{\rm par}})) $ of times, with a sufficiently large constant, and let \(\widehat\omega_{k,\ell}\) be the median of the resulting estimates. Each repetition succeeds with probability at least \(3/4\). A concentration bound on the number of unsuccessful repetitions therefore gives, with probability at least \(1-\delta_{\rm par}/|\mathcal D|\),
\begin{equation}
    \abs{\widehat\omega_{k,\ell}-\omega_{k,\ell}}\le2\eta.
    \label{eq:parallel-frequency-accuracy}
\end{equation}

To recover the desired energy, subtract the known field contribution and divide by two: $\widehat E_{\mathcal D,\ell;k}:= {(\widehat\omega_{k,\ell}-\nu)}/{2}$. Therefore, for each $k$, with probability at least $1-\delta_{\rm par}/|\mathcal{D}|$, we have
\begin{equation}
    \abs{ \widehat E_{\mathcal D,\ell;k} - E_{\mathcal D,\ell;k}^{(<P_{\mathcal D})}} = \frac12\abs{\widehat\omega_{k,\ell}-\omega_{k,\ell}} \le\eta.
    \label{eq:parallel-gap-accuracy}
\end{equation}
A union bound makes this guarantee simultaneous for all \(k\in\mathcal D\), with probability at least \(1-\delta_{\rm par}\). Independence between outcomes at different defects is not required since we are using the union bound.

We now count the number experiments needed. All defects have the same prior interval and target accuracy, so they can share a common non-adaptive sampling schedule. The observables \(\tau_{k,x}^{\mathcal D,\ell}\), \(k\in\mathcal D\), have disjoint supports and can be measured simultaneously; the same holds for the \(\tau_{k,y}^{\mathcal D,\ell}\). Thus each global run in either measurement setting supplies an outcome at every defect.

One invocation of \Cref{lem:robust-frequency} requires \(\bigO(\log^2(\Phi/\eta))=\bigO(\log^2(1/\eta))\) input samples and total evolution time \(\bigO(1/\eta)\). The fixed repetitions used to form each sample mean and the two measurement settings contribute only constant factors. Multiplying by \(N_{\rm rep}\), the number of global experiments is $\bigO(\log^2(1/\eta)\log({2|\mathcal D|}/{\delta_{\rm par}}))$, and their total evolution time is $\bigO(\log({2|\mathcal D|}/{\delta_{\rm par}})/\eta).$ All runs can be scheduled in advance and performed on independently prepared states. There is no multiplicative factor \(|\mathcal D|\), because the same global runs provide data for every defect.

Having established the estimation guarantees and resource bounds, we now verify that the sufficient separation in \Cref{eq:parallel-separation-estimation}, with \(P_{\rm sep}\) defined in \Cref{eq:parallel-Psep-estimation}, has the logarithmic scaling asserted in \Cref{thm:parallel-estimate-half-gaps}. Substituting \(R=(n_{\rm par}+1)r_0\) and \(P=n_{\rm par}\) into \Cref{eq:parallel-Psep-estimation} gives \(P_{\rm sep}=3(n_{\rm par}+1)\). Since \(n_{\rm par}=O((d+1)\log(eJ/\eta))\), the sufficient distance \(r_0P_{\rm sep}\) has the claimed logarithmic dependence on \(1/\eta\). These choices depend on the target accuracy and fixed locality parameters. A larger actual field strength does not change \(n_{\rm par}\), and a larger batch does not change the required distance between its defects. This completes the proof.
\end{proof}

\begin{proposition}[\Cref{prop:parallel-patch}]

Let $k\in\mathcal D$, let $\rho=(k,\ell)$, and let \(s_j^\rho=s_j^{\mathcal D,\ell}\), \(\beta_j^\rho=\beta_j^{\mathcal D,\ell}\), for every \(j\in B_R(k)\). We assume $R\ge(K+1)r_0$, and $R<L_{\mathcal D}$. Then $P_{\mathcal D}\ge K+2$ and
\begin{equation}
    \abs{ E_{\mathcal D,\ell;k}^{(<P_{\mathcal D})} {-F_{\rho,R}^{\rm par}(\lambda)}} \le C_{\rm P}^{\rm par}\Jan2^{-K},
    \label{eq:parallel-patch-error-appendix}
\end{equation}
for a universal constant $C_{\rm P}^{\rm par}$, uniformly in $n$, $|\mathcal D|$, and the defect locations.
\end{proposition}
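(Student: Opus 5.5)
The plan is to mimic \Cref{thm:patch-locality}, comparing two scalar analytic functions of a bookkeeping variable $z$. First, $P_{\mathcal D}\ge K+2$ is elementary. From $R<L_{\mathcal D}$ and $R\ge(K+1)r_0$ we get $L_{\mathcal D}>(K+1)r_0$. Hence $P_{\mathcal D}=\ceil{L_{\mathcal D}/r_0}\ge K+2$; the singleton case $P_{\mathcal D}=+\infty$ is trivial.

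Next define the two Taylor series, holding $n_{\rm par}$ fixed:
\begin{equation}
 g^{\rm ref}(z):=\tfrac12\left(\bra{\Phi_{\{k\}}^{\mathcal D,\ell}}\widehat D_{\mathcal D,\ell}(z)\ket{\Phi_{\{k\}}^{\mathcal D,\ell}}-\bra{\Phi_{\varnothing}^{\mathcal D,\ell}}\widehat D_{\mathcal D,\ell}(z)\ket{\Phi_{\varnothing}^{\mathcal D,\ell}}\right)=\sum_{p\ge1}c_p^{\rm ref}z^p,
\end{equation}
and $g^{\rm patch}(z):=F^{\rm par}_{\rho,R}(z\lambda)=\sum_{p\ge1}c_p^{\rm patch}z^p$. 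The truncated quantity is exactly $E_{\mathcal D,\ell;k}^{(<P_{\mathcal D})}=\sum_{p<P_{\mathcal D}}c_p^{\rm ref}$, because the transition functional is linear and is applied term by term to \Cref{eq:parallel-taylor}.

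Both functions are analytic on $|z|<2$ and bounded by $C\Jan$. For the reference function this follows from \Cref{thm:uniform-normal-form} together with \Cref{eq:ell-local-bound-general-main}; the two reference states differ only at $k$, so the functional is bounded by the local norm, as noted in \Cref{rem:applicability_to_parallel}. For the patch function it follows from \Cref{lem:analytic-gap-bound}, whose argument is uniform in the stopping order. Cauchy's estimate then gives $|c_p^{\rm ref}|,|c_p^{\rm patch}|\le C\Jan 2^{-p}$.

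The key step is coefficient matching for $p\le K$, in the same form as \Cref{lem:coefficient-matching}. By \Cref{lem:connected-word-expansion}, applied to the multi-defect charge as in \Cref{rem:applicability_to_parallel}, a degree-$p$ contribution surviving the functional comes from a connected $p$-word whose support union contains $k$ and therefore lies in $B_{pr_0}(k)\subseteq B_R(k)$. All its input terms are thus present in $H_{B_R(k)}(\lambda)$. On operators supported in $B_R(k)$, the maps $\mathcal P_{N,\omega}$ and $\mathcal S_N$ agree for $N_{\mathcal D,\ell}$ and for the patch charge. The argument is the same as for \Cref{eq:parallel-local-charge-agreement}, and it uses that $B_R(k)$ contains no other defect because $R<L_{\mathcal D}$, so the weights agree site by site. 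The reference product states restricted to $B_R(k)$ coincide with $\ket{\Phi_b^\rho}$ restricted there, given the matching axes. Hence $c_p^{\rm ref}=c_p^{\rm patch}$ for $p\le K$. This is the step needing most care: I would make it rigorous via \Cref{lem:exact-embedding}, replacing the input by $\mathcal E_{B}H$ while keeping the full charge and noting that only the inputs inside $B$ contribute at these degrees, and then swapping the charge.

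Finally, since $P_{\mathcal D}\ge K+2$,
\begin{equation}
 \Bigl|E_{\mathcal D,\ell;k}^{(<P_{\mathcal D})}-F_{\rho,R}^{\rm par}(\lambda)\Bigr|
 =\Bigl|\sum_{K<p<P_{\mathcal D}}c_p^{\rm ref}-\sum_{p>K}c_p^{\rm patch}\Bigr|
 \le 2C\Jan\sum_{p>K}2^{-p}=2C\Jan 2^{-K}.
\end{equation}
This gives $C_{\rm P}^{\rm par}=2C$. All constants come from \Cref{thm:uniform-normal-form}, so they are independent of $n$, $|\mathcal D|$ and the defect locations.
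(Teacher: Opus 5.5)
Your proposal is correct and follows the paper's proof essentially step for step. Both derive $P_{\mathcal D}\ge K+2$ from $L_{\mathcal D}>R\ge(K+1)r_0$, match the Taylor coefficients through degree $K$ via \Cref{lem:coefficient-matching} extended to $N_{\mathcal D,\ell}$ by \Cref{rem:applicability_to_parallel} (using \Cref{lem:exact-embedding} and $B_R(k)\cap\mathcal D=\{k\}$), and then bound the two geometric tails by $\bigO(\Jan2^{-K})$ using the Cauchy estimates.
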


\begin{proof}
Fix $k\in\mathcal D$ and write $B:=B_R(k)$. The assumptions imply $\frac{L_{\mathcal D}}{r_0}>\frac{R}{r_0}\ge K+1$, and therefore $P_{\mathcal D}\ge K+2$. Moreover, $B$ contains no defect other than $k$.

For $1\le p<P_{\mathcal D}$, define $e_{\mathcal D,\ell;k}^{(p)} := \left( \bra{\Phi_{\{k\}}^{\mathcal D,\ell}} \widehat D_{\mathcal D,\ell}^{(p)} \ket{\Phi_{\{k\}}^{\mathcal D,\ell}} - \bra{\Phi_{\varnothing}^{\mathcal D,\ell}} \widehat D_{\mathcal D,\ell}^{(p)} \ket{\Phi_{\varnothing}^{\mathcal D,\ell}} \right)/2.$ Then
\begin{equation}
    E_{\mathcal D,\ell;k}^{(<P_{\mathcal D})} = \sum_{p=1}^{P_{\mathcal D}-1} e_{\mathcal D,\ell;k}^{(p)}.
    \label{eq:parallel-gap-expansion}
\end{equation}
By \Cref{eq:parallel-taylor}, the quantities \(e_{\mathcal D,\ell;k}^{(p)}\), \(1\le p<P_{\mathcal D}\), are the degree-\(p\) Taylor coefficients at \(z=0\) of the scalar function
\begin{equation}
\label{eq:parallel-full-scalar}
    g_{\mathcal D,\ell;k}^{\Lambda}(z) :=\frac12\left( \bra{\Phi_{\{k\}}^{\mathcal D,\ell}} \widehat D_{\mathcal D,\ell}(z) \ket{\Phi_{\{k\}}^{\mathcal D,\ell}} - \bra{\Phi_{\varnothing}^{\mathcal D,\ell}} \widehat D_{\mathcal D,\ell}(z) \ket{\Phi_{\varnothing}^{\mathcal D,\ell}} \right).
\end{equation}

For the patch map \(F_{\rho,R}^{\rm par}\) generated by early-stopping the ADHH recursion, as defined in \Cref{eq:parallel-patch-map}, introduce the scalar parameter \(z\) as in \Cref{eq:g-functions} and write ${g_{\rho,{\rm par}}^B(z) :=F_{\rho,R}^{\rm par}(z\lambda)} =\sum_{p\ge0}c_p^Bz^p.$ Since {\(F_{\rho,R}^{\rm par}(0)=0\)}, we have \(c_0^B=0\). Evaluating this expansion at \(z=1\) gives ${F_{\rho,R}^{\rm par}(\lambda)} =\sum_{p\ge1}c_p^B.$

We claim that
\begin{equation}
    e_{\mathcal D,\ell;k}^{(p)}=c_p^B, \quad 1\le p\le K.
    \label{eq:parallel-coefficient-matching}
\end{equation}
Since \(B\cap\mathcal D=\{k\}\), the restriction of \(N_{\mathcal D,\ell}\) to \(B\) is exactly \(N_{\rho,B}\) from \Cref{eq:patch-charge}. Moreover, the restrictions of \(\ket{\Phi_{\varnothing}^{\mathcal D,\ell}}\) and \(\ket{\Phi_{\{k\}}^{\mathcal D,\ell}}\) to \(B\) coincide with those of the single-defect states \(\ket{\Phi_0^\rho}\) and \(\ket{\Phi_1^\rho}\) defined in \Cref{eq:Phi0_Phi1}, respectively.

Thus \(g_{\rho,{\rm par}}^B(z)=F_{\rho,R}^{\rm par}(z\lambda)\) from \Cref{eq:parallel-patch-map} and \(g_{\mathcal D,\ell;k}^{\Lambda}(z)\) defined in \Cref{eq:parallel-full-scalar} play the roles of \(g_\rho^B(z)\) and \(g_\rho^\Lambda(z)\), respectively, from \Cref{eq:g-functions} in the single-defect case. Since \(c_p^B\) is the Taylor coefficient for $g_{\rho,{\rm par}}^B(z)$ and $e_{\mathcal D,\ell;k}^{(p)}$ is the Taylor coefficient for $g_{\mathcal D,\ell;k}^{\Lambda}(z)$, applying \Cref{lem:coefficient-matching} to the multiple-defect charge $N_{\mathcal{D},\ell}$, as justified by the \Cref{rem:applicability_to_parallel}, yields \Cref{eq:parallel-coefficient-matching}. Here the ADHH recursion stops at the same fixed order \(n_{\rm par}\le n_*(\nu)\) for both \(g_{\mathcal D,\ell;k}^{\Lambda}(z)\) and $g_{\rho,{\rm par}}^B(z)$. The proof of \Cref{lem:coefficient-matching} applies unchanged at any such common stopping order.

By \Cref{eq:parallel-taylor-coeff} and the same local estimate as in
\Cref{eq:ell-local-bound-general-main}, $\abs{e_{\mathcal D,\ell;k}^{(p)}}\le C_{\rm T}\Jan2^{-p}$. For the patch coefficients, {the same Cauchy estimate at step \(n_{\rm par}\) gives} $\abs{c_p^B}\le C_g\Jan2^{-p}$. Using \Cref{eq:parallel-coefficient-matching}, we conclude that
\begin{align}
    \abs{ E_{\mathcal D,\ell;k}^{(<P_{\mathcal D})} {-F_{\rho,R}^{\rm par}(\lambda)}} & 
    \le \sum_{p=K+1}^{P_{\mathcal D}-1} \abs{e_{\mathcal D,\ell;k}^{(p)}} + \sum_{p=K+1}^{\infty} \abs{c_p^B} \\ 
    & \le (C_{\rm T}+C_g)\Jan \sum_{p=K+1}^{\infty}2^{-p} = C_{\rm P}^{\rm par}\Jan2^{-K}.
\end{align}
This proves \Cref{eq:parallel-patch-error-appendix}.
\end{proof}

\section{Efficient classical evaluation and optimization}
\label{app:classical-evaluation}

In this section we will show that the fixed-preconditioned least-squares recovery described in \Cref{eq:iteration-map} is classically efficient.  For fixed locality data, its total arithmetic cost is
\begin{equation}
    T_{\rm cl} \le nm\,\operatorname{poly}({s},\ell_\eps) (J/\eps)^{C_{\rm alg}} +\bigO(M\ell_\eps^2), \quad \ell_\eps=\log({eJ}/{\eps}),
    \label{eq:classical-cost-conclusion}
\end{equation}
where \(C_{\rm alg}\) depends only on the dimension $d$ fixed locality parameters \(q,r_0\)  in \Cref{eq:locality-bounds-H} (it depends also on \(\zeta,D_{\rm ov}\) in \Cref{eq:zeta-Dov} but these are completely determined by $q,r_0,d$).  In particular, the post-processing is polynomial in \(n\), \(J/\eps\), and \(\log(1/\delta)\). We note that $J$ is always chosen to be a constant.  The algorithm never forms or evaluates the Hessian of the objective function. The second-derivative estimate is used only in the proof of \Cref{lem:Hessian-conditioning}.

A dense simulation of the entire patch would cost $2^{\bigO(|B_R(k)|)}=2^{\bigO(R^d)}$, which is not polynomial in \(1/\eps\) for general fixed \(d>1\).  Efficient post-processing uses the connected-cluster expansion of \Cref{lem:connected-word-expansion}, which underlies the proof of \Cref{thm:patch-locality}.

For an experiment indexed by \(\rho=(k,\ell)\), with the forward map $F_{\rho,R}(\mu)$ defined in \Cref{eq:patch-gap}, we define 
\begin{equation}
    g_{\rho,\mu}^{B}(z) := F_{\rho,R}(z\mu) =\sum_{p\ge0}c_{\rho,p}(\mu)z^p.
    \label{eq:classical-row-series}
\end{equation}
Here we omit the dependence on $R$ because it is fixed.
We also define the degree-\(L\) approximation
\begin{equation}
    F_\rho^{[L]}(\mu) :=\sum_{p=0}^{L}c_{\rho,p}(\mu), \quad \mathsf J_\rho^{[L]}(\mu) :=D F_\rho^{[L]}(\mu).
    \label{eq:classical-row-truncation}
\end{equation}

\begin{proposition}[Efficient forward map and Jacobian evaluation]
\label{prop:classical-row-evaluation}
There are constants \(C_{\rm ev},C_F,C_J'>0\), depending only on the dimension $d$ and fixed locality parameters $q,r_0$ and the ADHH recursion \Cref{eq:ADHH-recursion}, such that $F_\rho^{[L]}(\mu)$ and its partial derivatives $\partial_a F_\rho^{[L]}(\mu)$ for all $a$ such that $\supp(P_a)\subset B_R(k)$ (these are all the non-zero partial derivatives of $F_\rho^{[L]}(\mu)$) can be evaluated in
\begin{equation}
    \operatorname{poly}(L,{s},V_R)C_{\rm ev}^{L}
    \label{eq:classical-row-cost}
\end{equation}
arithmetic operations.  Uniformly for \(\rho\) and \(\mu\in\cU\),
\begin{align}
    \abs{F_\rho(\mu)-F_\rho^{[L]}(\mu)} & \le C_F\Jan2^{-L},
    \label{eq:classical-value-tail}\\
    \sum_a\abs{\partial_aF_\rho(\mu)-\partial_aF_\rho^{[L]}(\mu)}& \le C_J' {(2/3)^L}.
    \label{eq:classical-Jacobian-tail}
\end{align}
Consequently, to achieve approximation error at most \(\tau_F,\tau_J>0\) for $F_\rho(\mu)$ and $\nabla F_\rho(\mu)$ with $F_\rho^{[L]}(\mu)$ and $\nabla F_\rho^{[L]}(\mu)$ respectively, we can choose 
\begin{equation}
    L=\bigO(\log({eJ}/{\min\{\tau_F,J\tau_J\}})).
    \label{eq:classical-L-choice}
\end{equation}
\end{proposition}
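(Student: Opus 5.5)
The proof has two parts: the tail bounds \Cref{eq:classical-value-tail,eq:classical-Jacobian-tail}, which come from analyticity, and the cost bound \Cref{eq:classical-row-cost}, which comes from enumerating connected words. For the value tail, $g_{\rho,\mu}^B(z)=F_{\rho,R}(z\mu)$ is exactly the function $g_\rho^B$ of \Cref{eq:g-functions}. \Cref{lem:analytic-gap-bound} gives $|c_{\rho,p}(\mu)|\le C_g\Jan 2^{-p}$, and summing the geometric tail $p>L$ yields \Cref{eq:classical-value-tail} with $C_F=C_g$.

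For the Jacobian tail, note that each $c_{\rho,p}$ is a homogeneous polynomial of degree $p$ in $\mu$ by \Cref{lem:connected-word-expansion}. Fix a direction $K_\xi=\sum_{a\in\cA_\rho}\xi_aP_a$ with $\|\xi\|_\infty\le1$, and consider the two-variable function $(z,w)\mapsto F_{\rho,R}(z(\mu+w\xi))$. Its degree-$p$ part in $z$ is $c_{\rho,p}(\mu+w\xi)$, and the directional derivative of $F_\rho-F_\rho^{[L]}$ is $\sum_{p>L}\partial_w c_{\rho,p}(\mu+w\xi)|_{w=0}$.

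I would take a polydisc $|z|\le 3/2$, $|w|\le r_w$, with $r_w$ small enough that $\|z(H_B(\mu)+wK_\xi)\|_{\kappa_0}\le (3/2)(2J+r_wJ)\le\Jan$; for instance $r_w=2/3$ gives $(3/2)(8J/3)=4J$. On this polydisc \Cref{thm:uniform-normal-form} and \Cref{eq:ell-local-bound-general-main} bound the function by $C\Jan$. The two-variable Cauchy estimate \Cref{eq:Cauchy-polydisc} then bounds each mixed coefficient by $C\Jan(2/3)^p/r_w$. Summing over $p>L$ gives a directional bound $C\Jan(2/3)^L$. Taking the supremum over $\xi$ via $\sum_a|v_a|=\sup_\xi|\sum_a\xi_av_a|$ gives \Cref{eq:classical-Jacobian-tail}; since $\Jan=O(1)$, it is absorbed into $C_J'$. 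The radius has to be reduced from $2$ to $3/2$ to leave room for $w$, and this is exactly why the rate degrades from $2^{-L}$ to $(2/3)^L$. The choice of $L$ in \Cref{eq:classical-L-choice} then follows by inverting $2^{-L}$ and $(2/3)^L$.

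For the cost, I would run the ADHH recursion symbolically on truncated power series in $z$ up to degree $L$. Each $H_j,D_j,V_j,A_j$ is stored as a list of coefficients. The degree-$p$ coefficient is a sum over connected $p$-words rooted within distance $pr_0$ of $k$, and only words whose support contains $k$ matter for $\ell_\rho$ at the end. Words that do not touch $k$ still feed commutators, but by the argument of \Cref{lem:coefficient-matching} every relevant word lies in $B_{Lr_0}(k)$.

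I would represent each coefficient as a sparse linear combination of Pauli strings supported in $B$, with numeric weights (the $\mu$-monomials are already evaluated). The number of connected clusters of $p$ terms containing a given site is at most $(eD_{\rm ov})^p$ by the standard lattice-animal bound. Each cluster spans at most $4^{pq}$ Pauli strings. So each degree-$p$ coefficient has at most $V_R\,C^p$ Pauli terms.

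One recursion step truncated at degree $L$ requires three kinds of operation. First, $\mathcal P_{N,\omega}$ and $\mathcal S_N$ act onsite, at linear cost. Second, $e^{\operatorname{ad}_{A_j}}$ needs at most $L$ nested commutators, because $A_j$ has no constant term. Third, each series product costs $\operatorname{poly}(L)$ times products of sparse sums, i.e.\ $(V_RC^L)^2$. Over $s$ steps this gives $\operatorname{poly}(L,s,V_R)C_{\rm ev}^L$. Finally $\ell_\rho$ reads off the diagonal matrix elements.

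The gradient needs one more ingredient: either forward-mode differentiation, carrying a dual number per coefficient $\mu_a$ with $a\in\cA_\rho$ (a $V_R$ factor), or Euler's identity applied to the homogeneous components. I would use forward mode. The main obstacle I expect is controlling the growth of Pauli-string counts across recursion steps: intermediate operators must be kept restricted to degree $\le L$ without their support blowing up. My plan is to argue that truncation by degree automatically bounds support diameter by $Lr_0$, so every string lives on at most $V_{Lr_0}=\operatorname{poly}(L)$ sites and the count stays $C^L$.
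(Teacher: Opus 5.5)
Your tail bounds are essentially the paper's. The value tail is the geometric sum of the patch Taylor-coefficient bound from \Cref{lem:analytic-gap-bound}. For the Jacobian tail the paper likewise shrinks the $z$-disc to radius $3/2$ to make room for a perturbation $w$ along $K_\xi$. It takes $|w|\le 1/2$ and applies Cauchy in $z$ and then in $w$, rather than your single two-variable estimate. Note that your $r_w=2/3$ lands exactly on $\norm{Z}_{\kappa_0}=\Jan$, where \Cref{thm:uniform-normal-form} only asserts analyticity in the interior; shrink $r_w$ slightly or pass to a limit.

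The cost bound is where you take a genuinely different route. The paper uses a linked-cluster expansion. It enumerates connected sets $S$ of distinct interaction labels anchored at $k$. For each, it runs the truncated recursion on $H_S(z)$ with dense matrices on the at most $q|S|$ qubits involved. It then isolates the monomials whose label set is exactly $S$ by subcluster (M\"obius) subtraction and sums, obtaining gradients by reverse-mode (Baur--Strassen) differentiation at constant overhead.

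You instead run one truncated-series recursion over the whole patch in a sparse Pauli-string representation and differentiate in forward mode. Your version avoids the inclusion--exclusion bookkeeping and the argument that each monomial is attributed to exactly one cluster. In exchange, it needs a sparse data structure and spends work on words far from $k$ that $\ell_\rho$ later discards. That work is harmless, or can be pruned by restricting the input to terms inside $B_{Lr_0}(k)$. It also picks up extra polynomial factors of $V_R$ from pairwise products and forward mode, all still within $\operatorname{poly}(L,s,V_R)C_{\rm ev}^L$. The paper's version keeps every computation on $\bigO(L)$ qubits and touches only anchored clusters.

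The one thing to tighten is the sparsity justification. Base it on \Cref{lem:support-calculus}: every degree-$p$ coefficient of every iterate and every nested commutator lies in $\mathcal C_{p,H}$. That space is spanned by Pauli strings on unions of at most $p$ distinct term supports, hence on at most $qp$ sites. This gives $V_R C^p$ strings uniformly in the step index, so there is no growth across steps to control. Your closing idea of bounding supports by $V_{Lr_0}$ sites would not suffice, since $4^{V_{Lr_0}}=2^{\bigO(L^d)}$ is not of the form $C^L$ for $d\ge 2$. Also, Euler's identity only yields the single combination $\sum_a\mu_a\partial_a c_{\rho,p}=p\,c_{\rho,p}$, so forward (or reverse) mode is indeed what is needed for the individual partials.
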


\begin{proof}
Fix \(\rho=(k,\ell)\), and let \(\cA_\rho:=\{a:\supp(P_a)\subseteq B_R(k)\}\), as in \Cref{lem:row-derivatives}.  Define the interaction-overlap graph \(\mathcal G_{\rm ov}\) on \(\cA_\rho\) by joining distinct vertices \(a,b\) when \(\supp(P_a)\cap\supp(P_b)\ne\varnothing\).  By \Cref{eq:zeta-Dov}, its maximum degree is at most \(D_{\rm ov}-1\), and at most \(\zeta\) vertices correspond to interactions whose support contains the defect site \(k\).  We call a vertex set \(S\subseteq\cA_\rho\) \emph{anchored} if \(k\in\bigcup_{a\in S}\supp(P_a)\).

We enumerate the connected, anchored vertex sets \(S\) with \(|S|\le L\). For a graph of maximum degree \(D_{\rm ov}-1\), the number of connected sets of cardinality \(r\) containing a fixed vertex is at most \(C_{\rm conn}^{r}\), where \(C_{\rm conn}\) depends only on \(D_{\rm ov}\).  Since there are at most \(\zeta\) possible anchoring vertices, the number of sets considered at cardinality \(r\) is at most \(\zeta C_{\rm conn}^{r}\).  These sets can be generated by the standard recursive enumeration of connected vertex sets in a bounded-degree graph \cite{KomusiewiczSommer2021}.

For each such \(S\), use the patch charge \(N_{\rho,B}\) from \Cref{eq:patch-charge} and apply the \({s}\)-step ADHH recursion \Cref{eq:ADHH-recursion} to
\begin{equation}
    H_S(z):=z\sum_{a\in S}\mu_aP_a
    \label{eq:cluster-restricted-H}
\end{equation}
with all polynomial arithmetic performed in \(\mathbb C[z]/(z^{L+1})\), which discards all monomials of $z$ beyond degree $L$.  Let \(D_{\rho,S}^{(\le L)}(z)\) be the resulting degree-\(L\) truncation of \(D_{{s}}\), and define \(F_{\rho,S}^{(\le L)}(z):=\ell_\rho(D_{\rho,S}^{(\le L)}(z))\), where \(\ell_\rho\) is the transition functional in \Cref{eq:transition-functional}.  For every connected, anchored \(S\), following the standard subcluster-subtraction recursion for linked-cluster expansions~\cite[Sec.~1.3]{TangKhatamiRigol2013}, we define:
\begin{equation}
    W_{\rho,S}(z) :=F_{\rho,S}^{(\le L)}(z)-\sum_{\substack{T\subsetneq S\\T\text{ connected and anchored}}}W_{\rho,T}(z).
    \label{eq:cluster-Mobius-weight}
\end{equation}

By \Cref{lem:connected-word-expansion}, every degree-\(p\) term produced by the recursion is a linear combination of connected \(p\)-words in the sense of \Cref{def:p-word}.  Moreover, \Cref{eq:transition-functional-Xk} shows that applying \(\ell_\rho\) annihilates every operator whose support does not contain \(k\).  Thus, every nonzero Taylor monomial in \(F_\rho^{[L]}(\mu)\) has a connected, anchored set \(S\) of distinct interaction indices, with \(|S|\le p\le L\). For each \(F_{\rho,T}^{(\le L)}\), this monomial with interaction indices $S$ occurs if and only if \(S\subseteq T\).   Induction on \(|T|\) in \Cref{eq:cluster-Mobius-weight} shows that the coefficient of this monomial in \(W_{\rho,T}\) vanishes for every \(T\ne S\), whereas its coefficient in \(W_{\rho,S}\) equals its coefficient in the full patch expansion. Equivalently, \(W_{\rho,S}\) consists exactly of the monomials whose set of distinct interaction indices is \(S\). It follows that
\begin{equation}
    F_\rho^{[L]}(\mu)
    =\sum_{\substack{S\text{ connected, anchored}\\|S|\le L}}
    W_{\rho,S}(1).
    \label{eq:cluster-reconstruction}
\end{equation}
Repeated interaction indices in a connected word are included by polynomial multiplication in \(\mathbb C[z]/(z^{L+1})\), while the exponential conjugations in \Cref{eq:ADHH-recursion} generate all nested-commutator terms of degree at most \(L\).

Algorithmically, we process the connected, anchored sets \(S\) in increasing order of \(|S|\). For each \(S\), we initialize the ADHH recursion \Cref{eq:ADHH-recursion} with \(H_S(z)\) from \Cref{eq:cluster-restricted-H}, perform all operations modulo \(z^{L+1}\), and apply the transition functional \(\ell_\rho\) to the final charge-preserving interaction to obtain \(F_{\rho,S}^{(\le L)}(z)\). We then evaluate its degree-\(L\) representative at \(z=1\) and compute \(W_{\rho,S}(1)\) using \Cref{eq:cluster-Mobius-weight}; all required proper subcluster contributions have already been computed. Finally, summing these cluster contributions according to \Cref{eq:cluster-reconstruction} yields \(F_\rho^{[L]}(\mu)\). This allows us to only evaluate ADHH recursion on qubits involved in interaction terms in each $S$. Because each $S$ involves at most $L$ terms, this reduces the size of the subsystem we need to run ADHH recursion on from $\bigO(R^d)$ to at most $qL$.

It remains to bound the arithmetic cost.  By \Cref{eq:ball-prelim} and \Cref{eq:zeta-Dov}, the local index set \(\cA_\rho\) and the adjacency lists of \(\mathcal G_{\rm ov}\) can be constructed in \(\operatorname{poly}(V_R)\) operations.  A set \(S\) of cardinality \(r\) acts on at most \(qr\) qubits by \Cref{eq:locality-bounds-H}.  The charge projections and homological inverse preserve this union of supports by \Cref{lem:support-calculus}(i), so the restricted recursion can be evaluated on these at most \(qr\) qubits.  Dense matrix arithmetic for the degree-\(L\), \({s}\)-step recursion on this Hilbert space therefore requires \(\operatorname{poly}(L,{s})2^{\bigO(qr)} =\operatorname{poly}(L,{s})2^{\bigO(r)}\) operations, because \(q\) is fixed.  Computing \Cref{eq:cluster-Mobius-weight} examines at most \(2^r\) subsets.  Summing these costs over the at most \(\zeta C_{\rm conn}^r\) clusters of each cardinality \(r\le L\) proves \Cref{eq:classical-row-cost}, after enlarging \(C_{\rm ev}\). Finally, differentiating the preceding calculation with respect to the local coefficients \(\{\mu_a\}_{a\in\mathcal A_\rho}\) and applying the chain rule at each step yields all derivatives \(\partial_a F_\rho^{[L]}(\mu)\) with the same asymptotic arithmetic cost as evaluating \(F_\rho^{[L]}(\mu)\)~\cite{BaurStrassen1983}.

By \Cref{eq:classical-row-series}, the coefficients \(c_{\rho,p}(\mu)\) are the patch Taylor coefficients bounded in \Cref{eq:Taylor-coeff-bound}.  Hence \(\abs{c_{\rho,p}(\mu)}\le C_g\Jan2^{-p}\), and summing over \(p>L\) proves \Cref{eq:classical-value-tail}.

For the derivative tail, fix \(\mu\in\cU\) and \(\norm{\xi}_\infty\le1\), and consider \(F_\rho(z(\mu+w\xi))\), where \(w\) perturbs the Hamiltonian in the direction \(K_\xi\) defined in \Cref{eq:coefficient-direction}. The Taylor expansion remains centered at zero in the variable \(z\); the auxiliary variable \(w\) is used only to estimate the derivative of the remainder. We choose \(|z|\le3/2\) and \(|w|\le1/2\), so that \Cref{eq:coeff-to-local} gives
\begin{equation}
    \norm{z[H_B(\mu)+wK_\xi]}_{\kappa_0} \le |z|(2+|w|)J \le 15J/4<\Jan.
\end{equation}
Thus \Cref{thm:uniform-normal-form} and the transition-functional bound imply analyticity and the uniform bound \(\abs{F_\rho(z(\mu+w\xi))}\le C_g\Jan\) on this polydisc. Cauchy's coefficient estimate in \(z\), followed by summing the omitted terms, yields
\begin{equation}
    \sup_{|w|\le1/2} \abs{(F_\rho-F_\rho^{[L]})(\mu+w\xi)} \le C_g\Jan\sum_{p>L}(2/3)^p =2C_g\Jan(2/3)^L.
\end{equation}
Applying the first-derivative Cauchy estimate \Cref{eq:Cauchy-polydisc} in \(w\), on the circle of radius \(1/2\), therefore gives \(\abs{D(F_\rho-F_\rho^{[L]})(\mu)[K_\xi]} \le C_J'(2/3)^L\), with \(C_J'=4C_g\Jan\), uniformly for \(\norm{\xi}_\infty\le1\). The duality identity used in the proof of \Cref{eq:first-derivative-bound} then gives \Cref{eq:classical-Jacobian-tail}.  Requiring the right-hand sides of \Cref{eq:classical-value-tail,eq:classical-Jacobian-tail} to be at most \(\tau_F\) and \(\tau_J\), respectively, gives \Cref{eq:classical-L-choice}.
\end{proof}

Let $F^{[L]}(\mu):=\bigl(F_\rho^{[L]}(\mu)\bigr)_\rho$, $\mathsf J^{[L]}(\mu):=D F^{[L]}(\mu)$, and define the approximate gradient
\begin{equation}
    f^{[L]}(y,\mu) :=\frac1m\mathsf J^{[L]}(\mu)^{\mathsf T} \bigl(F^{[L]}(\mu)-y\bigr).
    \label{eq:classical-approx-gradient}
\end{equation}

\begin{lemma}[Gradient approximation errors]
\label{lem:classical-gradient-error}
We assume that for \(\mu\in\cU\) uniformly,
\begin{align}
    \norm{F^{[L]}(\mu)-F(\mu)}_\infty&\le\tau_F, \label{eq:classical-tauF}\\
    \sup_\rho\sum_a\abs{\mathsf J^{[L]}_{\rho a}(\mu)-\mathsf J_{\rho a}(\mu)} &\le\tau_J.
    \label{eq:classical-tauJ}
\end{align}
Then, uniformly on \(\cY\times\cU\),
\begin{equation}
    \norm{f^{[L]}(y,\mu)-f(y,\mu)}_\infty \le V_Rr_E\tau_J +\bigl(C_y+V_R\tau_J\bigr)\tau_F,
    \label{eq:classical-gradient-error-bound}
\end{equation}
where $V_R$ is the maximal volume of a radius-$R$ Manhattan distance ball, $r_E$ is defined in \Cref{eq:residual-bound}, and $C_y$ is defined in \Cref{eq:Cy}.
\end{lemma}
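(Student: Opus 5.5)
The bound follows from a three-term decomposition of the difference of the two gradients. I write $\Delta^{J}:=\mathsf J^{[L]}(\mu)-\mathsf J(\mu)$ and $\Delta^{F}:=F^{[L]}(\mu)-F(\mu)$, and recall $r(y,\mu)=F(\mu)-y$. Then
\begin{equation*}
 f^{[L]}-f=\frac1m(\Delta^{J})^{\mathsf T}r(y,\mu)+\frac1m\mathsf J(\mu)^{\mathsf T}\Delta^{F}+\frac1m(\Delta^{J})^{\mathsf T}\Delta^{F}.
\end{equation*}
I bound each term in $\ell^\infty$ through the $\infty\to\infty$ norm of the transposed matrix times $\frac1m$, exactly as in the derivation of \Cref{eq:Cy}.

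\textbf{Second term.} This is immediate from \Cref{eq:Cy}: it is at most $C_y\norm{\Delta^F}_\infty\le C_y\tau_F$.

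\textbf{First and third terms.} I need $\frac1m\max_a\sum_\rho|\Delta^J_{\rho a}|\le V_R\tau_J$. This requires a sparsity fact: $\Delta^J_{\rho a}\neq0$ only if $\supp(P_a)\subseteq B_R(k)$. It holds because both $F_\rho$ and $F_\rho^{[L]}$ depend only on coefficients in $\cA_\rho$. For $F_\rho$ this is \Cref{lem:row-derivatives}. For $F_\rho^{[L]}$ it follows because it is built from the Taylor coefficients of the patch function (\Cref{prop:classical-row-evaluation}).

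Given a column $a$, at most $mV_R$ rows $\rho=(k,\ell)$ have $B_R(k)\supseteq\supp(P_a)$. For each such row, the single entry satisfies $|\Delta^J_{\rho a}|\le\sum_b|\Delta^J_{\rho b}|\le\tau_J$ by \Cref{eq:classical-tauJ}. Hence $\frac1m\norm{(\Delta^J)^{\mathsf T}}_{\infty\to\infty}\le V_R\tau_J$.

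Combining this with $\norm{r(y,\mu)}_\infty\le r_E$ from \Cref{eq:residual-bound}, valid on $\cY\times\cU$, gives $V_Rr_E\tau_J$ for the first term. Combining it with $\norm{\Delta^F}_\infty\le\tau_F$ gives $V_R\tau_J\tau_F$ for the third term.

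\textbf{Conclusion.} Summing the three bounds gives \Cref{eq:classical-gradient-error-bound}. The only step needing care is the column-sparsity argument for $\Delta^J$: it must use the patch-locality of both the exact and the truncated forward maps. A crude bound counting all $nm$ rows would introduce a factor of $n$ instead of $V_R$. Everything else is the triangle inequality.
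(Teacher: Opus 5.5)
Your proposal is correct and follows essentially the paper's proof. The paper uses the two-term split $\frac1m(\mathsf J^{[L]}-\mathsf J)^{\mathsf T}(F-y)+\frac1m(\mathsf J^{[L]})^{\mathsf T}(F^{[L]}-F)$ and bounds the combined transpose by $C_y+V_R\tau_J$; your second and third terms are just this last piece expanded, so the bounds coincide. Your explicit justification of the column sparsity (both $F_\rho$ and $F_\rho^{[L]}$ depend only on $\mu_a$ with $a\in\cA_\rho$) supplies a step the paper only asserts.
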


\begin{proof}
Set \(\mathsf E:=\mathsf J^{[L]}-\mathsf J\).  A fixed coefficient column can be nonzero in at most \(mV_R\) rows, so for every vector \(z\),
\begin{equation}
    \norm{\frac1m\mathsf E^{\mathsf T}z}_\infty \le V_R\tau_J\norm z_\infty.
    \label{eq:classical-E-transpose}
\end{equation}
Using \Cref{eq:Cy},
\begin{equation}
    \norm{\frac1m\bigl(\mathsf J^{[L]}\bigr)^{\mathsf T}z}_\infty
    \le\bigl(C_y+V_R\tau_J\bigr)\norm z_\infty.
    \label{eq:classical-JL-transpose}
\end{equation}
Now decompose
\begin{align}
    f^{[L]}-f
    ={}&\frac1m\mathsf E^{\mathsf T}(F-y)+\frac1m\bigl(\mathsf J^{[L]}\bigr)^{\mathsf T}
    \bigl(F^{[L]}-F\bigr).
\end{align}
The residual bound \Cref{eq:residual-bound} and the preceding two estimates
prove \Cref{eq:classical-gradient-error-bound}.
\end{proof}

Recall the fixed-preconditioner iteration \Cref{eq:iteration-map}. At each outer iteration, given the current coefficient vector \(\mu\), we evaluate \(F^{[L]}(\mu)\) and \(\mathsf J^{[L]}(\mu)\), form the approximate gradient \(f^{[L]}(y,\mu)\) defined in \Cref{eq:classical-approx-gradient}, approximately solve \(\Gamma d^{[L]}=f^{[L]}(y,\mu)\), and update \(\mu\leftarrow\mu-d^{[L]}\). It remains to show that this linear system can be solved efficiently. In other words we need to apply \(\Gamma^{-1}\) efficiently.  On the event that \Cref{eq:Gamma-close} holds, which happens with probability at least {\(1-\delta/2\) for \(m=\ceil{C_m{\log(4n/\delta)}}\)} by \Cref{thm:Walsh-conditioning}

\begin{equation}
    \Gamma_{aa}\ge (3/4)\cdot 3^{-q}, \quad \sum_{b\ne a}\abs{\Gamma_{ab}}\le (1/4)\cdot3^{-q}.
    \label{eq:classical-diagonal-dominance}
\end{equation}
Let \(\mathsf D_\Gamma=\operatorname{diag}(\Gamma)\).  Jacobi iteration for \(\Gamma x=b\),
\begin{equation}
    x^{(s+1)} =\mathsf D_\Gamma^{-1} \bigl[b-(\Gamma-\mathsf D_\Gamma)x^{(s)}\bigr],
    \label{eq:classical-Jacobi}
\end{equation}
has the uniform contraction bound
\begin{equation}
    \norm{\mathsf D_\Gamma^{-1} (\Gamma-\mathsf D_\Gamma)}_{\infty\to\infty} \le 1/3.
    \label{eq:classical-Jacobi-contraction}
\end{equation}
Moreover, \(\Gamma_{ab}=0\) whenever \(\supp(P_a)\cap\supp(P_b)=\varnothing\). Bounded interaction degree therefore ensures that the number of non-zero entries in $\Gamma$ is at most $\bigO(M)$.

As a result, each Jacobi step costs \(\bigO(M)\), and \(\bigO(\log(J/\tau_{\rm lin}))\) steps solve the system to \(\ell_\infty\)-error \(\tau_{\rm lin}\).

We will show that the error in solving this linear system only results in controllable error in the fixed point in the iteration. Let \(\widetilde s^{[L]}(y,\mu)\) denote the approximate solution of $\Gamma s=f^{[L]}(y,\mu)$ with solver error at most \(\tau_{\rm lin}\), and set
\begin{equation}
    \widetilde{\mathcal T}_y(\mu) :=\mu-\widetilde s^{[L]}(y,\mu).
    \label{eq:classical-implemented-map}
\end{equation}
By \Cref{lem:classical-gradient-error},
\begin{equation}
    \sup_{\mu\in\cU}
    \norm{\widetilde{\mathcal T}_y(\mu)-\mathcal T_y(\mu)}_\infty
    \le\tau_\Phi,
    \label{eq:classical-map-error}
\end{equation}
where
\begin{equation}
    \tau_\Phi :=C_\Gamma\left[V_Rr_E\tau_J +\bigl(C_y+V_R\tau_J\bigr)\tau_F\right] +\tau_{\rm lin},
    \label{eq:classical-tau-Phi}
\end{equation}
and \(C_\Gamma\) is defined in \Cref{eq:Gamma-inverse}.

We first choose an initializer that satisfies the initialization hypothesis of \Cref{thm:local-inverse}.  We work under the hypotheses of that theorem: {\Cref{eq:Gamma-close} holds}, \(\nu\ge\nu_{\rm opt}\), and \(r_y\) satisfies \Cref{eq:ry-condition}. Write \(e:=y-y_*\), assume \(0<\eps\le1\), and require
\begin{equation}
    \norm{e}_\infty\le\min\left\{r_y,{\eps}/{(2C_\Gamma C_y)}\right\}.
    \label{eq:classical-data-error}
\end{equation}
The first bound implies \(y\in\cY\) by \Cref{eq:data-neighborhood}, thereby verifying the remaining hypothesis of \Cref{thm:local-inverse}.  Set
\begin{equation}
    \widetilde\mu^{(0)}:=0.
    \label{eq:classical-zero-initializer}
\end{equation}
The normalization \(\norm{\lambda}_\infty\le1\) in \Cref{thm:main-para} and the definition \(\cU=\{\mu:\norm{\mu-\lambda}_\infty\le1\}\) in \Cref{eq:cU-defn} imply \(\widetilde\mu^{(0)}\in\cU\).

We next prove that every implemented outer iteration remains in the domain \(\cU\) on which the analytic and optimization estimates hold.  Under the conditions of \Cref{thm:local-inverse}, the exact map satisfies
\begin{equation}
    \sup_{\mu\in\cU}
    \norm{\mathcal T_y(\mu)-\lambda}_\infty\le5/6
\end{equation}
by \Cref{eq:self-map-proof-detailed}.  Combining this inequality with \Cref{eq:classical-map-error} shows that
\begin{equation}
    \sup_{\mu\in\cU} \norm{\widetilde{\mathcal T}_y(\mu)-\lambda}_\infty \le 5/6 +\tau_\Phi.
    \label{eq:classical-implemented-map-margin}
\end{equation}
The tolerance choice in \Cref{eq:classical-outer-choices} gives \(\tau_\Phi\le\eps/12\le1/12\).  Consequently, \Cref{eq:classical-implemented-map-margin} yields
\begin{equation}
    \sup_{\mu\in\cU}
    \norm{\widetilde{\mathcal T}_y(\mu)-\lambda}_\infty
    \le 11/12 <1,
\end{equation}
which proves \(\widetilde{\mathcal T}_y(\cU)\subseteq\cU\).  Induction starting from \Cref{eq:classical-zero-initializer} therefore proves that every computed iterate belongs to \(\cU\).

We now bound the coefficient error after a finite number of implemented outer iterations.  Define \(\widetilde\mu^{(t+1)} =\widetilde{\mathcal T}_y(\widetilde\mu^{(t)})\).  The contraction estimate \Cref{eq:Phi-contraction-detailed}, the fixed-point identity \(\mathcal T_y(\mu^\star(y))=\mu^\star(y)\), and \Cref{eq:classical-map-error} imply
\begin{equation}
    \norm{\widetilde\mu^{(t+1)}-\mu^\star(y)}_\infty
    \le\frac13
    \norm{\widetilde\mu^{(t)}-\mu^\star(y)}_\infty
    +\tau_\Phi.
 \label{eq:classical-inexact-recurrence}
\end{equation}
Both \(\widetilde\mu^{(0)}\) and \(\mu^\star(y)\) belong to \(\cU\), so iteration of \Cref{eq:classical-inexact-recurrence} gives
\begin{equation}
    \norm{\widetilde\mu^{(t)}-\mu^\star(y)}_\infty
    \le 2\cdot 3^{-t}+3\tau_\Phi / 2.
    \label{eq:classical-inexact-convergence}
\end{equation}
Choose
\begin{equation}
    \tau_\Phi\le {\eps}/{12},
    \quad
    N_{\rm out}:=\ceil{\log_3(16/\eps)}=\bigO(\ell_\eps).
    \label{eq:classical-outer-choices}
\end{equation}
Then \Cref{eq:classical-inexact-convergence} yields \(\norm{\widetilde\mu^{(N_{\rm out})}-\mu^\star(y)}_\infty \le\eps/4\).  Moreover, the data-error condition in \Cref{eq:classical-data-error} and the stability estimate \Cref{eq:stability} give \(\norm{\mu^\star(y)-\lambda}_\infty\le3\eps/4\).  Therefore the output \(\widehat\lambda:=\widetilde\mu^{(N_{\rm out})}\) satisfies the target accuracy \(\norm{\widehat\lambda-\lambda}_\infty\le\eps\) in \Cref{eq:intro-goal}.

We choose the forward map evaluation and linear-solver tolerances so that the map error condition in \Cref{eq:classical-outer-choices} holds.  One explicit choice is
\begin{equation}
    \tau_{\rm lin}:={\eps}/{36}, \quad
    \tau_J:=\min\left\{1/{V_R}, {\eps}/{(36C_\Gamma V_R(r_E+1))}\right\},\quad
    \tau_F:={\eps}/{(36C_\Gamma(C_y+1))}.
    \label{eq:classical-tolerance-choice}
\end{equation}
Indeed, each of the three terms in \Cref{eq:classical-tau-Phi} is at most \(\eps/36\).  Substitution of \Cref{eq:classical-tolerance-choice} into the truncation requirement \Cref{eq:classical-L-choice} gives $L=\bigO(\log({eJV_R}/{\eps})) =\bigO(\ell_\eps)$, because the patch choice following \Cref{eq:patch-locality} gives \(R=\bigO(\ell_\eps)\), and the definition following \Cref{eq:ball-prelim} gives \(V_R=\bigO(\ell_\eps^d)\).  Increasing the constant in the patch order \(K=\ceil{C\ell_\eps}\) (\Cref{eq:K_choice}), if necessary, makes \(L\le K\); the support statement in \Cref{lem:connected-word-expansion} then ensures that every connected set retained through degree \(L\) is contained in the patch \(B_R(k)\), where \(R=(K+1)r_0\).

We finally count the arithmetic operations required by the implemented iteration.  By \Cref{prop:classical-row-evaluation}, evaluating all \(nm\) forward-map values and their nonzero partial derivatives in each of the \(N_{\rm out}\) iterations costs
\begin{equation}
    nmN_{\rm out}\operatorname{poly}(L,{s},V_R)C_{\rm ev}^{L}
    \le nm\,\operatorname{poly}({s},\ell_\eps) (J/\eps)^{C_{\rm alg}},
    \label{eq:classical-row-total-cost}
\end{equation}
where \(C_{\rm alg}\) depends only on the fixed locality parameters listed after \Cref{eq:classical-cost-conclusion}, and \Cref{prop:classical-row-evaluation} is defined in \Cref{prop:classical-row-evaluation}.  The formula in \Cref{lem:local-Walsh} permits \(X\) and \(\Gamma=m^{-1}X^{\mathsf T}X\) to be assembled in \(\bigO(Mm)\) operations for fixed \(q\) and \(D_{\rm ov}\); this is absorbed by the right-hand side of \Cref{eq:classical-row-total-cost} because \(M=\bigO(n)\).  Each Jacobi step costs \(\bigO(M)\) by the sparsity of \(\Gamma\), and \Cref{eq:classical-Jacobi-contraction} requires \(\bigO(\log(J/\tau_{\rm lin}))=\bigO(\ell_\eps)\) steps per outer iteration.  Hence all \(N_{\rm out}\) linear solves cost \(\bigO(M\ell_\eps^2)\) operations.  Combining these two costs proves \Cref{eq:classical-cost-conclusion}. {On the event that \Cref{eq:Gamma-close} holds and that the robust frequency estimation results are sufficiently accurate, i.e., \Cref{eq:parallel-gap-accuracy} holds for every sampled $\mathcal{D},\ell,k$ if using the parallel protocol, or \Cref{eq:full-gap-data-error} holds for every sampled $\rho$ if using the sequential protocol, the data satisfy \Cref{eq:classical-data-error}, and the classical post-processing achieves the required accuracy in time polynomial in \(n\), \(J/\eps\), and \(\log(1/\delta)\). The failure budgets allocated above for the protocols of \Cref{thm:main,thm:main-para} ensures this event happens with probability at least \(1-\delta\).}

\section{Notation}
\label{app:notations}
For ease of reference, we summarize the main notation used throughout the paper.

\begingroup

\setlength{\tabcolsep}{4pt}
\renewcommand{\arraystretch}{1.12}
\setlength{\LTleft}{0pt}
\setlength{\LTright}{\fill}
\begin{longtable}{@{}
  >{\raggedright\arraybackslash}p{0.25\linewidth}
  >{\raggedright\arraybackslash}p{0.585\linewidth}
  >{\raggedright\arraybackslash}p{0.12\linewidth}@{}}
\caption{Principal notation, grouped by topic.}
\label{tab:notation}\\
\toprule
Notation & Meaning & Definition \\
\midrule
\endfirsthead
\multicolumn{3}{@{}l}{\tablename~\thetable\ (continued)}\\
\toprule
Notation & Meaning & Definition \\
\midrule
\endhead
\midrule
\multicolumn{3}{r@{}}{Continued on the next page}\\
\endfoot
\bottomrule
\endlastfoot

\multicolumn{3}{@{}l}{\textbf{Hamiltonian and locality}}\\*
\(\Lambda,n,d\)
& {lattice \(\Lambda\) from \Cref{sec:intro}}, number of qubits
  \(n=|\Lambda|\), and spatial dimension.
& \eqref{eq:intro-H} \\
\(\mathcal A,M,P_a\)
& Pauli-term index set, its size \(M=|\mathcal A|\), and the Pauli
  terms in \(H(\lambda)=\sum_{a\in\mathcal A}\lambda_aP_a\).
& \eqref{eq:intro-H} \\
\(\lambda,\mu,\widehat\lambda;\ \varepsilon\)
& True, trial, and estimated coefficient vectors; target accuracy
  \(\|\widehat\lambda-\lambda\|_\infty\le\varepsilon\).
& \eqref{eq:intro-goal}, \eqref{eq:cU-defn} \\
\(q,r_0;\ \zeta,D_{\rm ov}\)
& Maximum term support size and diameter; maximum number of terms
  incident on a site and overlapping a term (including itself).
& \eqref{eq:locality-bounds-H}, \eqref{eq:zeta-Dov} \\
\(B_R(k),V_R,H_{B_R(k)}\)
& Radius-\(R\) ball in Manhattan distance; maximum ball volume
  \(V_R=\max_k|B_R(k)|\); Hamiltonian restricted to terms supported
  entirely in that ball.
& \eqref{eq:ball-prelim}, \eqref{eq:patch-H} \\
\(\|Z\|_\kappa;\ \kappa_0,J\)
& Weighted local interaction norm, its reference locality parameter (arbitrary positive constant),
  and local interaction scale.
& \eqref{eq:local-norm}, \eqref{eq:Ham-local-norm-bound} \\
\(\mathcal U;\ J_{\rm an},\mathfrak B_{\rm an}\)
& Coefficient neighborhood \(\|\mu-\lambda\|_\infty\le1\);
  analytic scale \(J_{\rm an}=4J\) and interaction ball
  \(\|Z\|_{\kappa_0}\le J_{\rm an}\).
& \eqref{eq:cU-defn}, \eqref{eq:analytic-ball} \\

\midrule
\multicolumn{3}{@{}l}{\textbf{Prethermal construction}}\\*
\(G=\nu N+H;\ \nu,N\)
& Strong-field Hamiltonian, field strength, and integer-valued
  charge operator.
& \eqref{eq:generic-G} \\
\(\mathcal P_{N,\omega},\mathcal S_N\)
& Charge-\(\omega\) projection and homological inverse;
  \(\langle W\rangle_N=\mathcal P_{N,0}(W)\) is the
  charge-preserving average.
& \eqref{eq:charge-projection}, \eqref{eq:homological-inverse} \\
\(H_j,D_j,V_j,A_j\)
& ADHH interaction at step \(j\), its charge-preserving and
  charge-changing parts, and the conjugating generator.
  The initial split is \(D=D_0\), \(V=V_0\).
& \eqref{eq:ADHH-recursion} \\
{\(\kappa_j,n_*,s;\ \widehat D\)}
& {Locality parameter \(\kappa_j\); maximum allowed recursion
order \(n_*\); chosen sequential order \(1\le s\le n_*\).
The theorem for the maximal order uses \(D_{n_*}\).
The functions \(E_\rho^\Lambda\) and \(F_{\rho,R}\) use
\(D_s\), with the same \(s\) held fixed as the input varies.}
& \eqref{eq:kappa-j-defn}, \eqref{eq:new-n*},
  \eqref{eq:ADHH-expression} \\
\(W_{\mathbf{a,Q}},\mathcal C_{p,H}\)
& Connected $p$-words (more precisely their evaluation) and the span of such
  formal words. Connectedness refers to the support-overlap graph.
& \Cref{def:p-word} \\

\midrule
\multicolumn{3}{@{}l}{\textbf{Experiments and local forward maps}}\\*
\(\rho=(k,\ell),m\)
& Experiment index: defect site \(k\) and repetition \(\ell\in[m]\);
  \(m\) random field patterns per site.
& \eqref{eq:row-index} \\
\(s_j^\rho,\beta_j^\rho;\ \tau_{j,z}^\rho,Q_j^\rho\)
& Random sign bit and Pauli axis; signed Pauli operator
  \(\tau_{j,z}^\rho=(-1)^{s_j^\rho}\sigma_j^{\beta_j^\rho}\)
  and excitation projector \(Q_j^\rho=(I-\tau_{j,z}^\rho)/2\).
& \eqref{eq:random-pattern}--\eqref{eq:Q-row} \\
\(H_{\rm ctrl}^\rho,H_{\rm tot}^\rho;\ N_\rho,G_\rho\)
& Control profile and physical Hamiltonian
  \(H_{\rm tot}^\rho=H-\nu H_{\rm ctrl}^\rho\);
  {\(G_\rho=H+\nu N_\rho\) is the auxiliary Hamiltonian with
fields on all sites, used to define \(E_\rho^\Lambda\).
The experimental signal is generated by
\(G_\rho^B=H+\nu N_\rho^B\), which has fields only on \(B\)
and differs from \(H_{\rm tot}^\rho\) by a scalar.}
& \eqref{eq:Hctrl}--\eqref{eq:G-rho} \\
\(\ket{\Phi_0^\rho},\ket{\Phi_1^\rho},\ket{\Phi_+^\rho}\)
& Charge-zero and charge-one product states, and their equal
  superposition used for Ramsey preparation.
& \eqref{eq:Phi0_Phi1}, \eqref{eq:Phi-plus} \\
\(\tau_{k,x}^\rho,\tau_{k,y}^\rho;\ S_\rho(t)\)
& Measured quadratures and complex Ramsey signal; its ideal tone is
  \(e^{i[\nu+2E_\rho^\Lambda(\lambda)]t}\).
& \eqref{eq:X-Y-quadrature}, \eqref{eq:signal} \\
\(\ell_\rho;\ E_\rho^\Lambda,F_{\rho,R}\)
& Half-transition functional and its values on the full-lattice
  interaction \(\widehat D_\rho^\Lambda\) and patch interaction
  \(\widehat D_{\rho,R}\), respectively, both at order {\(s\le n_*(\nu)\)}.
  Write \(F_\rho=F_{\rho,R}\) for fixed \(R\).
& \eqref{eq:transition-functional}, \eqref{eq:full-gap},
  \eqref{eq:patch-gap} \\
\(K,R;\ \eta_{\rm patch},\eta_{\rm freq}\)
& Taylor cutoff and patch radius, with \(R\ge(K+1)r_0\);
  patch-approximation and measured half-transition-energy errors.
& \eqref{eq:patch-locality}, \eqref{eq:patch-data-model} \\

\midrule
\multicolumn{3}{@{}l}{\textbf{Coefficient recovery}}\\*
\(F(\mu),y,y_*,e\)
& Patch forward-map vector, measured data, exact patch data
  \(y_*=F(\lambda)\), and combined error \(e=y-y_*\).
& \eqref{eq:global-F}, \eqref{eq:patch-data-model} \\
\(F_0,X;\ \mathcal R\)
& Zeroth-order forward map \(F_0(\mu)=X\mu\), its constant
  Jacobian \(X\), and finite-field correction \(\mathcal R=F-F_0\).
& \eqref{eq:bare-and-correction}, \eqref{eq:row-coeff-set} \\
\(\mathsf J,\Gamma,\mathsf G,\Sigma\)
& Effective Jacobian \(\mathsf J=DF\); Gram matrices
  \(\Gamma=X^{\mathsf T}X/m\),
  \(\mathsf G=\mathsf J^{\mathsf T}\mathsf J/m\);
  population matrix \(\Sigma=\operatorname{diag}(p_a3^{-p_a})\),
  where \(p_a=|\operatorname{supp}(P_a)|\).
& \eqref{eq:Jacobian-def}, \eqref{eq:Gamma},
  \eqref{eq:Hessian-split} \\
\(\mathcal L,r,f,\mathsf H\)
& Least-squares loss \(\mathcal L=\|F(\mu)-y\|_2^2/(2m)\),
  residual \(r=F(\mu)-y\), gradient \(f=\nabla_\mu\mathcal L\),
  and Hessian \(\mathsf H=\nabla_\mu^2\mathcal L\).
& \eqref{eq:loss}--\eqref{eq:Hessian-split} \\
\(\mathcal T_y,\mu^\star(y)\)
& Preconditioned gradient map
  \(\mathcal T_y(\mu)=\mu-\Gamma^{-1}f(y,\mu)\)
  and the unique minimizer in \(\mathcal U\).
& \eqref{eq:iteration-map}, \Cref{thm:local-inverse} \\

\midrule
\multicolumn{3}{@{}l}{\textbf{Parallel experiments}}\\*
\(\mathcal D,L_{\mathcal D},P_{\mathcal D}\)
& Defect batch, minimum defect separation, and connected-word
  threshold \(P_{\mathcal D}=\lceil L_{\mathcal D}/r_0\rceil\).
& \eqref{eq:parallel-separation} \\
\(N_{\mathcal D,\ell}\)
& {Charge of the auxiliary Hamiltonian used to construct
\(\widehat D_{\mathcal D,\ell}(z)\): weight one at each defect
and two at every other site. The experimental charge
\(N_{\mathcal D,\ell}^{\mathcal B}\) agrees with it on
\(\mathcal B\) and vanishes outside \(\mathcal B\).}
& \eqref{eq:parallel-charge} \\
\(n_{\rm par};\ \widehat D_{\mathcal D,\ell}(z)\)
& Prescribed stopping order \(1\le n_{\rm par}\le n_*(\nu)\)
  and stopped interaction \(D_{n_{\rm par}}(zH(\lambda))\).
& \eqref{eq:parallel-stopping-order}, \eqref{eq:parallel-taylor} \\
\(\widehat D_{\mathcal D,\ell}^{(<P)}\)
& Taylor truncation of the stopped interaction, evaluated at \(z=1\),
  retaining degrees \(1,\ldots,P-1\).
& \eqref{eq:parallel-taylor-split} \\
\(E_{\mathcal D,\ell;k}^{(<P_{\mathcal D})},
  \ S_{\mathcal D,\ell;k}(t)\)
& Half transition energy at defect \(k\) for the truncated parallel
  interaction, and the physical parallel Ramsey signal with ideal
  tone \(e^{i[\nu+2E_{\mathcal D,\ell;k}^{(<P_{\mathcal D})}]t}\).
& \eqref{eq:parallel-signal}, \eqref{eq:parallel-ideal-tone} \\
\(F_{\rho,R}^{\rm par}(\mu)\)
& Patch half-transition-energy map computed at the same stopping
  order \(n_{\rm par}\) as the parallel experiment.
& \eqref{eq:parallel-patch-map} \\
\end{longtable}
\endgroup

\end{document}